\documentclass[a4paper,UKenglish,cleveref, autoref, thm-restate]{lipics-v2021}

\hideLIPIcs  

\usepackage{color,xcolor}
\usepackage{tikz}
\usetikzlibrary{arrows.meta}
\usetikzlibrary{patterns,patterns.meta}
\usetikzlibrary{fit}
\usepackage{tikzmacros}
\usepackage[keep-defaults,Tol]{colorblind}
\usepackage{newunicodechar}
\usepackage{hyphenat}
\usepackage{cite}
\usepackage[prependcaption,colorinlistoftodos]{todonotes}

\usepackage{appendix-tools}
\usepackage[draft]{flags}
\usepackage{colorpalette,macros,computational-problems}

\definecolor{marceloColor}{RGB}{26, 192, 19}

\title{The directed temporal exploration problem}

\author{Marcelo {Garlet Milani}}{National Institute of Informatics, Tokyo, Japan \and \url{https://mgarletmilani.com} }{research@mgarletmilani.com}{https://orcid.org/0000-0001-8398-4751}{}

\author{Lucas {Picasarri-Arrieta}}{National Institute of Informatics, Tokyo, Japan  \and \url{https://lucaspicasarri.github.io} }{lpicasarr@nii.ac.jp}{https://orcid.org/0000-0003-0414-8136}{}

\author{Chaoliang {Tang}}{Shanghai Center for Mathematical Sciences, Fudan University, Shanghai, China  \and \url{https://chaoliangtang.github.io/minimalistic/} }{cltang22@m.fudan.edu.cn}{https://orcid.org/0009-0005-4708-762X}{}

\author{Hehui {Wu}}{Shanghai Center for Mathematical Sciences, Fudan University, Shanghai, China}{hhwu@fudan.edu.cn}{https://orcid.org/0000-0001-9849-9097}{}

\authorrunning{M. G. Milani, L. Picasarri-Arrieta, C. Tang, and H. Wu} 
\Copyright{Marcelo Garlet Milani} 

\ccsdesc[500]{Theory of computation~Problems, reductions and completeness}
\ccsdesc[300]{Mathematics of computing~Paths and connectivity problems}

\keywords{temporal digraphs, temporal exploration problem}

\funding{Research of Marcelo Garlet Milani and Lucas Picasarri-Arrieta supported by JSPS KAKENHI JP20A402 and 22H05001, and by JST ASPIRE JPMJAP2302.}

\acknowledgements{We thank Pierre Charbit and Samuel Coulomb for discussions on the proof of \cref{lemma:directed_king}. We thank Leszek Gasieniec for suggesting the special case of temporal semicomplete digraphs.}
\nolinenumbers 
\begin{document}

\maketitle

\begin{abstract}
	We study the temporal exploration problem on temporal digraphs.
  We prove that a lifetime of \(\BigO{n^2}\) suffices to guarantee the existence of a temporal exploration on
  always-unilateral temporal digraphs.
  We complement this with a \(\Omega(n^2)\) lower bound,
  even in the case where each snapshot has maximum undirected degree 2;
  for always-strong temporal digraphs,
  the lower bound still holds even if the maximum undirected degree is 3.
  This stands in stark contrast with the undirected setting.

  For the large minimum degree setting,
  we show that a lifetime of \(4n/3 - 1\) is sufficient and necessary
  for guaranteeing the existence of a temporal exploration on temporal digraphs
  where each snapshot is semicomplete.
  For always-strong temporal digraphs where each snapshot has
  minimum undirected degree at least \(n - c - 1\),
  we prove that a lifetime of \(\BigO{cn}\) guarantees the existence of a temporal exploration,
  and we also prove that this is asymptotically tight.

  From a computational perspective, our results for temporal semicomplete digraphs
  also yield a polynomial-time, factor-\(4/3\) algorithm for deciding if a temporal semicomplete digraph
  admits a temporal exploration within the first \(\ell\) snapshots.
  We complement this showing that no polynomial-time, factor-\((4/3 - \epsilon)\) approximation algorithm exists,
  even if every snapshot is a tournament,
  unless \(\P/=\NP/\).
\end{abstract}

\newpage
\tableofcontents

\newcommand{\pathmaxdegreethreeref}[1]{  \hyperlink{thm:always_path_maxdegree_3:body}{\cref*{thm:always_path_maxdegree_3}}}
\newcommand{\semicompleteexplorationref}[1]{  \hyperlink{theorem:tournament-exploration-upper-bound:body}{\cref*{theorem:tournament-exploration-upper-bound}}}
\newcommand{\semicompleteexplorationlowerboundref}[1]{  \hyperlink{statement:tournament-exploration-lower-bound:body}{\cref*{statement:tournament-exploration-lower-bound}}}
\newcommand{\largeminimumdegreeupperboundref}[1]{  \hyperlink{statement:degree-n-c-upper-bound-strong:body}{\cref*{statement:degree-n-c-upper-bound-strong}}}
\newcommand{\largeminimumdegreelowerboundref}[1]{  \hyperlink{statement:degree-n-c-lower-bound:body}{\cref*{statement:degree-n-c-lower-bound}}}
\newcommand{\explorationsemicompletenphardref}[1]{  \hyperlink{thm:np_hardness:body}{\cref*{thm:np_hardness}}}
\newcommand{\explorationsemicompletenapproximationhardref}[1]{  \hyperlink{statement:semicomplete approximation hardness:body}{\cref*{statement:semicomplete approximation hardness}}}

\newcommand{\LP}[1]{}
\newcommand{\MaI}[1]{}
\theoremstyle{plain}
\newtheorem{problem}[theorem]{Problem}

\section{Introduction}

Temporal graphs are an extension of the concept of graphs allowing for modifications on the graph structure over time.
This additional expressiveness compared static graphs
has found applications in biology and medicine \cite{HCGD23} and,
in the case of temporal digraphs, in structural digraph theory \cite{hkmm24cows}.
There has been a lot of research on different topics related to temporal graphs
\cite{bumpus2023edge,fluschnik2020temporal,hamm2022complexity,marino2021konigsberg,marino2023eulerian,aaron2014dmvp,akrida2021temporal,erlebach2020non,ilcinkas2018exploration,balev2025brief,BGMR2025ietg,EHK2021}.
See \cite{casteigts2012time,michail2016introduction,ES2022} for surveys on temporal graphs and \cite{FMNRZ20} for a survey on temporal applications of treewidth.

A temporal digraph \(D\) is a sequence of digraphs \(D_{1}, D_{2}, \ldots, D_{t}\)
on the same vertex set but on potentially different arc sets, called \emph{snapshots}.
This work focus on \emph{temporal walks} which take
at most one arc per snapshot and allow staying on a vertex for multiple time steps.
In this setting, we study the \emph{temporal exploration problem},
which asks whether a temporal digraph admits a temporal walk spanning all vertices.
In particular, we want to determine sufficient conditions guaranteeing the existence of a temporal exploration.
Since it is clearly not possible to find a temporal exploration if certain vertices pairs of vertices can never reach each other,
we impose some connectivity requirements on the snapshots.
More precisely, we require of every snapshot that, for every pair of vertices \(u,v\),
\(u\) can reach \(v\) \emph{or} \(v\) can reach \(u\).
Such digraphs are called \emph{unilateral}, and
a temporal digraph where every snapshot is unilateral is said to be \emph{always-unilateral}.
Similarly, in an always-strong temporal digraph, every snapshot is strongly connected,
so both \(u\) can reach \(v\) \emph{and} \(v\) can reach \(u\).

We start with a question posed by
\cite{hkmm24cows}, who proved, as an intermediate step for their main results,
that every always-unilateral temporal digraph with at least \(\BigO{k^2n^{kn + 2}}\)
snapshots admits a temporal walk spanning a given set \(S\) of \(k\) vertices.
They asked whether their bound can be improved to a polynomial.
We answer this affirmatively with the following Theorem (\cref{sec:always-unilateral exploration}).
\begin{restatable}{theorem}{unilateralExploration}
	\label{statement:always-unilateral walk covering S}
	\label{statement:unilateral layers -> temporal exploration}
	Let \(D\) be an always-unilateral temporal digraph on \(n\) vertices and
	let \(S ⊆ \V{D}\) be a set of size \(k\).
	If \(\Lifetime{D} \geq \sum_{i=2}^k\bound{statement:unilateral layers -> temporal root}{t}{n,i} ∈ \BigO{kn + k^2}\), then there is a temporal walk \(W\) in \(D\)
	such that 
	\(S ⊆ \V{W}\).
\end{restatable}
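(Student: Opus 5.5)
The plan is an induction on the number of vertices of \(S\) already visited, with one auxiliary ``temporal root'' lemma doing the real work. The bound refers to \(\bound{statement:unilateral layers -> temporal root}{t}{n,i}\), a lemma stated later (\cref{statement:unilateral layers -> temporal root}). I expect it to say roughly the following.

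\textbf{Temporal root lemma.} Let \(D\) be always-unilateral and let \(X\subseteq \V{D}\) with \(|X| = i\). If \(\Lifetime{D}\geq t(n,i)\), then some vertex \(r\in X\) temporally reaches every vertex of \(X\) within the first \(t(n,i)\) snapshots. Moreover, \(t(n,i)\in \BigO{n+i}\).

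Granting this lemma, the theorem follows by a greedy scheme. The walk is built in \(k-1\) phases, using consecutive windows of snapshots of lengths \(t(n,k), t(n,k-1),\ldots,t(n,2)\).

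\textbf{Phases.} At the start of the phase using \(t(n,i)\), let \(U\subseteq S\) be the set of \(i\) vertices of \(S\) not yet visited. Apply the lemma to \(U\) on this window (the phase lengths are matched to the current \(|U|\)). It returns a root \(r\in U\) reaching all of \(U\).

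\textbf{Main obstacle: we are not standing at \(r\).} The walk sits at its current endpoint \(x\), which may differ from \(r\). To handle this, I would apply the lemma to \(U\cup\{x\}\), but in a form that designates the current position. The natural formulation is: the set of vertices of \(X\) reachable from some fixed start, or the ``in-reach'' structure, always contains a vertex reaching all others. So I would state the lemma as: \emph{for every starting vertex \(x\) and set \(X\) of size \(i\), within \(t(n,i)\) steps the walk from \(x\) can reach at least one new vertex of \(X\).}

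Summing over \(i=2,\dots,k\) then gives the claimed bound. The first vertex of \(S\) is free, since we may start the walk there.

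\textbf{Proving the lemma.} I would use the standard unilateral fact that every snapshot has a Hamiltonian-like spanning walk, namely an ordering of its strong components as a path in the condensation. Fix the start \(x\) and track the reachable set \(R_j\) after \(j\) snapshots. If \(R_j\) misses \(U\), unilaterality forces either an arc leaving \(R_j\) (so \(R\) grows) or an arc entering \(R_j\). The delicate part is the entering case, which is why the bound involves \(n\) and not just \(i\). Each snapshot either enlarges \(R_j\), or all vertices outside \(R_j\) reach into \(R_j\). In the latter case, I would use a potential argument (a king-type lemma, cf.\ \cref{lemma:directed_king}) on which vertices of \(U\) reach the current set. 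This should give that within \(\BigO{n}\) steps \(R\) hits \(U\), and within a further \(\BigO{i}\) steps it does so along a root.

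This potential argument is the step I expect to be hardest, and I would first try bounding the number of non-growing snapshots by \(n\) plus the number of times the ``king'' of \(U\) changes.
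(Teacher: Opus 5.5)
\textbf{The gap is in the chaining step, not in the window schedule.} Your schedule is right: windows of lengths \(t(n,k),\dots,t(n,2)\) applied to unvisited sets of sizes \(k,\dots,2\). Your first guess of the temporal root lemma also matches \cref{statement:unilateral layers -> temporal root}, with \(t(n,i)=2n-i-1\). The problem is how you handle ``we are not standing at \(r\)''.

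\emph{Your start-designated lemma is false.} Take every snapshot equal to the directed path \((v_1,v_2,\dots,v_n)\), which is unilateral. From \(x=v_n\) no other vertex is ever reachable, because \(v_n\) has no out-arc in any snapshot. So for \(X=\{v_1,v_2\}\) no window length works.

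Even a true statement of the form ``from \(x\) one reaches some new vertex of \(X\)'' would not close your induction, since it gives no control over where the walk ends. On the same example, from \(x=v_1\) with \(X=\{v_2,\dots,v_n\}\), the walk may reach \(v_n\) first and then be stuck forever. For the same reason, starting the walk at an arbitrary vertex of \(S\) is not ``free''.

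Your proof sketch for the lemma inherits this problem. With a fixed start \(x\), arcs \emph{entering} \(R_j\) contribute nothing to reachability from \(x\), so no potential argument can rescue it. The paper's proof instead tracks the reach sets of several candidate roots at once. It uses \cref{statement:no growth implies containment} (two out-closed sets in a unilateral digraph are nested) to force at least one candidate's reach set to grow in each snapshot.

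\textbf{The missing idea is to choose the roots ahead of time rather than greedily.} Then the walk always stands at the root of the next window when that window begins. Since \(D\) is given in full, proceed as follows.
\begin{itemize}
\item Let \(v_k\) be a root of \(S_k=S\) in the first window.
\item Let \(v_{k-1}\) be a root of \(S_{k-1}=S\setminus\{v_k\}\) in the second window.
\item Continue in the same way, setting \(S_{i-1}=S_i\setminus\{v_i\}\) and taking \(v_{i-1}\) to be a root of \(S_{i-1}\) in the next window.
\end{itemize}
Start the walk at \(v_k\), not at an arbitrary vertex. Because \(v_{i-1}\in S_{i-1}\subseteq S_i\), and \(v_i\) reaches all of \(S_i\) within its own window, the walk can move from \(v_i\) to \(v_{i-1}\) inside that window, waiting there if it arrives early. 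It therefore stands at \(v_{i-1}\) when the next window starts. The last window takes \(v_2\) to the unique vertex of \(S_1\).

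This visits \(v_k,\dots,v_1\), i.e.\ all of \(S\). It uses exactly your window lengths and only the plain root lemma, with no designated start.
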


The case where \(S\) is the whole vertex set corresponds to the temporal exploration problem, 
where \(\BigO{n^2}\) snapshots suffice.
We complement this result with a lower-bound,
showing that the bound above is asymptotically tight.
Below, \(\UnderlyingGraph{D}\) is the underlying undirected graph of the union of all snapshots of \(D\) and \(\Delta(\UnderlyingGraph{D})\) is its maximum degree.
\begin{restatable}{theorem}{pathBoundedDegreeLowerBound}
    \label{thm:always_path_maxdegree_3}
    For every positive integer $n'$, there exists an always-unilateral temporal digraph $D$ of order $n\geq n'$ such that each snapshot of $D$ is a directed path, $\Delta(\UnderlyingGraph{D}) \leq 3$, $\Lifetime{D}\geq \frac{1}{128} n^2$, and $D$ does not admit any temporal exploration. 
\end{restatable}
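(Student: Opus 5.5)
We construct an explicit family and show that every temporal walk misses a vertex. Every snapshot will be a Hamiltonian directed path on a fixed vertex set \(V\) of size \(n\), and the union of snapshots will be a fixed undirected graph of maximum degree \(3\). Each snapshot is therefore unilateral, and \(\Delta(\UnderlyingGraph{D}) \leq 3\) holds automatically.

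\textbf{The graph.} Let \(G\) consist of a long undirected path \(P = p_1 p_2 \cdots p_m\) with \(m = \Theta(n)\), together with a few pendant gadgets attached at the two ends. Each gadget is a short path hanging off an internal vertex of \(P\) near an end, so that every vertex has degree at most \(3\). Each snapshot orients a Hamiltonian path of \(G\) restricted to these edges.

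\textbf{The periods.} The lifetime is split into \(\Theta(n)\) periods of \(\Theta(n)\) snapshots each. Within one period the orientation of \(P\) is constant. Two kinds of periods alternate:
\begin{itemize}
\item a \emph{rightward} period, in which \(P\) is oriented from \(p_1\) to \(p_m\);
\item a \emph{leftward} period, in which \(P\) is oriented from \(p_m\) to \(p_1\).
\end{itemize}
The pendant gadgets are arranged so that the Hamiltonian path enters and leaves them only at the appropriate end.

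\textbf{Why an explorer cannot finish.} A walker moves at most one arc per snapshot. The plan is to choose the period length to be about half of \(m\). Then within a single period the walker advances at most about \(m/2\) positions along \(P\) before the direction flips. The next period pushes it back just as far, so it cannot travel from one end of \(P\) to the other.

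\textbf{Why both ends must be reached.} Gadgets at both ends of \(P\) must be visited. Reaching the far end therefore requires a monotone traversal of length \(\approx m\). This is impossible if every maximal window of consistent orientation has length \(< m\), which holds by the choice of period length. The gadgets take care of parity, so that the walker cannot sit at a pendant vertex and wait.

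\textbf{Main obstacle.} The delicate point is that the snapshots are Hamiltonian paths, so switching orientation is not free. The path must route through the pendants, and the walker could exploit a pendant attachment to shortcut or to wait. The hardest step is making the switch rigorous. I would handle it with a potential argument. Define \(\phi(t)\) as the walker's index along \(P\) at time \(t\), and show that in each period \(\phi\) changes by at most the period length in the allowed direction. Then the set of positions reachable from any start at any time is an interval of length \(< m\). The degree-3 gadgets must guarantee that these intervals never cover both ends during the whole lifetime.

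\textbf{Counting.} With \(m \approx n/4\), period length \(\approx n/8\), and \(\Theta(n)\) periods, the lifetime is at least \(\frac{1}{128} n^2\), after tuning the constants. Finally, taking \(n\) arbitrarily large gives a valid construction for every \(n'\), which proves \cref{thm:always_path_maxdegree_3}.
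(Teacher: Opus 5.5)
There is a genuine gap, and it is at the step you identify as the main obstacle. A temporal walk is never forced to move: it only needs increasing time indices $k_1<k_2<\cdots$, so the walker can idle through every leftward period. The set of vertices reachable from a fixed start therefore only grows with time, and nothing ever ``pushes the walker back''. Your claim that a traversal of length $\approx m$ is impossible when every window of consistent orientation is shorter than $m$ is false, because the traversal can be spread over several rightward windows with pauses in between. Concretely, with rightward periods of length $L\approx m/2$:
\begin{itemize}
\item a walker at $p_1$ reaches $p_{1+L}$ in the first rightward period;
\item it waits through the leftward period and reaches $p_m$ in the second rightward period;
\item with $\Theta(n)$ periods it can cross $P$ in either direction many times.
\end{itemize}
Your potential $\phi$ correctly bounds progress within one period, but progress accumulates across periods. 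So reachable positions do not stay in an interval of length $<m$. In general, a path of length $m$ that each snapshot orients one way is crossed once $m$ snapshots have pointed the right way. That is linear, not quadratic, time; compare the paper's alternating-paths example, which blocks reachability for only $2n-4$ snapshots, and \cref{prop:underlying_cycle}. No gadget can forbid waiting, so the ``parity'' remark cannot be implemented. The construction is also not realizable as described. Pendant paths attached to internal vertices of $P$ make the union graph a tree with a vertex of degree~$3$, which has no Hamiltonian path. Yet every snapshot must be a spanning directed path, since otherwise it is not unilateral.

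The missing idea is an obstruction based on \emph{order and timing} rather than distance. In \cref{lemma:lower_bound_path_apex}, the paper takes vertices $v_0,\dots,v_n$ and, for each $i$, a block of $i$ copies ($n-1$ copies for $i=n$) of the directed path $(v_{i-1},\dots,v_0,v_i,v_{i+1},\dots,v_n)$. By induction, at the end of block $i$ the walker is in one of three states:
\begin{itemize}
\item at $v_i$, with $v_{i+1},\dots,v_n$ unvisited;
\item in $\{v_0,\dots,v_{i-1}\}$, with $v_i,\dots,v_n$ unvisited;
\item in $\{v_{i+1},\dots,v_n\}$, with some vertex of $\{v_0,\dots,v_i\}$ unvisited.
\end{itemize}
The key facts are that $v_{i-1}$ is a source of the $i$-th path, and that going from $v_{i-1}$ to $v_{i+1}$ takes $i+1$ steps while only $i$ copies are available. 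The last block rules out all three states. The lifetime $\sum_i i\approx n^2/2$ is quadratic because every block is one snapshot too short. Only the apex $v_0$ has large degree. It is replaced by the subcubic graph $W_{n'}$ of \cref{lemma:hamiltonian_subcubic_graph}, which has a Hamiltonian $s_1$--$s_j$ path for every $j$. Hence every snapshot stays a spanning directed path, the union has maximum degree~$3$, and any exploration of the blown-up digraph projects to one of the original construction. With $n=8n'-6$ this gives lifetime at least $n^2/128$.
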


The conditions imposed on the maximum degree above
reveal a strong discrepancy between the exploration of temporal digraphs and of temporal graphs.
In~\cite{ErlebachS18} it is proved that \(\BigO{(n^2 \log d) / \log n}\)
snapshots suffice to explore always-connected temporal graphs where
each snapshot has degree at most \(d\).
This was improved by \cite{BGMR2025ietg}
to \(\BigO{n^{3/2} \sqrt{d \log n}}\).
\pathmaxdegreethreeref{} implies that similar results are not possible for always-unilateral temporal digraphs.
The current best lower bound for temporal graphs is \(\Omega(dn)\) and holds for every \(d\) \cite{EHK2021}.
For the case where each snapshot is a path and the underlying graph (that is, the union of all edge sets) is planar with maximum degree 3,
the best lower bound known is \(\Omega(n \log n)\) \cite{AdamsonGMZ22,EHK2021}.

In order to understand which temporal digraphs can be explored within a linear number of snapshots,
we look into the other side of the spectrum where the snapshots are very dense.
We start from the densest case where each snapshot is \emph{semicomplete}.
A digraph is said to be semicomplete if
for every pair of distinct vertices \(u,v\)
the arc \((u,v)\) or the arc \((v,u)\) exist (or both).
Note that, unlike undirected graphs, where there is only one complete graph,
there can be exponentially many non-isomorphic semicomplete digraphs.
Already the class of {\it tournaments}, where exactly one of the arcs \((u,v),(v,u)\) exists, is rather rich, as shown for instance by the results in~\cite{moon1968topics,APSComb21,NSSJCTB173}.
Furthermore, while temporal graphs whose snapshots are complete graphs
can be trivially explored in \(n - 1\) snapshots by taking any ordering of the vertices,
it is not true that the vertices of any semicomplete temporal digraph
can be explored in any order.
We can, however, prove that they always admit a temporal exploration
within a linear number of snapshots.
We also show that the bounds we obtain are tight (\cref{sec:always-semicomplete exploration}).
\begin{restatable}{theorem}{semicompleteExploration}
	\label{theorem:tournament-exploration-upper-bound}
	Let \(D\) be a semicomplete temporal digraph of order $n\geq 2$. If $\Lifetime{D}\geq \lfloor \frac{4}{3}n - 1\rfloor$, then \(D\) admits a temporal exploration.
\end{restatable}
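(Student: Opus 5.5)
The plan is a phase-by-phase amortisation. We build the exploration greedily, and we show that whenever a snapshot is \emph{wasted}, meaning no new vertex is reached, the following snapshots repay it. Concretely, the target bound \(\lfloor \frac{4}{3}n - 1\rfloor\) equals the \(n-1\) snapshots needed to reach the \(n-1\) non-starting vertices, plus roughly \((n-1)/3\) wasted snapshots. So we aim to show that, as long as enough unvisited vertices remain, in every block of four consecutive snapshots we can reach at least three new vertices. Since the accounting at the end is delicate, we also choose the starting vertex (and the first move) freely and optimise it.

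First, the basic observation. Let \(v\) be the current vertex and \(U\) the set of unvisited vertices at time \(i\). Either \(v\) has an out-neighbour in \(U\) in \(D_i\), and we gain a vertex, or every vertex of \(U\) dominates \(v\) in \(D_i\), by semicompleteness. In the second case we do not simply wait. We use the wasted arc of \(D_i\) to reposition to a vertex that is well placed for \(D_{i+1}\).

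The key lemma we would prove is a king-type statement for two or three consecutive semicomplete snapshots. From any vertex \(v\), and any set \(U\) with \(|U|\ge 3\), there is a temporal walk in \(D_i,\dots,D_{i+3}\) starting at \(v\) that visits three vertices of \(U\). The approach is a case analysis on the snapshot in which greedy extension fails.
\begin{itemize}
\item If \(v\) has an out-neighbour \(u\in U\) in \(D_i\), we choose \(u\) among the out-neighbours so that it has an out-neighbour in \(U\setminus\{u\}\) in \(D_{i+1}\) whenever possible. Here we use that a semicomplete digraph on \(U\) always has a vertex (a king) with large out-reach inside \(U\).
\item If \(v\) is dominated by all of \(U\) in \(D_i\), then in \(D_i\) we instead move \(v\) to a visited vertex \(w\), or stay at \(v\). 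We show that at least one such choice has out-arcs into \(U\) in each of \(D_{i+1}, D_{i+2}, D_{i+3}\) along some walk. The argument is a counting one: if every choice were blocked, some vertex of \(U\) would dominate all alternatives in every snapshot, and we could then start the block from that vertex instead.
\end{itemize}
This lemma is presumably where \cref{lemma:directed_king} from the acknowledgements enters.

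Then we iterate the lemma while \(|U|\ge 3\). The final phase, with \(|U|\le 2\) and at most three or four snapshots left, is handled by a direct check; in particular at most one wasted snapshot occurs there. The floor in \(\lfloor \frac43 n-1\rfloor\) is absorbed by a careful choice of the first vertex. For example, we start at a vertex with an out-neighbour in \(D_1\), so the first snapshot is never wasted.

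The main obstacle is the four-snapshot lemma. In particular, when \(U\) entirely dominates \(v\), it is not clear that one wasted move suffices to reach a vertex with out-arcs into \(U\) in the following snapshots, since those snapshots may be adversarial. I would first try to strengthen the invariant to ``the current vertex has an out-neighbour in \(U\) in the next snapshot'', and prove that a single wasted step can always restore this invariant for three consecutive gains. If that fails, I would weaken to an average-case potential argument over longer blocks.
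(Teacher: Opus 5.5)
\paragraph{The gap.}
Your four-snapshot lemma is false as stated, and this is exactly where the difficulty of the theorem sits. Take \(D_i=\dots=D_{i+3}\) to be one transitive tournament whose sink is \(v\). Then no temporal walk from \(v\) ever leaves \(v\). Your second bullet has no arc to reposition along. The escape ``start the block from that vertex instead'' is also not available to a walk that is already sitting at \(v\).

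The strengthened invariant you propose does not rescue the lemma. Let \(a\in U\) be the only out-neighbour of \(v\) in \(D_i\), and let \(D_{i+1},D_{i+2},D_{i+3}\) be a transitive tournament whose two lowest vertices are \(a\) and then \(v\). The current vertex has an out-neighbour in \(U\) in the next snapshot. Yet every temporal walk from \(v\) in these four snapshots stays inside \(\{v,a\}\), so at most one new vertex is gained.

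The theorem survives such configurations only because an earlier choice can avoid ending at \(v\). Any purely local, forward condition on the current vertex can be defeated by adversarial future snapshots. You therefore need a condition that looks ahead at \emph{where each block ends}.

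\paragraph{The missing idea.}
The invariant that works is global: the current vertex can temporally reach \emph{every} unvisited vertex within the next two snapshots, or dominates all of them in the next one. The next block may then start at any unvisited vertex. What must be proved is \cref{lemma:tournament-exploration-upper-bound}. In any four semicomplete snapshots there is a three-vertex temporal path \(P\), inside \((D_1,D_2)\) (resp.\ \((D_1,D_2,D_3)\)), whose terminal vertex \(v^\star\) reaches every vertex of \(V(D)\setminus V(P)\) in \((D_3,D_4)\) (resp.\ dominates them in \(D_4\)). The start of \(P\) is chosen by the lemma, not imposed by the previous block.

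Its proof uses three ingredients:
\begin{itemize}
\item \cref{lemma:directed_king}, including its uniqueness clause that a unique two-snapshot king is a source in both snapshots;
\item \cref{lemma:exploration_K3};
\item choosing the king of \((D_3,D_4)\) with maximum out-degree in \(D_3\).
\end{itemize}

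The theorem then follows by induction on \(n\). Explore \((D_5,\dots,D_t)\) restricted to \(V(D)\setminus V(P)\) recursively from some start \(u\). Then join \(v^\star\) to \(u\) in snapshots \(3\) and \(4\) (or in snapshot \(4\)). Since \(\lfloor\frac43(n-3)-1\rfloor=\lfloor\frac43 n-1\rfloor-4\), the accounting and the floor take care of themselves, with no separate endgame or choice of starting vertex needed.
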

\begin{restatable}{theorem}{semicompleteExplorationLowerBound}
	\label{statement:tournament-exploration-lower-bound}
	For every integer \(n\geq 2\) there exists a temporal tournament \(D\) of order \(n\) with $\Lifetime{D}=\lfloor \frac{4}{3}n-2\rfloor$ that does not admit any temporal exploration.
\end{restatable}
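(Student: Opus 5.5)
The plan is an explicit construction built from triangles. Since \(\lfloor \frac{4}{3}n-2\rfloor\) is at most \(0\) for \(n=2\), that case is trivial (a walk in a lifetime-0 temporal digraph visits one vertex), and the remaining cases reduce to showing that every temporal exploration needs roughly four snapshots for every three new vertices. The key gadget is the \emph{alternating triangle}: on \(\{a,b,c\}\), odd snapshots contain the cyclic triangle \(a\to b\to c\to a\) and even snapshots the reversed triangle \(a\to c\to b\to a\). Inside such a triangle, two consecutive moves in consecutive snapshots always return to the start. So to visit all three vertices of the triangle by internal moves, the walk must spend at least three snapshots, including one spent waiting to recover the parity, rather than two. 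This already settles \(n=3\) with lifetime \(2\).

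For general \(n\), write \(n=3k+r\) with \(r\in\{0,1,2\}\), partition \(V\) into triples \(T_1,\dots,T_k\) and at most two leftover vertices, and orient each triple as an alternating triangle. Between different triples, and between triples and leftover vertices, I would orient all arcs according to a linear order of the blocks that is fixed in time (say \(T_i\Rightarrow T_j\) for \(i<j\)), with the leftover vertices placed at the ends. Then any walk can only move forward through blocks and never return to an earlier block. Entering a block costs one snapshot, and exploring it internally costs at least three snapshots for a triple. A triple entered once then costs at least four snapshots in total; the very first block is not entered, so it costs only three. Summing over blocks, and adjusting for \(r\), should give a lower bound of the form \(\lceil\frac{4}{3}n\rceil-1\) or close to it, i.e.\ strictly larger than \(\lfloor \frac43 n-2\rfloor\), so lifetime \(\lfloor \frac{4}{3}n-2\rfloor\) admits no exploration.

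The main obstacle is the accounting in this last step. I need the exact lower bound on the number of snapshots, uniformly over all residues \(r\), and I need to handle walks that enter a triple at a vertex and at a time whose parity is favourable. For instance, the walk could enter at \(b\) at an even time and then continue \(b\to c\) in an odd snapshot. I would first check carefully that entry plus covering three vertices of an alternating triangle always needs four snapshots whatever the entry parity. If it can be done in three, I would modify the gadget: for example, let the inter-block arcs point only into a designated vertex of each triple at odd times, forcing the entry parity, or shift the alternation phase of consecutive triples. I would then verify the small cases \(n=4,5,6\) by hand to fix the leftover-vertex placement before writing the general count.
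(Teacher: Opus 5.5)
Your plan is essentially the paper's proof: alternating directed triangles $U_1,\dots,U_k$ ordered transitively, together with the $r$ leftover vertices (the paper puts them first, as a transitive tournament dominating everything else), and a count of four snapshots per entered triple and one per entered singleton. The obstacle you flag does not arise, since after the entry arc the other two vertices of the triple still need two internal moves of equal parity, hence at least three further snapshots, whatever the entry vertex or time; the total is then exactly $4k+r-1=\lfloor\frac{4}{3}n-1\rfloor$ (not $\lceil\frac{4}{3}n\rceil-1$ when $r\neq 0$), which is one more than the lifetime for any placement of the leftover vertices, so no modification of the gadget is needed.
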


From an algorithmic perspective,
we consider the \TemporalExploration/ problem,
where we are given a temporal digraph \(D\) and
need to decide whether it admits a temporal exploration or not, and
also its optimization version, \TemporalExplorationFastest/, where we need to find the smallest \(t'\)
such that \(D\) admits a temporal exploration within the first \(t'\) snapshots
(to simplify the technical details, \(D\) is guaranteed to have a temporal exploration in the optimization setting).
These problems remain computationally hard even in very restricted settings,
with no factor \(2 - \epsilon\) approximation algorithm for \TemporalExplorationFastest/, even on always-strong temporal digraphs
\cite{michailTCS634}, and we provide further hardness results.
\begin{restatable}{theorem}{explorationSemicompleteNPHard}
	\label{thm:np_hardness}
	\TemporalExploration/ remains \NP/-hard even if the input $D$ is a temporal tournament and $\Lifetime{D} =|V(D)|-1$.
\end{restatable}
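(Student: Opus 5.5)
With lifetime exactly $|V(D)|-1$, a temporal exploration has no slack. It must enter a previously unvisited vertex in every snapshot. So it is a sequence $v_0,v_1,\dots,v_{n-1}$ of all vertices with $(v_{i-1},v_i)\in A(D_i)$ for every $i$. I therefore plan to reduce from an \NP/-complete Hamiltonian-type problem on static digraphs, using the snapshot index to simulate the missing non-arcs. The source problem will be \textsc{Directed Hamiltonian Path} restricted to \emph{bipartite} oriented graphs $G$ with bipartition $(A,B)$, $|A|=|B|+1$, and a prescribed start vertex $s\in A$. This restriction stays \NP/-complete via the standard vertex-splitting reduction: replace $v$ by $v^{\mathrm{in}}\to v^{\mathrm{out}}$.

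The construction takes $V(D)=V(G)$, so $n=|A|+|B|$ and $\Lifetime{D}=n-1$. The intended walk alternates sides: it is in $A$ at even times and in $B$ at odd times. The snapshots are tournaments defined as follows.
\begin{itemize}
\item In an odd snapshot $D_{2j+1}$, every $A$--$B$ pair is oriented from $A$ to $B$ if that arc is in $G$, and from $B$ to $A$ otherwise.
\item In an even snapshot $D_{2j}$, every $A$--$B$ pair is oriented from $B$ to $A$ if that arc is in $G$, and from $A$ to $B$ otherwise.
\item Pairs inside $A$ and pairs inside $B$ are oriented so as to punish leaving the intended parity.
\end{itemize}
With these orientations, a walk that alternates sides correctly can only use arcs of $G$ whenever it crosses sides. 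Hence alternating temporal Hamiltonian paths in $D$ correspond exactly to Hamiltonian paths of $G$ starting in $A$. The forward direction is then immediate: a Hamiltonian path of $G$ starting at $s$ uses, at step $i$, an arc that the construction has put into $D_i$.

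The main obstacle is the converse. An arbitrary exploration might take an edge inside $A$ or inside $B$, or cross sides against the intended parity, since a tournament must orient every pair. First I would fix the start. I would add a short prefix gadget in which $s$ dominates everything in $D_1$ and only a forced path is available early on, so that any exploration must begin at $s$. Then I would argue by counting. Since $|A|=|B|+1$ and the walk visits every vertex exactly once in exactly $n-1$ steps, the only sequences covering both sides are ones in which some step stays within a side or the sides alternate perfectly. To kill within-side steps, I would choose the internal orientations depending on parity. In odd snapshots, $A$ is made transitive according to a fixed order $\sigma$ and $B$ according to $\sigma'$. In even snapshots both orders are reversed. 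A within-side move in one snapshot then has to be paid for later, and I would show that this causes the walk to get stuck, i.e.\ to reach a vertex all of whose unvisited neighbours are in-neighbours in the next snapshot.

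If this parity-counting argument does not close, I would fall back to a reduction from \textsc{3-SAT} with explicit gadgets. In that version each snapshot is a tournament consisting of a transitive ``spine'' that forces the walk's position at each time. Variable gadgets would offer two forward branches, and clause vertices would be insertable only at times reserved for them. Soundness would again rest on the fact that with lifetime $n-1$ no step can be wasted.
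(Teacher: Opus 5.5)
\textbf{The main reduction is unsound.} Its converse direction fails, and no choice of orientations inside $A$ and $B$ can repair it. Each side induces a tournament in every snapshot, so within-side moves cannot be forbidden. A single within-side move puts the walk off parity. From then on, every crossing is governed by the \emph{complement} of $G$: in an odd snapshot $b\to a$ whenever $(a,b)\notin A(G)$, and in an even snapshot $a\to b$ whenever $(b,a)\notin A(G)$.

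Here is a concrete counterexample. Take $|B|=m\geq 2$ and let the only arcs of $G$ be $s\to b$ for $b\in B$. Then $G$ has no Hamiltonian path. In your $D$, which has $n-1=2m$ snapshots, the following sequence is a temporal exploration starting at $s$: first $s,b_1,b_2$, then alternately $a_1,b_3,a_2,b_4,\dots,b_m,a_{m-1}$, and finally $a_m$. Each step is available:
\begin{itemize}
\item $s\to b_1$ is an arc of $G$, hence present in $D_1$.
\item $b_1\to b_2$ is a within-$B$ arc of $D_2$, once you label the two vertices according to the reversed order $\sigma'$.
\item Every later crossing is an off-parity crossing, and it is present because no arc of $G$ leaves $B$ or leaves $A\setminus\{s\}$.
\item The last step $a_{m-1}\to a_m$ is a within-$A$ arc of $D_{2m}$, once you label the last two $A$-vertices according to the reversed $\sigma$.
\end{itemize}
So forcing the start at $s$ does not help either. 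Your counting remark (either some step stays within a side or the sides alternate perfectly) is a tautology and excludes nothing. The claim that a within-side move makes the walk get stuck is false: the complement of a sparse $G$ lets it move almost freely.

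\textbf{The fallback lacks the essential mechanism.} It names the right intuition, namely vertices usable only at reserved times. But it gives no way to enforce this when every pair is adjacent in every snapshot, and that enforcement is the heart of the paper's proof, which has two steps.
\begin{itemize}
\item \emph{Static hard problem.} The paper first proves NP-hardness of finding a Hamiltonian path in a \emph{static} tournament $T$ in which given vertices $s_j$ occupy prescribed positions $\ell_j$, with $\ell_1=1$ and $\ell_{j+1}\geq \ell_j+2$. The reduction is from (3,B2)-SAT. All non-forward arcs point backwards, so an $s_1$--$s_2$ path of exact length $2n+1$ must pick $\{u_i^1,u_i^2\}$ or $\{v_i^1,v_i^2\}$ for each variable. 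Each clause vertex, sitting at its prescribed position, must then be followed by an unpicked literal vertex.
\item \emph{Time-pinning gadget.} The paper keeps $T[W]$ in every snapshot and replaces each $s_j$ by a triple $S_j$. The internal arcs of $S_j$ alternate between the two opposite directed triangles, except at time $\rho(j)=\ell_j+2j-2$, when they are transitive. Two consecutive snapshots carrying opposite triangles admit no temporal path through all three vertices. The arcs between $S_j$ and the rest are switched over time: $S_j$ dominates $W$ before $\rho(j)-1$, is dominated at $\rho(j)$ and $\rho(j)+1$, and dominates again from $\rho(j)+3$ on. Hence $S_j$ can be entered only from $N^-_T(s_j)$ in $D_{\rho(j)-1}$ and left only towards $N^+_T(s_j)$ in $D_{\rho(j)+2}$. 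This forces it to occupy exactly positions $\rho(j),\rho(j)+1,\rho(j)+2$ of any exploration with lifetime $|V(D)|-1$.
\end{itemize}
Without such a time-pinning gadget, and without a source problem whose hardness survives full adjacency, your 3-SAT sketch is not yet a reduction.
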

\begin{restatable}{theorem}{explorationSemicompleteApproximationHard}
    \label{statement:semicomplete approximation hardness}
    For every $\epsilon>0$, \TemporalExplorationFastest/ cannot be approximated within a factor $(\frac{4}{3}-\epsilon)$, even when restrained to temporal tournaments, unless \P/ = \NP/.
\end{restatable}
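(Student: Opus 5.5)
We prove \cref{statement:semicomplete approximation hardness} by a gap-producing reduction built on top of the construction behind \explorationsemicompletenphardref{}, padded with lower-bound gadgets in the spirit of \semicompleteexplorationlowerboundref{}. The aim is to map an instance \(I\) of an \NP/-hard problem to a temporal tournament \(D_I\) on \(N\) vertices with two properties. If \(I\) is a yes-instance, then \(D_I\) admits a temporal exploration within the first \(N + o(N)\) snapshots (ideally exactly \(N-1\)). If \(I\) is a no-instance, then \(D_I\) admits no temporal exploration within the first \((\frac{4}{3} - o(1))N\) snapshots, but it does admit one by the end. The last requirement is guaranteed by appending \(\lfloor \frac{4}{3}N - 1\rfloor\) arbitrary tournament snapshots and invoking \semicompleteexplorationref{}. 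A \((\frac43-\epsilon)\)-approximation algorithm would then distinguish the two cases once \(N\) is large enough relative to \(1/\epsilon\), so \P/ = \NP/.

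The construction has three steps.

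\textbf{Step 1.} Take the hard instance \(D'\) on \(m\) vertices from the reduction of \explorationsemicompletenphardref{}, with lifetime \(m-1\). In this instance an exploration within \(m-1\) snapshots exists if and only if \(I\) is a yes-instance. Such an exploration must move along an arc in every snapshot and visit a new vertex each time.

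\textbf{Step 2.} Extend each snapshot of \(D'\) by a large set \(B\) of fresh vertices, with \(|B| = M \gg m\), and add further snapshots. The vertices of \(B\) are arranged so that two things hold. First, a walk that has finished exploring \(V(D')\) on time can sweep through \(B\) in about \(M\) more steps; this is done by orienting a Hamiltonian path of \(B\) appropriately in later snapshots. Second, \(B\) mimics the adversarial tournament of \semicompleteexplorationlowerboundref{}, so that any walk that is ``late'' or ``misaligned'' pays a multiplicative penalty of \(4/3\) on the \(B\)-part. For the second property I would reuse the mechanism of the lower-bound tournament. I expect it traps the walker in small groups (three vertices per four snapshots, judging by the ratio \(4/3\)) that must be revisited. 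The timing is offset so that the efficient schedule is available only to a walker that exits \(V(D')\) at the correct time and vertex.

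\textbf{Step 3.} In the no case, argue that any walk either fails to finish \(V(D')\) by time \(m-1\) or enters \(B\) with the wrong phase, and in both cases needs about \(\frac43 M\) snapshots to cover \(B\). With \(N = m + M\) and \(M/m \to \infty\), the ratio tends to \(4/3\).

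The main obstacle is Step 2: engineering \(B\) so that the \(4/3\) lower-bound behaviour is triggered precisely by failure on \(D'\), while every snapshot stays a tournament. One also has to ensure that the walker cannot gain by interleaving visits to \(B\) with visits to \(V(D')\). To control the interleaving, I would try making all arcs between \(V(D')\) and \(B\) point from \(V(D')\) to \(B\) during the first \(m-1\) snapshots, so that entering \(B\) early is irreversible and leaves part of \(V(D')\) unexplored. In the later snapshots I would orient them from \(B\) to \(V(D')\) only when this is harmless. A simpler alternative, if it works, is to replace the last vertex gadget of the \NP/-hard construction by a copy of the lower-bound tournament, whose start phase is determined by the hard part.
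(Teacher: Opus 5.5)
\textbf{Gap in Step~2.} Your outer framework matches the paper's proof. You start from a temporal tournament $D$ of order $n$ and lifetime $n-1$ given by \cref{thm:np_hardness}, add $\Theta(n/\epsilon)$ fresh vertices that must be explored after $V(D)$, secure the promise via \cref{theorem:tournament-exploration-upper-bound}, and separate a yes-value of about $n+3N$ from a no-value of about $n+4N$. But Step~2 is never carried out, and the concrete suggestion you make would not work.

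Suppose $B$ literally copies the tournament of \cref{statement:tournament-exploration-lower-bound}, where every triple alternates between the two opposite $\vec{C_3}$'s in every snapshot. Then two moves inside a triple that visit three distinct vertices must use the same orientation. So they occur at snapshots of the same parity, at least two apart. Hence every walker, punctual or late, spends four snapshots per triple. The yes-walker pays the same factor $4/3$ as the no-walker, and there is no gap. Your alternative of grafting such a copy onto the hard instance fails for the same reason: its cost does not depend on the arrival time. As written, Steps~2 and~3 are expectations, not arguments.

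\textbf{The missing idea.} Each triple needs a one-time shortcut that only a punctual walker can use. Take triples $U_j=\{x_j,y_j,z_j\}$ for $j\in[N]$. In every snapshot, let $V(D)$ dominate every $U_j$, and let $U_j$ dominate $U_{j'}$ for $j<j'$. This alone removes your interleaving worry: no arcs back towards $V(D)$ are ever needed, and the exit vertex from $V(D)$ is irrelevant.

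Inside $U_j$, alternate the two opposite directed triangles, with one exception. Snapshot $n+3j-2$ is the transitive tournament $x_j\to y_j\to z_j$, $x_j\to z_j$. The phase is chosen so that snapshots $n+3j-3$ and $n+3j-1$ both contain $x_j\to y_j\to z_j$.

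\textbf{Yes case.} A walker that explores $V(D)$ within the first $n-1$ snapshots enters $x_1$ at snapshot $n$. It then follows $x_1y_1z_1x_2\cdots z_N$, one arc per snapshot, and finishes at snapshot $n+3N-1$.

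\textbf{No case.} If $D$ has no exploration, the walker cannot leave $V(D)$ before snapshot $n+1=n+3\cdot 1-2$. So it enters $U_1$ no earlier than the shortcut, all its moves inside $U_1$ fall in the alternating regime, and it enters $U_2$ at snapshot $\geq n+5>n+3\cdot 2-2$. By induction it enters $U_j$ at snapshot $\geq n+4j-3\geq n+3j-2$, so it needs at least $n+4N$ snapshots. The delay can never be recovered, because the shortcuts advance by three snapshots per triple while a late walker advances by four.

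\textbf{Conclusion.} Choosing $N=\lceil n/\epsilon\rceil$ gives $(\frac43-\epsilon)(n+3N-1)<n+4N$. This is exactly what a $(\frac43-\epsilon)$-approximation needs to distinguish the two cases.
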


Note that \cref{theorem:tournament-exploration-upper-bound} provides a simple
approximation algorithm for \TemporalExplorationFastest/:
return the minimum between $\Lifetime{D}$ and $\Floor{\frac{4}{3}|V(D)|-1}$.
As every temporal exploration requires at least \(\Abs{\V{D}} - 1\) many snapshots, and \(D\) is guaranteed to admit some temporal exploration,
by \cref{theorem:tournament-exploration-upper-bound}
this simple algorithm outputs a solution $t\in [\ell(D)]$ such that
\(
    t\leq \left\lfloor \tfrac{4}{3}t_{OPT} +\tfrac{1}{3}\right\rfloor.
\)

Back to the structural standpoint,
we generalize \cref{theorem:tournament-exploration-upper-bound,statement:tournament-exploration-lower-bound} by considering temporal digraphs
where the snapshots have large minimum degree.
To simplify notation, given an undirected graph \(G\),
we define \(\MindegComplement{G} \coloneqq \Abs{\V{G}} - \Mindeg{G} - 1\).
For a digraph \(D\), define \(\MindegComplement{D} \coloneqq \MindegComplement{\UnderlyingGraph{D}}\),
and for a temporal digraph \(D' = (D_i)_{i ∈ [t]}\), we define
\(\MindegComplement{D'} \coloneqq \max_{i ∈ [t]} \MindegComplement{D_i}\).
Note that for every \(v ∈ \V{G}\) there are at most \(\MindegComplement{G}\)
distinct vertices of \(\V{G} ∖ \{v\}\)
which do not have an edge to \(v\).
In particular, \(\MindegComplement{D'} = 0\)
if and only if every \(D_i\) is semicomplete.
The following asymptotically generalizes \cref{theorem:tournament-exploration-upper-bound,statement:tournament-exploration-lower-bound} in the case where all snapshots are strongly connected.
We leave as an open question whether the upper bound can be further generalized to always-unilateral temporal digraphs (\cref{sec:large minimum degree}).
\begin{restatable}{theorem}{largeMinimumDegree}
	\label{statement:degree-n-c-upper-bound-strong}
    For every integer $c$, every always-strong temporal digraph \(D\) on \(n\) vertices
	with \(\MindegComplement{D} \leq c\)
	and \(\Lifetime{D} \geq \bound{statement:degree-n-c-upper-bound-strong}{t}{n,c} ∈ \BigO{cn}\) admits a temporal exploration.
\end{restatable}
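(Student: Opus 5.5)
The plan is to visit the vertices one at a time and show that each new vertex costs only \(\BigO{c}\) snapshots, amortised; summing over the \(n\) vertices then gives \(\BigO{cn}\).

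\textbf{Local step.} First I would record the key fact about a single snapshot \(D_i\) with \(\MindegComplement{D_i}\le c\). For any vertex \(u\), every vertex except at most \(c\) is adjacent to \(u\), in one direction or the other. So if the walk currently sits at \(u\) and \(X\) is the set of still unvisited vertices, then either \(u\) has an out-arc into \(X\) in \(D_i\), or every vertex of \(X\) adjacent to \(u\) sends an arc into \(u\). In the first case we take that arc and gain a new vertex in a single step. The difficulty is the second case: \(u\) may be a ``sink'' for \(X\), and the snapshots change adversarially.

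\textbf{Potential argument, adapting the semicomplete proof.} To handle sinks, I would mimic the proof of \cref{theorem:tournament-exploration-upper-bound}, using strong connectivity instead of semicompleteness. Since \(D_i\) is strong, from \(u\) there is some path to \(X\). The first vertex of \(X\) on that path, call it \(x\), is preceded by a vertex \(w\) that is already visited and has an arc \(wx\). The aim is to show that whenever the current vertex has no out-neighbour in \(X\), we can reach in one step a visited vertex \(w\) that has an out-neighbour in \(X\); this needs the path from \(u\) to leave \(u\)'s neighbourhood quickly. Here the degree condition should help: if \(u\) has no out-arc to \(X\), then \(u\) has out-arcs only into \(V\setminus X\) and into its at most \(c\) non-neighbours' complement. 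I would like to argue that the \(u\)-to-\(X\) distance is \(\BigO{c}\) in every snapshot. Concretely, two vertices at distance \(\geq 3\) in a snapshot share no common neighbour. Hence along a shortest path \(v_0v_1\dots v_d\), the vertices \(v_0,v_3,v_6,\dots\) are pairwise nonadjacent-ish, with disjoint neighbourhoods in the underlying graph. Each neighbourhood has size at least \(n-c-1\), which forces \(d\le 3\) once \(n>2c+2\), or more generally gives \(d = \BigO{1}\) in the undirected sense. Directed distance in a strong digraph of small undirected diameter is still only bounded by \(\BigO{c}\): a shortest directed path \(P\) cannot have chords forward, and each vertex of \(P\) is adjacent to all but \(c\) others. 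So vertices far apart on \(P\) must be joined by backward arcs, and a counting argument on the non-adjacencies (each vertex has at most \(c\)) bounds the length of \(P\) by roughly \(2c+2\).

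\textbf{Assembling.} With this bound in hand, I would make the walk greedy. Within any window of \(\BigO{c}\) consecutive snapshots, I would try to show the walk can reach a new vertex, but a single fixed path does not survive changes of snapshot. So instead I would use a progress measure: the walk follows, step by step, a shortest path to \(X\) in the \emph{current} snapshot. The main obstacle is that the adversary can reset the distance at every step. To overcome it, I would try the trick from the unilateral section (\cref{statement:always-unilateral walk covering S}), a ``temporal root'' argument applied to the set \(X\) contracted: with short directed distances, a pigeonhole over \(\BigO{c}\) consecutive snapshots should show that some visited vertex is reachable and has an arc into \(X\) at the right time. If this per-vertex cost of \(\BigO{c}\) holds, summing over \(n\) vertices gives the bound \(\bound{statement:degree-n-c-upper-bound-strong}{t}{n,c}\in\BigO{cn}\). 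The small case \(n\le 2c+2\) is handled instead by the general \(\BigO{n^2}=\BigO{cn}\) bound of \cref{statement:always-unilateral walk covering S}.
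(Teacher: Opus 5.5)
There is a genuine gap. Your central claim is false: in a strongly connected snapshot where every vertex has at most $c$ non-neighbours, a shortest directed path need not have length $O(c)$, not even for $c=0$.

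Take the tournament on $v_0,\dots,v_{n-1}$ ($n\ge 4$) with arcs $v_iv_{i+1}$ for $i\le n-2$, and $v_jv_i$ whenever $j\ge i+2$. It is strong, since it has a Hamiltonian path and the arc $v_{n-1}v_0$. Yet the out-neighbourhood of $v_i$ is $\{v_{i+1},v_0,\dots,v_{i-2}\}$, so the directed distance from $v_0$ to $v_{n-1}$ is $n-1$. Your counting argument has nothing to count: the degree condition only limits \emph{non-adjacent} pairs, and every non-consecutive pair on a shortest path may be joined by a backward arc.

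The same example defeats your weaker aim of reaching, in one step, a visited vertex with an out-arc into $X$. Put the walk at $v_0$ with $X=\{v_{n-1}\}$: the only out-neighbour of $v_0$ is $v_1$, while the only in-neighbour of $v_{n-1}$ is $v_{n-2}$. If every snapshot equals this tournament, the next new vertex costs $n-1$ snapshots.

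So no local rule gives $O(c)$ snapshots per new vertex. An amortised bound must come from controlling globally where the walk ends up, and your ``Assembling'' paragraph only hopes for this. The temporal-root argument you invoke (\cref{statement:unilateral layers -> temporal root}) costs $2n-k-1$ snapshots per new vertex, which sums to $\Theta(n^2)$, not $O(cn)$.

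The missing idea is twofold. First, pay the expensive reconnection (up to $n-1$ snapshots, via \cref{statement:strong-temporal-u-v-path}) only $O(c)$ times. Second, make each intermediate block of $\Theta(n)$ snapshots usable for \emph{any} admissible selection of vertices.

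The paper does this with travel plans. By \cref{statement:degree-n-c-travel-plan} (built on \cref{statement:degree-n-c-exact-princesses}), such a block admits a partition $(T_1,\dots,T_k)$ of $V(D)$ into sets of size at most $w=5c+1$. Any vertices chosen from $T_{i_1},T_{i_2},\dots$ with $i_{j+1}>i_j+1$ can then be visited in this order by a single temporal walk.

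The paper takes $p\ge 16(3w-1)+1=O(c)$ such blocks, each followed by $n-1$ snapshots for reconnecting. The Lov\'asz Local Lemma (\cref{statement:bounded degree independent set cover}) then splits $V(D)$ into $X_1,\dots,X_p$ with each $X_i$ admissible for the $i$-th plan. This gives a total lifetime of $O(c)\cdot O(n)$. Deciding greedily what each block visits, as in \cref{statement:degree-n-c-upper-bound-strong-log}, only yields $O(cn\log n)$.
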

\begin{restatable}{theorem}{largeMinimumDegreeLowerBound}
	\label{statement:degree-n-c-lower-bound}
	For every \(c \geq 1\) and every \(n \geq c+1\) there is 
	an always-strong temporal digraph \(D\) on \(n\) vertices
	where	\(\MindegComplement{D} \leq c\), \(\Lifetime{D} = c(n - c - 2)\), and
	\(D\) does not admit a temporal exploration.
\end{restatable}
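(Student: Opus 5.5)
The plan is an explicit construction built from two kinds of vertices. I take a set \(X=\{x_1,\dots,x_c\}\) of \emph{special} vertices and a set \(Y=\{y_1,\dots,y_m\}\) of \emph{ordinary} vertices, with \(m=n-c\). The time axis is cut into \(c\) phases of \(n-c-2\) snapshots each, for total lifetime \(c(n-c-2)\). The goal is that each special vertex costs the walker essentially a whole phase.

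\textbf{Snapshot structure.} Every snapshot contains the same "slow" semicomplete digraph on \(Y\), with respect to a linear order \(\sigma_i\) of \(Y\) that may depend on the snapshot:
\begin{itemize}
\item a forward Hamiltonian path \(\sigma_i(1)\to\sigma_i(2)\to\dots\to\sigma_i(m)\);
\item all backward arcs \(\sigma_i(q)\to\sigma_i(p)\) for \(p<q\).
\end{itemize}
So in one snapshot the walker advances at most one position in the current order, but can fall back arbitrarily far. The special vertices are arranged as follows:
\begin{itemize}
\item \(X\) is independent in every snapshot. This uses \(c-1\) of the allowed non-adjacencies of each \(x_j\).
\item Each \(x_j\) has exactly one in-neighbour in \(Y\), namely the top vertex \(\sigma_i(m)\).
\item Each \(x_j\) sends arcs to all other vertices of \(Y\).
\end{itemize}
This gives \(\Mindeg\ge n-c-1\) for the special vertices. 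Each \(y\in Y\) is adjacent to all of \(Y\) and all of \(X\), and \(\sigma_i(m)\) is even joined to every \(x_j\) in both directions. Every snapshot is strong: the Hamiltonian path together with the arc \(\sigma_i(m)\to\sigma_i(1)\) already gives a Hamiltonian cycle, and each \(x_j\) is entered from \(\sigma_i(m)\) and leaves to \(\sigma_i(1)\). Hence \(\MindegComplement{D}\le c\), and \(D\) is always-strong. Because \(X\) is independent and every arc into \(X\) comes from the top vertex, a walk can only visit a new special vertex by standing on the current top vertex \(\sigma_i(m)\) and taking the arc into \(x_j\) at time \(i\).

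\textbf{Choice of orders and the key claim.} The orders \(\sigma_i\) are chosen so that the top vertex is always "far" from wherever the walker can be after leaving a special vertex. When \(x_j\) is left at time \(i\), the walker can land on any \(y\ne\sigma_i(m)\). My first attempt is a reversal rule: the vertex that is top in one snapshot becomes the bottom in the next, and the remaining vertices shift up by one, so the order changes by a cyclic rotation. I would then define a potential \(\phi_i\), the position of the walker in \(\sigma_i\) (or \(-\infty\) while on \(X\)). The key claim is that \(\phi\) grows by at most one per snapshot except immediately after an exit from \(X\), and that after such an exit it is at most a constant. It follows that between two consecutive visits to new special vertices at least about \(m-2=n-c-2\) snapshots elapse. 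Summing over the \(c\) special vertices, no walk within \(c(n-c-2)\) snapshots covers all of \(X\), up to adjusting the phase lengths by the constant lost at the start, where the walker may begin at the top. The construction should also be checked against the degenerate case \(n=c+1\), where the bound is trivial.

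\textbf{Main obstacle.} The hard step is that key claim, namely designing \(\sigma_i\) so that both effects hold simultaneously. Relative to a moving order, the walker's forward steps must not be "helped" by the change of order between snapshots. At the same time, the out-arcs from \(X\) to almost all of \(Y\) must not give a cheap shortcut to the top. A cyclic rotation helps with the first effect but can aid a walker who does not move. If it fails, I would try a fallback that exploits the degree budget on \(Y\): each \(y\) may be non-adjacent to up to \(c\) vertices, so each \(x_j\) could be made non-adjacent to \(\sigma_i(m-1),\sigma_i(m-2),\dots\) rather than to other special vertices. This would force exits from \(X\) to land low in the order, and then a static order with only phase-wise changes should already give the \(n-c-2\) delay per special vertex.
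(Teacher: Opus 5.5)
Your plan has two independent gaps, and either one alone is fatal. \textbf{First, the accounting cannot work with only \(c\) special vertices.} A walk may \emph{start} on \(x_1\), so only the \(c-1\) transitions between consecutive new special vertices are forced. The loss at the start is therefore a whole phase, not a constant. Worse, every snapshot is strongly connected, so \cref{statement:strong-temporal-u-v-path} lets a walk get from any vertex to any other within \(n-1\) snapshots. Hence \emph{any} \(c\) vertices can be covered within \((c-1)(n-1)\) snapshots, and \((c-1)(n-1)\le c(n-c-2)\) as soon as \(n\ge c^2+c+1\). So the intermediate statement you aim for (no walk within \(c(n-c-2)\) snapshots covers \(X\)) is false for large \(n\), whatever the orders \(\sigma_i\) are. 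The paper instead uses \(c+1\) special vertices \(v_1,\dots,v_{c+1}\) that are independent in every snapshot. Each then has exactly \(c\) non-neighbours, which exhausts the budget \(\MindegComplement{D}\le c\). Consequently every special vertex is adjacent to all ordinary vertices, everything has to be done by orientation, and \(c\) transitions are forced.

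\textbf{Second, your key claim fails for every choice of orders in your design, not just for the rotation.} Suppose the walk is on \(x_j\) before snapshot \(i\). If \(\sigma_{i'}(m)\neq\sigma_i(m)\) for some \(i'\in\{i+1,i+2\}\), take \(x_j\to\sigma_{i'}(m)\) in snapshot \(i\). This is legal, since \(x_j\) points to every vertex of \(Y\) except \(\sigma_i(m)\). Then wait, and enter any \(x_k\) in snapshot \(i'\). Otherwise the top \(y^\ast\) is the same in snapshots \(i,i+1,i+2\), and the route \(x_j\to\sigma_{i+1}(m-1)\to y^\ast\to x_k\) uses snapshots \(i,i+1,i+2\). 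So every new special vertex costs at most three snapshots. Under the rotation it is even easier: a walker landing on \(\sigma_i(m-1)\) is on top one snapshot later without moving. The fallback does not fit the budget either, because with \(X\) independent each \(x_j\) has only one spare non-adjacency.

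\textbf{The missing idea} is to let the special vertices point \emph{down} and be entered from \emph{above} a moving gate. Let \(n'=n-c-1\). In the \(i\)-th snapshot of a block, \(A\) points to \(u_1,\dots,u_i\) and receives arcs from all of \(u_{i+1},\dots,u_{n'}\). Moreover, the only arc leaving \(A\cup\{u_1,\dots,u_{i-1}\}\) goes into the gate \(u_i\). Then the set reachable from \(v_j\) within the first \(i\) snapshots of a block is exactly \(\{v_j,u_1,\dots,u_i\}\), which never meets the in-neighbourhood of \(A\). The gate restarts at \(u_1\) in each block of \(n-c-2\) snapshots, so no special vertex can reach another inside a block, regardless of when the walk entered it. Pigeonhole on the first-visit times of the \(c+1\) special vertices over the \(c\) blocks then finishes the proof.
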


To see why temporal digraphs of large minimum degree may arise naturally,
consider the following construction.
Given an always-unilateral temporal digraph \(D\) and a number \(k\),
group the snapshots of \(D\) \(k\)-by-\(k\) as
\(D = (D_1^1, D_2^1, \ldots, D^1_k, D^2_1, \ldots, D^t_k)\) and
let \(H_i\) be the digraph with same vertex set as \(D\) and
with an arc \((u,v)\) if \(u\) can temporally reach \(v\)
within the snapshots \(D^i_{1}, D^i_{2}, \ldots, D^i_{k}\) and
let \(H = (H_{1}, H_{2}, \ldots, H_{t})\).
Clearly, a temporal exploration using \(t'\) snapshots of \(H\)
induces a temporal exploration of \(D\) with at most \(kt'\) snapshots.
Moreover, \(H\) has minimum degree at least \(k\),
but potentially much more.
This implies that, even if \(D\) has bounded maximum degree
we might still be able to use \cref{statement:degree-n-c-upper-bound-strong}
in order to obtain an exploration for \(D\)
if the temporal digraph \(H\) constructed above has large minimum degree.
Even in the undirected setting, 
it is not clear that 
the parameter maximum degree offers the same generalizability.

Our proof of \cref{statement:degree-n-c-upper-bound-strong}
relies on the \emph{probabilistic method}, in particular,
on the famous \emph{Lovász Local Lemma} \cite{erdosIFS10}.
Toward this end, we define an auxiliary structure,
called a \emph{travel plan},
which describes a whole family of temporal walks that can be constructed in a temporal digraph.
Our techniques separate the task of finding a good ordering of the vertices for a temporal exploration from the task of finding vertices which can temporally reach many unexplored vertices.
As the definition of travel plan is independent from the degree conditions we consider,
we believe our methods can be applied to other settings by
reducing the core of the problem to finding many travel plans.

Additionally, to the best of our knowledge,
there are no known counterparts of \cref{statement:degree-n-c-lower-bound,statement:degree-n-c-upper-bound-strong} for the undirected setting.   While \cref{statement:degree-n-c-upper-bound-strong} immediately holds for temporal graphs,
\cref{statement:degree-n-c-lower-bound} does not,
as that would contradict previous subquadractic results on temporal graphs of bounded maximum degree due to \cite{EHK2021,BGMR2025ietg,ErlebachKLSS19}.
Moreover, if \(\MindegComplement{G} \leq n/2\),
then every vertex can reach every other vertex within two time steps,
and so \(2(n - 1)\) snapshots already guarantee a temporal exploration in this case.
The cases where \(\MindegComplement{G} > n/2\) are, however, open.
In \cref{sec:future work}, we discuss directions for further research.

\section{Preliminaries}
\label{sec:preliminaries}

Let $a \leq b$ be a integers. The set of integers $\{a,\dots,b\}$ is denoted by $[a,b]$,
and \([1,b]\) is also denoted by \([b]\).
For real numbers \(x,y\), we write \(\log_x y\) for the logarithm of \(y\) to base \(x\) and
\(\log y\) for \(\log_2 y\).

Our notation on digraphs follows~\cite{BG2009}. In this paper, digraphs may contain two arcs in opposite directions between the same pair of vertices, but no parallel arcs. A digraph without opposite arcs is called an \emph{oriented graphs}.

Let $D$ be a digraph. The vertex set of $D$ is denoted by $V(D)$, and its arc set by $A(D)$.
The \emph{underlying graph} of $D$, denoted by $\UnderlyingGraph{D}$, is the undirected graph with vertex set $V(D)$ containing all edges $\{u,v\}$ such that $(u,v)\in A(D)$ or $(v,u)\in A(D)$.
A digraph is \emph{semicomplete} if its underlying graph is complete.
A \emph{tournament} is a digraph with exactly one arc between every pair of vertices.

Let $u\in V(D)$. An \emph{out-neighbor} (respectively \emph{in-neighbor}) of $u$ is a vertex $v$ such that $uv\in A(D)$ (respectively $vu\in A(D)$). The set of out-neighbors (respectively in-neighbors) of $u$ is called the \emph{out-neighborhood} (respectively in-neighborhood) of $u$ and is denoted by $\OutN[D]{u}$ (respectively $\InN[D]{u}$).
Given a set \(U ⊆ \V{D}\) of vertices, we define \(\OutN[D]{U} = \bigcup_{u ∈ U}\OutN[D]{u} ∖ U\) and
\(\InN[D]{U} = \bigcup_{u ∈ U}\InN[D]{u} ∖ U\).
A \emph{source} of $D$ is a vertex $v$ with $\OutN[D]{v} = \emptyset$, and a \emph{sink} is a vertex $v$ with $\InN[D]{u} = \emptyset$.
The \emph{minimum degree} of an undirected graph \(G\) is given by \(\Mindeg{G} = \min_{v ∈ \V{G}} \Abs{\{e ∈ \E{G} \mid v ∈ e\}}\).

A \emph{directed walk} in $D$ is a finite sequence $(v_1,\dots,v_n)$ of vertices such that, for every $i\in [n-1]$, $v_iv_{i+1}\in A(D)$.
A \emph{directed path} is a directed walk with pairwise distinct vertices, and a \emph{directed cycle} is a directed walk $(v_1,\dots,v_{n},v_1)$, $n\geq2$, where $v_1,\dots,v_{n}$ are pairwise distinct.
We denote by $\vec{C_n}$ the digraph consisting of a directed cycle on $n$ vertices.
A digraph is \emph{acyclic} if it does not contain any directed cycle. The unique (up to isomorphism) acyclic tournament is called the \emph{transitive tournament}.

The digraph $D$ is \emph{strongly connected} if, for any pair of vertices $u,v\in V(D)$, there exists a directed path from $u$ to $v$ and a directed path from $v$ to $u$.
When at least one of these two paths exist for any pair, we say that $D$ is \emph{unilateral}.

A \emph{temporal digraph} $H$ is a finite sequence of digraphs $(H_i)_{i \in [t]}$ on the same vertex set $V(H) = V(H_1) = \ldots = V(H_t)$.
The \emph{order} of a (temporal) digraph is its number of vertices.
Each digraph $H_i$ is called a \emph{snapshot} of $H$, and the number of digraphs in the sequence is called the \emph{lifetime} of $H$, denoted by $\Lifetime{H}$.
The \emph{underlying graph} of $H$, denoted by $\UnderlyingGraph{H}$, is the union of the underlying graphs of the snapshots of $H$, that is $\UnderlyingGraph{H} = \UnderlyingGraph{H_1} \cup \ldots \cup \UnderlyingGraph{H_t}$.
Given two integers \(i \leq j ∈ [t]\),
we write \(\TemporalSlice{H}{i}{j}\) for the temporal digraph
\((H_{i}, H_{i+1}, \ldots, H_{j})\).

We say that $H$ is \emph{always-unilateral} (respectively \emph{always-strong}) if every $H_i$ is unilateral (respectively strongly connected).
Note that always-strong temporal digraphs are always-unilateral, but the converse does not hold in general.

A \emph{temporal walk} $W$ of $H$ is a sequence of vertices $(v_1,\dots,v_n)$ for which there exists an increasing sequence $1\leq k_1 <k_2 <\ldots <k_{n-1}\leq t$ such that, for every $i\in [n-1]$, $v_iv_{i+1} \in A(H_{k_i})$.
The \emph{length} of a (temporal) walk $(v_1,\dots,v_n)$ is $n-1$.
A \emph{temporal path} is a temporal walk with pairwise distinct vertices. 
A vertex $u$ is \emph{spanned} by $W$ if $u\in \{v_1,\dots,v_n\}$, and the set of vertices spanned by $W$ is denoted by $\V{W}$.
A \emph{temporal exploration} of $H$ is a temporal walk spanning all vertices of $H$.

Given two vertices $u,v\in V(H)$, a \emph{temporal $u$-$v$-walk} is a temporal walk starting on $u$ and ending on $v$.
We say that $v$ is \emph{reachable} from $u$ in $H$ if there exists a temporal $u$-$v$-walk in $H$. The set of vertices reachable from $u$ in $H$ is denoted by 
$\TemporalReach[H]{}{u}$. For an integer $s\in [t]$, the set of vertices reachable from $u$ in $H'=(H_i)_{i\in [s]}$ is denoted by $\TemporalReach[H]{s}{u}$. 
By convention, we set $\TemporalReach[H]{0}{u}= \{u\}$.
Note that $\TemporalReach[H]{s}{u} \subseteq \TemporalReach[H]{}{u}$. When $H$ is clear from the context, we simply write $\TemporalReach[]{}{u}$ and $\TemporalReach[]{s}{u}$ for $\TemporalReach[H]{}{u}$ and $\TemporalReach[H]{s}{u}$, respectively.
Given a set \(U ⊆ \V{D}\), we define \(\TemporalReach[H]{s}{U} = \bigcup_{u ∈ U}\TemporalReach[H]{s}{u}\).

Given two temporal walks \(W_1 = (v_{1}, v_{2}, \ldots, v_{a}), W_2 = (u_{1}, u_{2}, \ldots, u_{b})\),
we define the \emph{concatenation} of \(W_1\) and \(W_2\) as
\(W_1 · W_2 = (v_{1}, v_{2}, \ldots, v_{a}, u_{1}, u_{2}, \ldots, u_{b})\) if \(v_a \neq u_1\) and
\(W_1 · W_2 = (v_{1}, v_{2}, \ldots, v_{a}, u_{2}, \ldots, u_{b})\) if \(v_a = u_1\).

\section{Temporal exploration problem}
\label{sec:always-unilateral exploration}
The main goal of this section is to study temporal exploration in always-unilateral temporal digraphs. In Section~\ref{subsec:general_quadratic_upperbound}, we prove that every always-unilateral temporal digraph of order \(n\) and sufficiently large lifetime admits a temporal exploration; more precisely, a lifetime of at least $(\frac{3}{2} - o(1)) \cdot n^2$
is sufficient. We also establish auxiliary reachability results for always-unilateral temporal digraphs which are tight.

This is essentially tight, even in very restricted cases, as we prove in Section~\ref{subsec:general_quadratic_lowerbound} that there exist always-strong temporal digraphs $D$ with arbitrarily large order $n$, lifetime $\Lifetime{D}\geq \frac{1}{9}n^2$, whose underlying graph has maximum degree $3$, and that do not admit any temporal exploration.
For always-unilateral temporal digraphs, we can further have that every snapshot has maximum degree $2$.
As we explain below, this is in huge contrast with the undirected setting.

\subsection{Always-unilateral temporal digraphs}
\label{subsec:general_quadratic_upperbound}

In the case of always-strong temporal digraphs,
an argument very similar to the undirected case works
for showing that \(\Lifetime{D} \geq (n-1)^2\) suffices
in order to guarantee the existence of a temporal exploration.

\begin{observation}
	\label{statement:strong-temporal-u-v-path}
	Let \(D\) be an always-strong temporal digraph.
	Let \(u,v ∈ \V{D}\).
	If \(\Lifetime{D} ≥ n - 1\),
	then there is a temporal \(u\)-\(v\)-path in \(D\).
\end{observation}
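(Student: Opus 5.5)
The plan is the standard growing-reach argument. Fix \(u\) and, for \(s \geq 0\), consider the set \(R_s \coloneqq \TemporalReach[D]{s}{u}\) of vertices reachable from \(u\) within the first \(s\) snapshots, with \(R_0 = \{u\}\) by convention. The sets are nested, \(R_0 \subseteq R_1 \subseteq \cdots\). We will show that \(\Abs{R_s} \geq \min\{s+1, n\}\) for every \(s\). With \(s = n-1\) this gives \(v \in R_{n-1}\), so there is a temporal \(u\)-\(v\)-walk within the first \(n-1\) snapshots, which certainly lies in \(D\).

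The induction step goes as follows. Suppose \(R_{s-1} \neq \V{D}\). Since \(D_s\) is strongly connected, there is a directed path in \(D_s\) from \(u \in R_{s-1}\) to any vertex outside \(R_{s-1}\). Such a path leaves \(R_{s-1}\) at some point, so there is an arc \((x,y) \in A(D_s)\) with \(x \in R_{s-1}\) and \(y \notin R_{s-1}\). The vertex \(x\) is reached by a temporal walk using arcs from snapshots with indices at most \(s-1\). Appending the arc \((x,y)\) in snapshot \(s\) therefore gives a valid temporal walk to \(y\), because the time labels stay strictly increasing. Hence \(R_s \supsetneq R_{s-1}\), and \(\Abs{R_s} \geq \Abs{R_{s-1}} + 1\). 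If instead \(R_{s-1} = \V{D}\), then \(R_s = \V{D}\) as well.

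Finally, a walk must be converted into a path. Take a temporal \(u\)-\(v\)-walk \((v_1, \ldots, v_m)\) with time labels \(k_1 < \cdots < k_{m-1}\). If some vertex repeats, say \(v_i = v_j\) with \(i < j\), delete the segment \(v_{i+1}, \ldots, v_j\) together with the labels \(k_i, \ldots, k_{j-1}\). The remaining arc \(v_i v_{j+1} = v_j v_{j+1}\) keeps its label \(k_j\), and the labels are still increasing. Repeating this shortcut until no vertex repeats yields a temporal \(u\)-\(v\)-path.

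None of these steps is a real obstacle. The only point needing care is the step that extends a walk by one arc: it uses the fact that reachability within the first \(s-1\) snapshots leaves snapshot \(s\) free for the new arc.
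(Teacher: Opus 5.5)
Your proof is correct and follows essentially the same argument as the paper: the set of vertices reachable from $u$ strictly grows with each strongly connected snapshot until it is all of the vertex set, so $n-1$ snapshots suffice. You also spell out two details the paper leaves implicit, namely shortcutting the temporal walk into a temporal path and using snapshot $D_s$ (rather than $D_{s-1}$) for the growth step.
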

\begin{proof}
    Let $D=(D_i)_{i\in [t]}$.
	For every \(i ∈ [t - 1]\),
	if \(v ∉ \TemporalReach[D]{i}{u}\),
	then \(\Abs{\TemporalReach[D]{i+1}{u}} > \Abs{\TemporalReach[D]{i}{u}}\)
	as \(D_i\) is strongly connected.
	Since \(\Abs{\TemporalReach[D]{1}{u}} \geq 2\),
	after \(n - 1\) steps, we have
	\(\Abs{\TemporalReach[D]{t}{u}} = \Abs{\V{D}}\),
	in which case \(v ∈ \TemporalReach[D]{t}{u}\).
\end{proof}

Choosing an arbitrary order of the vertices \(v_{1}, v_{2}, \ldots, v_{n}\),
we apply \cref{statement:strong-temporal-u-v-path}
and obtain, for each \(i ∈ [n-1]\), a temporal \(v_{i}\)-\(v_{i+1}\)-path using at most \(n - 1\) snapshots.
Connecting all vertices in this fashion requires at most \((n-1)^2\) snapshots.
It is simple to see that \cref{statement:strong-temporal-u-v-path} is tight
by taking \(n - 2\) copies of the directed cycle on \(n\) vertices as snapshots of a temporal digraph.
Clearly, no vertex can reach its predecessor along the cycle in \(n - 2\) steps.

An analogue of \cref{statement:strong-temporal-u-v-path} for always-unilateral temporal digraphs holds,
as shown below.
\Cref{statement:lower bound:unilateral layers -> temporally unilateral} complements this by showing that our bounds are tight.

\begin{lemma}
	\label{statement:unilateral layers -> temporally unilateral}
	Let $D$ be an always-unilateral temporal digraph of order $n\geq 2$.
	If $\ell(D)\geq 2n-3$, then for every \(u,v \in \V{D}\), there is
	a temporal \((u,v)\)-walk or
	a temporal \((v,u)\)-walk in \(D\).
\end{lemma}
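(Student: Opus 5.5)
We track the two temporal reachability sets simultaneously and show that their combined size must grow in every snapshot, by an argument parallel to \cref{statement:strong-temporal-u-v-path}. If \(u = v\) the trivial walk works, so assume \(u \neq v\). Let \(D = (D_i)_{i \in [t]}\) with \(t \geq 2n-3\). For \(i \in [0,t]\) write \(A_i = \TemporalReach[D]{i}{u}\) and \(B_i = \TemporalReach[D]{i}{v}\), so that \(A_0 = \{u\}\) and \(B_0 = \{v\}\). These sets are nondecreasing in \(i\). Also, \(A_{i+1}\) contains \(A_i\) together with every vertex reachable from \(A_i\) by a single arc of \(D_{i+1}\), and similarly for \(B_{i+1}\). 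Suppose, for a contradiction, that \(v \notin A_t\) and \(u \notin B_t\). Then \(v \notin A_i\) and \(u \notin B_i\) for all \(i\), so in particular \(\Abs{A_i}, \Abs{B_i} \leq n-1\).

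The key step is the claim that for every \(i \in [0,t-1]\) we have \(\Abs{A_{i+1}} + \Abs{B_{i+1}} \geq \Abs{A_i} + \Abs{B_i} + 1\). Suppose instead that \(A_{i+1} = A_i\) and \(B_{i+1} = B_i\). Then no arc of \(D_{i+1}\) leaves \(A_i\), since its head would belong to \(A_{i+1}\). Likewise no arc of \(D_{i+1}\) leaves \(B_i\).

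Consequently, every directed path of \(D_{i+1}\) starting at \(u\) stays inside \(A_i\), and hence never reaches \(v \notin A_i\). Symmetrically, every directed path from \(v\) stays inside \(B_i\), which does not contain \(u\). This contradicts the unilaterality of \(D_{i+1}\), which proves the claim.

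Iterating the claim from \(\Abs{A_0} + \Abs{B_0} = 2\) gives \(\Abs{A_t} + \Abs{B_t} \geq 2 + t \geq 2n-1\). This exceeds \(2(n-1)\), contradicting \(\Abs{A_t}, \Abs{B_t} \leq n-1\). Hence \(v \in A_t\) or \(u \in B_t\), which is exactly the existence of a temporal \((u,v)\)-walk or a temporal \((v,u)\)-walk.

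The only genuinely non-routine point is the claim. What makes it work is that the two reach sets may overlap, yet each is closed in a stalled snapshot and each excludes the other endpoint; no disjointness of the two sets is needed.
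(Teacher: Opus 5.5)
Your proof is correct and follows essentially the same argument as the paper: track the temporal reachability sets of \(u\) and \(v\), use unilaterality of each snapshot to show that their total size grows by at least one per snapshot, and conclude by counting against \(2(n-1)\). The only difference is cosmetic. You start from time \(0\) with \(R_0(u)=\{u\}\) and \(R_0(v)=\{v\}\), so the first snapshot is covered by the same growth step. The paper instead starts at time \(1\) and handles the first snapshot separately, noting that a unilateral digraph has at most one vertex without out-neighbours.
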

\begin{proof}
	Let $D=(D_i)_{i\in [t]}$ and \(u,v \in \V{D}\) be distinct.
    Assume for a contradiction that \(u\notin \TemporalReach[D]{t}{v}\) and \(v\notin \TemporalReach[D]{t}{u}\).

	Let \(1 \leq i \leq 2n - 4\).
	If there is directed path from $u$ to $v$ in \(D_{i + 1}\),
	then there is some arc \(ww' \in A(D_{i+1})\) 
	such that \(w \in \TemporalReach[D]{i}{u}\) and \(w' \notin \TemporalReach[D]{i}{u}\). Hence, \(w' \in R_{i+1}(u)\) and \(\Abs{\TemporalReach[D]{i+1}{u}} > \Abs{\TemporalReach[D]{i}{u}}\). 
    By symmetry, if there is a directed path from $v$ to $u$ in \(D_{i + 1}\), then \(\Abs{\TemporalReach[D]{i+1}{v}} > \Abs{\TemporalReach[D]{i}{v}}\). 
	Thus, at least one of the sets \(\TemporalReach[D]{i+1}{u}, \TemporalReach[D]{i+1}{v}\) is larger than the corresponding set at step \(i\).
    Therefore,  we have
    \[
        |\TemporalReach[D]{t}{u}| + |\TemporalReach[D]{t}{v}| \geq |\TemporalReach[D]{1}{u}| + |\TemporalReach[D]{1}{v}| + t-1.
    \]
	Observe that, if \(\Abs{\TemporalReach[D]{1}{v}} = 1\), then \(\Abs{\TemporalReach[D]{1}{u}} \geq 2\), as
	there can be at most one sink in an unilateral digraph. Hence,
    \[
        |\TemporalReach[D]{t}{u}| + |\TemporalReach[D]{t}{v}| \geq |\TemporalReach[D]{1}{u}| + |\TemporalReach[D]{1}{v}| + t-1 \geq t+2 \geq 2n-1.
    \]
    Therefore, one of the sets $\TemporalReach[D]{t}{u}, \TemporalReach[D]{t}{v}$ has size $n$, a contradiction. 
\end{proof}

\begin{proposition}
	\label{statement:lower bound:unilateral layers -> temporally unilateral}
	For every integer \(n \geq 3\) there is an always-unilateral temporal digraph \(D\) of order $n$ with $\Lifetime{D} = 2n-4$ containing two vertices \(u,v \in \V{D}\)
	such that
	no temporal $u$-$v$- or $v$-$u$-walk exists in \(D\).
\end{proposition}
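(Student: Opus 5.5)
The plan is to build an explicit temporal digraph whose snapshots are all Hamiltonian directed paths. A directed path is unilateral, so $D$ is automatically always-unilateral. The proof of \cref{statement:unilateral layers -> temporally unilateral} shows what the example must look like: in every snapshot exactly one of $\TemporalReach[D]{i}{u},\TemporalReach[D]{i}{v}$ grows, and it grows by exactly one vertex. At the end both sets equal the whole vertex set minus the other vertex.

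Let $\V{D}=\{u,v,w_1,\dots,w_{n-2}\}$. The key observation is that a temporal walk uses at most one arc per snapshot. Hence $\TemporalReach[D]{i}{x}=\TemporalReach[D]{i-1}{x}\cup \OutN[D_i]{\TemporalReach[D]{i-1}{x}}$. So I only need to control the out-neighbourhood of the current reach set in each snapshot. I split the $2n-4$ snapshots into two phases of $n-2$ snapshots each.

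\textbf{Phase 1} ($i\in[n-2]$). Let $D_i$ be the path
\[
w_{n-2}\to w_{n-3}\to\cdots\to w_{i+1}\to u\to w_1\to\cdots\to w_{i-1}\to w_i\to v.
\]
By induction, $\TemporalReach[D]{i-1}{u}=\{u,w_1,\dots,w_{i-1}\}$. This set is a contiguous block of $D_i$, so its only out-arc leaving the block goes to $w_i$. Therefore $\TemporalReach[D]{i}{u}=\{u,w_1,\dots,w_i\}$, and $v$ is never added, because its only in-neighbour in $D_i$ is $w_i$, which is not yet reached. Moreover $v$ is the last vertex of each of these paths and has no out-arc, so $\TemporalReach[D]{n-2}{v}=\{v\}$.

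\textbf{Phase 2} ($j\in[n-2]$). Let $D_{n-2+j}$ be the path
\[
v\to w_1\to\cdots\to w_{j-1}\to w_j\to u\to w_{j+1}\to\cdots\to w_{n-2}.
\]
Here $v$ is the first vertex, so it has no in-arcs. Hence the reach set of $u$, which is contained in $\V{D}\setminus\{v\}$, can never gain $v$. Inductively, $\TemporalReach[D]{n-2+j-1}{v}=\{v,w_1,\dots,w_{j-1}\}$ is an initial segment of the path, so it gains exactly $w_j$ and never $u$.

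After all $2n-4$ snapshots, $v\notin\TemporalReach[D]{}{u}$ and $u\notin\TemporalReach[D]{}{v}$, as required. The whole difficulty is the bookkeeping of the two inductions. The one point to verify carefully is that in each phase the current reach set occupies consecutive positions on the path, with the single new vertex immediately after it. This holds because the walk cannot traverse more than one arc per snapshot, so the path arcs beyond the block are irrelevant. For $n=3$ the construction degenerates correctly: there is a single $w_1$ and two snapshots, $w_1$-free prefixes are empty, and the same checks go through.
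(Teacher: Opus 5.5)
Your construction is correct. Every snapshot is a Hamiltonian directed path, and both inductions hold, including the degenerate case $n=3$:
\begin{itemize}
\item in phase 1, $u$'s reach set gains exactly $w_i$ in snapshot $i$, while $v$ has no out-arc;
\item in phase 2, $v$ has no in-arc, and $v$'s reach set gains exactly $w_j$ in snapshot $n-2+j$.
\end{itemize}

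The paper uses a different and more economical construction. It has only two kinds of snapshots: the fixed path $(v_1,\dots,v_n)$ in odd snapshots and its reverse in even snapshots, $n-2$ of each. A walk from $v_1$ to $v_n$ needs $n-1$ forward arcs but gets at most one per forward snapshot, and backward arcs only lose ground; the reverse direction is symmetric. So non-reachability follows from a one-line progress count rather than two inductions over time-varying paths.

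The paper's example also has a structural bonus: its underlying graph is a single path. Hence \cref{statement:unilateral layers -> temporally unilateral} is tight even when the union of snapshots has maximum degree $2$, which complements \cref{prop:underlying_cycle}. In your example, $u$ and $v$ are each adjacent to every $w_i$ in the union of snapshots.

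Your phase decomposition has its own merit: it separates the two directions cleanly. Each direction's non-reachability comes from one structural property of one phase ($v$ without out-arcs, then $v$ without in-arcs), combined with the one-arc-per-snapshot growth of a contiguous block. You also explicitly verify the extremal behaviour forced by the proof of the lemma, with final reach sets $V\setminus\{v\}$ and $V\setminus\{u\}$.
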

\begin{proof}
	Let \(V = \{v_{1}, v_{2}, \ldots, v_{n}\}\).
	For every \(1 \leq i \leq n - 2\), let $D_{2i-1}$ consist of the directed path $(v_1, v_2, \ldots, v_n)$ and let $D_{2i}$ consist of the directed path $(v_n,v_{n-1},\ldots,v_1)$.
	Clearly, each $D_i$ is unilateral, and there is no temporal \((v_1,v_n)\)- or \((v_n,v_1)\)-path in \(D=(D_i)_{i\in [2n-4]}\).
\end{proof}

While in the undirected and in the always-strong settings we can construct
temporal explorations visiting the vertices in any desired order,
this is not possible in always-unilateral temporal digraphs:
If every snapshot of a temporal digraph is the same directed path, with vertices in the same order,
then there is only one order of the vertices which any temporal exploration can use,
no matter how many snapshots we have.

We thus need a way of choosing the correct order of vertices to visit.
The next two lemmas allow us to build a good ordering
by repeatedly picking an unexplored vertex which can temporally reach all unexplored vertices.

\begin{lemma}
	\label{statement:no growth implies containment}
	Let \(D\) be an unilateral digraph and \(A, B ⊆ \V{D}\).
	If \(\OutN{A}= \OutN{B} = \emptyset \), then \(A ⊆ B\) or \(B ⊆ A\).
\end{lemma}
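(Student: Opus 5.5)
We argue by contradiction. Suppose that \(\OutN{A} = \OutN{B} = \emptyset\) but neither inclusion holds. Then we can pick a vertex \(a \in A \setminus B\) and a vertex \(b \in B \setminus A\). Since \(D\) is unilateral, there is a directed path from \(a\) to \(b\) or a directed path from \(b\) to \(a\). By symmetry between \(A\) and \(B\), we may assume there is a directed path \(P = (a = p_1, p_2, \ldots, p_m = b)\) in \(D\).

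The key step is to show that a set with empty out-neighborhood is closed under following arcs. So we first note that \(\OutN{A} = \emptyset\) means exactly that every arc \(xy \in \A{D}\) with \(x \in A\) satisfies \(y \in A\). Indeed, by definition \(\OutN{A} = \bigcup_{x \in A}\OutN{x} \setminus A\), so an out-neighbor of a vertex of \(A\) lying outside \(A\) would belong to \(\OutN{A}\). We then induct along \(P\). We have \(p_1 = a \in A\), and if \(p_j \in A\) then the arc \(p_j p_{j+1}\) forces \(p_{j+1} \in A\). Hence \(b = p_m \in A\), contradicting the choice \(b \notin A\).

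In the other case, a directed path from \(b\) to \(a\), the same argument with the roles of \(A\) and \(B\) swapped gives \(a \in B\), again a contradiction. Therefore \(A \subseteq B\) or \(B \subseteq A\).

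There is no real obstacle in this argument. The only point needing care is the closure property: unilaterality is applied to the specific pair \((a,b)\), and the direction of the resulting path determines which of \(A\) or \(B\) is shown to absorb the other. The degenerate cases where \(A\) or \(B\) is empty are covered automatically, since then one inclusion holds trivially.
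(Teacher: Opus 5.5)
Your proof is correct and is essentially the paper's argument: pick \(x \in A \setminus B\) and \(y \in B \setminus A\), use unilaterality to get a directed path between them, and observe that a set with empty out-neighbourhood contains every directed path starting in it. The only difference is that you spell out this closure step by induction along the path, which the paper leaves implicit.
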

\begin{proof}
		Assume towards a contradiction that \(A\) and \(B\) are incomparable with respect to \(⊆\).
		Choose \(x \in A \setminus B\) and \(y \in B \setminus A\).
		Since \(D\) is unilateral,
		there is a directed path \(P\) from \(x\) to \(y\), or from \(y\) to \(x\) in \(D\).
		In the first case, all vertices of \(P\) are inside \(A\), so
		\(y\in A\), a contradiction. The second case similarly gives
		\(x\in B\). Hence \(A\) and \(B\) are comparable with respect to \(⊆\).
\end{proof}

\begin{lemma}
	\label{statement:reachability growth}
	Let \(D\) be an always-unilateral temporal digraph,
	\(S ⊆ \V{D}\), \(S' ⊆ S\), and let \(d_0 \leq d_1\) be integers.
	If, for some integer $r$, \(S ⊆ \TemporalReach{r}{S'}\),
	\(\Abs{\TemporalReach{r}{v} ∩ S} \geq d_0\) for all \(v ∈ S'\),
	\(\Abs{\TemporalReach{r}{v} ∩ S} \geq d_1\) for some \(v ∈ S'\), and
	\(\Lifetime{D} \geq r + 2n - 1 - (d_1 + d_0)\),
	then there is some \(v^* ∈ S'\)
	such that
	\(S ⊆ \TemporalReach{}{v^*}\).
\end{lemma}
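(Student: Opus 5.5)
The plan is a potential argument in the spirit of the proof of \cref{statement:unilateral layers -> temporally unilateral}, using \cref{statement:no growth implies containment} to force growth in every snapshot after time \(r\). Write \(D=(D_i)_{i\in[t]}\) and \(R_i(v)=\TemporalReach[D]{i}{v}\) for \(v\in S'\). These sets are monotone in \(i\). Assume for a contradiction that no \(v\in S'\) satisfies \(S\subseteq \TemporalReach{}{v}\); then \(S\not\subseteq R_i(v)\) for every \(i\le t\) and every \(v\in S'\).

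For each \(i\ge r\), let \(u_i\in S'\) maximise \(|R_i(u_i)|\). Since \(S\subseteq R_r(S')\subseteq R_i(S')\) and \(S\not\subseteq R_i(u_i)\), some \(w\in S'\) has \(R_i(w)\not\subseteq R_i(u_i)\). Among these, let \(w_i\) maximise \(|R_i(w_i)|\), and set \(\Phi_i=|R_i(u_i)|+|R_i(w_i)|\). The set \(R_i(w_i)\) is not contained in \(R_i(u_i)\), and \(R_i(u_i)\) is not contained in \(R_i(w_i)\) either: \(|R_i(w_i)|\le|R_i(u_i)|\), so containment would force equality. Hence the two sets are incomparable. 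Each misses some vertex of \(S\), so \(\Phi_i\le 2(n-1)=2n-2\).

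For the starting value, \(\Phi_r\ge d_1+d_0\). Indeed \(|R_r(u_r)|\ge d_1\), and \(R_r(w_r)\) is one of the sets \(R_r(v)\) with \(v\in S'\), each of which meets \(S\) in at least \(d_0\) vertices. The key step is to show \(\Phi_{i+1}\ge\Phi_i+1\) for every \(r\le i<t\). Consider \(A=R_i(u_i)\) and \(B=R_i(w_i)\) in the digraph \(D_{i+1}\). If neither set had an out-arc in \(D_{i+1}\), that is \(\OutN[D_{i+1}]{A}=\OutN[D_{i+1}]{B}=\emptyset\), then \cref{statement:no growth implies containment} would make them comparable, a contradiction. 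So at least one of \(R_{i+1}(u_i)\supsetneq A\) or \(R_{i+1}(w_i)\supsetneq B\) holds.

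The main obstacle is that \(\Phi\) is defined through the maximisers at time \(i+1\), which may change. In particular, \(R_{i+1}(w_i)\) may become contained in \(R_{i+1}(u_{i+1})\). I would handle this by a case analysis.
\begin{itemize}
\item If \(R_{i+1}(w_i)\not\subseteq R_{i+1}(u_{i+1})\), then \(w_i\) remains a candidate partner. If \(u_{i+1}=u_i\), or more generally whenever \(|R_{i+1}(u_{i+1})|\ge|R_{i+1}(u_i)|\), we get \(\Phi_{i+1}\ge |R_{i+1}(u_i)|+|R_{i+1}(w_i)|\ge\Phi_i+1\).
\item If \(R_{i+1}(w_i)\subseteq R_{i+1}(u_{i+1})\), then \(R_{i+1}(u_i)\) cannot also be contained in \(R_{i+1}(u_{i+1})\) when \(u_{i+1}\neq u_i\), unless \(R_{i+1}(u_{i+1})\) absorbs both. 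In that sub-case I would use \(u_i\) as the partner, or argue that \(|R_{i+1}(u_{i+1})|\ge |A\cup B|\ge |A|+1\).
\end{itemize}
The part I have to check carefully is the sub-case where the new maximum absorbs both old sets. Here I would try the potential \(|R(u)|+|R(w)|\) with the "largest" and "largest incomparable" choices relaxed to any incomparable pair maximising the sum. I would then compare this maximum with the sum for the pair \((u_i,w_i)\) at time \(i+1\), and with pairs involving the new absorbing set. The point to verify is that some incomparable pair at time \(i+1\) has sum at least \(\Phi_i+1\).

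Granting the growth step, over the \(t-r\ge 2n-1-(d_1+d_0)\) steps from \(r\) to \(t\) we obtain \(\Phi_t\ge d_1+d_0+(t-r)\ge 2n-1\). This contradicts \(\Phi_t\le 2n-2\), so some \(v^*\in S'\) satisfies \(S\subseteq\TemporalReach{}{v^*}\).
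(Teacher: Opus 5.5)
\textbf{There is a genuine gap.} The step you defer, $\Phi_{i+1}\ge\Phi_i+1$, is not proved. It is in fact false, both for your $\Phi$ and for the relaxed ``best incomparable pair'' potential. The failure occurs in the case $R_{i+1}(w_i)\subseteq R_{i+1}(u_{i+1})$, already with $u_{i+1}=u_i$. There the bound $|R_{i+1}(u_{i+1})|\ge|A\cup B|$ can gain only one on the first term, while every set escaping $R_{i+1}(u_{i+1})$ may be much smaller than $B$.

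Concretely, take $S=S'=\{a,b,c\}$ and $r=1$. Let $D_1$ be the directed path $(c,a,p,b,r_1,r_2,q)$ together with the arcs $ab$, $ar_1$, $ar_2$, $br_2$, $bq$, and let $D_2$ be the directed path $(c,a,p,r_1,r_2,q,b)$; both are unilateral. Then $R_1(a)=\{a,p,b,r_1,r_2\}$, $R_1(b)=\{b,r_1,r_2,q\}$ and $R_1(c)=\{c,a\}$, so $u_1=a$, $w_1=b$ and $\Phi_1=9$. At time $2$ we get $R_2(a)=R_1(a)\cup\{q\}$, $R_2(b)=R_1(b)\subseteq R_2(a)$ and $R_2(c)=\{c,a,p\}$. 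None of these contains $S$, and now $u_2=a$, $w_2=c$, so $\Phi_2=6+3=9$. The best incomparable pair also has sum $9$ at both times. Replacing $r_1,r_2$ by $r_1,\dots,r_m$ gives $\Phi_2=\Phi_1+2-m$. Your count needs a unit increment at every step, and nothing in the proposal controls how much $\Phi$ loses when the partner is swapped, so the argument does not close.

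\textbf{What is missing.} The second summand has to be a \emph{minimum} that cannot be reset; this is what the $d_0$ in the budget refers to. The paper maintains a minimal set $M_i\subseteq M_{i-1}$ (with $M_{r-1}=S'$) such that $S\subseteq R_i(M_i)$. Minimality makes the sets $R_i(v)$, $v\in M_i$, pairwise incomparable, so by \cref{statement:no growth implies containment} at most one of them fails to grow in the next snapshot. The potential is $\min_{v\in M_i}|R_i(v)|+\max_{v\in M_i}|R_i(v)|$. If the maximiser stalls, all other members grow and the minimum rises; otherwise the maximum rises. The minimum never decreases because $M_i$ only shrinks.

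One point needs more care than the paper's write-up gives it: the vertex attaining $\max_{M_{i-1}}$ may be discarded from $M_i$, and then the maximum can drop. Combining the two ideas fixes this.
\begin{itemize}
\item Use your global maximum $\max_{v\in S'}|R_i(v)|$ (monotone, and at least $d_1$ at time $r$) plus $\min_{v\in C_i}|R_i(v)|$ for a minimal cover $C_i\subseteq S'$ (at least $d_0$ at time $r$).
\item If the global maximum grows, let $C_{i+1}$ be a minimal subcover of $C_i$ at time $i+1$; the minimum does not drop.
\item Otherwise some global maximiser $u$ stalls. If a member $v\in C_i$ also stalls (at most one does), then $R_i(v)\subseteq R_i(u)$ by \cref{statement:no growth implies containment}. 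In that case take a minimal subcover of $(C_i\setminus\{v\})\cup\{u\}$ instead of $C_i$.
\item In both sub-cases of the last step, every member of the new cover has size at least the old minimum plus one. For $u$ this holds because otherwise all members of $C_i$ would have had the maximal size, and a growing one would have raised the maximum.
\end{itemize}
With this potential your final count goes through unchanged.
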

\begin{proof}
    Let $D=(D_i)_{i\in [t]}$.
    Let \(M_{r-1} = S'\).
	For each \(i ∈ [r,t]\), 
	let \(M_i \subseteq M_{i-1}\) be a minimal set
    such that
    \(S \subseteq \TemporalReach{i}{M_i}\).
    
    Furthermore, let
	\(d_{0,i} = \min_{v ∈ M_i}\Abs{\TemporalReach{i}{v}}\),
	\(d_{1,i} = \max_{v ∈ M_i}\Abs{\TemporalReach{i}{v}}\), and
	\(d_{+,i} = d_{0,i} + d_{1,i}\).

	If \(M_i = \{v^*\}\) for some \(i ∈ [r,t]\),
	then \(S ⊆ \TemporalReach{i}{v^*}\) and we are done.
    
	If \(d_{+,i} \geq 2n - 1\) for some \(i ∈ [r,t]\),
	then \(d_{1,i} \geq n\), and
	so there is some \(v ∈ M_i\)
	such that
	\(\TemporalReach{i}{v} = \V{D}\) and we are done.

	Otherwise, let \(v_1 ∈ M_i\) be a vertex
	such that
	\(\Abs{\TemporalReach{i}{v_1}} = d_{1,i}\).

	If \(\OutN[D_{i}]{\TemporalReach{i - 1}{v_1}} = \emptyset\),
	then by \cref{statement:no growth implies containment} we have
	\(\OutN[D_{i}]{\TemporalReach{i - 1}{v'}} \neq \emptyset\)
	for every \(v' ∈ M_i ∖ \{v_1\}\).
	In particular, \(d_{0,i} > d_{0,i - 1}\) and,
	thus, \(d_{+,i} > d_{+,i-1}\). 

	Otherwise, we have \(d_{1,i} > d_{1,i-1}\) and
	again \(d_{+,i} > d_{+,i-1}\).

	Let \(t' = r + 2n - 1 - (d_1 + d_0)\).
	The argumentation above implies that \(d_{+,t'} \geq 2n - 1\),
	and so \(d_{1,t'} \geq n\), or
	\(M_{t'} = \{v^*\}\).
	In both cases, as argued above, we obtain the claimed statement.
\end{proof}

With \cref{statement:reachability growth} we can construct a temporal exploration for always-unilateral temporal digraphs in the following way.
Clearly, for any set \(S ⊆ \V{D}\) and every \(v ∈ S\),
the set \(\TemporalReach{1}{v}\) contains at least one vertex.
Hence, \(d_1 \geq d_0 \geq 1\), and so in the first \(2n - 3\) snapshots
we can find a vertex \(v_1 ∈ \V{D}\) which can temporally reach all other vertices of \(D\).
We iterate this procedure, setting \(S_1 = \V{D}\) and \(S_{i + 1} = S_i ∖ \{v_i\}\).
We thus obtain an ordering \(v_{1}, v_{2}, \ldots, v_{n}\) of the vertices of \(D\),
such that \(v_i\) can temporally reach \(v_{i + 1}\) within \(2n - 3\) snapshots.
This implies that \(D\) admits a temporal exploration if \(\Lifetime{D} \geq (2n - 3)n ∈ \BigO{n^2}\).
Next, we improve the multiplicative constant of this bound.
Towards this end, we need the following lemma.

\begin{lemma} 
\label{statement:unilateral layers -> temporal root} 
	Let \(D\) be an always-unilateral temporal digraph of order \(n\) and $S\subseteq V(D)$ with \(\Abs{S}=k\).
	If \(\Lifetime{D} \ge \bound{statement:unilateral layers -> temporal root}{t}{n,k} \coloneqq 2n - k - 1\),
	then there exists \(r \in S\) such that $S\subseteq\TemporalReach[D]{}{r}$.
\end{lemma}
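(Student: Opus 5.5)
I plan to prove \cref{statement:unilateral layers -> temporal root} by sharpening the potential argument of \cref{statement:reachability growth}. Applying that lemma directly with \(S' = S\) and \(r = 0\) gives \(d_0 = d_1 = 1\), hence only the weaker requirement \(\Lifetime{D} \geq 2n-3\). That suffices for \(k \leq 2\), where \(2n-k-1 \geq 2n-3\). So the work is to save \(k-2\) further snapshots by exploiting the fact that we only need to cover \(S\), not all of \(V(D)\).

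\textbf{Setup.} I keep the sequence from the proof of \cref{statement:reachability growth}. Let \(M_0 = S\), and for each \(i \geq 1\) let \(M_i \subseteq M_{i-1}\) be minimal with \(S \subseteq \TemporalReach{i}{M_i}\). We are done as soon as \(|M_i| = 1\). While \(|M_i| = m \geq 2\), minimality gives every \(v \in M_i\) a private vertex \(p_v \in S\). This is a vertex reached by \(v\) but by no other member of \(M_i\). In particular, no member of \(M_i\) reaches all of \(S\).

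\textbf{Key steps.} First, I observe that by \cref{statement:no growth implies containment}, at most one reach set \(\TemporalReach{i-1}{v}\) with \(v \in M_i\) can be closed in \(D_i\). Indeed, the closed ones are pairwise comparable, and two distinct members of the antichain-like family \(M_i\) cannot have comparable reach sets, because of the private vertices. So at each step at least \(m-1\) of the reach sets strictly grow. Second, I would use the potential \(\Phi_i = d_{0,i} + d_{1,i}\), exactly as in \cref{statement:reachability growth}, but with an improved stopping threshold. While \(|M_i| \geq 2\), the two extremal vertices each miss the other's private vertex. More strongly, I want to argue that \(\TemporalReach{i}{v}\) misses every element of \(M_i \setminus \{v\}\) which is not reached by \(v\), as well as their private vertices. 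From this I aim to derive \(\Phi_i \leq 2n - 2 - (\text{something depending on } k \text{ and } m)\), or else obtain extra growth whenever \(m\) is large. Third, I combine the two bookkeeping quantities. As long as \(m\) is large, many sets grow simultaneously, so \(\Phi\) (or a weighted variant such as \(\sum_{v \in M_i} |\TemporalReach{i}{v}|\) normalized by \(m\)) increases faster than one per step. The target is that reducing from \(|M_0| = k\) down to \(|M| = 1\) costs at most \(2n-k-1\) snapshots in total. An alternative route I would try in parallel is induction on \(n - k\): enlarge \(S\) by a vertex \(x \notin S\), apply the statement to \(S \cup \{x\}\) with budget \(2n-k-2\), and spend one extra snapshot to handle the case where the root found is \(x\) itself.

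\textbf{Main obstacle.} The hardest step is the second one, namely finding the right potential that converts the ``\(m-1\) sets grow per step'' information into a saving of exactly \(k-2\) snapshots. The naive sum \(d_0+d_1\) only ever gains one per step, and so it cannot beat \(2n-3\). I would first try the potential \(\Phi_i = d_{0,i} + d_{1,i} + (|M_i| - 2)\). The idea is to check that each step increases it by at least one, and that it starts at \(2 + k - 2 = k\) (at \(i=0\)). I would also check that it cannot exceed \(2n-2\) while \(|M_i| \geq 2\), using the private vertices to bound \(d_{1,i} \leq n-1\) and \(d_{0,i} \leq n-1\). This bookkeeping would give exactly \(2n - 2 - k + 1 = 2n-k-1\) steps. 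The delicate point is the case where \(|M_i|\) drops. There I must verify that the loss in the \(|M_i|\) term is compensated by growth of \(d_0\) or \(d_1\), possibly by redefining \(d_{0,i}\) over the new smaller \(M_i\).
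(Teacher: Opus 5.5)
\textbf{There is a genuine gap.} The step you flag as delicate is not just delicate: it fails, because $\Phi_i=d_{0,i}+d_{1,i}+(|M_i|-2)$ is not monotone. One snapshot can remove many vertices from $M_i$ while $d_0$ and $d_1$ each grow by only one.

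Take $V(D)=\{a,b,w,c,d\}$, $S=\{a,b,c,d\}$ and $D_1$ the directed path $(a,b,w,c,d)$. Then $R_1(a)=\{a,b\}$, $R_1(b)=\{b,w\}$, $R_1(c)=\{c,d\}$ and $R_1(d)=\{d\}$. The vertex $a$ is the only vertex of $S$ whose reach contains $a$, and likewise for $c$, so $M_1=\{a,c\}$ is forced and $\Phi_1=2+2+0=4=\Phi_0$. With $S=\{s_1,\dots,s_6\}$ and $D_1=(s_1,s_2,w_1,s_3,s_4,w_2,s_5,s_6)$, the forced cover is $M_1=\{s_1,s_3,s_5\}$ and $\Phi_1=5<6=\Phi_0$.

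In these examples $M_1$ is forced and $d_{0,1}$ is already taken over the smaller $M_1$. So neither a clever choice of minimal cover nor the redefinition you suggest helps. Since your count has no slack (start at $k$, cap $2n-2$, exactly $2n-k-1$ snapshots), the argument collapses.

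Independently, $d_{1,i}$ can fall below $d_{1,i-1}$ when the vertex attaining $d_{1,i-1}$ is discarded from $M_i$. So even single removals need an extra argument, for instance always keeping a maximiser in the family.

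The induction alternative is also incomplete. The base case $k=n$ is not addressed. The case where the root of $S\cup\{x\}$ is $x$ is repaired by one extra snapshot only if $x$ is an out-neighbour of $S$ in $D_1$, which cannot be arranged when $S$ is out-closed in $D_1$.

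\textbf{The missing idea.} The paper saves the $k-2$ snapshots differently: it does not reward the size of a cover, it forces the \emph{minimum} single reach up to $k-1$. It maintains sets $U_i'\subseteq U_i\subseteq S$ satisfying three conditions:
\begin{itemize}
\item $S\subseteq R_i(U_i)$;
\item the Hall-type expansion $|R_i(A)|\ge i+|A|$ for every nonempty $A\subseteq U_i'$;
\item $|U_i'|+\mu_i\ge k$, where $\mu_i=\min_{v\in U_i}|R_i(v)|$.
\end{itemize}
A set violating expansion at step $i$ is tight at step $i-1$ and its reach is out-closed in $D_i$. By \cref{statement:no growth implies containment} together with tightness, these bad sets form a chain. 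Removing one vertex of the minimal bad set restores expansion, so $U'$ shrinks by at most one vertex per snapshot. The third invariant survives because, if it failed, a vertex of minimum reach would have to strictly contain the reach of a bad set and hence reach more than $i$ vertices, contradicting $\mu_i<i$.

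Once $U'_{t'}=\{v^*\}$, one has $t'<n-1$, since otherwise expansion already yields a root. Then $v^*$ reaches at least $t'+1$ vertices and every vertex of $U_{t'}$ reaches at least $k-1$. Applying \cref{statement:reachability growth} with $r=t'$, $d_1=t'+1$ and $d_0=k-1$ finishes by time $2n-k-1$.

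Your observation that at most one reach set is out-closed per snapshot is only the singleton case of this chain argument. What controls how fast the family may shrink is the chain structure on the unions $R_i(A)$, not on single reach sets.
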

\begin{proof}
    Let $D=(D_i)_{i\in [t]}$ and $|S|=k$.
	If \(k=1\), then the statement is trivial. Hence, assume \(k\ge 2\). Write \(S=\{s_1,\ldots,s_k\}\).
	Given a set \(A ⊆ S\) and a time step \(i ∈ [t]\),
	define
	\(w_i(A) = \Abs{\TemporalReach{i}{A}}\).

	We iterate for \(i ∈ [t]\), constructing sets \(U'_i ⊆ U_i ⊆ S\)
	such that
	\begin{equation}
		S ⊆ \TemporalReach{i}{U_i},
		\tag{C1}
		\label{eq:C1}
	\end{equation}
	for every non-empty set \(A \subseteq U_i'\), we have
	\begin{equation}
		w_i(A) \geq i + \Abs{A},
		\tag{C2}
		\label{eq:C2}
	\end{equation}
	and
		\begin{equation}
			\label{eq:I1}
			\Abs{U_i'}+\mu_i\ge k,
			\tag{I1}
		\end{equation}
	where \(\mu_i:=\min_{v \in U_i}w_i(\{v\})\).

	Start by setting \(U_0 = U'_0 = S\).
	Clearly, \eqref{eq:C1}, \eqref{eq:C2} and \eqref{eq:I1} hold for these sets.

	On step \(i\),
	stop the construction if 
    \(\TemporalReach{i}{v} = \V{D}\) for some \(v \in U_{i-1}\) or
    \(\Abs{U_{i-1}'} = 1\).
    We say that a non-empty set \(A ⊆ U'_{i-1}\) is \emph{bad}
	if \eqref{eq:C2} does not hold for \(A\) on time step \(i\).
	As \eqref{eq:C2} holds in time step \(i - 1\), every bad set \(A\) satisfies
	\[
		w_{i-1}(A) = i - 1 + \Abs{A}
		\quad\text{and}\quad
		\OutN[D_{i}]{\TemporalReach{i-1}{A}}
		=
		\emptyset.
	\]

	By \cref{statement:no growth implies containment}, for two distinct bad sets \(A, B\)
	we have that the sets 
	\(\TemporalReach{i}{A}\) and
	\(\TemporalReach{i}{B}\)
	are comparable with respect to \(⊆\).
    We show that $A$ and $B$ are also comparable with respect to \(⊆\).
	Without loss of generality, we have
	\(\TemporalReach{i}{A} ⊆ \TemporalReach{i}{B}\).
	The other case follows analogously.

	Assume towards a contradiction that there is some \(v ∈ A ∖ B\).
	Let \(B' = B ∪ \{v\}\).
	Then, since $v\in A$ and \(\TemporalReach{i}{A} ⊆ \TemporalReach{i}{B}\), we have
	\(\TemporalReach{i}{B'} = \TemporalReach{i}{B}\),
	which implies that
	\(\Abs{\TemporalReach{i}{B}} = \Abs{\TemporalReach{i}{B'}} \geq i - 1 + \Abs{B'} = i + \Abs{B}\),
	a contradiction to the assumption that \(B\) is bad.

	Thus, the family of bad sets forms a chain with respect to \(⊆\).
    If there are no bad sets, we choose \(U_i' = U_{i-1}'\).
	Otherwise, let \(A_0\) be the unique minimal bad set and choose \(v_0^i ∈ A_0\) arbitrarily.
	We set \(U'_i = U'_{i - 1} ∖ \{v_0^i\}\).
	As every bad set in \(U'_{i-1}\) contains \(v_0^i\),
	there are no bad set in \(U'_i\),
	satisfying \eqref{eq:C2} on step \(i\).

	Finally, choose \(U_i ⊆ U_{i - 1}\) as a minimal subset
	containing \(U'_i\)
	such that \eqref{eq:C1} holds.

	We now prove that \eqref{eq:I1} holds on step \(i\).
	Observe that \(\mu_i \leq \mu_{i+1}\), since
	\(\TemporalReach{i}{v} ⊆ \TemporalReach{i+1}{v}\).

	Assume towards a contradiction that \eqref{eq:I1} does not hold.
	Since \(\Abs{U'_i} \geq \Abs{U_{i-1}'} - 1\) and
	\(\Abs{U'_0} = k\),
	we have \(\Abs{U_i'} \ge k - i\), and
	hence \(\mu_i<i\).
	Since \eqref{eq:I1} holds at time \(i-1\) and
	\(\mu_i\) is non-decreasing, the failure at
	time \(i\) must occur because \(\Abs{U'_i} = \Abs{U'_{i-1}} - 1\) and
	\(\mu_i=\mu_{i-1}\).

	Let \(B = \{v ∈ U_i \mid w_i(\{v\}) = \mu_i\}\) and \(v ∈ B\).
	It cannot be the case that \(v\in U_i'\),
	as this would imply that \(\{v\}\) is a bad set.
	Hence \(B ⊆ U_i \setminus U_i'\).
				Furthermore, \(\OutN[D_i]{\TemporalReach{i-1}{v}} = \emptyset\)
	as \(w_{i-1}(\{v\}) \geq \mu_{i-1} = \mu_i\).
    Finally, since \(B \subseteq U_i \setminus U_i'\),
    for every \(v \in B\) there is some \(x_v \in S\)
    such that
    \(v\) is the only vertex in \(U_i\)
    with
    \(x_v \in \TemporalReach{i}{v}\),
    as otherwise we could remove \(v\) from \(U_i\) and
    still satisfy \eqref{eq:C1}.
	
	Since \(\Abs{U'_i} <  \Abs{U'_{i-1}}\), 
	there is some bad set \(A ⊆ U'_{i-1}\) at step \(i\).
	As argued before, \(\OutN[D_i]{\TemporalReach{i-1}{A}} = \emptyset\).
	By \cref{statement:no growth implies containment},
	\(\TemporalReach{i}{A}\) and \(\TemporalReach{i}{v}\) are comparable with respect to \(⊆\) for every \(v \in B\).
    Since \(x_v \notin \TemporalReach{i}{A}\),
    we have \(\TemporalReach{i}{A} \subseteq \TemporalReach{i}{v}\) for every \(v \in B\).
    This implies that \(w_i(\{v\}) \geq w_i(A) + 1 \geq i + 1\),
    a contradiction to \(w_i(\{v\}) = \mu_i < i\).
        			 	Hence, \eqref{eq:I1} holds on step \(i\).

	Let \(t'\) be the number of iterations above.
	After stopping the construction, either there is some \(v^* ∈ S\)
	with \(\TemporalReach{t}{v} = \V{D}\),
	or \(U'_{t'}\) contains exactly one element.
	In the first case, \(v^*\) satisfies the conditions in the statement and we are done.

	If \(t' \geq n - 1\), then because \eqref{eq:C2} holds at step \(n - 1\),
	for every \(v^* ∈ U'_{n - 1}\) we have \(w_{n - 1}{\{v^*\}} \geq n\),
	and so \(v^*\) satisfy the conditions in the statement.
	Hence, \(t' < n - 1\).

	Let \(\{v^*\} = U_{t'}'\).
	Then \eqref{eq:C2} gives \(w_{t'}(v^*) \ge {t'}+1\).
	By \eqref{eq:I1}, we also have \(\mu_{t'}\ge k-1\).
	Hence there is some \(v_1 ∈ U_{t'}\)
	such that
	\(w_{t'}(v_1) \geq t' + 1\) and
	for every \(v ∈ U_{t'}\)
	we have \(w_{t'}(v) \geq k - 1\).
	As \(t \geq 2n - k - 1 \geq t' + 2n - 1 - (t' + k - 1)\),
	by \cref{statement:reachability growth},
	there is some \(r ∈ S\)
	such that
	\(S ⊆ \TemporalReach{t}{r}\), as desired.
\end{proof}

We remark that Lemma~\ref{statement:unilateral layers -> temporal root} is tight by considering a temporal digraph with \(2n-k-2\) snapshots (see \cref{fig:lower_bound_temporal_root}).
Let \(V=\{v_1,\ldots,v_n\}\) and \(S=\{v_1\}\cup\{v_{n-k+2},v_{n-k+3},\ldots,v_n\}\).
\begin{figure}[htbp]
    \centering
    \begin{tikzpicture}[scale=1]
        \tikzset{
                        marked/.style={circle, dashed, draw, inner sep=8pt},
            tailbox/.style={draw, dashed, rounded corners=10pt, inner sep=8pt}                    }

        \newcommand{\spc}{1.7}
                \begin{scope}
            \node at (-0.2,1.0) [anchor=west] {The first \(n-2\) snapshots:};

                        \node[vertex, label=below:$v_1$] (a1) at (0,0) {};
            \node[marked, fit=(a1)] {};

            \node[vertex, label=below:$v_2$] (a2) at (\spc,0) {};
            \node[vertex, label=below:$v_3$] (a3) at (\spc*2,0) {};
            \node (a3b) at (\spc*3,0) {\(\cdots\)};
            \node[vertex, label=below:$v_{n-k+1}$] (a4) at (\spc*4,0) {};
            \node[vertex, label=below:$v_{n-k+2}$] (a5) at (\spc*5,0) {};
            \node (a5b) at (\spc*6,0) {\(\cdots\)};
            \node[vertex, label=below:$v_n$] (a6) at (\spc*7,0) {};

                        \draw[directededge] (a1) -- (a2);
            \draw[directededge] (a2) -- (a3);
            \draw[directededge] (a3) -- (a3b);
            \draw[directededge] (a3b) -- (a4);
            \draw[directededge] (a4) -- (a5);
            \draw[directededge] (a5) -- (a5b);
            \draw[directededge] (a5b) -- (a6);

                        \node[tailbox, fit=(a5)(a6), inner sep=14pt] {};
        \end{scope}

                \begin{scope}[yshift=-2.5cm]
            \node at (-0.2,1.0) [anchor=west] {\large The last \(n-k\) snapshots:};

                        \node[vertex, label=below:$v_1$] (b1) at (0,0) {};
            \node[marked, fit=(b1)] {};

            \node[vertex, label=below:$v_2$] (b2) at (\spc,0) {};
            \node[vertex, label=below:$v_3$] (b3) at (\spc*2,0) {};
            \node (b3b) at (\spc*3,0) {\(\cdots\)};
            \node[vertex, label=below:$v_{n-k+1}$] (b4) at (\spc*4,0) {};
            \node[vertex, label=below:$v_{n-k+2}$] (b5) at (\spc*5,0) {};
            \node (b5b) at (\spc*6,0) {\(\cdots\)};
            \node[vertex, label=below:$v_n$] (b6) at (\spc*7,0) {};

                        \draw[directededge] (b2) -- (b1);
            \draw[directededge] (b3) -- (b2);
            \draw[directededge] (b3b) -- (b3);
            \draw[directededge] (b4) -- (b3b);
            \draw[directededge] (b5) -- (b4);
            \draw[directededge] (b5b) -- (b5);
            \draw[directededge] (b6) -- (b5b);

                        \node[tailbox, fit=(b5)(b6), inner sep=14pt] {};
        \end{scope}
    \end{tikzpicture}
		\caption{Tightness of Lemma~\ref{statement:unilateral layers -> temporal root}: the dashed box represents the initial set \(S\).}     \label{fig:lower_bound_temporal_root}
\end{figure}

Note that \(v_1\) cannot reach \(v_n\), and the vertex \(v_q\) cannot reach \(v_1\) for every
\(q\in\{n-k+2,\ldots,n\}\). Therefore, no vertex of \(S\) reaches all vertices of
\(S\) within \(2n-k-2\) snapshots. Hence the bound \(2n-k-1\) is best
possible.

We complete this section with our main theorem.
In the case where \(S = \V{D}\),
we get that a lifetime of \((3n^2 - 7n)/2\) is sufficient to guarantee a temporal exploration.

\hypertarget{statement:unilateral layers -> temporal exploration:body}{}
\unilateralExploration*
\newcommand{\unilateralexplorationref}{	\hyperlink{statement:unilateral layers -> temporal exploration:body}{		\cref*{statement:unilateral layers -> temporal exploration}}}

\begin{proof}
	Let \(\ell_i = \bound{statement:unilateral layers -> temporal root}{t}{n,i} = 2n - i - 1\), 
	\(t_{k+1} = 0\) and, for $2\leq i\leq k$,
	\(t_i = \sum_{j=i}^k\ell_j\).
	Observe that 
	\begin{align*}
		t_{2} & = \sum_{i=2}^{k}\ell_i = (2n - 1)(k - 1) - \sum_{i=2}^ki 
		\\[0em] & = (2n - 1)(k - 1)- k(k + 1)/2 - 1
		\\[0em] & = 2kn -(k^2 + 3k)/2 - 2n \leq \Lifetime{D}.
	\end{align*}
	For each \(2 \leq i \leq k\),
	let \(H_i = \TemporalSlice{D}{1 + t_{i + 1}}{t_i}\).
	Note that \(\Lifetime{H_i} = \ell_i\) and
	that \(H_{i+1}\) occurs before \(H_i\) along the snapshots of \(D\).

	Set \(S_k = S\).
	Iterate for \(i = k\) to \(2\) as follows.
	On step \(i\), by \cref{statement:unilateral layers -> temporal root},
	there is some \(v_i ∈ S_i\)
	such that 
	\(S_i ⊆ \TemporalReach[H_i]{\ell_i}{v_i}\).
	Set \(S_{i - 1} = S ∖ \{v_i\}\).
	If \(i < k\),
	let \(W_i\) be the temporal \(v_{i+1}\)-\(v_{i}\)-path in \(H_{i+1}\).
	Because \(v_i ∈ S_i ⊆ S_{i+1}\), such a walk exists.

	After completing the iteration above, we obtain temporal paths \(W_{k}, W_{k-1}, \ldots, W_{2}\).
	Let \(W = W_{k} · W_{k-1} · \ldots · W_{2}\).
	It is immediate by construction that \(S ⊆ \V{W}\).
\end{proof}

\Cref{statement:always-unilateral walk covering S} improves upon \cite[Lemma 6.14]{hkmm26cowsi},
which previously required \(\Lifetime{D}\) to be greater than \(\BigO{k^2n^{kn + 2}}\).
This answers their question of whether it is possible to obtain a polynomial bound
for this statement.

\subsection{Constructions with large lifetime and no temporal exploration}
\label{subsec:general_quadratic_lowerbound}

As explained at the beginning of the present section, the bound of Theorem~\ref{statement:unilateral layers -> temporal exploration} is essentially tight (up to a multiplicative constant).
Indeed, it is known (see~\cite{changJCO25}) that there exist strongly connected digraphs of arbitrarily large order $n$ without any Hamiltonian walk of length less than $\frac{1}{4}n^2$.
Therefore, the bound of Theorem~\ref{statement:unilateral layers -> temporal exploration} is tight -- up to a multiplicative constant -- already in the static case (that is, when all snapshots of the temporal digraphs are identical).
We note that this remains true even for oriented graphs with maximum degree $3$.

\begin{proposition}
	\label{statement:degree-three-lower-bound}
	For every integer \(n' \geq 1\) there is an always-strong temporal digraph \(D\) of order \(n\geq n'\) such that $\UnderlyingGraph{D}$ has maximum degree $3$, \(\Lifetime{D} \geq \frac{1}{9}n^2\), all snapshots of $D$ are identical, and \(D\) does not admit any temporal exploration.
\end{proposition}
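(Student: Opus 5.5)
We will use a static construction. Fix one strongly connected oriented graph \(G\) with maximum degree \(3\) in which every walk spanning all vertices has length at least roughly \(n^2/9\). Then take \(D\) to consist of \(\Lifetime{D}\) identical copies of \(G\). In an always-identical temporal digraph, a temporal walk is exactly a directed walk of \(G\), and it uses at most one arc per snapshot. So \(D\) has no temporal exploration as long as \(\Lifetime{D}\) is smaller than the length of a shortest Hamiltonian walk of \(G\). It therefore suffices to build \(G\) and to lower-bound this shortest walk length by something at least \(\frac19 n^2\), with \(\Lifetime{D}\) set just below it.

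For \(G\), take \(m\) directed paths \(P_1,\dots,P_m\), each with \(m\) vertices, oriented "forward" (left to right). Join them through a backbone that lets one get between paths only at one end. Concretely, take a directed "return" path \(Q\) on \(m\) vertices \(q_m \to q_{m-1}\to\dots\to q_1\). Add arcs \(q_i \to \text{first}(P_i)\) and \(\text{last}(P_i)\to q_{i+1}\) (with \(\text{last}(P_m)\to q_m\)), arranged so that \(G\) is strongly connected, oriented, and every vertex has total degree at most \(3\). Each \(q_i\) has the degree-2 path plus at most one extra arc per attachment. If some \(q_i\) would get two attachments, subdivide to split it into two backbone vertices. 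This keeps \(n=\Theta(m^2)\), specifically \(n\le 3m^2\) or so.

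The key property to prove is the following. After reaching the end of some \(P_i\), the only way to reach an unvisited path \(P_j\) with \(j\le i\) is to go forward onto the backbone and walk along \(Q\) back down to \(q_j\). Also, each path \(P_i\) must be traversed in full, at least \(m-1\) arcs, to visit its interior vertices. Then a charging argument applies: order the paths by the time the walk first completes them. For at least about half the consecutive pairs, the walk must move "backwards" along \(Q\), or it must re-traverse many paths to come around, costing about \(m\) extra steps each time. Choosing the attachments so that all inter-path travel is forced along a cycle of length \(\Theta(m)\) gives total length at least \(c\,m\cdot m\cdot\)(constant) \(\ge n^2/9\) after optimizing the constants. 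Cleaner alternative design: let the backbone be a single directed cycle \(C\) of length \(m\), and hang each \(P_i\) as a "detour" from \(c_i\) back to \(c_i\)'s successor. Then visiting a detour's interior needs its full traversal, but that is cheap. To force quadratic cost, the detours must instead return far away, to \(c_{i-1}\). That way, after each detour, one must go nearly around \(C\) again to reach the next unvisited detour.

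The main obstacle is fixing an exact gadget that simultaneously (a) keeps maximum degree \(3\), (b) is oriented and strongly connected, and (c) provably forces \(\Omega(m)\) wasted steps per gadget, with constants good enough to reach \(\frac19 n^2\). The natural candidate is to let \(C=(c_1,\dots,c_{2k})\) be a directed cycle with "shortcut" paths \(R_i\) of length \(k\) from \(c_{2i}\) back to \(c_{2i-1}\), and to count vertices and walk length. Every time the walk finishes an \(R_i\) it sits at \(c_{2i-1}\). To enter a different unvisited \(R_j\) it must walk along \(C\) to \(c_{2j}\), and it cannot skip \(R_i\)'s entry. So each of the \(k\) shortcuts costs about \(k+\) (distance to the next), and careful accounting should yield length \(\approx k^2\cdot\) const against \(n\approx k^2+2k\) vertices. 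I would first check this example, adjusting \(k\) versus cycle length to get the \(\frac19\) constant. If the constant falls short, I would lengthen \(C\) relative to the \(R_i\), e.g. \(|C|=\Theta(n^{1/2})\) with paths spaced out.
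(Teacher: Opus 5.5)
Your reduction to the static case matches the paper. With identical snapshots, a temporal exploration within $\ell$ snapshots is a Hamiltonian walk of length at most $\ell$ in the single snapshot. So everything hinges on exhibiting a strongly connected digraph of maximum degree $3$ whose shortest Hamiltonian walk has length about $n^2/9$. That is exactly the part you leave open, and there is a genuine gap: every candidate you sketch admits a Hamiltonian walk of length $O(n)$.

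\begin{itemize}
\item In your first design the paths chain forward ($q_1 \to P_1 \to q_2 \to P_2 \to \cdots \to P_m$), covering everything in about $m^2+m \approx n$ arcs.
\item With detours from $c_i$ to $c_{i-1}$, the detours chain backward ($c_m \to P_m \to c_{m-1} \to P_{m-1} \to \cdots \to P_1$). This is again linear, and $c_i$ would have degree $4$.
\item In the ``natural candidate'', after $R_i$ you sit at $c_{2i-1}$, only three arcs from the entry $c_{2i+2}$ of $R_{i+1}$. The total is about $k^2+3k$ against $n\approx k^2+k$. Your own accounting (``length $\approx k^2\cdot\mathrm{const}$ against $n\approx k^2+2k$'') already shows linear, not quadratic, length.
\end{itemize}

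The underlying scaling error is this: putting most vertices on $m$ long paths gives $n=\Theta(m^2)$, so ``$m$ gadgets each wasting $\Theta(m)$ arcs'' is only $\Theta(n)$. To reach $\Omega(n^2)$ you need $\Theta(n)$ gadgets of constant size, each forcing $\Theta(n)$ wasted arcs regardless of the order in which the gadgets are visited. A vague ``about half the consecutive pairs'' charging cannot certify this.

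The missing idea is to nest constant-size detours around a single turning point. The paper takes the paths $u_i\to v_i\to w_i$ for $i\in[n']$, together with the cycle $v_{n'}\to w_{n'}\to w_{n'-1}\to\cdots\to w_1\to u_1\to u_2\to\cdots\to u_{n'}\to v_{n'}$. Using two chains means every backbone vertex carries exactly one attachment, so the maximum degree is $3$ and $n=3n'$.

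The routes are then forced:
\begin{itemize}
\item the only out-neighbor of $v_i$ is $w_i$;
\item the $w$-chain only leads down to $w_1$ and then to $u_1$;
\item from $u_1$ one can only climb the $u$-chain.
\end{itemize}
Hence the unique route from $v_i$ to $v_j$ avoiding other $v$'s is $v_i\to w_i\to\cdots\to w_1\to u_1\to\cdots\to u_j\to v_j$, of length $i+j+1$.

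Summing over consecutive first visits, every $v_i$ other than the first and last is charged $i$ twice. The total is about $(n')^2=n^2/9$ for every visiting order, which is precisely the order-independent lower bound your constructions lack.
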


The result above is in huge contrast with the undirected setting. Indeed, Bastide, Groenland, Michel, and Rambaud~\cite{BGMR2025ietg} recently obtained that every always-connected temporal graph $G$ with maximum degree $\Delta$ and lifetime at least $O(n^{3/2}\sqrt{\Delta\log n})$ admits a temporal exploration. In fact, this holds even in the more general case where $\Delta$ bounds the average degree (taken over all snapshots of $G$) of each vertex of $G$.

\begin{proof}[Proof of \cref{statement:degree-three-lower-bound}]
	For each \(i ∈ [n']\) let \(P_i\) be the directed path \((u_i, v_i, w_i)\).
	Let \(C\) be the cycle \((v_{n'}, w_{n'}, w_{n' - 1}, \ldots,\allowbreak w_{1}, u_{1}, u_{2}, \ldots, u_{n'}, v_{n'})\).
	Let \(D'\) be the digraph consisting of the union of all \(P_i\)'s and \(C\), see \cref{fig:strong-degree-three-lower-bound} for an illustration of the construction.
    
    \begin{figure}
    	\centering
    	\begin{tikzpicture}
    		\node[vertex] (v1) at (0,0) {};
    		\node[left =0.0mm of v1] (v1l) {$v_1$};
    		\node[vertex] (v2) at (1,0) {};
    		\node[left =0.0mm of v2] (v2l) {$v_2$};
    		\node[vertex] (v3) at (2,0) {};
    		\node[left =0.0mm of v3] (v3l) {$v_3$};
    		\node[vertex] (w1) at (0,1) {};
    		\node[above =0.0mm of w1] (w1l) {$w_1$};
    		\node[vertex] (w2) at (1,1) {};
    		\node[above =0.0mm of w2] (w2l) {$w_2$};
    		\node[vertex] (w3) at (2,1) {};
    		\node[above =0.0mm of w3] (w3l) {$w_3$};
    		\node[vertex] (u1) at (0,-1) {};
    		\node[below =0.0mm of u1] (u1l) {$u_1$};
    		\node[vertex] (u2) at (1,-1) {};
    		\node[below =0.0mm of u2] (u2l) {$u_2$};
    		\node[vertex] (u3) at (2,-1) {};
    		\node[below =0.0mm of u3] (u3l) {$u_3$};
    		\path[directededge] (u1) to (v1);
    		\path[directededge] (v1) to (w1);
    		\path[directededge] (u2) to (v2);
    		\path[directededge] (v2) to (w2);
    		\path[directededge] (u3) to (v3);
    		\path[directededge] (v3) to (w3);
    		\path[directededge] (u1) to (u2);
    		\path[directededge] (u2) to (u3);
    		\path[directededge] (w3) to (w2);
    		\path[directededge] (w2) to (w1);
    		\path[directededge, bend right = 70] (w1) to (u1);
    		\path[directededge, dashed, color=blue] 
    		(2.2,-0.8) to      		(2.2,0) to         		(2.2,0.8) to       		(-1.0,0.8) to      		(-1.0,-0.8) to     		(0.2, -0.8) to     		(0.2, 0.6) to      		(-0.8, 0.6) to      		(-0.8, -0.6) to     		(1.2, -0.6) to
    		(1.2, 0.0);
    	\end{tikzpicture}
    	\caption{\label{fig:strong-degree-three-lower-bound}
    	The digraph \(D'\) constructed in the proof of \cref{statement:degree-three-lower-bound} for \(n' = 3\).
    	A Hamiltonian walk of length 10 is given by the dashed path.
    	}
    \end{figure}

	We first prove that a shortest Hamiltonian walk \(W\) in \(D'\) has length at least \(\ell \coloneqq n'(n' + 1) - 3\).
	Let \(W_{i,j}\) be a subpath of \(W\) from \(v_i\) to \(v_j\) which is internally disjoint from \(V\).
	Clearly, \(D'\) contains only one possibility for \(W_{i,j}\), visiting \(w_i\) to \(w_{1}\) and then \(u_{1}\) to \(u_j\).
	Hence, \(W_{i,j}\) has length \(i + j + 1\).

	Let \(v_a\) be the first vertex of \(V\) in \(W\) and let \(v_b\) be the last.
	Let \(V' = V \setminus {v_a, v_b}\).
	For each vertex \(v_i ∈ V'\), there are two subpaths \(W_{j,i}\), \(W_{i,k}\) in \(W\),
	one from some \(v_j\) to \(v_i\) and one from \(v_i\) to some \(v_k\).
	The length of \(W\) is thus at least 
	\begin{align*}
		\sum_{v_i ∈ V'}(2i + 1) + a + b  = \sum_{v_i ∈ V}(2i + 1) - a - b - 4 & = 2\sum_{i=1}^{n'}i + n' - a - b - 4\\[0em]
		& = n'(n' + 1) + n' - a - b - 4.
	\end{align*}
	This value is minimized when we choose \(a = n'\) and \(b = n' - 1\), yielding the length of
	\(n'(n' + 1) + n' - 2n' - 3 = (n')^2 - 3\) for \(W\).

	Let \(D\) be the temporal digraph consisting of \(\ell - 1\) snapshots, all equal to \(D'\).
	It is clear from the definition of temporal walks that
	a temporal exploration of \(D\) corresponds to a Hamiltonian walk for \(D'\)
	of length \(\ell - 1\).
	However, as proven above, there is no Hamiltonian walk of length \(\ell\) in \(D'\).
	Hence, \(D'\) does not admit a temporal exploration.

	Since \(D\) contains \(3n'\) vertices, we have that \(\ell = n^2/9 - 3\).
	Further, it is immediate from the construction that maximum degree of \(\UnderlyingGraph{D}\) is 3,
	as every snapshot is the same and has maximum degree 3.
\end{proof}

For always-unilateral temporal digraph, we can further ask that each snapshot has maximum degree $2$.

\hypertarget{thm:always_path_maxdegree_3:body}{}
\pathBoundedDegreeLowerBound*

To prove \pathmaxdegreethreeref{}, we first prove that a quadratic bound can be obtained when the degree condition holds for all vertices except one. We then show that we can blow-up this specific vertex with a gadget of linear size, hence obtaining the degree condition on all vertices.
We prove these two steps in the two following lemmas, respectively, and then put the pieces together to derive Theorem~\ref{thm:always_path_maxdegree_3}.

\begin{lemma}
    \label{lemma:lower_bound_path_apex}
    For every integer $n\geq 0$, there exists a temporal digraph $H$ on $n+1$ vertices $v_0,\dots,v_n$ such that:
    \begin{itemize}
        \item $\Lifetime{H}= \frac{1}{2}n(n+1)-1$;
        \item each snapshot of $H$ is a directed path;
        \item for every $j\geq 1$, $v_j$ has degree at most $3$ in $\UnderlyingGraph{H}$, and $v_j$ is adjacent to $v_0$ in $\UnderlyingGraph{H}$; 
        \item $v_0$ is adjacent to $v_1$ in every snapshot of $H$; and
        \item $H$ does not admit any temporal exploration. 
    \end{itemize}
\end{lemma}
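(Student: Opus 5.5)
The plan is an inductive construction: build $H_n$ from $H_{n-1}$ by adding the new vertex $v_n$ together with a block of exactly $n$ new snapshots. Since $\frac12 n(n+1)-1 = \bigl(\frac12 (n-1)n-1\bigr)+n$, this matches the required lifetime. For the base case I take $n=1$: two vertices and lifetime $0$, which trivially admits no temporal exploration. (For $n=0$ the claimed lifetime is $-1$, so I regard that case as degenerate.)

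The degree conditions will be enforced by fixing the underlying graph at the start. I take $\UnderlyingGraph{H}$ to be contained in the union of the star centred at $v_0$ and the path $v_1v_2\cdots v_n$. Then every $v_j$ with $j\geq 1$ has neighbours only among $v_0,v_{j-1},v_{j+1}$, so its degree is at most $3$. Every snapshot will be a spanning directed path using only these edges and containing the edge $v_0v_1$. The possible shapes are then of the form $v_i,\dots,v_1,v_0,v_{i+1},\dots,v_n$ (or with $v_0$ at an end), in either orientation, so $v_0$ is automatically adjacent to $v_1$ in every snapshot. To guarantee that $v_n$ is adjacent to $v_0$ in $\UnderlyingGraph{H}$, I make sure that at least one snapshot uses the edge $v_0v_n$.

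For the inductive step I would try the following concrete choice.

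\textbf{Old snapshots.} In each snapshot of $H_{n-1}$, append $v_n$ as a sink. It is joined by an arc from the endpoint of the old path, which requires that endpoint to be $v_0$ or $v_{n-1}$. I would maintain this as an extra invariant: every snapshot path ends at $v_0$ or at the current top vertex.

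\textbf{New block.} Add $n$ snapshots. I place them before the old ones, each equal to the path $v_n\to v_{n-1}\to\cdots\to v_1\to v_0$, possibly alternated with a version routed through $v_0$, to create the edge $v_0v_n$.

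During the old part, $v_n$ is a sink and cannot be left. Hence an exploration of $H_n$ must either reach $v_n$ at the very end or visit it during the new block.

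\begin{itemize}
\item In the first case, deleting the final step leaves an exploration of $H_{n-1}$, possibly shifted by the new block. This case would be excluded by a strengthened induction hypothesis. The key is to strengthen the statement so that it survives prefixing: ``no temporal walk starting at time $\leq s$ at any vertex explores $V(H_{n-1})$''.
\item In the second case, the walk must start at $v_n$, since nothing enters $v_n$ in the block. One arc per snapshot over $n$ steps lets it descend to some $v_j$. The block's orientation then has to force the walk to waste the old part, because it already sits at a low vertex from which the higher vertices are reachable only slowly.
\end{itemize}

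I expect the main obstacle to be the second case. The danger is that a walk starting at $v_n$ and sliding down the block already visits most vertices, and might then finish in the old part. To prevent this, I would first try reversing roles: put the new block after $H_{n-1}$ and make $v_n$ a source in the old snapshots instead of a sink. Then $v_n$ must be visited first or only in the block. In the latter case the walk must reach $v_n$ within $n$ snapshots of paths oriented away from $v_n$, which consumes the entire block and leaves no time for anything else.

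I would then check which placement makes both cases close. I would also adjust the invariant on path endpoints and the exact orientation pattern within the block so that the count of $n$ new snapshots is exactly tight.
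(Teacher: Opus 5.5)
\textbf{There is a genuine gap: the proposal contains neither a working construction nor a non-exploration argument, and the one concrete choice it makes fails at once.} Suppose the $n$ new snapshots are all $v_n\to v_{n-1}\to\cdots\to v_1\to v_0$ and are placed first. Then the walk that starts at $v_n$ and uses one arc per snapshot is at $v_0$ after exactly $n$ snapshots and has visited all $n+1$ vertices. That is already a temporal exploration; for $n=2$, two copies of $v_2\to v_1\to v_0$ are explored by $(v_2,v_1,v_0)$. So the ``second case'' you flag is not a technicality to repair later; it destroys this design.

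The two alternatives you mention are never pinned down or checked. These are alternating with a path routed through $v_0$, and putting the block after $H_{n-1}$ with $v_n$ a source. The reasoning offered for the second does not hold together. If every block path is oriented away from $v_n$, then $v_n$ is a source in every snapshot, so it cannot be reached at all and must simply be the start vertex. The real difficulty, controlling what the rest of the walk does, is left untouched.

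The missing idea is the right inductive invariant. Knowing that $H_{n-1}$ admits no exploration gives no control over a walk that collects some vertices in the new block and the rest in the old snapshots. Your ``prefix-robust'' version adds nothing, since a temporal walk in $H_{n-1}$ may already start at any snapshot.

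The paper fixes an explicit construction: $D_i=(v_{i-1},\dots,v_1,v_0,v_i,\dots,v_n)$ taken $i$ times for $i<n$, then $D_n$ taken $n-1$ times, so $v_n$ is a sink throughout. It proves by induction on $i$ a statement about the walker's position $f_i$ after the first $\frac12 i(i+1)$ snapshots:
\begin{itemize}
\item if $f_i=v_i$, nothing above $v_i$ has been visited;
\item if $f_i\in\{v_0,\dots,v_{i-1}\}$, nothing in $\{v_i,\dots,v_n\}$ has been visited;
\item if $f_i$ lies above $v_i$, some vertex of $\{v_0,\dots,v_i\}$ has been missed.
\end{itemize}
This rests on two facts. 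First, up to that time no snapshot has an arc from $\{v_{i+1},\dots,v_n\}$ to a lower vertex. Second, in $D_i$ the vertex $v_{i-1}$ is a source at distance $i+1$ from $v_{i+1}$. Hence during the $i$ copies of $D_i$ an unvisited $v_{i-1}$ stays unvisited, and a walker sitting at $v_{i-1}$ cannot get beyond $v_i$. The final $n-1$ copies of $D_n$ then exclude all three cases.

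This construction is inductive in your sense. $H_n$ is $H_{n-1}$ with $v_n$ appended as a sink, \emph{followed} by one more copy of $D_{n-1}$ and $n-1$ copies of $v_{n-1}\to\cdots\to v_0\to v_n$. So your instinct to append $v_n$ as a sink and add $n$ snapshots was sound. However, neither of your placements is this one: yours are block before with $v_n$ a sink, or block after with $v_n$ a source. The induction also has to carry the three-case invariant, not mere non-explorability.
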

\begin{proof}
    Let us define a temporal digraph $H$ as follows. 
    Let $V(H) = \{v_0,v_1,\ldots,v_n\}$. For every $i\in [n]$, let $D_i$ be the digraph consisting of the directed path
    \[
        (v_{i-1},v_{i-2},\ldots,v_0,v_{i},v_{i+1},\ldots,v_n).
    \]
    We let $H$ be the temporal digraph consisting of $i$ copies of $D_i$ for every $i\in [n-1]$ and $n-1$ copies of $D_n$ (where the copies of each $D_i$ appear consecutively and in the natural ordering $D_1,D_2,\ldots,D_n$). See Figure~\ref{fig:quadratic_bound_paths_unb_degree} for an illustration.

    \begin{figure}
    \centering
    \begin{tikzpicture}
        \newcommand{\spc}{1.1}
                \begin{scope}
            \node[] at (-\spc,0) {$D_1:$};
            \node[vertex, label=below:$v_0$] (a1) at (0,0) {};
            \node[vertex, label=below:$v_1$] (a2) at (\spc,0) {};
            \node[vertex, label=below:$v_2$] (a3) at (\spc*2,0) {};
            \node[vertex, label=below:$v_{i-2}$] (a3b) at (\spc*4,0) {};
            \node[vertex, label=below:$v_{i-1}$] (a4) at (\spc*5,0) {};
            \node[vertex, label=below:$v_{i}$] (a5) at (\spc*6,0) {};
            \node[vertex, label=below:$v_{n-1}$] (a5b) at (\spc*8,0) {};
            \node[vertex, label=below:$v_n$] (a6) at (\spc*9,0) {};
            
            \node[] at (\spc*10,0) {$\times 1$};

                        \draw[directededge] (a1) -- (a2);
            \draw[directededge] (a2) -- (a3);
            \draw[directededge] (a3) -- (\spc*2.7,0);
            \draw[thick,dotted] (a3) -- (a3b);
            \draw[directededge] (\spc*3.3,0) -- (a3b);
            \draw[directededge] (a3b) -- (a4);
            \draw[directededge] (a4) -- (a5);
            \draw[directededge] (a5) -- (\spc*6.7,0);
            \draw[thick,dotted] (a5) -- (a5b);
            \draw[directededge] (\spc*7.3,0) -- (a5b);
            \draw[directededge] (a5b) -- (a6);
        \end{scope}

        \begin{scope}[yshift=-2cm]
            \node[] at (-\spc,0) {$D_i:$};
            \node[vertex, label=below:$v_0$] (a1) at (0,0) {};
            \node[vertex, label=below:$v_1$] (a2) at (\spc,0) {};
            \node[vertex, label=below:$v_2$] (a3) at (\spc*2,0) {};
            \node[vertex, label=below:$v_{i-2}$] (a3b) at (\spc*4,0) {};
            \node[vertex, label=below:$v_{i-1}$] (a4) at (\spc*5,0) {};
            \node[vertex, label=below:$v_{i}$] (a5) at (\spc*6,0) {};
            \node[vertex, label=below:$v_{n-1}$] (a5b) at (\spc*8,0) {};
            \node[vertex, label=below:$v_n$] (a6) at (\spc*9,0) {};
            
            \node[] at (\spc*10,0) {$\times i$};

                        \draw[directededge] (a2) -- (a1);
            \draw[directededge] (a3) -- (a2);
            \draw[directededge] (\spc*2.7,0) -- (a3);
            \draw[thick,dotted] (a3b) -- (a3);
            \draw[directededge] (a3b)-- (\spc*3.3,0);
            \draw[directededge] (a4) -- (a3b);
            \draw[directededge] (a1) to[out=20, in=160] (a5);
            \draw[directededge] (a5) -- (\spc*6.7,0);
            \draw[thick,dotted] (a5) -- (a5b);
            \draw[directededge] (\spc*7.3,0) -- (a5b);
            \draw[directededge] (a5b) -- (a6);
        \end{scope}

        \begin{scope}[yshift=-4cm]
            \node[] at (-\spc,0) {$D_n:$};
            \node[vertex, label=below:$v_0$] (a1) at (0,0) {};
            \node[vertex, label=below:$v_1$] (a2) at (\spc,0) {};
            \node[vertex, label=below:$v_2$] (a3) at (\spc*2,0) {};
            \node[vertex, label=below:$v_{i-2}$] (a3b) at (\spc*4,0) {};
            \node[vertex, label=below:$v_{i-1}$] (a4) at (\spc*5,0) {};
            \node[vertex, label=below:$v_{i}$] (a5) at (\spc*6,0) {};
            \node[vertex, label=below:$v_{n-1}$] (a5b) at (\spc*8,0) {};
            \node[vertex, label=below:$v_n$] (a6) at (\spc*9,0) {};
            
            \node[] at (\spc*10,0) {$\times n-1$};

                        \draw[directededge] (a2) -- (a1);
            \draw[directededge] (a3) -- (a2);
            \draw[directededge] (\spc*2.7,0) -- (a3);
            \draw[thick, dotted] (a3b) -- (a3);
            \draw[directededge] (a3b)-- (\spc*3.3,0);
            \draw[directededge] (a4) -- (a3b);
            \draw[directededge] (a5) -- (a4);
            \draw[directededge] (\spc*6.7,0) -- (a5);
            \draw[thick] (a5b) -- (a5);
            \draw[directededge] (a5b) -- (\spc*7.3,0);
            \draw[directededge] (a1) to[out=20,in=160] (a6);
        \end{scope}
    \end{tikzpicture}
    \caption{An illustration of the construction of $H$.}
    \label{fig:quadratic_bound_paths_unb_degree}
\end{figure}

    Note that, by construction, $H$ has $\frac{1}{2}n(n+1)-1$ snapshots, each snapshot of $H$ is a directed path, all vertices except $v_0$ has degree $3$ in $\UnderlyingGraph{H}$ (the neighborhood of $v_j$ being precisely $\{v_{j-1},v_{j+1},v_0\}$), and $v_0$ is adjacent to $v_1$ in every snapshot of $H$.
    
    We now prove that $H$ does not admit any  temporal exploration. Assume for a contradiction that such a temporal exploration exists, so there exists a sequence of vertices
    \[
        (u_0,u_1,\ldots,u_{\ell(H)})
    \]
    where, for every $j\in [\ell(H)]$, either $u_{j-1}=u_j$ or $u_{j-1}u_j \in A(H_j)$ -- where $H_j$ is the $j^{\rm th}$ snapshot of $H$ -- and all vertices of $H$ appear along the sequence. Note that $u_{\ell(H)} = v_n$ since $v_n$ is a sink in every snapshot of $H$.

    For every $i\in [n-1]$, let $f_i = u_{\frac{1}{2}i(i+1)}$, and let $S_i = \{u_j: j\leq \frac{1}{2}i(i+1)\}$.
    Intuitively speaking, the sequence $(u_j)_{j\in [\ell(H)]}$ can be seen as the sequence of moves of an agent exploring $H$. In this case, $f_i$ corresponds to the position of the agent after the sequence of all $D_j$'s, for $j\leq i$, and $S_i$ corresponds to the set of vertices spanned by the agent.
    We obtain the following by induction.

    \begin{claim}
        For every $i\in [n-1]$, each of the following holds:
        \begin{itemize}
            \item if $f_i = v_i$ then $S_i \cap \{v_{i+1},v_{i+2},\ldots,v_n\}=  \emptyset$;
            \item if $f_i\in \{v_0,v_1,\ldots,v_{i-1}\}$ then $S_i \cap \{v_{i},v_{i+1},\ldots,v_n\}=  \emptyset$; and
            \item if $f_i\in \{v_{i+1},v_{i+2},\ldots,v_{n}\}$ then $\{v_0,v_{1},v_{2},\ldots,v_{i}\} \setminus S_i \neq  \emptyset$.
        \end{itemize}
    \end{claim}
    \begin{proofclaim}
        Let us fix $i\in [n]$, and assume that the statement holds for all $i'<i$.
        
        Observe first that, among the $\frac{1}{2}i(i+1)$ first snapshots of $H$ (which are copies of $D_1,\ldots,D_i$), there is no arc from $\{v_{i+1},\ldots,v_n\}$ to $\{v_{1},\ldots,v_i\}$. Therefore, if $f_i\in \{v_{1},\ldots,v_i\}$, then $S_i \cap \{v_{i+1},\ldots,v_n\} = \emptyset$, for otherwise there would be some index $j\leq \frac{1}{2}i(i+1)-1$ such that $u_j \in \{v_{i+1},\ldots,v_n\}$ and $u_{j+1} \in \{v_{1},\ldots,v_i\}$, a contradiction. This shows the first item of the claim.

        Similarly, among the $\frac{1}{2}i(i+1)$ first snapshots of $H$, there is no arc from $\{v_{i},\ldots,v_n\}$ to $\{v_{1},\ldots,v_{i-1}\}$. The second item follows.

        Henceforth, we assume that $f_i \in \{v_{i+1},v_{i+2},\ldots,v_{n}\}$. If $i=1$ then $f_1 \notin \{v_0,v_1\}$, so $f_1$ is at distance at least $2$ from $v_0$ in $D_1$. Since there is only one snapshot corresponding to $D_1$,  $v_0 \notin S_i$. 
        Hence, assume that $i>1$. We distinguish the three possible cases for $f_{i-1}$.

        If $f_{i-1} = v_{i-1}$, then observe that $f_i$ is at distance $i+1$ from $v_{i-1}$. Since there are only $i$ copies of $D_i$, $f_i$ is not reachable from $f_{i-1}$ in the snapshots corresponding to $D_i$, a contradiction.

        Else if $f_{i-1} \in \{ v_{0},\ldots,v_{i-2}\}$ then by induction $v_{i-1} \notin S_{i-1}$. Since $v_{i-1}$ is a source in $D_i$, it cannot be spanned from $f_{i-1}$ in the snapshots corresponding to $D_i$. Therefore, $v_{i-1} \notin S_{i}$, and the third item holds.

        Finally, if $f_{i-1} \in \{v_{i},\ldots,v_{n}\}$ then by induction there exists $u\in \{v_0,\ldots,v_{i-1}\} \setminus S_{i-1}$. Since $u$ is not reachable from $v_i$ in $D_i$, it follows that $u\notin S_i$. The claim follows.
    \end{proofclaim}

    We apply the claim above with $i=n-1$. 
    If $f_{n-1}=v_{n-1}$, then $v_n\notin S_n$. Since $v_n$ is at distance $n$ from $v_{n-1}$ in $D_n$, and there are only $n-1$ copies of $D_n$, $v_n$ does not appear along the sequence $(u_0,\dots, u_{\ell(H)})$, a contradiction.
    If $f_{n-1} \in \{v_0,\ldots,v_{n-2}\}$, then $v_{n-1}\notin S_{n-1}$. Since $v_{n-1}$ is a source in $D_n$, $v_{n-1}$ is not spanned by $(u_0,\dots, u_{\ell(H)})$, a contradiction.
    Hence, $f_{n-1} = v_{n}$. Hence, there exists $u\neq v_n$ such that $u\notin S_{n-1}$, and $u$ is not reachable from $v_n$ is $D_n$. Therefore, $u$ is not spanned by $(u_0,\dots, u_{\ell(H)})$, a contradiction. The lemma follows.
\end{proof}

\begin{lemma}
    \label{lemma:hamiltonian_subcubic_graph}
    For every integer $n\geq 2$, there exists a graph $W_n$ and a set $\{s_1,\dots,s_n\} \subseteq V(W_n)$ such that each of the following holds:
    \begin{itemize}
        \item $|V(W_n)| = 7n-6$;
        \item $\Delta(W_n) \leq 3$;
        \item for every $i\in [n]$, $d_{W_n}(s_i) \leq 2$; and
        \item for every $i\in [2,n]$, there exists a Hamiltonian path of $W_n$ with extremities $\{s_1,s_i\}$.
    \end{itemize}
\end{lemma}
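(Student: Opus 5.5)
The plan is to build $W_n$ by induction on $n$, adding a gadget of exactly $7$ new vertices at each step. This explains the count $|V(W_n)| = 7n-6 = 8 + 7(n-2)$. The base graph $W_2$ has $8$ vertices, and one can even take it to be a Hamiltonian $s_1$--$s_2$ path on $8$ vertices, padded with chords so that it has the same shape as the gadgets. To make the induction go through, I would strengthen the hypothesis. Besides the four items of the statement, $W_n$ should contain a distinguished edge $e_n = a_nb_n$ with $d_{W_n}(a_n) = d_{W_n}(b_n) = 2$ and $a_n,b_n \notin \{s_1,\dots,s_n\}$. Moreover, for every $i\in[2,n]$ there should be a Hamiltonian $s_1$--$s_i$ path $P_i$ of $W_n$ that uses $e_n$, traversing it in the order $a_n$ then $b_n$.

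For the inductive step, I would delete $e_n$ from $W_n$ and insert a $7$-vertex gadget $X$. The gadget contains the new terminal $s_{n+1}$ and a new distinguished edge $e_{n+1}$, and it is attached to $a_n$ and to $b_n$ by one edge each. Both $a_n$ and $b_n$ had degree $2$, so they end with degree $3$ and $\Delta\le 3$ is preserved. Inside $X$, the gadget must provide two kinds of spanning paths. First, a Hamiltonian path of $X$ from its attachment point at $a_n$ to its attachment point at $b_n$ that uses $e_{n+1}$. Substituting this path for the edge $a_nb_n$ in every old $P_i$ gives the new $P_i$ for $i\le n$, and the invariant about $e_{n+1}$ is kept. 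Second, a route for $P_{n+1}$. The obvious idea is to follow $P_2$ up to $a_n$, enter $X$, cover all of $X$ and stop at $s_{n+1}$. That fails, because the part of $P_2$ after $b_n$ would then be missed. So $X$ also has to send the path out to $b_n$ and bring it back, for instance by attaching the end of $P_2$ (or a degree-$2$ vertex created at the previous step for this purpose) to $X$. For this reason I would add one more invariant: the end of a designated path is a spare degree-$\le 2$ vertex that is not a terminal. The new path would then be $s_1\to a_n\to(\text{part of }X)\to b_n\to\cdots\to(\text{spare vertex})\to(\text{rest of }X)\to s_{n+1}$.

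The main obstacle is designing $X$ concretely, with exactly $7$ vertices and maximum degree $3$. Its terminal $s_{n+1}$ must keep degree $\le 2$, and it must support both the "pass-through" Hamiltonian path using $e_{n+1}$ and the "detour and return" Hamiltonian path ending at $s_{n+1}$. It must also recreate all the invariants for the next step: the edge $e_{n+1}$ with both ends of degree $2$, and a fresh spare vertex of degree $\le 2$. I would first try a ladder-like gadget, namely two parallel paths joined by rungs, which gives enough routing freedom. I would then check the degree and path conditions by a finite case inspection, and finally verify the four listed items for $W_n$ and the strengthened invariants by induction.
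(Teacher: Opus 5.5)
\textbf{There is a genuine gap: the gadget is never built.} Your proposal is a strategy, not yet a proof. The whole content of the lemma is an explicit graph with explicit Hamiltonian paths, and the gadget $X$ that would supply them is never specified; you defer it to ``a ladder-like gadget'' checked by ``a finite case inspection''. Apart from the count $8+7(n-2)=7n-6$, none of the four items is established.

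The invariants are also not yet coherent.
\begin{itemize}
\item The path ending at the spare vertex cannot be one of the $P_i$, since those end at terminals. It has to be an additional Hamiltonian path of $W_n$ from $s_1$ to a spare vertex $c_n$ traversing $a_n\to b_n$. The base case and every gadget must regenerate it, and you do not say how.
\item After deleting $e_n$ and adding one edge at each of $a_n,b_n$, their degree stays $2$; it does not become $3$.
\end{itemize}

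The deferred check is not routine under your constraints. Suppose the three vertices of $X$ joined to $a_n,b_n,c_n$, the two ends of $e_{n+1}$, $s_{n+1}$ and the new spare vertex are seven distinct vertices. Then each has at most two neighbors inside $X$, so $X$ is a path or a $7$-cycle. The pass-through and detour routes then force $X$ to be a $7$-cycle with two properties. First, the excursion between $a_n$ and $b_n$ visits only the two vertices attached to them. Second, the other five vertices form an arc whose ends are $s_{n+1}$ and the vertex attached to $c_n$. Regenerating the spare-vertex path by the same detour would then force the new spare vertex to equal $s_{n+1}$. So roles must be merged, and choosing how is exactly the missing idea.

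\textbf{How the paper does it.} The paper writes $W_n$ down directly:
\begin{itemize}
\item blocks $s_{i-1}u_iv_iw_ix_iy_iz_is_i$ for $2\le i\le n$;
\item chords $u_iy_i$ and $w_iz_i$;
\item links $x_iv_{i+1}$ for $i<n$;
\item the closing edge $x_ns_n$.
\end{itemize}
The Hamiltonian $s_1$--$s_i$ path runs as follows. It goes straight through blocks $2,\dots,i-1$ and crosses block $i$ as $s_{i-1}u_iv_iw_iz_iy_ix_i$. It then runs forward along $v_jw_jx_j$ for $j=i+1,\dots,n$ and jumps $x_n\to s_n$. Finally it returns along $s_jz_jy_ju_j$ for $j=n,\dots,i+1$ to $s_i$.

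Read inductively, this is exactly your insertion scheme: $W_{n+1}$ is $W_n-x_ns_n$ plus a new block attached by $x_nv_{n+1}$ and $s_nu_{n+1}$. What your invariants exclude is that the distinguished edge $x_ns_n$ is incident to the terminal $s_n$, and $s_n$ itself serves as the spare vertex.
\begin{itemize}
\item Every $s_1$--$s_i$ path, $i\le n$, can be taken to use $x_n\to s_n$; for $i=n$ this goes via the chord $w_nz_n$.
\item The straight path $Q_n$ ends at $s_n$ without using $x_ns_n$.
\item Deleting $x_ns_n$ frees exactly the degree at $s_n$ needed for $s_nu_{n+1}$, so $Q_n$ extends to $Q_{n+1}$.
\item The return trip is carried by a whole backward lane rather than by a single spare vertex.
\end{itemize}
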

\begin{proof}
    We define $W_n$ explicitly. Let 
    \[
        V(W_n) = \{s_1\} \cup \bigcup_{i=2}^{n} \{u_i,v_i,w_i,x_i,y_i,z_i,s_i\}
    \]
    and
    \begin{align*}
        E(W_n) = & \{s_{i-1}u_i,u_iv_i,v_iw_i, w_ix_i,x_iy_i,y_iz_i, z_is_i, u_iy_i, w_iz_i: 2\leq i \leq n\}\\[0em]
        & \cup \{x_iv_{i+1} : 2\leq i \leq n-1\}\\[0em]
        & \cup \{ x_ns_n\}.
    \end{align*}
    See Figure~\ref{fig:blow_up_gadget} for an illustration. It is straightforward to check that $W_n$ has $7n-6$ vertices, $\Delta(W_n)=3$, and all vertices $s_1,\dots,s_n$ have degree at most $2$ in $W_n$.
    It remains to justify that, for every $i \in [2,n]$, $W_n$ admits a Hamiltonian path going from $s_1$ to $s_i$. Let us fix $i\in [2,n]$.

    \begin{figure}
        \centering
        \begin{tikzpicture}
                        \newcommand{\spc}{0.449}
            \node[smallvertex, label=below:{\footnotesize $s_1$}] (s1) at (0,0) {};
            \node[smallvertex, label=below:{\footnotesize $u_2$}] (u2) at (\spc,0) {};
            \node[smallvertex, label=below:{\footnotesize $v_2$}] (v2) at (\spc*2,0) {};
            \node[smallvertex, label=below:{\footnotesize $w_2$}] (w2) at (\spc*3,0) {};
            \node[smallvertex, label=below:{\footnotesize $x_2$}] (x2) at (\spc*4,0) {};
            \node[smallvertex, label=below:{\footnotesize $y_2$}] (y2) at (\spc*5,0) {};
            \node[smallvertex, label=below:{\footnotesize $z_2$}] (z2) at (\spc*6,0) {};
            \node[smallvertex, label=below:{\footnotesize $s_2$}] (s2) at (\spc*7,0) {};
            
            \node[smallvertex, label=below:{\footnotesize $u_3$}] (u3) at (\spc*8,0) {};
            \node[smallvertex, label=below:{\footnotesize $v_3$}] (v3) at (\spc*9,0) {};
            \node[smallvertex, label=below:{\footnotesize $w_3$}] (w3) at (\spc*10,0) {};
            \node[smallvertex, label=below:{\footnotesize $x_3$}] (x3) at (\spc*11,0) {};
            \node[smallvertex, label=below:{\footnotesize $y_3$}] (y3) at (\spc*12,0) {};
            \node[smallvertex, label=below:{\footnotesize $z_3$}] (z3) at (\spc*13,0) {};
            \node[smallvertex, label=below:{\footnotesize $s_3$}] (s3) at (\spc*14,0) {};
            \node[smallvertex, label=below:{\footnotesize $u_4$}] (u4) at (\spc*15,0) {};

            \draw[edge, red] (s1) -- (u2);
            \draw[edge, red] (u2) -- (v2);
            \draw[edge, thin] (u2) to[out=45, in=135] (y2);
            \draw[edge, red] (v2) -- (w2);
            \draw[edge, red] (w2) -- (x2);
            \draw[edge, red] (x2) -- (y2);
            \draw[edge, red] (y2) -- (z2);
            \draw[edge, red] (z2) -- (s2);
            \draw[thickedge, white] (w2) to[out=45, in=135] (z2);
            \draw[edge, thin] (w2) to[out=45, in=135] (z2);
            \draw[thickedge, white] (x2) to[out=45, in=135] (v3) ;
            \draw[edge, thin] (x2) to[out=45, in=135] (v3) ;

            \draw[edge, red] (s2) -- (u3);
            \draw[edge, red] (u3) -- (v3);
            \draw[thickedge, white] (u3) to[out=45, in=135] (y3);
            \draw[edge, thin] (u3) to[out=45, in=135] (y3);
            \draw[edge, red] (v3) -- (w3);
            \draw[edge, red] (w3) -- (x3);
            \draw[edge, red] (x3) -- (y3);
            \draw[edge, red] (y3) -- (z3);
            \draw[edge, red] (z3) -- (s3);
            \draw[thickedge, white] (w3) to[out=45, in=135] (z3);
            \draw[edge, thin] (w3) to[out=45, in=135] (z3);
            \draw[thickedge, white] (x3) to[out=45, in=135] (\spc*16,0) ;
            \draw[edge, dotted] (x3) to[out=45, in=135] (\spc*16,0) ;
            \draw[edge,red] (s3) to (u4);
            \draw[edge,red, dotted] (u4) to (\spc*16,0);
            \draw[thickedge, white] (u4) to[out=45,in=180] (\spc*16,0.5*\spc);
            \draw[edge, dotted] (u4) to[out=45,in=180] (\spc*16,0.5*\spc);

            \begin{scope}[xshift=\spc*5cm, yshift=-1.5cm]
            
            \node[smallvertex, label=below:{\footnotesize $y_{h}$}] (yh) at (-2*\spc,0) {};
            \node[smallvertex, label=below:{\footnotesize $z_{h}$}] (zh) at (-\spc,0) {};
            \node[smallvertex, label=below:{\footnotesize $s_{h}$}] (sh) at (0,0) {};
            
            \draw[edge, red] (yh) -- (zh);
            \draw[edge, red] (zh) -- (sh);
            \draw[edge, red, dotted] (-3*\spc,0) -- (sh);
            \draw[edge, dotted] (-3*\spc,0.5*\spc) to[out=0, in=135] (yh);
            \draw[edge, dotted] (-3*\spc,\spc) to[out=0, in=135] (zh);
            
            \node[smallvertex, label=below:{\footnotesize $u_i$}] (ui) at (\spc,0) {};
            \node[smallvertex, label=below:{\footnotesize $v_i$}] (vi) at (\spc*2,0) {};
            \node[smallvertex, label=below:{\footnotesize $w_i$}] (wi) at (\spc*3,0) {};
            \node[smallvertex, label=below:{\footnotesize $x_i$}] (xi) at (\spc*4,0) {};
            \node[smallvertex, label=below:{\footnotesize $y_i$}] (yi) at (\spc*5,0) {};
            \node[smallvertex, label=below:{\footnotesize $z_i$}] (zi) at (\spc*6,0) {};
            \node[smallvertex, label=below:{\footnotesize $s_i$}] (si) at (\spc*7,0) {};
            
            \draw[thickedge, white] (-3*\spc,0) to[out=45, in=135] (vi);
            \draw[edge, dotted] (-3*\spc,0) to[out=45, in=135] (vi);
            \draw[edge, red] (sh) -- (ui);
            \draw[edge, red] (ui) -- (vi);
            \draw[thick, white] (ui) to[out=45, in=135] (yi);
            \draw[edge, thin] (ui) to[out=45, in=135] (yi);
            \draw[edge, red] (vi) -- (wi);
            \draw[edge, thin] (wi) -- (xi);
            \draw[edge, red] (xi) -- (yi);
            \draw[edge, red] (yi) -- (zi);
            \draw[edge, thin] (zi) -- (si);
            \draw[thickedge,white] (wi) to[out=45, in=135] (zi);
            \draw[edge,red] (wi) to[out=45, in=135] (zi);
            \end{scope}

            \begin{scope}[xshift=\spc*12cm, yshift=-1.5cm]
            \node[smallvertex, label=below:{\footnotesize $u_{j}$}] (uk) at (\spc,0) {};
            \node[smallvertex, label=below:{\footnotesize $v_{j}$}] (vk) at (\spc*2,0) {};
            \node[smallvertex, label=below:{\footnotesize $w_{j}$}] (wk) at (\spc*3,0) {};
            \node[smallvertex, label=below:{\footnotesize $x_{j}$}] (xk) at (\spc*4,0) {};
            \node[smallvertex, label=below:{\footnotesize $y_{j}$}] (yk) at (\spc*5,0) {};
            \node[smallvertex, label=below:{\footnotesize $z_{j}$}] (zk) at (\spc*6,0) {};
            \node[smallvertex, label=below:{\footnotesize $s_{j}$}] (sk) at (\spc*7,0) {};

            \draw[thickedge, white] (xi) to[out=45, in=135] (vk);
            \draw[edge, red] (xi) to[out=45, in=135] (vk);
            \draw[edge, lightblue] (si) to (uk);
            \draw[edge, thin] (uk) to (vk);
            \draw[edge, red] (vk) -- (wk);
            \draw[edge, red] (wk) -- (xk);
            \draw[edge, thin] (xk) -- (yk);
            \draw[edge, lightblue] (yk) -- (zk);
            \draw[edge, lightblue] (zk) -- (sk);

            \node[smallvertex, label=below:{\footnotesize $u_{k}$}] (ul) at (\spc*8,0) {};
            \node[smallvertex, label=below:{\footnotesize $v_{k}$}] (vl) at (\spc*9,0) {};
            \node[smallvertex, label=below:{\footnotesize $w_{k}$}] (wl) at (\spc*10,0) {};
            \node[smallvertex, label=below:{\footnotesize $x_{k}$}] (xl) at (\spc*11,0) {};
            \node[smallvertex, label=below:{\footnotesize $y_{k}$}] (yl) at (\spc*12,0) {};
            \node[smallvertex, label=below:{\footnotesize $z_{k}$}] (zl) at (\spc*13,0) {};
            \node[smallvertex, label=below:{\footnotesize $s_{k}$}] (sl) at (\spc*14,0) {};
            
            \draw[edge, lightblue] (sk) to (ul);
            \draw[edge, thin] (ul) to (vl);
            \draw[edge, red] (vl) -- (wl);
            \draw[edge, red] (wl) -- (xl);
            \draw[edge, thin] (xl) -- (yl);
            \draw[edge, lightblue] (yl) -- (zl);
            \draw[edge, lightblue] (zl) -- (sl);
            \draw[edge, thin] (wl) to[out=45, in=135] (zl);
            \draw[edge, lightblue, dotted] (sl) -- (\spc*14.7,0);
            
            \draw[edge, thin] (wk) to[out=45, in=135] (zk);
            \draw[thickedge, white] (xk) to[out=45, in=135] (vl);
            \draw[edge, red] (xk) to[out=45, in=135] (vl);
            \draw[thickedge, white] (xl) to[out=45, in=135] (\spc*16,0);
            \draw[edge, red, dotted] (xl) to[out=45, in=135] (\spc*16,0);
            \draw[thickedge, white] (ul) to[out=45, in=135] (yl);
            \draw[edge, lightblue] (ul) to[out=45, in=135] (yl);
            \draw[thickedge, white] (uk) to[out=45, in=135] (yk);
            \draw[edge, lightblue] (uk) to[out=45, in=135] (yk);
            \draw[thickedge, white] (\spc*14.7,0) to[out=45, in=180] (\spc*16,0.8*\spc) ;
            \draw[edge, lightblue, dotted] (\spc*14.7,0) to[out=45, in=180] (\spc*16,0.8*\spc) ;
            
            \end{scope}

            \begin{scope}[xshift=\spc*16cm, yshift=-3cm]
            \node[smallvertex, label=below:{\footnotesize $u_{l}$}] (um) at (\spc,0) {};
            \node[smallvertex, label=below:{\footnotesize $v_{l}$}] (vm) at (\spc*2,0) {};
            \node[smallvertex, label=below:{\footnotesize $w_{l}$}] (wm) at (\spc*3,0) {};
            \node[smallvertex, label=below:{\footnotesize $x_{l}$}] (xm) at (\spc*4,0) {};
            \node[smallvertex, label=below:{\footnotesize $y_{l}$}] (ym) at (\spc*5,0) {};
            \node[smallvertex, label=below:{\footnotesize $z_{l}$}] (zm) at (\spc*6,0) {};
            \node[smallvertex, label=below:{\footnotesize $s_{l}$}] (sm) at (\spc*7,0) {};

            \draw[edge, red, dotted] (-3*\spc,0) to[out=45, in=135] (vm);
            \draw[edge, lightblue, dotted] (-2*\spc,0) to (um);
            \draw[thickedge, white] (-2*\spc,0) to[out=135, in=0] (-3*\spc,0.7*\spc);
            \draw[edge, lightblue, dotted] (-2*\spc,0) to[out=135, in=0] (-3*\spc,0.7*\spc);

            \draw[edge, thin] (um) to (vm);
            \draw[edge, red] (vm) -- (wm);
            \draw[edge, red] (wm) -- (xm);
            \draw[edge, thin] (xm) -- (ym);
            \draw[edge, lightblue] (ym) -- (zm);
            \draw[edge, lightblue] (zm) -- (sm);
            \draw[edge, thin] (wm) to[out=45, in=135] (zm);

            \node[smallvertex, label=below:{\footnotesize $u_{n}$}] (un) at (\spc*8,0) {};
            \node[smallvertex, label=below:{\footnotesize $v_{n}$}] (vn) at (\spc*9,0) {};
            \node[smallvertex, label=below:{\footnotesize $w_{n}$}] (wn) at (\spc*10,0) {};
            \node[smallvertex, label=below:{\footnotesize $x_{n}$}] (xn) at (\spc*11,0) {};
            \node[smallvertex, label=below:{\footnotesize $y_{n}$}] (yn) at (\spc*12,0) {};
            \node[smallvertex, label=below:{\footnotesize $z_{n}$}] (zn) at (\spc*13,0) {};
            \node[smallvertex, label=below:{\footnotesize $s_{n}$}] (sn) at (\spc*14,0) {};
            
            \draw[thickedge, white] (xm) to[out=45, in=135] (vn);
            \draw[edge, red] (xm) to[out=45, in=135] (vn);
            \draw[edge, lightblue] (sm) to (un);
            \draw[edge, thin] (un) to (vn);
            \draw[edge, red] (vn) -- (wn);
            \draw[edge, red] (wn) -- (xn);
            \draw[edge, thin] (xn) -- (yn);
            \draw[edge, lightblue] (yn) -- (zn);
            \draw[edge, lightblue] (zn) -- (sn);

            \draw[edge, thin] (wn) to[out=45, in=135] (zn);
            \draw[thickedge, white] (xn) to[out=45, in=135] (sn);
            \draw[edge, red] (xn) to[out=45, in=135] (sn);
            
            \draw[thickedge, white] (um) to[out=45, in=135] (ym);
            \draw[edge, lightblue] (um) to[out=45, in=135] (ym);
            \draw[thickedge, white] (un) to[out=45, in=135] (yn);
            \draw[edge, lightblue] (un) to[out=45, in=135] (yn);
            \end{scope}
        \end{tikzpicture}
        \caption{An illustration of $W_n$. 
        A Hamiltonian path of $W_n$ going from $s_1$ to $s_i$ is highlighted in red (from $s_1$ to $s_n$) and blue (from $s_n$ to $s_i$). For the sake of better readability of indices, in the figure we let $h=i-1$, $j = i+1$, $k=i+2$, and $l=n-1$.}
        \label{fig:blow_up_gadget}
    \end{figure}
    
    For every $j\in [2,n]$ let $P_j$ be the path $(s_{j-1},u_j,v_j,w_j,x_j,y_j,z_j,s_j)$, and let $Q_j$ be the concatenation $Q_j = P_1 \cdot P_2 \cdot \ldots \cdot P_j$,
    so $Q_j$ is a path from $s_1$ to $s_i$ that spans 
    \[
    \{s_1\} \cup \{u_k,v_k,w_k,x_k,y_k,z_k,s_k : 2\leq k \leq j \}.
    \]
    Note that, if $i=n$ then $Q_n$ is the desired Hamiltonian path. Henceforth, assume $i\in [2,n-1]$. Let $R$ be the path going from $x_i$ to $x_n$ consisting of the concatenation
    \[
        (x_i,v_{i+1},w_{i+1},x_{i+1}) \cdot (x_{i+1},v_{i+2},w_{i+2},x_{i+2}) \cdot \ldots \cdot (x_{n-1}, v_n, w_n, x_n)
    \]
    and $R'$ be the path going from $s_n$ to $s_i$ consisting of the concatenation
    \[
        (s_n,z_n,y_n,u_n,s_{n-1}) \cdot (s_{n-1},z_{n-1},y_{n-1},u_{n-1},s_{n-2}) \cdot \ldots \cdot (s_{i+1},z_{i+1},y_{i+1},u_{i+1},s_{i}).
    \]
    Now, 
    \[
        Q_{i-1} \cdot (s_{i-1}, u_i,v_i,w_i,z_i,y_i,x_i) \cdot R \cdot (x_n,s_n) \cdot R'
    \]
    is a Hamiltonian path of $W_n$ going from $s_1$ to $s_n$. The lemma follows.
\end{proof}

We are now ready to derive \cref{thm:always_path_maxdegree_3}.

\begin{proof}[Proof of \cref{thm:always_path_maxdegree_3}]
    We assume that $n'\geq 2$.
    Let $H=(H_i)_{i\in [t]}$, $t=\frac{1}{2}n'(n'+1)-1$, be a temporal digraph on $n'+1$ vertices with vertex-set $\{v_0,\dots,v_{n'}\}$ as in Lemma~\ref{lemma:lower_bound_path_apex}. Let $W_{n'}$ and $\{s_1,\dots,s_{n'}\}$ be as in Lemma~\ref{lemma:hamiltonian_subcubic_graph}. For every $j\in [2,n']$, let $P_j$ be a Hamiltonian path of $W_{n'}$ with extremities $\{s_1,s_j\}$. We let $\vec{P_j}$ be the orientation of $P_j$ corresponding to a directed path going from $s_1$ to $s_j$, and $\rev{P_j}$ be the opposite orientation.
    
    We let $D$ be the temporal digraph with vertex-set 
    \[
        V(D) = (V(H) \setminus v_0) \cup V(W_{{n'}}), 
    \]
    where each snapshot $D_i$, $1\leq i\leq \frac{1}{2}n'(n'+1)$ is obtained from $H_i - v_0$ by adding the arcs of the directed path:
    \[
    \begin{cases}
        \rev{P_n} \cdot (s_1v_1) &\text{if $\InN[H_i]{v_0} = \emptyset$ and $\OutN[H_i]{v_0}= \{v_1\}$;}\\[0em]
        (v_1,s_1) \cdot \vec{P_n} \cdot &\text{if $\InN[H_i]{v_0} = \{v_1\}$ and $\OutN[H_i]{v_0}= \emptyset$;}\\[0em]
        (v_j,s_j) \cdot \rev{P_j} \cdot (s_1v_1) &\text{if $\InN[H_i]{v_0} = \{v_j\}$ and $\OutN[H_i]{v_0}= \{v_1\}$; or}\\[0em]
        (v_1,s_1) \cdot \vec{P_j} \cdot (s_jv_j) &\text{if $\InN[H_i]{v_0} = \{v_1\}$ and $\OutN[H_i]{v_0}= \{v_j\}$.}
    \end{cases}
    \]
    Note that one of the four cases above holds by choice of $H$. 
    
    By construction, the underlying graph of $D$ is included in the graph obtained from $W_{n} \cup (\UnderlyingGraph{H}-v_0)$ by adding the matching $\{s_1v_1,s_2v_2,\ldots,s_{n'}v_{n'}\}$. Hence, $\UnderlyingGraph{D}$ has maximum degree $3$, and every snapshot of $D$ is a directed path. Moreover, $D$ does not admit any temporal exploration, as any temporal exploration of $D$ would yield a temporal exploration of $H$ (by replacing any occurrence of a vertex $u\in V(W_n)$ with $v_0$).
    Note that $D$ has $n=8n'-6$ vertices, and that 
    \[
        \ell(D') = \tfrac{1}{2}n'(n'+1) \geq \tfrac{1}{128} n^2,
    \]
    as desired. The result follows.
\end{proof}

We note that the condition $\Delta(\UnderlyingGraph{D}) \leq 3$ of Theorem~\ref{thm:always_path_maxdegree_3} is best possible.

\begin{proposition}
    \label{prop:underlying_cycle}
    Let $D$ be an always-unilateral temporal digraph of order $n\geq 2$ with $\Delta(\UnderlyingGraph{D})\leq 2$. If $\ell(D)\geq 2n-3$ then $D$ admits a temporal exploration.
\end{proposition}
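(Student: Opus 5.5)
The plan is to first pin down the structure of every snapshot, and then run a two-walker potential argument in the spirit of \cref{statement:unilateral layers -> temporally unilateral}.

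\textbf{Structure of a snapshot.} Let \(G=\UnderlyingGraph{D}\). A unilateral digraph has a connected underlying graph, so the underlying graph of each snapshot \(D_i\) is a connected spanning subgraph of \(G\). Since \(\Delta(G)\le 2\) and \(G\) is connected, \(G\) is either a path \(v_1\cdots v_n\) or a cycle.
\begin{itemize}
\item If \(G\) is a path, every snapshot uses all edges of \(G\). Unilaterality forbids both a ``sink in the middle'' and a ``source in the middle''. Hence each \(D_i\) contains the forward directed path \((v_1,\dots,v_n)\) or the backward one \((v_n,\dots,v_1)\).
\item If \(G\) is a cycle, the condensation of \(D_i\) is a directed path. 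From this I will check that \(D_i\) contains a directed Hamiltonian path which runs along the cycle either clockwise or counterclockwise and omits exactly one cycle edge.
\end{itemize}

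\textbf{Path case.} Use two greedy walkers. The walker \(F\) starts at \(v_1\) and, at each snapshot, moves one step forward whenever the forward arc at its position is present. The walker \(B\) starts at \(v_n\) and moves backward in the same way. Every snapshot contains a full forward or a full backward Hamiltonian path, so at each time step at least one unfinished walker advances. Suppose neither walker has reached the opposite end after \(2n-3\) snapshots. Then their total progress is at most \(2(n-2)=2n-4<2n-3\), a contradiction. Whichever walker finishes has visited every vertex, so it is a temporal exploration.

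\textbf{Cycle case.} Here I would like the same counting to hold, with \(F\) walking clockwise and \(B\) counterclockwise; a walker finishes once it has covered \(n-1\) edges. The difficulty is that a clockwise Hamiltonian path in \(D_i\) omits one edge \(xy\). If a clockwise walker sits exactly at \(x\), it is blocked even though the snapshot is ``clockwise''.

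To handle this, I would not commit to a single walker per direction. Instead, for each start vertex \(s\) and each direction I track the maximal progress \(p^{\circlearrowright}_t(s)\) and \(p^{\circlearrowleft}_t(s)\) of a greedy walk from \(s\). The potential is \(\max_s p^{\circlearrowright}_t(s)+\max_s p^{\circlearrowleft}_t(s)\), or a max/min combination mirroring \(d_{0,i}+d_{1,i}\) in \cref{statement:reachability growth}.

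Consider a clockwise snapshot whose path is \(y,\dots,x\), and suppose the leading clockwise walker is blocked at \(x\). Then I will argue that a walker whose front is at \(y\) or earlier benefits, since the path starting at \(y\) sweeps the whole cycle. So either the leading progress grows, or a walker started at an appropriate vertex overtakes it. One possible route is to restart: the walk from \(y\) in this snapshot alone covers all of \(V\setminus\{\,\}\) except nothing, i.e.\ it reaches \(x\) and so explores the whole cycle.

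The main obstacle is exactly this blocking phenomenon. I expect to resolve it by checking that the potential still increases by one per snapshot, possibly after relabelling the leading walker, so that the bound \(2n-3\) is obtained by the same pigeonhole as in the path case.
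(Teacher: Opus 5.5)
\textbf{There is a genuine gap in the cycle case.} Your snapshot analysis is fine. Your path case is also correct: the two-walker count there is equivalent to the paper's first step, which pigeonholes the $2n-3$ snapshots so that at least $n-1$ of them contain a Hamiltonian directed path in the same direction.

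The cycle case, however, is not proved. The claim that your potential keeps growing despite blocking is the whole content of the proposition, and you leave it as an expectation.

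The concrete fix you sketch is invalid. A temporal walk uses at most one arc per snapshot, so the directed path $y,\dots,x$ of a single snapshot lets a walker advance only one step, not sweep the cycle. Relabelling a single leading walker does not help either: if the leader is unique and sits at the tail $x$ of the missing arc, the maximum simply stalls.

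The missing idea is to run all $n$ rotations in lockstep:
\begin{itemize}
\item Fix $n-1$ snapshots that contain, say, a clockwise Hamiltonian path. Each of them lacks at most one clockwise arc of the cycle.
\item For every start vertex $s$, let the walker from $s$ wait in all other snapshots and step clockwise in each of these $n-1$ snapshots whenever possible.
\item Walkers that have never been blocked have all made the same number of steps, so they occupy pairwise distinct vertices. Hence each of the $n-1$ snapshots blocks at most one of them (the one at the tail of the missing arc).
\item There are $n$ walkers and at most $n-1$ blockings, so at least one walker is never blocked. It makes $n-1$ clockwise steps and visits every vertex.
\end{itemize}
This is the paper's proof. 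It also gives what your potential needs: after $f\le n-1$ clockwise snapshots, some clockwise walker has progress at least $f$. Once you have that, the potential is superfluous. The paper also avoids a separate path case by viewing the path as a cycle whose missing edge is always the same.
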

\begin{proof}
    The result is trivial for $n=2$, so let us assume $n\geq 3$. Since every snapshot of $D$ is unilateral, $\UnderlyingGraph{D}$ must be connected, so there exists a labeling $v_0,\dots,v_{n-1}$ of $V(D)$ such that every snapshot is an orientation of a subgraph of the cycle $(v_0,\dots,v_{n-1},v_0)$.

    Note that each snapshot of $D$ contains a Hamiltonian directed path. Up to removing an edge, we assume that each snapshot of $D$ consists exactly a Hamiltonian directed path. 
    By directional duality, we assume that $n-1$ snapshots of $D$ consist of a forward directed path, that is, consist of the directed path 
    \[
    (v_i,v_{i+1},\ldots,v_{n-1},v_0,\ldots,v_{i-1},v_i)
    \]
    for some $i\in [0,n-1]$. We show that the temporal digraph $H=(H_i)_{i\in [n-1]}$ consisting exactly of these $n-1$ snapshots has a temporal exploration, hence implying that $D$ admits one as well.

    For $i\in [0,n-1]$, let 
    \[
    Q_i = (v_i,v_{i+1}, \ldots,v_{n-1},v_0,v_1,\ldots,v_{i-1}),
    \]
    and let $(u_1^i,u_2^i,\ldots,u_n^i)$ be labeling of $V(D)$ along $Q_i$. 
    
    Note that, for every $j\in [n]$, and any pair of distinct integers $0\leq i< i'\leq n-1$, we have $u_j^i \neq u_j^{i'}$.
    In particular, for every $j\in [2,n]$, since each snapshot of $H$ consists exactly of the directed cycle $(v_0,v_1,\ldots,v_{n-1},v_0)$ minus one arc, there exists exactly one integer $i\in [0,n-1]$ such that
    \[
        u_{j-1}^iu_{j}^i \notin A(H_j).
    \]
    Hence, by the Pigeonhole Principle, there exists $i\in [0,n-1]$ such that, for every $j\in [2,n]$,
    \[
        u_{j-1}^iu_{j}^i \in A(H_j).
    \]
    It follows that $Q_i$ is a temporal exploration of $H$. The result follows.
\end{proof}

However, we do not know whether Theorem~\ref{thm:always_path_maxdegree_3} can be strengthened to always-strong digraphs.

\section{Semicomplete temporal digraphs}
\label{sec:always-semicomplete exploration}

We now turn our focus to {\it semicomplete temporal digraphs}, that is, temporal digraphs in which every snapshot is a semicomplete digraph. 
We also consider the particular case of {\it temporal tournaments}, that is, temporal digraphs where every snapshot is a tournament.
Note that semicomplete temporal digraphs are always-unilateral, but not necessarily always-strong.

In Section~\ref{subsec:general_upperbound_semicomplete}, we prove that every semicomplete temporal digraph $D$ of order $n\geq 2$ with lifetime exceeding $\frac{4}{3}n-2$ admits a temporal exploration. We further show that this is optimal by exhibiting, for every $n\geq 2$, a temporal tournament $D$ with $\Lifetime{D}=\lfloor\frac{4}{3}n-2\rfloor$ that does not admit any temporal exploration.

In Section~\ref{subsec:np_hardness_semicomplete}, we consider the computational complexity of the problem consisting of deciding whether a temporal digraph admits a temporal exploration. 
This problem is already known to be NP-hard in general~\cite{michailTCS634}. We prove that it remains NP-hard when restrained to temporal tournaments.
We finally consider the approximability of the corresponding optimization problem in Section~\ref{subsec:inapprox_semicomplete}.

\subsection{Minimum lifetime guaranteeing a temporal exploration}
\label{subsec:general_upperbound_semicomplete}

The purpose of this section is to prove Theorem~\ref{theorem:tournament-exploration-upper-bound} below, that is, to show that every semicomplete temporal digraph $D$ of order $n\geq 2$ with lifetime $\lfloor \frac{4}{3}n-1\rfloor$ admits a temporal exploration. We then prove Theorem~\ref{statement:tournament-exploration-lower-bound}, implying that this lower bound is optimal.

Similar to our proof of \cref{statement:always-unilateral walk covering S}, we construct a temporal exploration by successively connecting special non-spanned vertices which are able to reach any other non-spanned vertex in the next ``few'' snapshots.
The main difference to the general always-unilateral setting is that ``few'' can be taken to be two here, as shown in the next lemma,
which generalizes to the temporal setting the celebrated property that all tournaments admit a {\it king} (see for instance~\cite[Proposition~1]{havetJGT35}.

Along this section, given two vertices $u$ and $v$ of a temporal digraph $D=(D_i)_{i\in [t]}$ and an integer $i\in [t]$, we denote by $u \to_i v$ the property $(u,v)\in A(D_i)$, and by $u \not\to_i v$ the property that $(u,v)\notin A(D_i)$.

\begin{lemma}
	\label{lemma:directed_king}
	Let \(D=(D_1,D_2)\) be a semicomplete temporal digraph.
	There exists \(v \in \V{D}\)
	such that \(\V{D} \subseteq \TemporalReach{2}{v}\). Moreover, if $v$ is the only such vertex, then $v$ is a source in both $D_1$ and $D_2$.
\end{lemma}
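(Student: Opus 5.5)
I would prove both parts together by induction on \(n=\Abs{\V{D}}\), carrying along the description of \(\TemporalReach{2}{v}\). A vertex \(y\) lies in \(\TemporalReach{2}{v}\) if and only if \(y=v\), or \(v\to_1 y\), or \(v\to_2 y\), or \(v\to_1 w\to_2 y\) for some \(w\). Call such a \(v\) a \emph{king} of \(D\). Since every snapshot is semicomplete, \(y\notin\TemporalReach{2}{v}\) forces three things:
\begin{itemize}
\item \(y\to_1 v\);
\item \(y\to_2 v\);
\item \(y\to_2 w\) for every \(w\in\OutN[D_1]{v}\).
\end{itemize}
These three forced arcs are the only tool I plan to use. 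The case \(n\le 2\) is immediate.

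For the induction step, fix a vertex \(x\) and let \(D'=D-x\); every king of \(D'\) reaches all of \(\V{D}\setminus\{x\}\). If some king of \(D'\) has \(x\) in its two-step reach, we are done. Otherwise every king \(k\) of \(D'\) satisfies \(x\to_1 k\) and \(x\to_2 \{k\}\cup\OutN[D_1]{k}\). This gives
\[
\TemporalReach{2}{x}\supseteq \{x,k\}\cup \OutN[D_1]{k}\cup \OutN[D_2]{k}.
\]
If \(D'\) has a unique king \(k\), the "moreover" part of the induction hypothesis says \(k\) is a source in both snapshots of \(D'\). Then \(\OutN[D_1]{k}=\V{D'}\setminus\{k\}\), so \(x\) reaches everything and \(x\) is a king of \(D\).

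The case I expect to be the main obstacle is when \(D'\) has several kings \(k_1,k_2,\dots\), none reaching \(x\). A vertex \(y\) missed by \(x\) is then reached by each \(k_i\) only via a path \(k_i\to_1 w_i\to_2 y\), and satisfies \(y\to_1 k_i\) and \(y\to_2 k_i\) for all \(i\). My first attempt is to exploit two kings simultaneously. Since \(D_1\) is semicomplete, one of \(k_1\to_1 k_2\) or \(k_2\to_1 k_1\) holds, say \(k_1\to_1 k_2\). Then \(x\to_2 k_2\) and \(x\to_1 k_1\), and I would try to show that \(x\) or \(k_1\) reaches all of \(\V{D}\). If that fails, I would replace the arbitrary choice of \(x\) by an extremal one, for instance \(x\) of minimum out-degree in \(D_1\), or \(x\) maximizing \(\Abs{\TemporalReach{2}{x}}\). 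Under such a choice, a missed vertex \(y\), which satisfies \(y\to_1 v\) and \(y\to_2\{v\}\cup\OutN[D_1]{v}\) for the current candidate \(v\), should produce a strictly larger two-step reach and hence a contradiction. This is the part of the argument I am least sure of.

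For the "moreover" statement, assume \(v\) is the unique king and suppose some \(u\ne v\) has \(u\to_1 v\) or \(u\to_2 v\). I would show that \(u\) is then also a king. If \(u\to_1 v\), then \(u\) reaches \(\{v\}\cup\OutN[D_2]{v}\). The vertices that \(v\) reaches only through two-step paths \(v\to_1 w\to_2 y\) must still be covered, using the arcs between \(u\) and \(\OutN[D_1]{v}\) in \(D_1\) and \(D_2\). If \(u\to_2 v\) but \(v\to_1 u\), I would swap the roles and look at \(D_1\)-out-neighbours of \(u\). Alternatively, the "moreover" part may fall out of the main induction: tracking when the constructed king is forced shows that uniqueness only arises in the branch where the king dominates every other vertex in both snapshots, which is exactly the source condition.
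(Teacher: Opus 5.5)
\textbf{There is a genuine gap in the existence part.} Your argument settles only two cases: some king of $D-x$ reaches $x$, or $D-x$ has a unique king. The remaining case, several kings of $D-x$ none of which reaches $x$, is the heart of the lemma, and it is left open.

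In that case no $k_i$ can be a king of $D$, since it misses $x$. So you must show that $x$ itself is a king. Suppose $x$ missed some $y$. Your forced arcs $x\to_1 k$ and $x\to_2 \OutN[D_1]{k}$ give $k\not\to_1 y$ and $k\not\to_2 y$, hence $y\to_1 k$ and $y\to_2 k$, for every king $k$ of $D-x$. Also $y$ is not a king of $D-x$, because $y\to_1 x$. A contradiction then requires the fact that \emph{every non-king of $D-x$ has an in-neighbour among the kings of $D-x$}. This strengthens the ``moreover'' clause, and your induction hypothesis does not contain it.

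The extremal fallbacks do not supply it. A vertex $y\notin \TemporalReach{2}{v}$ inherits only $\{v\}\cup\OutN[D_1]{v}\cup\OutN[D_2]{v}$ from $v$, not the two-step part of $v$'s reach. For instance, take tournaments on $\{v,y,a,c\}$ with arcs $v\to a$, $y\to v$, $c\to v$, $a\to c$, $c\to y$ in both snapshots, plus $a\to_1 y$ and $y\to_2 a$. Then $\TemporalReach{2}{v}=\{v,a,c\}$ and $\TemporalReach{2}{y}=\{y,v,a\}$. So $y$ is missed by $v$ without having a larger reach. Moreover, $v$ and $y$ are exactly the vertices of minimum out-degree in $D_1$, and neither is a king; the kings are $a$ and $c$.

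\textbf{The ``moreover'' part has the same gap.} Proving that an in-neighbour $u$ of $v$ is itself a king cannot work from the arcs around $u$, $v$ and $\OutN[D_1]{v}$ alone. In the tournament $u\to v\to a\to u$, $a\to b\to u$, $b\to v$, used as both snapshots, $v$ is a king and $u\to_1 v$, $u\to_2 v$, yet $u$ misses $b$. Uniqueness has to be used globally. Your induction-tracking alternative is not carried out either: for example, when a king of $D-x$ that reaches $x$is the unique king of $D$, nothing shows it is a source of $D$.

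\textbf{The missing idea} is the paper's auxiliary digraph $H$, with an arc $(u,w)$ whenever $w$ cannot reach $u$, i.e.\ $u\in\mathcal{L}(w)$.
\begin{enumerate}
\item On a directed cycle of $H$, an extremal $D_1$-chord together with semicompleteness yields a two-step path contradicting the definition of $H$. So $H$ is acyclic, and its sources are kings.
\item Along any directed path $(v_0,\dots,v_r)$ of $H$, no arc of $D_1$ or $D_2$ goes from $v_j$ back to $v_i$ with $i<j$. A shortest such backward arc gives either $v_j\to_1 v_i\to_2 v_{j-1}$ or $v_{i+1}\to_1 v_j\to_2 v_i$. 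These contradict $v_{j-1}\in\mathcal{L}(v_j)$ or $v_i\in\mathcal{L}(v_{i+1})$ respectively.
\item Every vertex of an acyclic digraph is reached from some source. Hence every non-king is dominated in both snapshots by a king.
\end{enumerate}
Item 3 yields the ``moreover'' clause, via the path from the unique source. It is also exactly the fact your several-kings case needs; without it, your induction does not close.
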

\begin{proof}
    For every vertex $v\in V(D)$, let us denote by $\mathcal{L}(v) = V(D) \setminus \TemporalReach[D]{}{v}$ the set of vertices that $v$ cannot reach in $D$. Since $D$ is a semicomplete temporal digraph, note that $\mathcal{L}(v)$ contains all vertices $u\in V(D)$ such that:
    \begin{itemize}
        \item $u \to_1 v$,
        \item $u \to_2 v$, and
        \item for every vertex $w$, if $v \to_1 w$, then $u \to_2 w$.
    \end{itemize}
    By definition, observe that for any pair of distinct vertices $\{u,v\}$, if $u\in \mathcal{L}(v)$, then $v \notin \mathcal{L}(u)$.
    Let $H$ be the auxiliary digraph on vertex-set $V(D)$ with arc-set 
    \[
        A(H) = \{(u,v) : u\in \mathcal{L}(v)\}.
    \]
    By the remark above, $H$ does not contain any pair of opposite arcs. 
    We further have that $H$ is acyclic.
    \begin{claim}
        The oriented graph $H$ is acyclic.
    \end{claim}
    \begin{proofclaim}
        Assume for a contradiction that $H$ contains a directed cycle 
        \[
        C=(v_0,v_1,\dots,v_{r},v_0)
        \]
        for some $r\geq 2$. Let $0\leq i,j \leq r$ be two distinct integers such that $v_i \to_1 v_j$ and, with respect to this property, $(i-j) \bmod (r+1)$ is minimum. 
        By relabeling the vertices along $C$, we may assume without loss of generality that $i=0$.
    
        Note that, by construction of $C$, for every $0\leq k\leq r$ we have that $v_{k} \in \mathcal{L}(v_{k+1})$, which by definition implies that $v_{k}\to_{1} v_{k+1}$ and $v_{k}\to_{2} v_{k+1}$ (with indices taken modulo $r+1$). In particular, we have $j\neq r$, and we now distinguish two possible cases.
    
        Assume first that $j=r-1$. Then, by definition of $i,j$ we have $v_0 \to_1 v_{r-1}$. Similarly, since $v_{r-1} \in \mathcal{L}(v_{r})$, we have $v_{r-1} \to_2 v_r$. Therefore, $v_0\to_1 v_{r-1} \to_2 v_{r}$, hence $v_r$ is reachable from $v_0$ in $D$, a contradiction.
    
        Assume then that $1\leq j\leq r-2$. By choice of the arc $v_0v_j$, observe that $v_{j+1} \to_1 v_{r}$. Since $v_{j} \in  \mathcal{L}(v_{j+1})$, it follows that $v_j \to_2 v_{r}$. Therefore, $v_0 \to_1 v_j \to_2 v_{r}$, which contradicts $v_{r} \in \mathcal{L}(v_0)$. The claim follows.
    \end{proofclaim}

    It follows from the claim above that $H$ admits at least one source $v$. By construction, $\mathcal{L}(v) = \InN[H]{v} = \emptyset$, which means that $v$ can temporally reach every other vertex.
    This shows the first part of the statement.

    As for the second part, assume for a contradiction that $v$ is the unique such vertex (that is, $v$ is the unique source of $H$) and that $v$ is not a source in at least one of $(D_1,D_2)$. Let $u\in V(D)$ be such that $u\to_1 v$ or $u \to_2 v$. Since $H$ is acyclic and $v$ is its unique source, there exists in $H$ a directed path 
		\(
        P=(v_0, v_1,\dots, v_r)
		\)
    going from $v=v_0$ to $u = v_r$. Let $0\leq i<j\leq r$ be two indices such that $v_j \to_1 v_i$ or $v_j \to_2 v_i$ and, with respect to this property, $j-i$ is minimum. The existence of $i,j$ is guaranteed as $u\to_1 v$ or $u \to_2 v$ by assumption. Moreover, by construction of $P$, $j\geq i+2$, as for every $i\in [0,r-1]$ we have $v_{i} \in \mathcal{L}(v_{i+1})$, which in particular implies $v_i \to_1 v_{i+1}$ and $v_i \to_2 v_{i+1}$.

    Now, if $v_j \to_1 v_i$ then by choice of $i,j$ we have $v_j\to_1 v_i \to_2 v_{j-1}$, a contradiction, as $v_{j-1} \in \mathcal{L}(v_j)$.
    Otherwise, $v_j \to_2 v_i$, and by choice of $i,j$ we have $v_{i+1} \to_1 v_j \to_2 v_i$, a contradiction, as $v_{i} \in \mathcal{L}(v_{i+1})$. The statement follows.
\end{proof}

We further make use of the following straightforward observation. 

\begin{lemma}
	\label{lemma:exploration_K3}
	Let $D=(D_1,D_2)$ be a semicomplete temporal digraph of order $3$. If $D$ does not admit a temporal exploration, then both $D_1$ and $D_2$ are isomorphic to $\vec{C_3}$ and $A(D_1) \cap A(D_2) = \emptyset$.
\end{lemma}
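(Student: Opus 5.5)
We argue by contraposition: assume $D=(D_1,D_2)$ is semicomplete on vertex set $\{a,b,c\}$ with no temporal exploration, and deduce the claimed structure. A temporal exploration here is simply an ordering $(x,y,z)$ of the three vertices with $x\to_1 y$ and $y\to_2 z$. (A walk that stays put at some step cannot visit all three vertices with only two snapshots, so these are the only candidates.) Thus the hypothesis says that for every ordering $(x,y,z)$ we have $x\not\to_1 y$ or $y\not\to_2 z$.

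\textbf{Step 1: both snapshots are $\vec{C_3}$.} Suppose some vertex $y$ has out-degree $2$ in $D_2$. Since $D_1$ is semicomplete, at least one of the two other vertices $x$ satisfies $x\to_1 y$. If $z$ denotes the remaining vertex, then $x\to_1 y\to_2 z$ is an exploration. Hence no vertex has out-degree $2$ in $D_2$.

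Symmetrically, suppose some $y$ has in-degree $2$ in $D_1$. Since $D_2$ is semicomplete, $y\to_2 z$ for at least one of the two others $z$. Letting $x$ be the remaining vertex gives $x\to_1 y\to_2 z$. Hence no vertex has in-degree $2$ in $D_1$.

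A semicomplete digraph on $3$ vertices has at least $3$ arcs. If every vertex has out-degree at most $1$, the total out-degree is at most $3$, so it has exactly $3$ arcs with each vertex of out-degree exactly $1$, i.e.\ $D_2\cong\vec{C_3}$. The same counting with in-degrees gives $D_1\cong\vec{C_3}$.

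\textbf{Step 2: the snapshots share no arc.} Suppose some arc $xy$ lies in $A(D_1)\cap A(D_2)$. Since $D_2$ is a directed $3$-cycle containing $xy$, it also contains $yz$, where $z$ is the third vertex. Then $x\to_1 y\to_2 z$ is an exploration, a contradiction. Hence $A(D_1)\cap A(D_2)=\emptyset$.

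No step is a real obstacle. The only point needing care is the initial reduction that every exploration has the form $x\to_1 y\to_2 z$, together with the small degree count in Step 1.
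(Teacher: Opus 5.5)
\textbf{Gap in Step 1.} Both halves of Step 1 use semicompleteness to get an arc in a prescribed direction, and semicompleteness does not give that. It only guarantees an arc between $y$ and each other vertex, pointing either way.

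So a vertex $y$ of out-degree $2$ in $D_2$ may be a source of $D_1$, and then no $x$ with $x\to_1 y$ exists. Likewise, a vertex of in-degree $2$ in $D_1$ may be a sink of $D_2$. A single example shows both failures. Take $D_1=D_2$ to be the transitive tournament with arcs $yx$, $yz$, $xz$.
\begin{itemize}
\item $y$ has out-degree $2$ in $D_2$ but no in-neighbor in $D_1$, so your walk $x\to_1 y\to_2 z$ does not exist.
\item $z$ has in-degree $2$ in $D_1$ but is a sink of $D_2$, so $z\to_2 w$ holds for no $w$.
\end{itemize}
An exploration does exist here, namely $(y,x,z)$, but your argument does not produce it. Hence neither ``no vertex has out-degree $2$ in $D_2$'' nor ``no vertex has in-degree $2$ in $D_1$'' is established.

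\textbf{How to fix it.} Put the degree conditions on the opposite snapshots, which is what the paper does.
\begin{itemize}
\item If some $x$ has out-degree $2$ in $D_1$, then $x\to_1 y$ and $x\to_1 z$. The arc between $y$ and $z$ in $D_2$ completes the exploration $(x,y,z)$ or $(x,z,y)$, whichever way it points.
\item If some $x$ has in-degree $2$ in $D_2$, then $y\to_2 x$ and $z\to_2 x$. The arc between $y$ and $z$ in $D_1$, whichever way it points, gives $(y,z,x)$ or $(z,y,x)$.
\end{itemize}
Your counting argument then yields $D_1\cong\vec{C_3}$ (all out-degrees at most $1$) and $D_2\cong\vec{C_3}$ (all in-degrees at most $1$). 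Your Step 2 is correct and matches the paper's final case, and your reduction of explorations to orderings $x\to_1 y\to_2 z$ is fine.
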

\begin{proof}
    Let us denote $V(D) = \{x,y,z\}$.
    Assume first that $D_1$ is not isomorphic to $\vec{C_3}$. Since $D_1$ is semicomplete, in particular it contains a vertex, say, $x$, with out-degree $2$.
    Since $D_2$ is semicomplete, there is at least one arc between $z$ and $y$ in $D_2$, say, $yz\in A(D_2)$ without loss of generality. Then $(x,y,z)$ is a temporal exploration of $D$.

    Symmetrically, if $D_2$ is not isomorphic to $\vec{C_3}$, then it contains a vertex, say, $x$, with in-degree $2$, and there is an arc between $y$ and $z$ in $D_1$, say, $zy\in A(D_1)$. Then $(z,y,x)$ is a temporal exploration of $D$.

    Assume now that both $D_1$ and $D_2$ are isomorphic to $\vec{C_3}$. Without loss of generality, assume that $D_1$ has arc set $\{xy,yz,zx\}$. Then either $A(D_2) = A(D_1)= \{xy,yz,zx\}$ or $A(D_2) = \{xz,zy,yx\}$. In the former case, $(x,y,z)$ is a temporal exploration of $D$, and in the latter case $A(D_1) \cap A(D_2) = \emptyset$. The statement follows.
\end{proof}

We are now ready to prove the following key lemma, from which we easily derive Theorem~\ref{theorem:tournament-exploration-upper-bound}.

\begin{lemma}
    \label{lemma:tournament-exploration-upper-bound}
    Let $D=(D_i)_{i\in [4]}$ be a semicomplete temporal digraph of order $n\geq 3$. There exists a temporal path $P$ of length $2$ with terminal vertex $v^\star$ such that one of the following holds:
    \begin{alphaenumerate}        \item $P$ is a temporal path of $(D_1,D_2)$ and $v^\star$ can temporally reach every vertex of $V(D) \setminus V(P)$ in $(D_3,D_4)$; or
        \label{enum:tournament-exploration-upper-bound:1}
        \item $P$ is a temporal path of $(D_1,D_2,D_3)$ and $v^\star$ dominates $V(D) \setminus V(P)$ in $D_4$.
        \label{enum:tournament-exploration-upper-bound:2}
    \end{alphaenumerate}
\end{lemma}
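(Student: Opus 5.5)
The plan is to reduce everything to \cref{lemma:directed_king} applied to suitable two-snapshot windows. We also use the observation that in a semicomplete snapshot, a vertex which is not dominated by anything is a source and hence dominates all other vertices. A natural first attempt aims at alternative~\ref{enum:tournament-exploration-upper-bound:1}. By \cref{lemma:directed_king}, the pair $(D_3,D_4)$ has a vertex $v$ with $V(D)\subseteq\TemporalReach[(D_3,D_4)]{}{v}$; in particular $v$ reaches $V(D)\setminus V(P)$ for any $P$. So it suffices to find $x\to_1 y\to_2 v$ with $x,y,v$ distinct. We get this as soon as $v$ has an in-neighbour $y$ in $D_2$ which has an in-neighbour $x\neq v$ in $D_1$. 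The rest of the proof is a case analysis of how this can fail, for every vertex $v$ that is a king of $(D_3,D_4)$.

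\textbf{Case 1: the king of $(D_3,D_4)$ is unique.} By the second part of \cref{lemma:directed_king}, $v$ is a source of both $D_3$ and $D_4$, so it dominates every other vertex in $D_4$. Hence alternative~\ref{enum:tournament-exploration-upper-bound:2} holds as soon as $v$ is the terminal vertex of some temporal path of length~$2$ in $(D_1,D_2,D_3)$. Since $v$ is a source of $D_3$, the last arc of such a path must lie in $D_1$ or $D_2$. The only obstruction is therefore that every in-neighbour $y$ of $v$ in $D_2$ has no in-neighbour in $D_1$ other than $v$, and that $v$ is also very dominant in $D_1$; the extreme situation is that $v$ is a source in every snapshot. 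In that situation we switch the roles and let $v$ be the \emph{start} of $P$. We apply \cref{lemma:directed_king} to $(D_3,D_4)$ restricted to $V(D)\setminus\{v\}$ to obtain a king $w$ of the remaining vertices. Because $v$ dominates everything in $D_1$, it is enough to find some $y\neq v,w$ with $y\to_2 w$, which gives $P=(v,y,w)$ for alternative~\ref{enum:tournament-exploration-upper-bound:1}. If no such $y$ exists, then $w$ is a source of $D_2-v$. We then iterate the same reasoning, possibly using alternative~\ref{enum:tournament-exploration-upper-bound:2} with $w$ or with a vertex that dominates all but two vertices in $D_4$.

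\textbf{Case 2: there are several kings of $(D_3,D_4)$.} Here the freedom to choose among them should suffice. Suppose every king $v$ fails. Then each king either is a source of $D_2$, or all of its $D_2$-in-neighbours are dominated in $D_1$ only by $v$. Two distinct kings $v,v'$ are joined by an arc in $D_2$, say $v'\to_2 v$. So for $v$ to fail, $v'$ must have no in-neighbour in $D_1$ other than $v$. This forces $v'$ to be nearly a source of $D_1$, and then $(v',\cdot,\cdot)$-type paths can be built. I would convert this into a contradiction or into a direct construction of $P$ via short case-checking. \Cref{lemma:exploration_K3} handles the base situation $n=3$, where $V(D)\setminus V(P)=\emptyset$ and only the existence of a temporal path of length~$2$ is needed. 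That case follows from \cref{lemma:exploration_K3} applied to $(D_1,D_2)$, falling back to $(D_2,D_3)$ in the disjoint $\vec{C_3}$ case.

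The main obstacle I expect is the degenerate configuration in which one vertex is a source, or nearly a source, in several snapshots simultaneously. There the natural terminal vertex cannot be entered, and one must carefully choose a different terminal vertex together with the right alternative. I would first try to show that every failure pattern produces a vertex dominating all but at most two vertices in $D_4$ that is reachable by a path of length~$2$, giving alternative~\ref{enum:tournament-exploration-upper-bound:2}, and only fall back to re-applying \cref{lemma:directed_king} on a subdigraph if that fails.
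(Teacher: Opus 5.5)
Your proposal is a plan rather than a proof, and the two places where the real work lies are left open.

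\textbf{Where the proposal falls short.} Case~2 is not argued at all (``the freedom to choose among them should suffice''). Case~1 is settled only in the extreme configuration where $v$ is a source of every snapshot.

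Your description of the Case~1 obstruction is also inaccurate. Suppose there is no $x\to_1 y\to_2 v$. Then $v$ has at most one in-neighbour $y$ in $D_2$, since two of them would be joined by an arc of $D_1$. Such a $y$ dominates $V(D)\setminus\{v,y\}$ in $D_1$. Nothing, however, makes $v$ itself dominant in $D_1$. For instance, $v$ may have no in-neighbour in $D_2,D_3,D_4$ and no out-neighbour in $D_1$. Then $v$ cannot start $P$ via $D_1$. The path that works is $(a,v,b)$ with $a\to_1 v\to_2 b$, where $b$ reaches all vertices of $(D_3,D_4)-v$.

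More generally, in Case~1 no vertex other than $v$ can reach $v$ in $(D_3,D_4)$ or dominate it in $D_4$. So every admissible $P$ must contain $v$, as first, middle or last vertex, and each position needs its own argument. ``Iterate the same reasoning'' does not say how a path of length only~$2$ stays valid after iterating. As it stands, nothing shows the case analysis closes.

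\textbf{The idea the paper uses.} Call a vertex a \emph{king} of a two-snapshot temporal digraph if it temporally reaches all of its vertices. Instead of splitting on the uniqueness of one king, pick three kings at once:
\begin{itemize}
\item let $v$ be a king of $(D_3,D_4)$ of maximum out-degree in $D_3$;
\item let $u$ be a king of $(D_3,D_4)-v$;
\item let $w$ be a king of $(D_3,D_4)-\{u,v\}$;
\item set $S=V(D)\setminus\{u,v,w\}$.
\end{itemize}
Each of $u,v,w$ reaches $S$ in $(D_3,D_4)$. Hence any temporal exploration of $(D_1,D_2)$ restricted to $\{u,v,w\}$ already satisfies (a). By \cref{lemma:exploration_K3}, the only failure is two opposite triangles, say $A(D_1[\{u,v,w\}])=\{vu,uw,wv\}$ and $A(D_2[\{u,v,w\}])=\{vw,wu,uv\}$.

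This forces $u\to_1 s$ and $v\to_2 s$ for all $s\in S$, since otherwise $s\to_1 u\to_2 v$ or $u\to_1 s\to_2 v$ works. Consequently $w\to_1 v\to_2 s$ holds for every $s\in S$. Now one of two things happens:
\begin{itemize}
\item some $s\in S$ reaches $V(D)\setminus\{v,w\}$, which gives (a) via $(w,v,s)$; or
\item $u$ is the unique king of $(D_3,D_4)-\{v,w\}$, and by the second part of \cref{lemma:directed_king} it dominates $S$ in $D_3$ and $D_4$.
\end{itemize}
In the second situation:
\begin{itemize}
\item if $v\to_3 u$, then $w\to_1 v\to_3 u$ gives (b);
\item if $w\to_3 u$, then $v\to_2 w\to_3 u$ gives (b);
\item otherwise $u$ dominates $V(D)\setminus\{u\}$ in $D_3$, so $u$ is a king with larger $D_3$-out-degree than $v$, contradicting the choice of $v$.
\end{itemize}

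\textbf{Minor point.} In your $n=3$ base case, the fallback $(D_2,D_3)$ can also fail, namely when $D_3=D_1$; then $(D_1,D_3)$ works. The argument above handles $n=3$ uniformly anyway.
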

\begin{proof}
    By \cref{lemma:directed_king}, there exists a vertex $v$ which can temporally reach every vertex of $V$ in $(D_{3},D_{4})$. Among all such vertices we pick $v$ with largest out-degree in $D_{3}$. 
    By applying \cref{lemma:directed_king} twice, we let $u$ and $w$ be two vertices of $D$ such that both $u$ and $w$ can temporally reach every vertex of $V\setminus \{u,v,w\}$ in $(D_{3},D_{4})$. We let $S = V\setminus \{v,u,w\}$.

    Let $D' = D-S = (D_1',D_2',D_3',D_4')$. 
		If $(D_1',D_2')$ admits a temporal exploration $P$, then $P$ satisfies~\cref{enum:tournament-exploration-upper-bound:1} and we are done. Therefore, we can assume that $(D_1',D_2')$ does not admit such a temporal exploration. By \cref{lemma:exploration_K3} and by symmetry of the roles of $u$ and $w$, we may assume without loss of generality that $A(D_1') = \{vu,uw,wv\}$ and $A(D_2') = \{vw, wu, uv\}$.

    Assume first that, for some $s\in S$, $s\to_1 u$. Then $s\to_1u\to_2v$ yields a temporal path satisfying~\cref{enum:tournament-exploration-upper-bound:1}. Henceforth, we thus assume the $u \to_1 s$ for every $s\in S$.
    Similarly, if for some $s\in S$, $s\to_2 v$, then $u\to_1s\to_{2}v$ yields a temporal path satisfying~\cref{enum:tournament-exploration-upper-bound:1}. Therefore, we further assume that $v\to_2 s$ for every $s\in S$.

    It follows from the observation above that, for every $s\in S$, there exists a temporal path $w\to_1 v\to_2 s$. Therefore, if there exists $s\in S$  that can temporally reach every vertex in $V\setminus \{v,w\}$ in $(D_{3},D_{4})$, then $w\to_1 v\to_2 s$ is a temporal path satisfying~\cref{enum:tournament-exploration-upper-bound:1}. We may thus assume that $u$ is the unique vertex that can temporally reach every vertex of $V\setminus \{v,w\}$ in $(D_{3},D_{4})$. By the second part of \cref{lemma:directed_king}, it follows that $u\to_3 s$ and $u\to_4 s$ for every vertex $s\in S$.

    Assume now that $v \to_3 u$. Then $w\to_{1}v\to_{3}u$ yields a temporal path of $D$ satisfying~\cref{enum:tournament-exploration-upper-bound:2}, as $u$ dominates $S$ in $D_4$. Similarly, if $w\to_3 u$
    then $v\to_{2}w\to_{3}u$ yields a temporal path of $D$ satisfying~\cref{enum:tournament-exploration-upper-bound:2}.
    
    Therefore, we may assume that $u\to_3 v$, $u\to_3 w$, and $v \not\to_3 u$. Since $u$ dominates $S$ in $D_3$, we have that $u$ dominates $V\setminus \{u\}$ in $D_3$. In particular, $u$ can temporally reach every vertex in $V$ in $(D_3,D_4)$. Moreover, 
    \[
        |\OutN[D_3]{u}| = |V(D)|-1 > |\OutN[D_3]{v}|,
    \]
    where the last inequality follows from $v \not\to_3 u$.
    This contradicts the choice of $v$, and concludes the proof of the lemma.
\end{proof}

We are now ready to prove Theorem~\ref{theorem:tournament-exploration-upper-bound}.

\hypertarget{theorem:tournament-exploration-upper-bound:body}{}
\semicompleteExploration*

\begin{proof}
    We proceed by induction on $n$. The result is trivial when $n=2$, and it follows easily from Lemma~\ref{lemma:tournament-exploration-upper-bound} when $n\in \{3,4\}$. We thus assume $n\geq 5$. 
    
		Let $D=(D_i)_{i ∈ [t]}$.
		 Let $P$ be a temporal path of $(D_1,D_2,D_3,D_4)$ with terminal vertex $v^\star$ guaranteed by Lemma~\ref{lemma:tournament-exploration-upper-bound}, and let $D' = \TemporalSlice{D}{5}{t} - \V{P}$. Note that $t-4 \geq \lfloor \frac{4}{3}(n-3)-1 \rfloor = \lfloor \frac{4}{3}|V(D')|-1 \rfloor$ and that $n-3\geq 2$. Hence, by induction, $D'$ admits a temporal exploration $Q$. We let $u$ be the initial vertex of $Q$. If $P$ satisfies~\cref{enum:tournament-exploration-upper-bound:2}, then $v^\star$ dominates $u$ in $D_4$, and the concatenation of $P$ and $Q$ is a temporal exploration of $D$. Otherwise, $P$ satisfies~\cref{enum:tournament-exploration-upper-bound:1}, and there exists a temporal path $P'$ from $v^\star$ to $u$ in $(D_3,D_4)$, and the concatenation of $P$, $P'$, and $Q$ is an exploration of $D$. The result follows.
\end{proof}

We conclude with a proof that Theorem~\ref{theorem:tournament-exploration-upper-bound} is tight, even when all snapshots are tournaments, as shown by the following theorem.

\hypertarget{statement:tournament-exploration-lower-bound:body}{}
\semicompleteExplorationLowerBound*
\begin{proof}
    Let $n\geq 2$ be an arbitrary integer and let $k,r\in \mathbb{N}$ be such that $n=3k+r$ and $r \leq 2$. Let $t=4k+r-2 = \lfloor \frac{4}{3}n-2\rfloor$.
    
    We define a temporal tournament $D=(D_i)_{i\in [t]}$ as follows, see Figure~\ref{fig:lower_bound_tournament} for an illustration.
		Let $X=\{x_1,\dots,x_r\}$ be a set of $r$ vertices ($X$ is empty if $r=0$) and, for every $j\in [k]$, let $U_j = \{u_j,v_j,w_j\}$. Let $U = \bigcup_{j=1}^k U_j$. We define the vertex set of $D$ as
    $V(D) = X \cup U$,
    so $D$ has order exactly $n$. Then, for every $i\in [t]$ we let $D_i$ contain the arcs of
    \begin{align*}
        \{x_jx_\ell : 1\leq j < \ell\leq r\}
        & \cup \{ x_j u : 1\leq j \leq r \text{~~and~~} u\in U\}\\[0em]
        & \cup \{yz : y\in U_j \text{~~and~~} z\in U_\ell \text{~~for some $j<\ell$}\}.
    \end{align*}
    Then, if $i$ is even, we let $D_i$ contain the arcs
    $\bigcup_{j=1}^k \{u_jv_j, v_jw_j, w_ju_j\}$,
    and, if $i$ is odd, we let $D_i$ contain the arcs 
    $\bigcup_{j=1}^k \{v_ju_j, w_jv_j, u_jw_j\}$. This completes the description of $D$.

    \begin{figure}
        \centering
        \begin{tikzpicture}
        \tikzset{bigvertex/.style = {shape=circle,dashed, draw, outer sep=2pt}}
            \begin{scope}
            \node[vertex, label=below:$x_1$] (x1) at (-0.5,0) {};
            \node[vertex, label=below:$x_2$] (x2) at (0.5,0) {};
            \begin{scope}[xshift=2.5cm]
                \node[vertex] (u1) at (90:0.4) {};
                \node[vertex] (v1) at (-150:0.4) {};
                \node[vertex] (w1) at (-30:0.4) {};
                \node[bigvertex, fit=(u1)(v1)(w1), inner sep=1pt, label=below:$U_1$] (U1) at (0,0) {};
            \end{scope}
            \begin{scope}[xshift=5cm]
                \node[vertex] (u2) at (90:0.4) {};
                \node[vertex] (v2) at (-150:0.4) {};
                \node[vertex] (w2) at (-30:0.4) {};
                \node[bigvertex, fit=(u1)(v1)(w1), inner sep=1pt, label=below:$U_2$] (U2) at (0,0) {};
            \end{scope}
            \begin{scope}[xshift=7.5cm]
                \node[vertex] (u3) at (90:0.4) {};
                \node[vertex] (v3) at (-150:0.4) {};
                \node[vertex] (w3) at (-30:0.4) {};
                \node[bigvertex, fit=(u1)(v1)(w1), inner sep=1pt, label=below:$U_3$] (U3) at (0,0) {};
            \end{scope}

            \draw[directededge] (U1) to (U2);
            \draw[directededge] (U2) to (U3);
            \draw[directededge] (U1) to[out=30, in=150] (U3);
            
            \draw[directededge] (x1) to (x2);
            \draw[directededge] (x2) to (U1);
            \draw[directededgeblank] (x2) to[out=40, in=150] (U2);
            \draw[directededgeblank] (x2) to[out=40, in=150] (U3);
            \draw[directededge] (x2) to[out=40, in=150] (U2);
            \draw[directededge] (x2) to[out=40, in=150] (U3);
            
            \draw[directededgeblank] (x1) to[out=30, in=150] (U1);
            \draw[directededgeblank] (x1) to[out=30, in=150] (U2.150);
            \draw[directededgeblank] (x1) to[out=30, in=150] (U3.150);
            \draw[directededge] (x1) to[out=30, in=150] (U1);
            \draw[directededge] (x1) to[out=30, in=150] (U2.150);
            \draw[directededge] (x1) to[out=30, in=150] (U3.150);

            \draw[directededge] (u1) -- (v1);
            \draw[directededge] (v1) -- (w1);
            \draw[directededge] (w1) -- (u1);
            
            \draw[directededge] (u2) -- (v2);
            \draw[directededge] (v2) -- (w2);
            \draw[directededge] (w2) -- (u2);
            
            \draw[directededge] (u3) -- (v3);
            \draw[directededge] (v3) -- (w3);
            \draw[directededge] (w3) -- (u3);
            \end{scope}

            \begin{scope}[yshift=-2.8cm]
            \node[vertex, label=below:$x_1$] (x1) at (-0.5,0) {};
            \node[vertex, label=below:$x_2$] (x2) at (0.5,0) {};
            \begin{scope}[xshift=2.5cm]
                \node[vertex] (u1) at (90:0.4) {};
                \node[vertex] (w1) at (-150:0.4) {};
                \node[vertex] (v1) at (-30:0.4) {};
                \node[bigvertex, fit=(u1)(v1)(w1), inner sep=1pt, label=below:$U_1$] (U1) at (0,0) {};
            \end{scope}
            \begin{scope}[xshift=5cm]
                \node[vertex] (u2) at (90:0.4) {};
                \node[vertex] (w2) at (-150:0.4) {};
                \node[vertex] (v2) at (-30:0.4) {};
                \node[bigvertex, fit=(u1)(v1)(w1), inner sep=1pt, label=below:$U_2$] (U2) at (0,0) {};
            \end{scope}
            \begin{scope}[xshift=7.5cm]
                \node[vertex] (u3) at (90:0.4) {};
                \node[vertex] (w3) at (-150:0.4) {};
                \node[vertex] (v3) at (-30:0.4) {};
                \node[bigvertex, fit=(u1)(v1)(w1), inner sep=1pt, label=below:$U_3$] (U3) at (0,0) {};
            \end{scope}

            \draw[directededge] (U1) to (U2);
            \draw[directededge] (U2) to (U3);
            \draw[directededge] (U1) to[out=30, in=150] (U3);
            
            \draw[directededge] (x1) to (x2);
            \draw[directededge] (x2) to (U1);
            \draw[directededgeblank] (x2) to[out=40, in=150] (U2);
            \draw[directededgeblank] (x2) to[out=40, in=150] (U3);
            \draw[directededge] (x2) to[out=40, in=150] (U2);
            \draw[directededge] (x2) to[out=40, in=150] (U3);
            
            \draw[directededgeblank] (x1) to[out=30, in=150] (U1);
            \draw[directededgeblank] (x1) to[out=30, in=150] (U2.150);
            \draw[directededgeblank] (x1) to[out=30, in=150] (U3.150);
            \draw[directededge] (x1) to[out=30, in=150] (U1);
            \draw[directededge] (x1) to[out=30, in=150] (U2.150);
            \draw[directededge] (x1) to[out=30, in=150] (U3.150);

            \draw[directededge] (u1) -- (v1);
            \draw[directededge] (v1) -- (w1);
            \draw[directededge] (w1) -- (u1);
            
            \draw[directededge] (u2) -- (v2);
            \draw[directededge] (v2) -- (w2);
            \draw[directededge] (w2) -- (u2);
            
            \draw[directededge] (u3) -- (v3);
            \draw[directededge] (v3) -- (w3);
            \draw[directededge] (w3) -- (u3);
            \end{scope}
        \end{tikzpicture}
        \caption{The construction of $D_i$ for $n=11$ with $i$ being even (top) and odd (bottom). Arcs to or from a set \(U_i\) in the picture represent arcs incident to all vertices inside said \(U_i\).}
        \label{fig:lower_bound_tournament}
    \end{figure}
    
    By construction, in each $D_i$, $X$ dominates $V(D) \setminus X$, and for every $j,\ell \in [k]$, $U_j$ dominates $U_\ell$ for $\ell > j$. Therefore, any temporal exploration  explores $X$ first and then the vertices of $U_1,\dots,U_k$ in this order. 
    Observe that at least $|X|-1$ snapshots are required to explore $X$ and that, for every $j$, at least four snapshots are required to explore $U_j$ (unless $j=1$ and $X=\emptyset$, in which case three snapshots are enough). Therefore, any temporal exploration of $D$ requires at least 
		\(
        r + 4k -1 = t+1
		\)
    snapshots. This shows that $D$ does not admit any temporal exploration.
\end{proof}

\subsection{NP-hardness}
\label{subsec:np_hardness_semicomplete}

The goal of this section is to show that deciding if a given temporal tournament $D$ admits a temporal exploration is NP-hard, even when $\ell(D)=|V(D)|-1$, see Theorem~\ref{thm:np_hardness} below. We formally consider the following problem.

\TemporalExplorationDef

To prove that \TemporalExploration/ remains NP-hard in this aforementioned restricted case, we proceed in two steps. We first show that the following problem is NP-hard.

\PrescribedHamiltonianPathDef

Our reduction for \PrescribedHamiltonianPath/ is inspired by the famous reduction showing that it is NP-hard to decide, given an oriented graph $D$ and two vertices $a,b$ of~$D$, whether $D$ admits an induced path from $a$ to $b$~\cite{Bangjensen2012}. The same general idea was used for numerous \NP/-hardness results on digraphs, see for instance~\cite{Bangjensen2012arcdisjoint,Bangjensen2016,Bangjensen2024}.

\begin{lemma}
    \label{lemma:tournament_Hamiltonian_prescribed_NP-hard}
    \PrescribedHamiltonianPath/ is \NP/-hard even if $\ell_1 = 1$ and, for every $i\in [r-1]$, $\ell_{i+1} \geq \ell_i + 2$. 
\end{lemma}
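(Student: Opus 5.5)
We read \PrescribedHamiltonianPath/ as follows: given a tournament $T$, vertices $x_1,\dots,x_r$ and positions $\ell_1<\dots<\ell_r$, decide whether $T$ has a Hamiltonian path whose $\ell_i$-th vertex is $x_i$ for every $i$. We plan a polynomial reduction from \textsc{3-SAT}, in the spirit of the induced $(a,b)$-path reduction of~\cite{Bangjensen2012}. The prescribed vertices act as ``checkpoints'' that force the path to traverse the tournament gadget by gadget. Between two consecutive checkpoints $x_i,x_{i+1}$ there is a window of exactly $\ell_{i+1}-\ell_i-1\geq 1$ free positions. The Hamiltonian path must fill each window with a prescribed number of vertices, chosen from a small local set of candidates. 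The condition $\ell_1=1$ is obtained by taking $x_1$ to be the start of the first gadget. The condition $\ell_{i+1}\geq \ell_i+2$ holds because every window contains at least one free vertex.

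Concretely, let $\phi$ have variables $y_1,\dots,y_p$ and clauses $C_1,\dots,C_q$. For each variable $y_j$ we create one checkpoint-delimited window containing two ``literal chains'' $L_j^{+}$ and $L_j^{-}$, one vertex per occurrence of the literal. The window length equals the length of a single chain, padded with dummy vertices so that both chains have equal length. Thus the path must choose exactly one chain, and this choice sets the truth value of $y_j$. The unused chain's vertices must still be covered later. For each clause $C_c$ we create a window of length one, placed after all variable windows. The vertices of the clause window are those literal-occurrence vertices belonging to $C_c$. An occurrence vertex is therefore left uncovered in its variable window exactly when its literal is false. It can then be absorbed into the clause window. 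We tune the window size of $C_c$ so that it must absorb all false occurrences of $C_c$, and so that it cannot absorb all three; this is achieved by an extra dummy slot that can be filled only by a true-literal occurrence. All remaining pairs of vertices are oriented ``backwards'' by default, that is, from later gadgets to earlier ones. Arcs are oriented forward only where the intended path needs them: inside a chain, from a checkpoint into the first vertex of each chain, from the last vertex of each chain to the next checkpoint, and from and to the clause-window vertices. Since $T$ must be a tournament, every pair gets exactly one orientation, and backward arcs are harmless because the checkpoints fix the order of the windows.

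For correctness, the easy direction is to build the path from a satisfying assignment: traverse the true chain in each variable window and fill each clause window with the appropriate vertices. The converse direction is the main obstacle. We must show that every Hamiltonian path respecting the checkpoints stays inside the intended local structure. Within a window of length $m$, the only forward arcs available from the preceding checkpoint lead to gadget-local vertices. Since backward arcs only go to earlier windows, whose vertices are already fixed or will be needed elsewhere, we plan a counting and ordering argument: the path can never ``borrow'' a vertex of a later window, since it would then be unable to return. Making this airtight in a tournament, where every pair is adjacent, requires careful bookkeeping of which backward arcs exist. If the direct argument gets messy, the first fallback is to give each gadget vertex a global rank and orient all arcs by rank except the listed forward exceptions, so that a path segment of prescribed length between checkpoints is forced to be rank-monotone. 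The reduction is clearly polynomial, since the tournament has $O(p+q)$ checkpoints and $O(q)$ other vertices.
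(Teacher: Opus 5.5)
There is a genuine gap in your clause gadget: the polarity is reversed, and the repair you propose cannot be built. You traverse the chain of the \emph{true} literal in each variable window, so the occurrence vertices still free when the clause windows come are exactly the occurrences of \emph{false} literals. A length-one window that must be filled from the occurrence vertices of $C_c$ therefore succeeds iff $C_c$ has a false literal under your assignment, and a falsified clause passes this test. The fix you sketch fails in two ways. First, a window of fixed length cannot be forced to ``absorb all false occurrences of $C_c$'', because their number ($0$ to $3$) depends on the assignment. Second, a slot ``that can be filled only by a true-literal occurrence'' can never be filled, because every true occurrence was already used in its variable window. So even satisfiable formulas would map to no-instances. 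You also never say where the unused vertices (the other chain, padding dummies) end up, although with prescribed positions every vertex must occupy some slot.

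The missing idea is to read the assignment the other way round, so that the path consumes the chain of the \emph{false} literal. Then your single free slot after each clause checkpoint is exactly the right test, and no tuning is needed. This is what the paper does.
\begin{itemize}
\item It reduces from (3,B2)-SAT. Variable $x_i$ gets the four vertices $u_i^1,u_i^2,v_i^1,v_i^2$, the two occurrences of $x_i$ and of $\overline{x_i}$.
\item The segment from $s_1$ (position $1$) to $s_2$ (position $2n+2$) is forced to use exactly $\{u_i^1,u_i^2\}$ or $\{v_i^1,v_i^2\}$ for each $i$. The variable $x_i$ is set to true iff the $v$-pair is used, so the free occurrence vertices are precisely the true ones.
\item Each clause vertex $s_{j+2}$ is followed by one free position. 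Its only out-neighbours that can occupy that position are its three occurrence vertices, so all these positions can be filled iff every clause has a true literal.
\item Leftovers are swept at the end. The last checkpoint $s_{m+3}$ has $t'$ as its unique out-neighbour, and $t'$ dominates all remaining vertices, so any Hamiltonian path of the leftover subtournament (R\'edei) finishes the path.
\end{itemize}

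Separately, your claim that a variable window ``must choose exactly one chain'' needs an explicit orientation between the chains, since in a tournament they are fully adjacent. For example, suppose all of $L_j^-=(b_1,\dots,b_m)$ is ranked above $L_j^+=(a_1,\dots,a_m)$ and non-chain arcs point downward. Then $x_j\to b_1\to a_2\to\dots\to a_m\to x_{j+1}$ fills the window with a mix of both chains. The paper rules this out with its four-vertex gadget, where the only arcs from $X_i$ to $X_{i+1}$ go from $\{u_i^2,v_i^2\}$ to $\{u_{i+1}^1,v_{i+1}^1\}$, combined with a length count.
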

\begin{proof}
    We reduce from the {\sc $(3,B2)$-SAT} problem, which is the restriction of $3$-SAT in which every literal appears exactly twice (that is, for every variable $x$, both $x$ and $\overline{x}$ appear exactly twice). This problem has been shown to be \NP/-hard by Berman, Karpinski, and Scott~\cite{Berman2004}.

    Let $(X,\mathcal{C})$ be an instance of $(3,B2)$-SAT, where $X=\{x_1,\dots,x_n\}$ is the set of variables and $\mathcal{C}=\{C_1,\dots,C_m\}$ is the set of clauses. 
		The arbitrary ordering of the clauses $C_1,C_2,\ldots{},C_m$ induces an ordering of the occurrences of each literal $x_i$ and its negation $\overline{x_i}$ in these. 
    We now build a tournament $T$ from $(X,\mathcal{C})$ as follows, see Figure~\ref{fig:reduction_ham_prescribed} for an illustration.

    \begin{figure}
    \centering
    \begin{tikzpicture}
    \def\dx{1.3}
    \def\dg{0.5}
    \def\dy{0.65}
    \def\dclause{3}

    \node[labelledvertex] (s0) at (2-\dx,0) {$s_1$};
    \node[labelledvertex] (s1) at ({2+6*\dx+2*\dg},\dy) {$s_2$};
    \node[labelledvertex] (t1) at ({2+6*\dx+2*\dg},-\dy) {$t$};
    \node[labelledvertex] (u11) at (2,\dy) {$u_1^1$};
    \node[labelledvertex] (u12) at ({2+\dx},\dy) {$u_1^2$};
    \node[labelledvertex] (u21) at ({2+2*\dx+\dg},\dy) {$u_2^1$};
    \node[labelledvertex] (u22) at ({2+3*\dx+\dg},\dy) {$u_2^2$};
    \node[labelledvertex] (u31) at ({2+4*\dx+2*\dg},\dy) {$u_3^1$};
    \node[labelledvertex] (u32) at ({2+5*\dx+2*\dg},\dy) {$u_3^2$};
    \node[labelledvertex] (v11) at (2,-\dy) {$v_1^1$};
    
    \node[labelledvertex] (v12) at ({2+\dx},-\dy) {$v_1^2$};
    \node[labelledvertex] (v21) at ({2+2*\dx+\dg},-\dy) {$v_2^1$};
    \node[labelledvertex] (v22) at ({2+3*\dx+\dg},-\dy) {$v_2^2$};
    \node[labelledvertex] (v31) at ({2+4*\dx+2*\dg},-\dy) {$v_3^1$};
    \node[labelledvertex] (v32) at ({2+5*\dx+2*\dg},-\dy) {$v_3^2$};
    \node[labelledvertex, green] (s5) at (2,-\dclause) {$s_6$};
    \node[labelledvertex, magenta] (s4) at ({2+2*\dx},-\dclause) {$s_5$};
    \node[labelledvertex, orange] (s3) at ({2+4*\dx},-\dclause) {$s_4$};
    \node[labelledvertex, blue] (s2) at ({2+6*\dx+0.4},-\dclause) {$s_3$};
    
    \draw[thindirectededge, blue] (s2) to[out=170,in=-67] (u11);
    \draw[thindirectededge, blue] (s2) to[out=170,in=-67] (u21);
    \draw[thindirectededge, blue] (s2) to (u31);
    
    \draw[directededgeblank] (s3) to[out=170, in=-67] (u12);
    \draw[directededgeblank] (s3) to[out=170, in=-90] (v21);
    \draw[directededgeblank] (s3) to[out=10,in=-90] (v31);
    \draw[thindirectededge, orange] (s3) to[out=170, in=-67] (u12);
    \draw[thindirectededge, orange] (s3) to[out=170, in=-90] (v21);
    \draw[thindirectededge, orange] (s3) to[out=10,in=-90] (v31);
    
    \draw[directededgeblank] (s4) to[out=170, in=-90] (v11);
    \draw[directededgeblank] (s4) to[out=15, in=-135] (u22);
    \draw[directededgeblank] (s4) to[out=15, in=-92] (v32);
    \draw[thindirectededge, magenta] (s4) to[out=170, in=-90] (v11);
    \draw[thindirectededge, magenta] (s4) to[out=15, in=-135] (u22);
    \draw[thindirectededge, magenta] (s4) to[out=15, in=-90] (v32);
    
    \draw[directededgeblank] (s5) to[out=10, in=-90] (v12);
    \draw[directededgeblank] (s5) to[out=10, in=-105] (u32);
    
    \draw[thindirectededge, green] (s5) to[out=10, in=-105] (u32);
    \draw[edgeblank] (s5) to[out=10, in=-87] (v22);
    \draw[thindirectededge, green] (s5) to[out=10, in=-90] (v22);
    \draw[thindirectededge, green] (s5) to[out=10, in=-90] (v12);

    \draw[edgeblank, line width=7pt] (v11) to (v12);
    \draw[edgeblank, line width=7pt] (v21) to (v22);
    \draw[edgeblank, line width=7pt] (v31) to (v32);
    \draw[edgeblank, line width=5pt] (v12) to (u21);
    \draw[edgeblank, line width=7pt] (v12) to (v21);
    \draw[directededge] (u11) to (u12);
    \draw[directededge] (v11) to (v12);
    \draw[directededge] (u21) to (u22);
    \draw[directededge] (v21) to (v22);
    \draw[directededge] (u31) to (u32);
    \draw[directededge] (v31) to (v32);
    \draw[directededge] (u12) to (u21);
    \draw[directededge] (v12) to (u21);
    \draw[directededgeblank] (u12) to (v21);
    \draw[directededge] (u12) to (v21);
    \draw[directededge] (v12) to (v21);
    \draw[directededge] (u22) to (u31);
    \draw[directededge] (v22) to (u31);
    \draw[directededge] (v22) to (v31);
    \draw[directededgeblank] (u22) to (v31);
    \draw[directededge] (u22) to (v31);
    \draw[thindirectededge] (u11) to (v11);
    \draw[thindirectededge] (u12) to (v12);
    \draw[thindirectededge] (u21) to (v21);
    \draw[thindirectededge] (u22) to (v22);
    \draw[thindirectededge] (u31) to (v31);
    \draw[thindirectededge] (u32) to (v32);
    \draw[directededge] (s0) to (u11);
    \draw[directededge] (s0) to (v11);
    \draw[directededge] (u32) to (s1);
    \draw[directededge] (v32) to (s1);
    \draw[directededge] (s1) to (t1);

    \draw[decorate, decoration={brace, amplitude=6pt}]
    (11.3, 0.95) -- (11.3,-0.95)
    node[midway, right=8pt] (W) {$W$};

    \node[draw, thick, dotted, rounded corners, inner xsep=4pt, inner ysep=8pt, rectangle, fit=(s0)(u11)(s2) (W), outer sep=3pt] (box) {};

    \node[labelledvertex] (s6) at ($ (s0)!0.5!(W) + (-1,-5) $) {$s_7$};
    \node[labelledvertex] (t2) at ($ (s0)!0.5!(W) + (1,-5) $) {$t'$};

    \draw[directededge] (s6) to (t2);
    \draw[directededge] (box.-95) to (s6);
    \draw[directededge] (box.-110) to (s6);
    \draw[directededge] (box.-125) to (s6);
    \draw[directededge] (t2) to (box.-85);
    \draw[directededge] (t2) to (box.-70);
    \draw[directededge] (t2) to (box.-55);
    
    \end{tikzpicture}
    \caption{The tournament $T$ when $X= \{x_1,x_2,x_3\}$ and $\mathcal{C}= (C_1,C_2,C_3,C_4)$, with {\color{blue}$C_1= (x_1,x_2,x_3)$}, {\color{orange}$C_2=(x_1,\overline{x_2},\overline{x_3})$}, {\color{magenta}$C_3= (\overline{x_1},x_2,\overline{x_3})$}, and {\color{green}$C_4=(\overline{x_1},\overline{x_2},x_3)$}. 
		Every missing edge goes from right to left if it has both extremities in $W$ or in $\{s_2,s_3,s_4,s_5,s_6\}$, and from top to bottom if it has one extremity in $W$ and one in $\{s_2,s_3,s_4,s_5,s_6\}$.}
    \label{fig:reduction_ham_prescribed}
    \end{figure}

    We first describe the part of $T$ corresponding to the variables of $X$. For this, we let $T$ contain three vertices $\{s_1,s_2,t\}$ and, for every $i\in [n]$, the four vertices 
    \[
        X_i=\{u_i^1, u_i^2, v_i^1,v_i^2\},
    \]
    as well as the arcs of
    \[
        \{u_i^1u_i^2, v_i^1v_i^2, u_i^1v_i^1, u_i^2v_i^2, u_i^2v_i^1, v_i^2u_i^1\}.
    \]
    Intuitively speaking, the vertices $\{u_i^1,u_i^2\}$ correspond to the two occurrences of $x_i$, and the vertices $\{v_i^1,v_i^2\}$ to the two occurrences of $\overline{x_i}$.
    Then, for every $i\in [2,n]$, we add the four possible arcs from $\{u_{i-1}^2,v_{i-1}^2\}$ to $\{u_{i}^1,v_{i}^1\}$. All the other arcs go from larger to smaller indices, that is, $T$ contains the arcs of
    \begin{align*}
         \bigcup_{1\leq i \leq j-2\leq n-2} \Big\{yz: y\in X_j \text{~~and~~} z\in X_i \Big\}
        & \cup \bigcup_{2\leq i \leq n} \Big\{yz: y\in \{u_i^2,v_i^2\} \text{~~and~~} z\in X_{i-1}\Big\}\\[0em]
        & \cup \bigcup_{2\leq i \leq n} \Big\{yz: y\in X_i \text{~~and~~} z\in \{u_{i-1}^1,v_{i-1}^1\}\Big\}.
    \end{align*}

    Let $W=\{s_1,s_2,t\} \cup \bigcup_{i\in [n]}X_i$.
    Among $W$, we let the out-neighborhood of $s_1$ be precisely $\{u_1^1,v_1^2\}$, in-neighborhood of $s_2$ be precisely $\{u_n^2,v_n^2\}$, and the in-neighborhood of $t$ be precisely $\{s_2\}$. This concludes the description of the part of $T$ corresponding to the variables of $X$. By construction, observe that we have the following.

    \begin{claim}
        \label{claim:directed_path}
        Let $P$ be a directed path of $T[W]$ going from $s_1$ to $s_2$ of length $2n+1$. Then, for every $i\in[n]$, we have $ V(P) \cap X_i \in \{\{u_i^1,u_i^2\}, \{v_i^1,v_i^2\}\}$.
    \end{claim}

    We now describe the part of $T$ corresponding to the clauses $C_1,\dots,C_m$. For every $j\in [m]$, we add a new vertex $s_{j+2}$, which intuitively corresponds to the clause $C_j$. 
    Then, for every $j\in [m]$, we proceed as follows. Assume $C_j$ contains variables $x_h$, $x_i$, and $x_k$ (negated or not), then we let the out-neighborhood of $s_{j+2}$ in $W$ be precisely $\{a_h, a_i, a_k\}$, where
    \[
        a_h = 
        \begin{cases}
            u_h^1 &\text{if $C_j$ contains the first occurrence of $x_h$,}\\[0em]
            u_h^2 &\text{if $C_j$ contains the second occurrence of $x_h$,}\\[0em]
            v_h^1 &\text{if $C_j$ contains the first occurrence of $\overline{x_h}$,}\\[0em]
            v_h^2 &\text{otherwise, that is $C_j$ contains the second occurrence of $\overline{x_h}$.}\\[0em]
        \end{cases}
    \]
    and $a_i,a_k$ are defined analogously with respect to $x_i,x_k$.
    For any pair $1\leq i<j\leq m$, we add an arc from $s_{i+2}$ to $s_{j+2}$.
    We complete the description of $T$ by adding two extra vertices $\{s_{m+3},t'\}$ and the arcs incident to these so that $\InN[T]{s_{m+3}} = \OutN[T]{t'} = V(T) \setminus \{s_{m+3},t'\}$.

		Finally, we let $\ell_1 = 1$ and $\ell_{j} = 2n+2j-2$ for every $j \in [2,m+3]$. We claim that $(X,\mathcal{C})$ is a positive instance of $(3,B2)$-SAT if and only if $(T,(s_1,\dots,s_{m+3}), (\ell_1,\dots\ell_{m+3}))$ is a positive instance of \PrescribedHamiltonianPath/. Note that the order of $T$ is $4n + m + 5 ∈ \BigO{n+m}$ and that, given $(X,\mathcal{C})$,  $T$ can be constructed in polynomial time. 
		Note also that $\ell_1=1$ and $\ell_{i+1} \geq \ell_i + 2$ for every $1\leq i\leq m+2$.
    Therefore, showing the equivalence between the instances implies the result.

    \medskip

    Assume first that $(X,\mathcal{C})$ is a positive instance, and let $\alpha\colon X\to \{\true,\false\}$ be a truth assignment of $(X,\mathcal{C})$. We build a Hamiltonian path of $T$ as follows. For every $i\in [n]$, let $P_i$ be $(v_i^1,v_i^2)$ if $\alpha(x_i) = \true$ and $(u_i^1,u_i^2)$ otherwise.
    Then, for every $j\in [m]$, let $x_h$ be an arbitrary variable such that $C_j$ is satisfied by $x_h$ in $\alpha$; that is either
    \begin{itemize}
        \item $x_h \in C_j$ and $\alpha(x_h) = \true$, or
        \item $\overline{x_h}\in C_j$ and $\alpha(x_h) = \false$.
    \end{itemize}
    We further let $Q_j = (s_{j+2},a_h)$ where $a_h$ is defined as above. Note that, by definition, $a_h\in X_h$ is not spanned by $P_h$. Therefore, the concatenation 
    \[
        P' = (s_1) \cdot P_1\cdot P_2\cdot \ldots \cdot P_n\cdot (s_2,t) \cdot Q_1\cdot Q_2\cdot  \ldots \cdot Q_m \cdot (s_{m+3},t')
    \]
    is a directed path of $T$ such that $T$ covers all vertices of $T$ except some vertices of $W$, and for every $1\leq i\leq m+3$, $s_i$ appears at index $\ell_i$ along $P'$. Therefore, to show that $(T,(s_1,\dots,s_{m+3}), (\ell_1,\dots,\ell_{m+3}))$ is a positive instance of \PrescribedHamiltonianPath/, it remains to prove that $P'$ can be extended into a Hamiltonian path of $T$.
    To see that this is possible, recall that $P'$ terminates at $t'$, and that $t'$ dominates $W$. Moreover, it is a celebrated result of R\'{e}dei~\cite{Rede34}  that every tournament admits a Hamiltonian path. Hence, since $V(T) \setminus V(P')\subseteq W$, $P'$ can indeed be extended into a Hamiltonian of $T$ by taking an arbitrary Hamiltonian path of $T[V(T) \setminus V(P')]$.

    \medskip

    Conversely, assume now that $T$ contains a Hamiltonian path $P$ where, for each $i\in[1,m+3]$, $s_i$ appears at the index $\ell_i$ along $P$.
    Let $P_1$ be the restriction of $P$ to the first $2n+2$ vertices. Observe that $P$ goes from $s_1$ to $s_2$. Moreover, since $\ell_j\geq 2n+4$ for every $j \geq 3$, $P_1$ does not contain any vertex of $\{s_3,\dots,s_{m+3}\}$.
    Furthermore, since $t$ and $t'$ dominate $W\setminus \{s_2\}$, $P_1$ does not contain any of $\{t,t'\}$.
    Hence, $P_1$ is a directed path of $T[W]$ of length $2n+1$ going from $s_1$ to $s_2$. By Claim~\ref{claim:directed_path}, for every $i\in [n]$, we have $V(P_1) \cap X_i \in \{\{u_i^1,u_i^2\}, \{v_i^1,v_i^2\}\}$.
    We can thus define the following assignment
    \[
        \alpha\colon x_i \mapsto 
        \begin{cases}
            \true &\text{if } V(P_1) \cap X_i = \{v_i^1,v_i^2\},\\[0em]
            \false &\text{otherwise, that is } V(P_1) \cap X_i = \{u_i^1,u_i^2\}.\\[0em]
        \end{cases}
    \]
    We claim that $\alpha$ is a satisfying assignment. To see this, let $C_j$ be an arbitrary clause of $\mathcal{C}$ ($1\leq j\leq m$), and let us show that $C_j$ is satisfied by $\alpha$. 
    
    Let $w$ be the successor of $s_{j+2}$ in $P$. By definition of $\ell_1,\dots,\ell_{m+3}$, $w$ does not belong to $\{s_1,\dots,s_{m+3}\}$. Hence, $w \in W\setminus \{s_1,s_2\}$. By construction, the out-neighborhood of $s_{j+2}$ in $W$ is precisely $\{a_h,a_i,a_k\}$ (defined as above) where $x_h,x_i,x_k$ are the variables contained in $C_j$. Assume without loss of generality that $w=a_h$. 

    If $x_h$ is positive in $C_j$, then $a_h \in \{u_h^1,u_h^2\}$. Since $a_h \notin V(P_1)$ by definition of a path, it follows from Claim~\ref{claim:directed_path} that $V(P_1) \cap X_h = \{v_h^1,v_h^2\}$, which by definition implies that $\alpha(x_h) = \true$. 
    Similarly, if $x_h$ is negated in $C_j$, then $a_h \in \{v_h^1,v_h^2\}$. This implies that $V(P_1) \cap X_h = \{u_h^1,u_h^2\}$, and that $\alpha(x_h) = \false$.
    In both cases, $C_j$ is satisfied by $\alpha$. The result follows.
\end{proof}

We are now ready to derive the \NP/-hardness of \TemporalExploration/ from \PrescribedHamiltonianPath/.

\hypertarget{thm:np_hardness:body}{}
\explorationSemicompleteNPHard*
\begin{proof}
    We prove the hardness result by reducing from the restricted version of \PrescribedHamiltonianPath/, where the input $(T,(s_i)_{i\in [r]},(\ell_i)_{i \in [r]})$ satisfies $\ell_1=1$ and $\ell_{i+1}\geq \ell_i+2$ for every $i\in [r-1]$.
    This problem is NP-hard by Lemma~\ref{lemma:tournament_Hamiltonian_prescribed_NP-hard}.

    Let thus $(T,(s_i)_{i\in [r]},(\ell_i)_{i \in [r]})$ be such an instance of \PrescribedHamiltonianPath/. We build a temporal tournament $D=(D_i)_{i\in [t]}$ with $t = |V(D)|-1$ snapshots as follows.
    For every $j\in [r]$, we let $S_j = \{x_j,y_j,z_j\}$ be three distinct vertices such that $S_j \cap V(T) =\emptyset $ and such that the $S_j$'s are pairwise disjoint. We let the vertex set of $D$ be
    \[
        V(D) = (V(T) \setminus \{s_1,\ldots, s_r\}) \cup \bigcup_{j=1}^rS_j, 
    \]
    and we let $n = |V(D)| = t+1 = |V(T)|+2r$. Intuitively, each set $S_j$ in $D$ represents the vertex $s_j$ in $T$.
    We now define the arc set of the $D_i$'s. 
    Let $W = V(T) \setminus \{s_1,\dots,s_r\}$.
    First, for every $i\in [n-1]$ and every pair of vertices $u,v\in W$ we orient the edge between $u$ and $v$ in $D_i$ as in $T$, so
    \[
        A(D_i[W]) = A(T[W]).
    \]
    It remains to consider the edges incident to at least one vertex in $\bigcup_{j=1}^rS_j$. 
    For this, let us define $\rho(j) = \ell_j + 2j-2$.
    For every $j\in [r]$, we let the edges induced by $S_j$ alternate between the two possible $\vec{C_3}$, except on the specific time step $\rho(j)$ in which case they induce a transitive tournament. Formally, for every $i\in [t]$ and $j\in [r]$ we let 
    \[
        A(D_i[S_j]) = 
        \begin{cases}
            \{x_jy_j,x_jz_j,y_jz_j\} &\text{if $i=\rho(j)$,}\\[0em]
            \{x_jy_j,y_jz_j,z_jx_j\} &\text{if $i\neq \rho(j)$ and $i\equiv \rho(j)+1 \bmod 2$,}\\[0em]
            \{y_jx_j,z_jy_j,x_jz_j\} &\text{if $i\neq \rho(j)$ and $i \equiv \rho(j) \bmod 2$.}
        \end{cases}
    \]
    It remains to specify, for every $j\in [r]$, the orientation of the edges having exactly one extremity in $S_j$.
    For this, we let $D_i$ be the unique tournament on $V(D)$ satisfying the orientations above and such that, for every $j\in [r]$ and every $s\in S_j$, we have:
    \[
        \OutN[D_i]{s} \setminus (S_1\cup S_2\cup \ldots \cup S_j) =
        \begin{cases}
            W &\text{if $i \leq  \rho(j)-2$,}\\[0em]
            \OutN[T]{s_j} \cap W &\text{if $i=\rho(j)-1$,}\\[0em]
            \emptyset &\text{if $i=\rho(j)$,}\\[0em]
            \emptyset &\text{if $i=\rho(j)+1$,}\\[0em]
            \OutN[T]{s_j} \cap W &\text{if $i=\rho(j)+2$,}\\[0em]
            W \cup S_{j+1} \cup \ldots \cup S_r &\text{if $i\geq \rho(j)+3$.}
        \end{cases}
    \]

    This completes the description of $D$. We now prove that $(T,(s_i)_{i\in [r]}, (\ell_i)_{i\in [r]})$ is a positive instance of \PrescribedHamiltonianPath/ if and only if $D$ admits a temporal exploration, implying the result.
    Note that, since $D$ contains $|V(D)|-1$ snapshots, $D$ admits a temporal exploration if, and only if, there exists an ordering $v_1,\dots,v_n$ of $V(D)$ such that, for every $i\in [n-1]$, $v_iv_{i+1}$ is an arc of $D_i$.

    \medskip

    Assume first that $T$ admits a Hamiltonian path $P=(u_1,\dots,u_{|V(T)|})$ such that, for every $j\in [r]$, $s_j$ appears at index $\ell_i$ along $P$. Let $Q$ be the ordering of $V(D)$ obtained from $P$ by replacing $s_j$ with $(x_j,y_j,z_j)$ for every $j\in [r]$. Note that every vertex of $D$ appears exactly once along $Q$ because $P$ is a Hamiltonian path of $T$. 
    We claim that $Q$ is a temporal exploration of~$D$. 

    To see this, let us label $v_1,\dots,v_n$ the vertices along $Q$ and show that, for every $i\in [n-1]$, $v_iv_{i+1}$ is an arc of $D_i$. 
    Let us thus fix $i\in [n-1]$. This is clear if both $v_i$ and $v_{i+1}$ belong to $W$, as by construction $D_i[W] = T[W]$ and $v_iv_{i+1}$ is an arc of $T$. 
    Let thus $j\in [r]$ be such that $\{v_i,v_{i+1}\}\cap S_j \neq \emptyset$. Note that $j$ is unique, since the vertices $s_1,\dots,s_r$ do not appear consecutively along $P$ (as $\ell_{k+1} \geq \ell_k+2$ for every $k\in [r-1]$ by assumption). Hence, $v_iv_{i+1}$ is either $x_jy_j$, $y_jz_j$, $w_1x_j$ with $w_1$ being the predecessor of $x_j$ in $Q$, or $z_jw_2$ with $w_2$ being the successor of $z_j$ in $Q$. Note that both $w_1$ and $w_2$ (if they exist) belong to $W$.
    
    Since $s_j = u_{\ell_j}$, observe that by construction of $Q$ we have that $x_j$ is at position $\ell_j + 2j-2 =\rho(j)$ along $Q$. It is straightforward to check that, by construction of $D$, we have $w_1x_j \in A(D_{\rho(j)-1})$ (if $x_j$ has a predecessor in $Q$), $x_jy_j \in A(D_{\rho(j)})$, $y_jz_j \in A(D_{\rho(j)+1})$, and $z_jw_2 \in A(D_{\rho(j)+2})$ (if $z_j$ has a successor in $Q$). This shows that $D$ admits a temporal exploration.

    \medskip
    
    We now prove the converse. Hence, let us assume that $D$ admits a temporal exploration $Q=(v_1,\dots,v_n)$, so $v_iv_{i+1}\in A(D_i)$ for every $i\in [n-1]$. We first prove the following.
    \begin{claim}
        \label{claim:Sj_consecutive}
        For every $j\in [r]$, we have $S_j = \{v_{\rho(j)}, v_{\rho(j)+1}, v_{\rho(j)+2}\}$.
    \end{claim}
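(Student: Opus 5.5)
We prove the claim by induction on \(j\), using the following stronger hypothesis: \(S_{j'}=\{v_{\rho(j')},v_{\rho(j')+1},v_{\rho(j')+2}\}\) for every \(j'<j\). Recall that \(Q\) visits exactly one new vertex per snapshot, so \(v_{i+1}\) is entered through an arc of \(D_i\). Since \(\rho(j')+2<\rho(j)-1\) (because \(\ell_{j}\geq \ell_{j-1}+2\)), the hypothesis tells us that from time \(\rho(j-1)+3\) onward the walk stays inside \(W\cup S_j\cup\dots\cup S_r\). For \(j=1\), \(\rho(1)=1\), so the statement concerns the first three vertices. The plan is to read off from the definition of \(D\) exactly when arcs can enter and leave \(S_j\), and then use the alternation of the \(\vec{C_3}\)'s inside \(S_j\).

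\emph{Step 1 (entry/exit windows).} We tabulate, for each \(i\), the arcs of \(D_i\) between \(S_j\) and \(W\cup S_{>j}\). Between \(S_j\) and \(W\): \(S_j\) dominates \(W\) for \(i\leq\rho(j)-2\); at \(i\in\{\rho(j)-1,\rho(j)+2\}\) the arcs follow \(T\); at \(i\in\{\rho(j),\rho(j)+1\}\), \(W\) dominates \(S_j\); and for \(i\geq\rho(j)+3\), \(S_j\) dominates \(W\). Between \(S_j\) and \(S_{j'}\) with \(j'>j\): \(S_{j'}\) dominates \(S_j\) for \(i\leq \rho(j)+2\), and \(S_j\) dominates \(S_{j'}\) afterwards. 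From the rule for \(S_{j'}\), \(S_{j'}\) dominates \(W\) up to \(\rho(j')-2>\rho(j)+2\). We also use the analogous facts for \(S_{j-1}\) at time \(\rho(j-1)+2\), where the walk leaves \(S_{j-1}\) into \(N_T(s_{j-1})\cap W\) by the inductive hypothesis.

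\emph{Step 2 (no early visit and no late entry).} We show that no vertex of \(S_j\) is visited at a time \(<\rho(j)\).
\begin{itemize}
\item A vertex of \(S_j\) cannot be entered from \(W\) at steps \(\leq\rho(j)-2\), since \(S_j\) dominates \(W\) there.
\item It also cannot be entered from \(S_{j'}\), \(j'>j\). Reaching \(S_{j'}\) before time \(\rho(j)\) would require entering \(S_{j'}\) from \(W\) or from a later set while \(S_{j'}\) still dominates \(W\). Taking \(j'\) maximal gives a contradiction, because nothing outside \(S_{\leq j'}\) points into \(S_{j'}\) at such times. The sets \(S_{<j}\) are excluded by the inductive hypothesis.
\end{itemize}
Symmetrically, no vertex of \(S_j\) can be entered at a step \(\geq\rho(j)+2\), since from \(\rho(j)+3\) on \(S_j\) dominates everything still available, and at \(\rho(j)+2\) only \(T\)-arcs towards \(W\) remain. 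Hence all three vertices of \(S_j\) occupy positions in \(\{\rho(j),\dots,\rho(j)+2\}\), with \(v_{\rho(j)}\in S_j\) entered through the arc \(v_{\rho(j)-1}v_{\rho(j)}\in A(D_{\rho(j)-1})\). The careful bookkeeping here is checking each boundary step, in particular \(i=\rho(j)-1\) and \(i=\rho(j)+2\).

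\emph{Step 3 (consecutiveness, the main obstacle).} Step 2 only bounds the positions; we still need that the three vertices of \(S_j\) are visited consecutively and that nothing from \(W\) is interleaved. This holds because the window \(\{\rho(j),\rho(j)+1,\rho(j)+2\}\) has exactly three positions and \(Q\) visits each vertex exactly once, so these positions are forced to be exactly \(S_j\). As a sanity check, a walk \(a\to b\to c\) inside \(S_j\) at consecutive times \(i,i+1\) is impossible when both snapshots are alternating \(\vec{C_3}\)'s, because \(D_{i+1}[S_j]\) is the reverse of \(D_i[S_j]\) and thus contains \(b\to a\), not \(b\to c\). So the transitive snapshot at \(\rho(j)\) is genuinely needed, which is consistent with the forced order \(x_j,y_j,z_j\). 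The main difficulty is Step 2's exclusion of early detours through later sets \(S_{j'}\); we handle it by an auxiliary induction on \(j'\) from \(r\) downwards, showing that no \(S_{j'}\) is touched before time \(\rho(j')-1\).
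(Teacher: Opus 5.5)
Your Step~2 lower bound is fine: your maximal-$j'$ argument shows that no vertex of $S_{\geq j}$ appears before position $\rho(j)$. The upper bound, however, has a genuine gap.

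First, the assertion that at step $\rho(j)+2$ ``only $T$-arcs towards $W$ remain'' is false. At $i=\rho(j)+2$ the arcs between $S_j$ and $W$ copy those between $s_j$ and $W$ in $T$. So every $w\in N^-_T(s_j)\cap W$ dominates $S_j$, and by your own table every $S_{j'}$ with $j'>j$ still dominates $S_j$ at that step.

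Second, and more seriously, domination between $S_j$ and the rest of $V(D)$ says nothing about moves inside $S_j$, which are available at every step. Consider a walk that enters $S_j$ from $W$ at step $\rho(j)$, occupies positions $\rho(j)+1,\rho(j)+2,\rho(j)+3$ inside $S_j$, and leaves to $W$ at step $\rho(j)+3$. It respects every orientation you list in Steps~1 and~2. So ``hence all three vertices of $S_j$ occupy positions in $\{\rho(j),\dots,\rho(j)+2\}$'' does not follow, and Step~3 (three positions, three vertices) inherits the gap.

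What excludes such shifted blocks is exactly the observation you present only as a sanity check. Two consecutive snapshots other than $D_{\rho(j)}$ induce opposite $\vec{C_3}$'s on $S_j$, so two consecutive moves inside $S_j$ must use step $\rho(j)$. This has to be a genuine step of the proof, preceded by an exit/re-entry argument:
\begin{itemize}
\item $S_{<j}$ is already visited before position $\rho(j)$, and $W\cup S_{>j}$ dominates $S_j$ in $D_{\rho(j)}$ and $D_{\rho(j)+1}$. Hence the walk cannot leave $S_j$ before step $\rho(j)+2$.
\item The walk cannot re-enter $S_j$ at any step $\geq\rho(j)+3$, where $S_j$ dominates $W\cup S_{>j}$.
\end{itemize}
Hence $S_j$ occupies consecutive positions $p,p+1,p+2$ with $p\geq\rho(j)$. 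The alternation forces $\rho(j)\in\{p,p+1\}$, so $p=\rho(j)$. This is the paper's route: its three subclaims, then the $i_1<i_2<i_3$ argument, then the alternation.

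Two minor points. The inequality $\rho(j')-2\geq\rho(j)+2$ can hold with equality, so it is not strict. Also, the ``auxiliary induction from $r$ downwards'' showing that no $S_{j'}$ is touched before $\rho(j')-1$ cannot be carried out at stage $j$. Entries into $S_{j'}$ from $S_{j''}$ with $j\leq j''<j'$ at steps $\geq\rho(j'')+3$ are not yet controlled there. What you need, and what your maximal-$j'$ argument actually proves, is only that $S_{\geq j}$ is untouched before position $\rho(j)$.
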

    \begin{proofclaim}
        We proceed by induction on $j\in [r]$. 
        We start with the case $j=1$. We claim that the vertices of $S_1$ must be consecutive along $Q$. Indeed, if this is not the case, then there exist indices $i <k$ such that $D_{i}$ contains an arc from $S_1$ to $V(D)\setminus S_1$ and $D_{k}$ contains an arc from $V(D)\setminus S_1$. This does not happen, as by definition $S_1$ is dominated by $V(D)\setminus S_1$ in $D_1,D_2$, and $S_1$ dominates $V(D)\setminus S_1$ in $D_4,\dots,D_t$.
        Therefore, the vertices of $S_1$ appear consecutively along $Q$. Since the edges induced by $S_1$ alternate between two opposite directed triangles except on $D_{\rho(1)}=D_1$, this is possible only if $v_1\in S_1$. Therefore, $S_1 = \{v_1,v_2,v_3\}$, as desired.

        Let now $j\in [2,r]$ be such that the statement holds for every $j'<j$.
        For every $j'\in [r]$, we denote by $S_{\geq j'}$ the set of vertices $\bigcup_{j''\geq j'}S_{j''}$.

        \begin{subclaim}
            \label{subclaim:predecessor_WSj}
            Let $u$ be a vertex of $S_{\geq j}$. Its predecessor along $Q$ belongs to $W \cup S_{\geq j}$. 
        \end{subclaim}
        \begin{proofsubclaim}
            Let $w$ be the predecessor of \(u\). 
            If the statement does not hold, then $w\in S_{j'}$ for some $j'<j$. By induction, this happens only if $w=v_{\rho(j')+2}$. But then, by definition the out-neighborhood of $w\in S_j'$ in $D_{\rho(j')+2}$ is included in $W$, so it does not contain $u$, a contradiction.
        \end{proofsubclaim}

        \begin{subclaim}
            \label{subclaim:minimum_index}
            Let $u$ be a vertex of $S_{\geq j}$. The index of $u$ along $Q$ is at least $\rho(j)$.
        \end{subclaim}
        \begin{proofsubclaim}
            Assume for a contradiction that some vertex $u\in S_{\geq j}$ has index at most $\rho(j)-1$, and among all such vertices we let $u$ be the one of smallest index along $Q$.
            Let $w$ be the predecessor of $u$ along $Q$, which exists as $u\notin S_1$ and $S_1= \{v_1,v_2,v_3\}$ by induction.
            By Subclaim~\ref{subclaim:predecessor_WSj}, $w\in W \cup S_{\geq j}$.
            Clearly, by choice of $u$, we must have $w\in W$. 
            Let $j'\geq j$ be the index such that $u\in S_{j'}$. Then the index of $u$ along $Q$ is at most $\rho(j)-1 \leq \rho(j')-1$, which implies that the index of $w$ is at most $\rho(j')-2$. This is a contradiction, as for every $i\leq \rho(j')-2$, $S_{j'}$ dominates $W$ in $D_i$.
        \end{proofsubclaim}

        \begin{subclaim}
            \label{subclaim:successor_WSj}
            Let $u$ be a vertex of $S_j$.
            Every vertex after \(u\) along \(Q\) is in \(W ∪ Q_{\geq j}\).
        \end{subclaim}
        \begin{proofsubclaim}
            If this is not the case, then $w$ belongs to $S_{j'}$ for some $j'<j$. By induction, this happens only if $w = v_{\rho(j')}$, and thus $u \leq v_{\rho(j')-1}$.
            We have $\rho(j')-1 < \rho(j)$, which is a contradiction to Subclaim~\ref{subclaim:minimum_index}.
        \end{proofsubclaim}

        The three subclaims above together imply that the vertices of $S_j$ are consecutive along $Q$.
        Indeed, if this is not the case, since $v_1 \in S_1$ by induction, there exist three indices 
        $i_1<i_2<i_3$ such that:
        \begin{itemize}
            \item $v_{i_1} \notin S_j$ and $v_{i_1+1} \in S_j$, 
            \item $v_{i_2}\in S_j$ and $v_{i_2+1} \notin S_j$, and
            \item $v_{i_3}\notin S_j$ and $v_{i_3+1} \in S_j$.
        \end{itemize}

        By Subclaim~\ref{subclaim:minimum_index}, we have $i_1\geq \rho(j)-1$.
        By Subclaim~\ref{subclaim:successor_WSj}, $v_{i_2+1}\in W \cup S_{\geq j+1}$.
        By construction, since $v_{i_2}\in S_j$, $v_{i_2}$ is dominated by $W \cup S_{\geq j+1}$ in $D_{\rho(j)}$ and $D_{\rho(j)+1}$. Hence, since $i_2\geq i_1+1\geq \rho(j)$, we have $i_2\geq \rho(j)+2$.
        By Subclaim~\ref{subclaim:successor_WSj}, $v_{i_3}\in W \cup S_{\geq j+1}$. Moreover, we have $i_3\geq i_2+1\geq \rho(j)+3$. This is a contradiction, as by construction $S_j$ dominates $W \cup S_{\geq j+1}$ in $D_i$ for every $i\geq \rho(j)+3$.

        We just proved that the vertices of $S_j$ are consecutive along $Q$. Moreover, since the edges induced by $S_j$ alternate between two directed triangles except on $\rho(j)$, at least one of the vertices in $S_j$ has index $\rho(j)$. It follows from Subclaim~\ref{subclaim:minimum_index} that $S_j = \{v_{\rho(j)},v_{\rho(j)+1}, v_{\rho(j)+2}\}$.
        The claim follows.
    \end{proofclaim}

    Let $P$ be the ordering obtained from $Q$ by replacing the occurrence of $S_j$ by $s_j$ (the three vertices of $S_j$ appearing consecutively along $Q$ by Claim~\ref{claim:Sj_consecutive}). Let $u_1,\dots,u_{|V(T)|}$ be the labelling of $V(T)$ along $P$.
    Since $S_j = \{v_{\rho(j)}, v_{\rho(j)+1}, v_{\rho(j)+2}\}$ for every $j\in [r]$, observe that $s_j = u_{\rho(j) - 2j+2} = u_{\ell_j}$. 
    It thus remains to prove that $u_ku_{k+1}$ is an arc of $T$ for every $k\in [|V(T)|-1]$. This is clear if both $u_k$ and $u_{k+1}$ belong to $W$ as then $u_k$ and $u_{k+1}$ are also consecutive in $Q$, and $T[W] = D_i[W]$ for every $i\in [t]$.
    We can thus assume that exactly one vertex of $\{u_k,u_{k+1}\}$ belongs to $W$, as $\ell_j\geq \ell_{j-1}+2$ for every $j\in [r]$.
    
    Assume that $u_{k+1} = s_j$ for some $j\in [r]$, hence implying that $u_k \in W$. Then, by Claim~\ref{claim:Sj_consecutive}, $u_k = v_{\rho(j)-1}$ and the successor of $u_k$ in $Q$ is precisely $v_{\rho(j)} \in S_j$. Since $u_k\in W$, and $u_kv_{\rho(j)}$ is an arc of $D_{\rho(j)-1}$, we have that $u_k \in \InN[T]{s_j}$, as desired.

    Assume finally that $u_k = s_j$ for some $j\in [r]$, hence implying that $u_{k+1} \in W$. Then $u_{k+1} = v_{\rho(j)+3}$, and the predecessor of $u_{k+1}$ in $Q$ is precisely $v_{\rho(j)+2} \in S_j$. Since $u_{k+1}\in W$, and $v_{\rho(j)+2}u_k$ is an arc of $D_{\rho(j)+2}$, we have that $u_{k+1} \in \OutN[T]{s_j}$, as desired.
    This shows the equivalence between the instances, and concludes the proof of the Theorem.
\end{proof}

\subsection{Inapproximability}
\label{subsec:inapprox_semicomplete}

We finally consider the approximability of the following optimization problem.

\TemporalExplorationFastestDef

In general, in polynomial time, it is known that \TemporalExplorationFastest/ cannot be approximated within a factor of $c\cdot |V(D)|$ for some absolute constant $c>0$, unless \P/ = \NP/~\cite{michailTCS634}. It is further known that, for every $\epsilon>0$, it cannot be approximated within a factor $2-\epsilon$, even when restrained to always-strong temporal digraphs~\cite{michailTCS634}.

The situation is different for semicomplete  temporal digraphs. Indeed, every temporal exploration clearly requires at least $|V(D)|-1$ snapshots, and Theorem~\ref{theorem:tournament-exploration-upper-bound} guarantees that $\lfloor \frac{4}{3}|V(D)|-1\rfloor$ snapshots are enough. Therefore, the trivial algorithm consisting of returning the minimum between $\Lifetime{D}$ and $\lfloor \frac{4}{3}|V(D)|-1\rfloor$ outputs a solution $t\in [\ell(D)]$ such that
\[
    t\leq \left\lfloor \tfrac{4}{3}t_{OPT} +\tfrac{1}{3}\right\rfloor.
\]

We conclude this section with the following result, showing that this trivial algorithm is essentially optimal, even for temporal tournaments.

\hypertarget{statement:semicomplete approximation hardness:body}{}
\explorationSemicompleteApproximationHard*
\begin{proof}
    Assume for a contradiction that such an approximation $\mathcal{A}$ exists, and let us deduce the existence of a polynomial-time algorithm which decides whether a given temporal tournament $D=(D_i)_{i\in [|V(D)|-1]}$ tournament admits a temporal exploration. The result then follows from Theorem~\ref{thm:np_hardness}.

    Let $D=(D_i)_{i\in [|V(D)|-1]}$ be such a temporal digraph of order $n$.
    Let $N= \lceil n/\epsilon\rceil$, and note that, in particular we have
    \[
        (\tfrac{4}{3}-\epsilon) (n+3N-1) < n+4N.
    \]

    The first step of the algorithm consists of building, from $D$, a temporal tournament $D'=(D_i')_{i\in [t]}$ of order $n+3N$ with $t=\lfloor \frac{4}{3}(n+3N)-1\rfloor$ as follows. Note that $D'$ is guaranteed to admit a temporal exploration by Theorem~\ref{theorem:tournament-exploration-upper-bound}, and that the size of $D'$ is at most polynomial in the size of $D$ (for fixed $\epsilon$).
    For every $j\in [N]$ we let $U_j = \{x_j,y_j,z_j\}$ be a set of three vertices, and we let the vertex set of $D'$ be 
    \[
        V(D') = V(D) \cup \bigcup_{j\in [N]}U_j.
    \]
    Then, for every $i\in [n-1]$ we let 
    \[
        A(D_i'[V(D)]) = A(D_i[V(D)]),
    \]
		and for every $n\leq i \leq t$ we orient the edges in $D_i'[V(D)]$ arbitrarily, say 
    \[
        A(D_i'[V(D)]) = A(D_1[V(D)]).
    \]
    Then, for every $i\in [t]$, we let $D_i'$ contain all the arcs from $V(D)$ to $\bigcup_{j\in [N]} U_j$ and all the arcs from $U_{j}$ to $U_{j'}$ whenever $j<j'$, hence guaranteeing that in every temporal exploration of $D'$, the vertices of $D'$ will be explored in the following order : $V(D),U_1,U_2,\ldots,U_N$.
    It remains to define the arcs inside each $U_j$. For every $i\in [t]$, we let
    \[
        A(D_i'[Y_j]) = 
        \begin{cases}
            \{x_jy_j,x_jz_j,y_jz_j\} &\text{if $i=n+3j-2$,}\\[0em]
						\{x_jy_j,y_jz_j,z_jx_j\} &\text{if $i\neq n+3j-2$ and $i\equiv n+3j-1 ~ (\bmod \, 2)$,}\\[0em]
						\{y_jx_j,z_jy_j,x_jz_j\} &\text{if $i\neq n+3j-2$ and $i \equiv n+3j ~ (\bmod \, 2)$.}
        \end{cases}
    \]
    Now, we can simply apply $\mathcal{A}$ on $D'$ and output true if the result is smaller than $n+4N$, and false otherwise.
    To see that the algorithm is correct, we only have to prove the following.
    \begin{claim}
        If $D$ admits a temporal exploration, then $D'$ admits a temporal exploration using at most $n+3N-1$ snapshots, and if $D$ does not admit a temporal exploration then every temporal exploration of $D'$ uses at least $n+4N$ snapshots.
    \end{claim}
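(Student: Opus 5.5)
The claim has two directions; I would prove each directly from the construction of $D'$.

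\emph{Completeness.} Assume $D$ has a temporal exploration. Since $\Lifetime{D}=n-1$, it is an ordering $v_1,\dots,v_n$ of $V(D)$ with $v_iv_{i+1}\in A(D_i)=A(D_i'[V(D)])$ for $i\in[n-1]$. I would extend it by
\[
x_1,y_1,z_1,x_2,y_2,z_2,\dots,x_N,y_N,z_N .
\]
The step $v_n\to x_1$ uses $D_n'$, which is allowed because every $D_i'$ contains all arcs from $V(D)$ to $\bigcup_j U_j$. For each $j$, vertex $x_j$ is entered at step $n+3j-3$. We then use $x_jy_j$ in $D'_{n+3j-2}$, where $U_j$ is transitive with $x_jy_j$ present. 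Next we use $y_jz_j$ in $D'_{n+3j-1}$; by the parity rule, $i\equiv n+3j-1$ gives the cycle $x_j\to y_j\to z_j\to x_j$, which contains $y_jz_j$. Finally we use $z_jx_{j+1}$ in $D'_{n+3j}$, which is an arc from $U_j$ to $U_{j+1}$ and so is present in every snapshot. In total this uses $n-1+3N$ snapshots. (I read the typo $Y_j$ as $U_j$.)

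\emph{Soundness.} Assume $D$ has no temporal exploration. The domination structure forces every temporal exploration of $D'$ to visit all of $V(D)$ first, then all of $U_1$, then $U_2$, and so on, with each block contiguous among first visits. The reason is that no arc ever leaves $\bigcup_j U_j$ towards $V(D)$, and no arc goes from $U_{j'}$ back to $U_j$ when $j<j'$. Let $T_0$ be the step at which the last vertex of $V(D)$ is first reached. Since $D$ has no exploration within $D_1,\dots,D_{n-1}$ and the walk stays in $V(D)$ until then, we get $T_0\ge n$. The rest is a lower bound on the time spent exploring each $U_j$.

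\emph{Main obstacle: the timing lemma for one block.} Suppose the walk first enters $U_j$ at step $a$, possibly by an arc taken in some snapshot. Then at least three further snapshots are needed to cover $U_j$ and then leave to $U_{j+1}$: two internal moves, plus one move out. The only exception is when both internal moves are taken in consecutive snapshots in which a Hamiltonian path of $U_j$ exists. The two cyclic orientations alternate, and consecutive snapshots with opposite cycles admit no temporal exploration of the triangle; this is exactly \cref{lemma:exploration_K3}. So covering $U_j$ within two consecutive snapshots requires one of them to be $D'_{n+3j-2}$, the transitive one. The lemma I would prove is this: if the walk enters $U_j$ using a snapshot of index greater than $n+3j-3$, then covering $U_j$ costs at least three internal snapshots. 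If it enters at the special time or earlier, the internal moves must use snapshot $n+3j-2$ and $n+3j-1$ (or two same-parity cycle snapshots separated by a wasted step).

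Using this, I would argue by induction on $j$. If block $j$ is entered at step $\ge n+3j-2$, i.e.\ one step late, then, since wait steps only increase delay, block $j+1$ is entered at step $\ge n+3(j+1)-2$ as well. The delay only grows, so each block then costs four snapshots: one to enter it and at least three to traverse it, as in the proof of \cref{statement:tournament-exploration-lower-bound}. Since $T_0\ge n$, entering $U_1$ happens at step $\ge n+1=n+3\cdot1-2$, so we are late from the start. Summing gives at least $n-1+1+4N-1\ge n+4N-1$ steps. I expect the careful off-by-one bookkeeping here to be the delicate part, so that the bound comes out as exactly $n+4N$. Concretely, I would check that the first block costs $1+3$ after $T_0\ge n$, which gives $n+4N$ in total.
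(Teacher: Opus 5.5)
Your proposal is correct and follows the paper's route. Completeness uses the explicit walk $(v_1,\dots,v_n)\cdot(x_1,y_1,z_1)\cdots(x_N,y_N,z_N)$; soundness uses the forced order $V(D),U_1,\dots,U_N$, the bound $T_0\ge n$, and the fact that each $U_j$ is then entered no earlier than its transitive snapshot, so covering it costs four snapshots and the total is at least $T_0+4N\ge n+4N$. Your induction on the delay makes the paper's one-line "alternating $\vec{C_3}$" remark precise, though \cref{lemma:exploration_K3} gives the converse of what you cite it for; the fact you need (two consecutive opposite $\vec{C_3}$'s admit no temporal exploration) holds anyway, since the second cycle can only send you back where the first one took you from.
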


    \begin{proofclaim}
        If $D$ admits a temporal exploration $(v_1,\dots,v_n)$, then it is straightforward to check that
        \[
            (v_1,\dots,v_n) \cdot (x_1,y_1,z_1) \cdot \ldots \cdot (x_N, y_N, z_N)
        \]
        is a temporal exploration of $D'$ using $n+3N-1$ snapshots.
        On the other hand, assume that $D$ does not admit any temporal exploration. Let $(v_1,\dots,v_r)$ be a temporal exploration of $D'$. Then, as mentioned above, the vertices of $D'$ are explored in the following order: $V(D),U_1,\ldots,U_N$. Covering the vertices of $D$ requires at least $n$ steps by assumption, which implies that, for every $j\in [N]$, when reaching the vertices of $U_j$, the arcs induced by $U_j$ are alternating between two opposite $\vec{C_3}$. It follows that covering the vertices of each $U_j$ requires at least \(4N\) additional steps.
    \end{proofclaim}
    
    The result now follows from the claim above and the fact that, by choice of $N$, 
    \[
    (\frac{4}{3}-\epsilon)(n+3N-1) < n+4N.\qedhere
    \]
\end{proof}

\section{Large minimum degree}
\label{sec:large minimum degree}

Towards generalizing \cref{theorem:tournament-exploration-upper-bound} to larger classes of temporal digraphs,
we consider a parameter \(c\) and
prove that all always-strong temporal digraphs on \(n\) vertices with \(\Lifetime{D} \geq f(c)n\)
admit a temporal exploration.
A natural way to generalize the concept of semicomplete digraphs is to consider
digraphs of large minimum degree.

To simplify notation, given an undirected graph \(G\),
we define \(\MindegComplement{G} \coloneqq \Abs{\V{G}} - \Mindeg{G} - 1\).
For a digraph \(D\), define \(\MindegComplement{D} \coloneqq \MindegComplement{\UnderlyingGraph{D}}\),
and for a temporal digraph \(D' = (D_i)_{i ∈ [t]}\), we define
\(\MindegComplement{D'} \coloneqq \max_{i ∈ [t]} \MindegComplement{D_i}\).
Note that for every \(v ∈ \V{G}\) there are at most \(\MindegComplement{G}\)
distinct vertices of \(\V{G} ∖ \{v\}\)
which do not have an edge to \(v\).
In particular, \(\MindegComplement{D'} = 0\)
if and only if every \(D_i\) is semicomplete.

The main result of this section is \cref{statement:degree-n-c-upper-bound-strong},
showing that \(f(\MindegComplement{D}) ∈ \BigO{\MindegComplement{D}}\)
suffices in order to guarantee the existence of a temporal exploration for \(D\).
In \cref{statement:degree-n-c-lower-bound},
we show that this bound is asymptotically tight.

We first prove a weaker result in \cref{sec:cnlog n upper-bound},
showing that \(\Lifetime{D} \geq d \MindegComplement{D} n \log n\),
for some constant \(d\),
suffices (\cref{statement:degree-n-c-upper-bound-strong-log}).
This proof is simpler and introduces several lemmas required to prove \cref{statement:degree-n-c-upper-bound-strong} later.
Then, we introduce \emph{travel plans}, the main object used to obtain \cref{statement:degree-n-c-upper-bound-strong},
in \cref{sec:travel plan}.
In \cref{sec:c2n upper bound} we show how to use the travel plans in order to derive \cref{statement:degree-n-c-upper-bound-strong}, and then we obtain a matching lower bound with \cref{statement:degree-n-c-lower-bound}.

\subsection{Simpler upper-bound}
\label{sec:cnlog n upper-bound}

We start by obtaining an analogue of \cref{lemma:directed_king} when \(\MindegComplement{D} \leq c\).
Towards this end, we first show that we can find few vertices which, together,
can reach all vertices in a given set \(X\).
We require the following well-known statement,
see for instance~\cite[Proposition 1.5]{Cranston2024}.

Recall that a graph \(G\) is \(k\)-colorable
if we can partition \(\V{G}\) into at most \(k\) sets
\(X_{1}, X_{2}, \ldots, X_{t}\),
called \emph{color classes},
such that
for every edge \(uv\) in \(G\),
if \(u ∈ X_i\), then \(v \notin X_i\).
\begin{observation}
	\label{statement:degree-coloring}
	Let \(G\) be a graph of maximum degree \(c\).
	Then \(G\) is \((c + 1)\)-colorable.
\end{observation}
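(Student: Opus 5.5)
The statement is the classical greedy-colouring bound, and I would prove it by induction on \(\Abs{\V{G}}\) (equivalently, by running the greedy algorithm). Fix the palette \([c+1]\). The base case \(\Abs{\V{G}} = 0\), or a single vertex, is trivially \((c+1)\)-colorable.

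For the inductive step, pick any vertex \(v \in \V{G}\) and consider \(G - v\). Deleting a vertex cannot increase any degree, so \(G - v\) still has maximum degree at most \(c\). By induction it admits a partition into at most \(c+1\) color classes \(X_1, \ldots, X_{c+1}\), some possibly empty, such that no edge has both ends in the same class.

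Now reinsert \(v\). Since \(v\) has at most \(c\) neighbours in \(G\), at most \(c\) of the \(c+1\) classes contain a neighbour of \(v\). Hence some class \(X_i\) contains no neighbour of \(v\), and we add \(v\) to \(X_i\). The only new edges are those incident to \(v\), and none of them has its other endpoint in \(X_i\). So the result is a proper partition of \(\V{G}\) into at most \(c+1\) color classes.

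Equivalently, one can order the vertices arbitrarily and assign to each vertex the smallest color not used by its already-colored neighbours; the same pigeonhole count shows a free color always exists.

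There is no real obstacle here. The only point to check is that the induction hypothesis applies to \(G - v\), which holds because the maximum degree is monotone under vertex deletion. Presumably the paper will apply this to the complement of the underlying graph of a snapshot, which has maximum degree \(\MindegComplement{D} \leq c\). There each color class is a clique of \(\UnderlyingGraph{D_i}\), i.e. it induces a semicomplete digraph, so \cref{lemma:directed_king}-type arguments apply inside each class.
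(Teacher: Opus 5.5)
Your inductive greedy-colouring argument is correct. It is the standard proof of this fact; the paper gives no proof of its own and simply cites the greedy bound as well known, see \cite[Proposition~1.5]{Cranston2024}. One small correction to your closing aside: in \cref{statement:large-minimum-degree-few-2-dominators} the observation is applied to the complement of \(\UnderlyingGraph{D_1'}\), which has maximum degree at most \(2c\) rather than \(c\), so it yields \(2c+1\) classes.
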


\begin{lemma}
	\label{statement:large-minimum-degree-few-2-dominators}
	Let \(c \geq 0\) and \(D = (D_1,D_2)\) be an always-unilateral temporal digraph of order $n$ 
	where \(\MindegComplement{D_i} \leq c\) for every \(i ∈ [2]\).
    Then, for every \(X ⊆ \V{D}\), there exists \(U ⊆ X\), \(\Abs{U} \leq 2c + 1\),
	such that
	\(\bigcup_{v ∈ U} \TemporalReach{2}{v} ⊇ X\).
\end{lemma}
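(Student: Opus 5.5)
The plan is to partition \(X\) into at most \(2c+1\) parts, each of which induces a semicomplete temporal digraph, and then to apply \cref{lemma:directed_king} inside each part to obtain one ``king'' per part. The union of these kings will be the set \(U\).

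For each \(i \in [2]\), let \(\overline{G_i}\) be the complement of \(\UnderlyingGraph{D_i}\) on \(\V{D}\). Its edges are exactly the pairs \(\{u,v\}\) with no arc between \(u\) and \(v\) in \(D_i\). Since \(\MindegComplement{D_i} \leq c\), every vertex has degree at most \(c\) in \(\overline{G_i}\). Let \(G = (\overline{G_1} \cup \overline{G_2})[X]\). Then \(G\) has maximum degree at most \(2c\), so by \cref{statement:degree-coloring} it admits a proper coloring with at most \(2c+1\) color classes \(X_1, \ldots, X_s\) partitioning \(X\), where \(s \leq 2c+1\) and we discard empty classes.

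Fix a class \(X_j\). Because \(X_j\) is independent in \(G\), every pair of distinct vertices of \(X_j\) is adjacent in both \(\UnderlyingGraph{D_1}\) and \(\UnderlyingGraph{D_2}\). Hence \((D_1[X_j], D_2[X_j])\) is a semicomplete temporal digraph; if \(\Abs{X_j} = 1\) the claim is trivial for that class. By \cref{lemma:directed_king}, there is a vertex \(u_j \in X_j\) that temporally reaches all of \(X_j\) within these two snapshots. A temporal walk in \((D_1[X_j], D_2[X_j])\) is also a temporal walk in \(D\), so \(X_j \subseteq \TemporalReach{2}{u_j}\).

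Setting \(U = \{u_1, \ldots, u_s\} \subseteq X\) gives \(\Abs{U} \leq 2c+1\) and \(\bigcup_{v \in U} \TemporalReach{2}{v} \supseteq \bigcup_j X_j = X\). The unilaterality hypothesis is not even needed. The only point that needs care is coloring the \emph{union} of both complement graphs rather than each separately, since \cref{lemma:directed_king} requires both snapshots to be semicomplete on the class; this is exactly what produces the bound \(2c+1\) instead of \(c+1\).
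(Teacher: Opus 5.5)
Your proof is correct and is essentially the paper's argument: the graph \(\overline{G_1} \cup \overline{G_2}\) you color is exactly the complement of the common underlying graph of the paper's restricted digraphs \(D_1'\) and \(D_2'\) (arcs of one snapshot whose endpoints are also adjacent in the other). The paper likewise takes a \((2c+1)\)-coloring via \cref{statement:degree-coloring} and one king per color class via \cref{lemma:directed_king}; coloring only on \(X\) and noting that unilaterality is unused are harmless minor variations.
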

\begin{proof}
	Let 
	\begin{align*}
		D_1' & = (\V{D}, \{(u,v) ∈ \A{D_1} \sth (u,v) ∈ \A{D_2} \text{ or } (v,u) ∈ \A{D_2}\}) \text{ and }
		\\[0em] 
		D_2' & = (\V{D}, \{(u,v) ∈ \A{D_2} \sth (u,v) ∈ \A{D_1} \text{ or } (v,u) ∈ \A{D_1}\}).
	\end{align*}
	Note that \(\UnderlyingGraph{D_1'} = \UnderlyingGraph{D_2'}\).
	Further, since \(\MindegComplement{D_1} \leq c\),
	we have \(\MindegComplement{D_1'} \leq 2c\).

	Let \(\bar{G}\) be the complement of \(\UnderlyingGraph{D_1'}\), that is,
	we have an edge \(uv\) in \(\bar{G}\) if, and only if, \(uv\) is not an edge in \(\UnderlyingGraph{D_1'}\).
	Clearly, the maximum degree of \(\bar{G}\) is at most  \(2c\).
	Hence, by \cref{statement:degree-coloring},
	there is a proper coloring of	\(\bar{G}\) with at most \(2c + 1\) colors.
	Let \(C_{1}, C_{2}, \ldots, C_{k}\) be the color classes of \(\bar{G}\).

    By definition, each \(C_i\) induces
	a semicomplete digraph in \(D_1'\) and
	another	semicomplete digraph in \(D_2'\).
	For each \(C_i\),
	let \(T^i = (T^i_1, T^i_2)\) be the temporal digraph where
	\(T^i_j\) is the digraph induced by \(C_i ∩ X\) in \(D_j'\).
	Note that \(T^i\) is a semicomplete temporal digraph.
	By \cref{lemma:directed_king},
	there is some \(v_i^* ∈ C_i ∩ X\)
	such that
	\(\TemporalReach[T^i]{2}{v_i^*} ⊇ C_i ∩ X\).
	This implies that \(C_i ⊆ \TemporalReach[D]{2}{v_i^*}\).
	Let \(U = \Set{u^*_{1}, u^*_{2}, \ldots, u^*_{k}}\).
	Clearly, \(\bigcup_{v ∈ U} \TemporalReach[D]{2}{v} ⊇ X\), completing the proof.
\end{proof}

Next, we strengthen the reachability properties of the vertices in \(U\),
ensuring that every \(u ∈ U\) can reach many vertices.
To achieve this, we prove in the next lemma
that, for every two snapshots, we can either find a vertex \(v_j ∈ U\)
which is redundant, that is \(U ∖ \{v_j\}\) can still cover all vertices of \(X\),
or all vertices in \(U\) can reach many vertices of \(X\).

\begin{lemma}
	\label{statement:large-minimum-degree-reachability-expansion}
	Let \(c \geq 0\) and \(t \geq 2\) be integers.
	Let \(D = (D_i)_{i ∈ [t]}\) be a temporal digraph
	where 
	\(n = \Abs{\V{D}}\) and
	\(\MindegComplement{D} \leq c\). Let \(X ⊆ \V{D}\) and \(U ⊆ \V{D}\)
	such that
	\(\bigcup_{v ∈ U} \TemporalReach{t - 1}{v} ⊇ X\).
	Then one of the following holds:
	\begin{enumerate}		\item \label{statement:large-minimum-degree-reachability-expansion:drop}
			there exists $u\in U$
			such that
			\(\bigcup_{v ∈ U'} \TemporalReach{t}{v} ⊇ X\),
			where \(U' = U ∖ \{u\}\), or
		\item \label{statement:large-minimum-degree-reachability-expansion:large-reach}
            \(\Abs{U} \leq c+1\) and, for every $u\in U$, 
			\(\Abs{X ∖ \TemporalReach{t}{u}} \leq c\).
	\end{enumerate}
\end{lemma}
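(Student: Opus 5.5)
The plan is a direct contrapositive argument: assume that alternative~\ref{statement:large-minimum-degree-reachability-expansion:drop} fails and derive alternative~\ref{statement:large-minimum-degree-reachability-expansion:large-reach}. The only information about the last snapshot $D_t$ that I intend to use is that every vertex has at most $c$ non-neighbours in $\UnderlyingGraph{D_t}$.

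\emph{Step 1 (private vertices).} Suppose no $u \in U$ can be dropped at time $t$. Then for every $u \in U$ there is some $x_u \in X$ with $x_u \in \TemporalReach{t}{u}$ and $x_u \notin \TemporalReach{t}{v}$ for every $v \in U \setminus \{u\}$. By hypothesis, $x_u$ is covered at time $t-1$ by some $v \in U$. Since $\TemporalReach{t-1}{v} \subseteq \TemporalReach{t}{v}$, this $v$ must be $u$. Hence $x_u \in \TemporalReach{t-1}{u}$. Moreover, the vertices $x_u$ are pairwise distinct, because each is reached at time $t$ by exactly one element of $U$.

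\emph{Step 2 (few vertices missed).} Fix $u \in U$ and let $y \in X \setminus \TemporalReach{t}{u}$; clearly $y \neq x_u$. I claim $y$ is not adjacent to $x_u$ in $\UnderlyingGraph{D_t}$. First, if $x_u y \in \A{D_t}$, then $y \in \TemporalReach{t}{u}$, since $x_u \in \TemporalReach{t-1}{u}$; this is a contradiction. Second, suppose $y x_u \in \A{D_t}$. The vertex $y$ lies in $\TemporalReach{t-1}{v}$ for some $v \in U$, and $v \neq u$ because $y \notin \TemporalReach{t}{u}$. Then $x_u \in \TemporalReach{t}{v}$, contradicting the privacy of $x_u$. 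Hence every such $y$ is a non-neighbour of $x_u$ in $\UnderlyingGraph{D_t}$. Since $\MindegComplement{D_t} \leq c$, we get $\Abs{X \setminus \TemporalReach{t}{u}} \leq c$.

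\emph{Step 3 (bound on $\Abs{U}$).} For $v \neq u$, the private vertex $x_v$ lies in $X \setminus \TemporalReach{t}{u}$. These vertices are distinct by Step~1, so Step~2 gives $\Abs{U} - 1 \leq c$. This establishes alternative~\ref{statement:large-minimum-degree-reachability-expansion:large-reach}.

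The main point to get right is Step~1: one must check that the private vertex is already reached by $u$ at time $t-1$, which uses the monotonicity of reachability. After that, the degree argument on the single snapshot $D_t$ is routine. No connectivity assumption on the snapshots appears to be needed.
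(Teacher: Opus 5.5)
Your proof is correct and follows essentially the same route as the paper: from the failure of the first alternative you extract pairwise distinct private witnesses $x_u$ that $u$ already reaches within the first $t-1$ snapshots, and then you argue via non-adjacency to $x_u$ in the last snapshot $D_t$ that $u$ misses at most $c$ vertices of $X$. The only difference is the order of the two conclusions: the paper bounds $\Abs{U}$ first, by observing that the witnesses form an independent set in $D_t$. You instead derive that bound from your Step~2, since the witnesses of the other elements of $U$ all lie among the at most $c$ vertices of $X$ missed by $u$.
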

\begin{proof}
    Let $U=\{v_1,\dots,v_k\}$ and
	assume that \cref{statement:large-minimum-degree-reachability-expansion:drop} does not hold.
	Then, for every \(v_i ∈ U\) there is some \(w_i ∈ \TemporalReach{t - 1}{v_i} ∩ X\)
	such that
	\(w_i \notin \TemporalReach{t}{v_j}\)
	for every \(v_j ∈ U ∖ \{v_i\}\).

	Clearly \(w_i \neq w_j\) if \(i \neq j\).
	Moreover, if an arc \((w_i, w_j)\) exists in \(D_{t}\),
	then \(w_j ∈ \TemporalReach{t}{v_i}\), a contradiction to the definition of \(w_j\).
	Hence, the set \(\{w_{1}, w_{2}, \ldots, w_{k}\}\) is an independent set in \(D_{t}\).
	Because \(\MindegComplement{D_i} \leq c\),
	there are at most \(c+1\) distinct vertices in \(W\), implying that \(\Abs{U} \leq c+1\).

	Let \(v_i ∈ U\),
	let \(X' = X ∖ \TemporalReach{t - 1}{v_i}\), and
	let \(w ∈ \TemporalReach{t - 1}{v_i}\)
	such that
	\(w \not\in \OutN[D_{t}]{X'}\). Note that $w$ exists, as \(\InN[D_{t}]{w_i} \cap X' \neq ∅\) would imply
	the existence of some \(v_j \neq v_i\)
	with \(w_i ∈ \TemporalReach{t}{v_j}\).
	Because \(\MindegComplement{D_t} \leq c\),
	we have that
	\(\Abs{\OutN[D_{t}]{w_u} \cap X'} \geq \Abs{X'} - c\), and thus $\Abs{X \setminus \TemporalReach{t}{v_i}} \leq c$.
	Therefore, we obtain \cref{statement:large-minimum-degree-reachability-expansion:large-reach}.
\end{proof}

Now we can put the previous lemmas together.

\begin{lemma}
	\label{statement:degree-n-c-few-princesses}
	Let \(c \geq 0\) and \(t \geq 2c + 2\) be two integers.
	Let \(D = (D_i)_{i ∈ [t]}\) be an always-unilateral temporal digraph
	where
	\(n = \Abs{\V{D}}\) and
	\(\MindegComplement{D} \leq c\). Then, for every \(X ⊆ \V{D}\), there exists \(U ⊆ X\), \(\Abs{U} \leq c+1\),
	such that
	\(\bigcup_{u ∈ U} \TemporalReach{t}{u} ⊇ X\) and,
	for every \(u ∈ U\),
	\(\Abs{X \setminus \TemporalReach{t}{u}} ≤ c\).
\end{lemma}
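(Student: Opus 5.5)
We combine \cref{statement:large-minimum-degree-few-2-dominators} with repeated applications of \cref{statement:large-minimum-degree-reachability-expansion}. First apply \cref{statement:large-minimum-degree-few-2-dominators} to \((D_1,D_2)\) and the set \(X\). This yields a set \(U_2 \subseteq X\) with \(\Abs{U_2} \leq 2c+1\) and \(\bigcup_{v \in U_2}\TemporalReach{2}{v} \supseteq X\).

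We then iterate over time steps \(s = 3, 4, \ldots, t\), maintaining a set \(U_s \subseteq U_{s-1} \subseteq X\) with \(\bigcup_{v\in U_s}\TemporalReach{s}{v} \supseteq X\). At step \(s\), apply \cref{statement:large-minimum-degree-reachability-expansion} to the prefix \(\TemporalSlice{D}{1}{s}\) with the current set \(U_{s-1}\); its hypothesis \(\bigcup_{v\in U_{s-1}}\TemporalReach{s-1}{v}\supseteq X\) is exactly our invariant. Two outcomes are possible.
\begin{itemize}
\item If \cref{statement:large-minimum-degree-reachability-expansion:drop} holds, set \(U_s = U_{s-1}\setminus\{u\}\). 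The invariant is preserved at time \(s\).
\item If \cref{statement:large-minimum-degree-reachability-expansion:large-reach} holds, we already have \(\Abs{U_{s-1}}\leq c+1\) and \(\Abs{X\setminus\TemporalReach{s}{u}}\leq c\) for every \(u\in U_{s-1}\). We stop here and take \(U = U_{s-1}\).
\end{itemize}
Since reachability sets only grow with time, \(\TemporalReach{s}{u}\subseteq\TemporalReach{t}{u}\). Hence both the covering property and the bound \(\Abs{X\setminus\TemporalReach{t}{u}}\le c\) persist up to time \(t\).

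The counting step is to show that the second outcome occurs by time \(t\). Each application of the first outcome decreases \(\Abs{U}\) by one, starting from at most \(2c+1\). The set cannot become empty while still covering \(X\), assuming \(X\neq\emptyset\); the case \(X=\emptyset\) is trivial with \(U=\emptyset\). So the first outcome can occur at most \(2c\) times, at time steps \(3,\ldots,2c+2\). Consequently, among the \(t-2 \geq 2c\) available steps, either the second outcome occurs at some step, or we reach a set with \(\Abs{U}=1\).

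The delicate point is this boundary case \(\Abs{U}=1\) without the second outcome ever having been triggered. If \(U=\{u\}\) covers \(X\) at time \(s\), then \(\Abs{X\setminus\TemporalReach{t}{u}}=0\le c\) directly, so the desired conclusion holds anyway. More generally, we should not rely on step counting alone: whenever the process halts, the final set satisfies the conclusion. The counting only guarantees that halting happens by time \(2c+2\le t\). At that point the size bound \(\Abs{U}\le c+1\) follows either from the second outcome or from \(\Abs{U}=1\). I expect this bookkeeping to be the only subtle part.
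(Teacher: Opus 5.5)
Your proposal is correct and follows essentially the paper's proof. You start from the at most \(2c+1\) two-step dominators given by \cref{statement:large-minimum-degree-few-2-dominators}, then apply \cref{statement:large-minimum-degree-reachability-expansion} to successive prefixes \(\TemporalSlice{D}{1}{s}\), dropping one redundant vertex each time, until either the large-reach outcome occurs or \(|U|=1\); this happens within \(2c\) steps, so by time \(2c+2\le t\). Your explicit treatment of \(X=\emptyset\) and of the terminal case \(|U|=1\) is just a slightly more careful version of the paper's bookkeeping.
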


\begin{proof}
	By \cref{statement:large-minimum-degree-few-2-dominators},
	there is some \(U' ⊆ X\), \(\Abs{U'} \leq 2c + 1\)
	such that
	\(\bigcup_{u ∈ U'} \TemporalReach{2}{u} ⊇ X\).

	We apply \cref{statement:large-minimum-degree-reachability-expansion} iteratively as follows.
	Start with \(U_1 = U'\).
	On step \(i\), apply \cref{statement:large-minimum-degree-reachability-expansion}
	to \(U_i\) and \(\TemporalSlice{D}{1}{i + 2}\).
	
	If \cref{statement:large-minimum-degree-reachability-expansion:large-reach} holds,
	then \(\Abs{X ∖ \TemporalReach{i+2}{u}} \leq c\) for every \(u ∈ U_i\) and
	\(\Abs{U_i} \leq c+1\).
	Hence, \(U \coloneqq U_i\) satisfies the conditions in the statement.

	Otherwise, \cref{statement:large-minimum-degree-reachability-expansion:drop} holds,
	 so there is some \(u_i ∈ U_i\)
	such that
	\(\bigcup_{u ∈ U_{i + 1}} \TemporalReach{i+2}{u} ⊇ X\),
	where \(U_{i + 1} = U_i ∖ \{u_i\}\).

	Observe that, if \(\Abs{U_i} = 1\) at any step,
	then \(U_i\) satisfies the conditions in the statement.
	As \(\Abs{U_{i + 1}} < \Abs{U_i}\)
	and \(\Abs{U_1} \leq 2c + 1\),
	the iteration above proceeds for at most \(2c\) steps,
	yielding the desired set \(U\) when it stops.
\end{proof}

Later, we will need the following slight variation of the statement above,
allowing us to make the set \(U\) larger.

\begin{corollary}
	\label{statement:degree-n-c-exact-princesses}
	Let \(c \geq 0\), \(d \geq 1\), and \(t \geq \bound{statement:degree-n-c-exact-princesses}{t}{c} \coloneqq 2c + 2\) be integers.
	Let \(D = (D_i)_{i ∈ [t]}\) be an always-unilateral temporal digraph
	where
	\(n = \Abs{\V{D}}\) and
	\(\MindegComplement{D} \leq c\). For every \(X ⊆ \V{D}\), there is some \(U ⊆ X\), \(\Abs{U} = \min(c + d, \Abs{X})\),
	such that
	\(\bigcup_{u ∈ U} \TemporalReach{t}{u} ⊇ X\) and,
	for every \(u ∈ U\),
	\(\Abs{X \setminus (\TemporalReach{t}{u} ∪ U)} ≤ c\).
\end{corollary}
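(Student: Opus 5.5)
The plan is to start from the set given by \cref{statement:degree-n-c-few-princesses} and pad it with arbitrary extra vertices of \(X\).

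First apply \cref{statement:degree-n-c-few-princesses} to \(D\) and \(X\), which is allowed since \(t \geq 2c+2\). This yields \(U_0 \subseteq X\) with \(\Abs{U_0} \leq c+1\), \(\bigcup_{u \in U_0} \TemporalReach{t}{u} \supseteq X\), and \(\Abs{X \setminus \TemporalReach{t}{u}} \leq c\) for every \(u \in U_0\).

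Next, set \(m = \min(c+d, \Abs{X})\). Since \(d \geq 1\) we have \(c+1 \leq c+d\). We also have \(\Abs{U_0} \leq \Abs{X}\), because \(U_0 \subseteq X\). Hence \(\Abs{U_0} \leq m\). Let \(U\) be \(U_0\) together with any \(m - \Abs{U_0}\) further vertices of \(X \setminus U_0\); these exist since \(m \leq \Abs{X}\). Then \(U \subseteq X\), \(\Abs{U} = m\), and the covering property holds because \(U \supseteq U_0\).

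It remains to show \(\Abs{X \setminus (\TemporalReach{t}{u} \cup U)} \leq c\) for every \(u \in U\).

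For \(u \in U_0\) this follows at once, since \(X \setminus (\TemporalReach{t}{u} \cup U) \subseteq X \setminus \TemporalReach{t}{u}\).

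For an added vertex \(u \in U \setminus U_0\) we need a separate argument, and this is the main obstacle. Such a \(u\) is an arbitrary vertex of \(X\), so a priori it may reach few vertices.

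\begin{itemize}
\item \textbf{First attempt: choose the padding carefully.} Instead of adding arbitrary vertices, add them greedily, always taking the vertex of \(X \setminus U\) maximizing \(\Abs{\TemporalReach{t}{\cdot} \cap X}\). This alone seems insufficient in general.

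\item \textbf{Second attempt: re-run the lemma on \(X \cup U\).} The added vertices lie in \(X\), and each is reached by some \(u_0 \in U_0\). Still, a vertex that can be reached need not itself reach many vertices.

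\item \textbf{Fallback: redo the lemma's argument with a larger target size.}
  \begin{itemize}
  \item Use \cref{statement:large-minimum-degree-few-2-dominators} on a modified instance. Start from a cover \(U_1\) of size at least \(c+d\), by adding extra vertices to the \((2c+1)\)-cover.
  \item Iterate \cref{statement:large-minimum-degree-reachability-expansion} as in the proof of \cref{statement:degree-n-c-few-princesses}.
  \item Modify the stopping rule: a vertex is dropped only if it is redundant and the current size exceeds \(c+d\).
  \item In the dropping case, the redundant vertex must be kept whenever \(\Abs{U_i} \leq c+d\). Its \(w_i\)-witnesses need not exist then, which is where the extra term \(\cup U\) is meant to absorb the failure.
  \end{itemize}
  For vertices with unique witnesses, the independent-set argument in \(D_t\) gives at most \(c+1\) such witnesses. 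It also gives \(\Abs{X \setminus \TemporalReach{t}{u}} \leq c\) for each of them. The vertices without witnesses are covered by others. Discounting \(U\) handles the remaining uncovered part, since the witnesses of a redundant vertex lie in \(U\).
\end{itemize}

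I would first try to verify that this modified iteration still terminates within \(2c\) steps, so that \(t \geq 2c+2\) suffices.
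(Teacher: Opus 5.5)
Your first steps are fine: \cref{statement:degree-n-c-few-princesses} gives $U_0$, padding preserves the covering property, and every $u\in U_0$ satisfies the reach bound. The gap is the one you flag yourself and never close. Nothing in the proposal proves $\Abs{X\setminus(\TemporalReach{t}{u}\cup U)}\le c$ for a padded vertex $u\in U\setminus U_0$.

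With arbitrary padding this bound is simply false. Let every snapshot be the same transitive tournament, so $c=0$ is allowed, and take $X=V(D)$. Then $U_0$ is just the first vertex, and you may pad it with the last vertex $u$. This $u$ has $\TemporalReach{t}{u}=\{u\}$, so it misses all of $X\setminus U$.

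Your fallback does not repair this. It keeps redundant vertices once $\Abs{U_i}$ reaches $c+d$. For $d\ge 2$ the second alternative of \cref{statement:large-minimum-degree-reachability-expansion} then cannot occur, so the lemma keeps returning a redundant vertex that you are not allowed to drop. The witness argument only controls vertices that have a unique witness. A redundant vertex, such as the padded last vertex above, has no reach guarantee at all. The claim that ``discounting $U$'' absorbs its misses is unjustified, and it fails in this example.

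The missing idea is what the term $\cup U$ in the statement is for: a vertex added later only has to reach the vertices not yet chosen. So choose the padding by re-applying \cref{statement:degree-n-c-few-princesses} to the residual set $X\setminus U$, not to $X\cup U$ as in your second attempt. This is the paper's argument:
\begin{enumerate}
\item If $\Abs{X}<c+d$, take $U=X$.
\item Otherwise start from $U_1=U_0$. While $\Abs{U_{i-1}}<c+d$, apply the lemma, with the same $t\ge 2c+2$ snapshots, to $X_i=X\setminus U_{i-1}$. This set is nonempty because $\Abs{U_{i-1}}<c+d\le\Abs{X}$.
\item The lemma yields a nonempty $U_i'\subseteq X_i$ with $\Abs{X_i\setminus\TemporalReach{t}{u}}\le c$ for every $u\in U_i'$. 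Add it, truncating the last batch so that the final size is exactly $c+d$.
\end{enumerate}
Since the final $U$ contains $U_{i-1}$, we get $X\setminus(\TemporalReach{t}{u}\cup U)\subseteq X_i\setminus\TemporalReach{t}{u}$, which gives the bound. The covering property is inherited from $U_0$. No modified iteration and no extra lifetime are needed. Your greedy first attempt also works if you maximize reach into the current residual $X\setminus U$ rather than into $X$.
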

\begin{proof}
	If \(\Abs{X} < c + d\),
	then \(U = X\) satisfies the conditions in the statement.

	Otherwise, we construct sets \(U_{1}, U_{2}, \ldots, U_{k}\) and
	\(X_{1}, X_{2}, \ldots, X_{k}\) iteratively as follows.
	Let \(X_1 = X\).
	By \cref{statement:degree-n-c-few-princesses},
	there is some \(U_1 ⊆ X\), \(\Abs{U_1} \leq c+1\leq c+d\)
	such that
	\(\bigcup_{u ∈ U_1} \TemporalReach{t}{u} ⊇ X\) and
	\(\Abs{X \setminus \TemporalReach{t}{u}} ≤ c\)
	for every \(u ∈ U_1\).

	On step \(i > 1\),
	if \(\Abs{U_{i - 1}} \geq c + d\),
	then stop the construction.
	Otherwise, let \(X_i = X_{i - 1} ∖ U_{i-1}\).
	By \cref{statement:degree-n-c-few-princesses},
	there is some \(U_i' ⊆ X_i\), \(\Abs{U_i'} \leq c\)
	such that
	\(\bigcup_{u ∈ U_i'} \TemporalReach{t}{u} ⊇ X_i\) and
	\(\Abs{X_i ∖ \TemporalReach{t}{u}} ≤ c\)
	for every \(u ∈ U_i'\).

	If \(\Abs{U_{i - 1} ∪ U_i'} \leq c + d\),
	set \(U_i = U_{i - 1} ∪ U_i'\).
	Otherwise, take some \(U_i'' ⊆ U_i'\)
	such that
	\(\Abs{U_{i - 1} ∪ U_i''} = c + d\),
	set \(U_i = U_{i - 1} ∪ U_i''\) and
	stop the construction.

	After completing the construction,
	we obtain a set \(U \coloneqq U_k\)
	such that
	\(\Abs{U_k} = c + d\).
	Because 
	\(\bigcup_{u ∈ U_1} \TemporalReach{t}{u} ⊇ X\),
	it is immediate that \(U\) satisfies the conditions in the statement.
\end{proof}

Using the set \(U\) obtained above, we can construct a walk covering many vertices
using few snapshots.
As each vertex of \(U\) can reach every other vertex except for at most \(c\),
in each step we decrease the size of the set of reachable vertices by \(c\). 
While this prevents us from covering all vertices with one walk,
we can still cover a fraction of them.

\begin{lemma}
	\label{statement:degree-n-c-long-quick-walk}
	Let \(c,t \geq 1\) be integers.
	Let \(D = (D_i)_{i ∈ [t]}\) be an always-unilateral temporal digraph
	where
	\(\MindegComplement{D} \leq c\), and
	let \(X ⊆ \V{D}\).
	If \(t \geq 2\Abs{X}(1 + 1/c)\), then
    there is a temporal walk \(W\) in \(D\)
	such that
	\(\Abs{\V{W} \cap X} \geq \Abs{X} / (c + 1)\).
\end{lemma}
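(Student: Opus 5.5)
We build the walk greedily, spending blocks of \(2c+2\) snapshots per newly covered vertex of \(X\), and use \cref{statement:degree-n-c-few-princesses} inside each block to pick the next target. The aim is to cover about \(\Abs{X}/(c+1)\) vertices while the remaining window shrinks by at most \(c+1\) vertices per step.

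Write \(b \coloneqq 2c+2\) and split the snapshots into consecutive blocks \(B_1, B_2, \ldots\), each of length \(b\). Since \(t \geq 2\Abs{X}(1+1/c) = b\Abs{X}/c \ge b\Abs{X}/(c+1)\), there are at least \(\lceil \Abs{X}/(c+1)\rceil\) full blocks (up to rounding, which the slack between \(1/c\) and \(1/(c+1)\) absorbs). We maintain a current vertex \(x_j \in X_j\), a set \(X_j \subseteq X\) of still-available targets, and a temporal walk \(W_j\) ending at \(x_j\) and using only blocks \(B_1,\dots,B_{j}\).

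\textbf{Start.} Apply \cref{statement:degree-n-c-few-princesses} to \(X_1 = X\) in the block \(B_1\). This gives a set \(U\) with \(\Abs{U}\le c+1\) such that every \(u\in U\) reaches all of \(X\) except at most \(c\) vertices within \(B_1\). Choose \(x_1 \in U\) and let \(W_1=(x_1)\). Define \(X_2 \subseteq X \setminus \{x_1\}\) as the set of vertices of \(X\) reachable from \(x_1\) within \(B_2\). This is the delicate point: the reachability guarantee from \(U\) applies to the block where \(U\) was computed, so it must be used in the \emph{next} block.

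\textbf{Iteration.} Rather than the loose start above, we proceed as follows at step \(j\). We hold \(x_j\) and a set \(X_j\) such that \(x_j\) temporally reaches every vertex of \(X_j\) within block \(B_{j+1}\). We cannot simply pick any reachable vertex as the next point, because we also need the next point to reach many further vertices later. So we apply \cref{statement:degree-n-c-few-princesses} to the set \(X_j\) and the block \(B_{j+2}\). This yields \(U_j\subseteq X_j\) in which every vertex misses at most \(c\) vertices of \(X_j\) within \(B_{j+2}\). Pick \(x_{j+1}\in U_j\); it is reachable from \(x_j\) within \(B_{j+1}\). Set \(W_{j+1} = W_j \cdot P_j\), where \(P_j\) is that temporal path, and
\[
X_{j+1} = \bigl(X_j \cap \TemporalReach[D']{}{x_{j+1}}\bigr)\setminus\{x_{j+1}\},
\]
where \(D'\) is the block \(B_{j+2}\). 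Then \(\Abs{X_{j+1}} \ge \Abs{X_j}-c-1\).

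\textbf{Termination.} The process continues while \(X_j\neq\emptyset\) and blocks remain. Each step removes at most \(c+1\) targets, so it runs for at least \(\lceil\Abs{X}/(c+1)\rceil\) steps, covering that many distinct vertices of \(X\). The number of blocks needed is at most this plus a constant, so the lifetime \(b(\Abs{X}/(c+1)+O(1)) \le 2\Abs{X}(1+1/c)\) suffices; some care in the base case (e.g. \(\Abs{X}\le c+1\), where a single vertex already suffices) handles the additive constant.

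\textbf{Main obstacle.} The hardest part is keeping the reachability guarantee aligned in time. The set \(U_j\) must be computed on the block in which its vertices will later move, while the move into \(U_j\) happens in the block before. This alignment costs two blocks of lookahead, and the exact constant \(2(1+1/c)\) depends on this bookkeeping. If it proves too tight, the first thing to try is to let consecutive steps share overlapping windows of length \(2c+2\) shifted by \(2c/(c)\)-ish amounts, so that the amortized cost per covered vertex is \(2(c+1)^2/c/(c+1)\cdot\).
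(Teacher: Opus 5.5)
Your core construction is the paper's: cut the lifetime into blocks of $2c+2$ snapshots, apply \cref{statement:degree-n-c-few-princesses} to the current candidate set, move to a vertex of the resulting set, and replace the candidate set by the part reachable from that vertex, minus the vertex itself. This loses at most $c+1$ candidates per step.

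The gap is in the base case and the block count, which is exactly where the hypothesis on $t$ is used.
\begin{itemize}
\item Your start paragraph is inconsistent. The set $U$ is computed on $B_1$, but $X_2$ is defined by reachability in $B_2$, where $x_1$ has no guarantee. You abandon it without supplying a correct start.
\item With your invariant ($x_j$ reaches $X_j$ within $B_{j+1}$, and $U_j$ is computed on $B_{j+2}$), the first block is never used for a move. One further block is spent only to choose the last vertex. So covering $x=\lceil |X|/(c+1)\rceil$ vertices costs $x+1$ blocks.
\item You claim there are at least $\lceil |X|/(c+1)\rceil$ full blocks, and that an additive $O(1)$ number of blocks is absorbed by the slack $2|X|/c$. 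Both claims are false unless $|X|$ is of order $c^2$. For $c=3$, $|X|=5$ and $t=14$ the hypothesis holds, but there is only one block of length $8$. Your scheme needs three blocks there, and still two even after fixing the start, although the lemma is true in that case.
\end{itemize}

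The missing observation is that no lookahead block is needed at either end. As in the paper, set $X_1=X$ and pick $v_i$ from the set given by \cref{statement:degree-n-c-few-princesses} for $X_i$ on the $i$-th block. Let $X_{i+1}$ be the vertices of $X_i\setminus\{v_i\}$ reachable from $v_i$ within that same block. The next vertex $v_{i+1}$ is chosen inside $X_{i+1}$, so the move $v_i\to v_{i+1}$ happens in the block on which $v_i$ was selected. The last vertex $v_x$ is taken arbitrarily in $X_x$, which is nonempty because $|X_x|\ge |X|-(x-1)(c+1)>0$. This uses exactly $x-1$ blocks, and $(x-1)(2c+2)<2|X|\le t$. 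So no rounding argument and no overlapping-window fallback are needed, and the case $x=1$ is trivial.
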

\begin{proof}
	Let \(\ell = 2c + 2\),
	let \(x = \Ceiling{\Abs{X} / (c + 1)} \) and
	let \(v^* ∈ X\) be some vertex.
    The statement is trivial when $X=\emptyset$, so we assume that $x\geq 1$. Moreover, we assume that $x\geq 2$, for otherwise we may take $W$ as an arbitrary single vertex $x\in X$.

    For every $i\in [x-1]$ let $D^i = \TemporalSlice{D}{1+(i-1)\ell}{i\ell}$. Note that this is well-defined as $t\geq (x-1)\ell$.
	We define pairwise distinct vertices \(v_{1}, v_{2}, \ldots, v_{x} ∈ X\)
	and temporal walks
	\(W_{1}, W_{2}, \ldots, W_{x - 1}\)
	such that
	\(W_i\) is a temporal \(v_i\)-\(v_{i+1}\)-walk in \(D^i\).

		Let $X_1=X$. Then, for every $i ∈ [x-1]$, let $v_i$ be a vertex such that $\Abs{X_i\setminus \TemporalReach[D^i]{\ell}{v_i}} \leq c$, and we further let $X_{i+1} = (X_i\cap \TemporalReach[D^i]{\ell}{v_i}) \setminus \{v_i\}$. The existence of $v_i$ is guaranteed by \cref{statement:degree-n-c-few-princesses}, by taking any $v_i \in U$.
    We note that \(\Abs{X_{i + 1}} \geq \Abs{X_i} - c - 1\), so 
    \[
        \Abs{X_{i+1}} \geq \Abs{X} - i(c+1) \geq |X| - (x-1)(c+1) >0.
    \]

		This defines vertices $v_1,\dots,v_{x-1}$, and we let $v_x$ be an arbitrary vertex of $X_x$. It follows by construction that, for every $i ∈ [x-1]$, there is a temporal $v_i$-$v_{i+1}$-path in $D^i$.
	Let \(W \coloneqq W_{1} \cdot W_{2} \cdot \ldots \cdot W_{x - 1}\)
	be the concatenation of all the walks defined above.
	By construction, \(W\) is a temporal walk in \(D\) visiting \(v_{1}, v_{2}, \ldots, v_{x} ∈ X\). The lemma follows.
\end{proof}

So far our results work for always-unilateral temporal digraphs.
Unfortunately, we have little control of
which vertex the walk given by the previous lemma will end on.
If the snapshots are only unilateral, we might end on a vertex
that will never be able to reach the other vertices again.
For always-strong temporal digraphs, however, this is not an issue
as we can use \(n-1\) snapshots in order to connect two walks.

\begin{theorem}
	\label{statement:degree-n-c-upper-bound-strong-log}
	Every always-strong temporal digraph \(D\) on \(n\) vertices with \(\MindegComplement{D} \leq c\)
	and \(\Lifetime{D} > (3c+7)n\log n\)
	admits a temporal exploration.
\end{theorem}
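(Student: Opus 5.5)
We build the exploration in phases, each shrinking the set of unexplored vertices by a constant factor. Let \(X_0 = \V{D}\). In phase \(j\), with \(X = X_j\) the set of unexplored vertices and the walk so far ending at some vertex \(w_j\), we first spend \(n-1\) snapshots to move from \(w_j\) to an arbitrary vertex \(v^* \in X_j\). Such a path exists by \cref{statement:strong-temporal-u-v-path}, since every snapshot is strongly connected. We then apply \cref{statement:degree-n-c-long-quick-walk} to the next \(\Ceiling{2\Abs{X_j}(1+1/c)}\) snapshots with the set \(X_j\). This is legitimate because always-strong implies always-unilateral.

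There is a technical point in this second step. The walk given by that lemma starts at some vertex \(v_1\) that we do not choose. To handle this, we instead use the connecting \(n-1\) snapshots to travel to that \(v_1\). Concretely, we first reserve the block of snapshots for the covering walk, determine \(v_1\) from \cref{statement:degree-n-c-long-quick-walk} applied to that block, and then connect \(w_j\) to \(v_1\) in the block of \(n-1\) snapshots placed before it. Since the lemma only looks at its own slice of snapshots, this ordering is consistent. (For \(c=0\) the lemma's hypothesis degenerates; there we appeal directly to \cref{theorem:tournament-exploration-upper-bound}, or simply treat \(c\) as \(\max(c,1)\), which only weakens the hypothesis \(\MindegComplement{D}\le c\).)

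After phase \(j\) the walk has visited at least \(\Abs{X_j}/(c+1)\) vertices of \(X_j\). So \(\Abs{X_{j+1}} \le \Abs{X_j}\bigl(1-\tfrac{1}{c+1}\bigr)\), and after \(k\) phases \(\Abs{X_k} \le n\,e^{-k/(c+1)}\). The number of phases before \(X_k=\emptyset\) is therefore at most about \((c+1)\ln n + 1\). Each phase costs at most \(n-1 + 2n(1+1/c) + 1 \le 5n\) snapshots, using \(c\ge1\). The total is then roughly \(5n((c+1)\ln n+1)\), and with \(\ln n \le \log n\) this is comfortably below \((3c+7)n\log n\) once the constants are compared. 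The sharper budgeting uses \(\Abs{X_j}\) rather than \(n\) in the covering cost, which makes the covering costs sum geometrically to \(O(cn)\). The connection costs are \(n\) per phase, giving \(n(c+1)\ln n\). Together these fit under \((3c+7)n\log n\).

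The main obstacle is this constant bookkeeping, together with small cases: when \(n\) is tiny, or when \(\Abs{X_j} < c+1\) so the lemma yields only a single vertex. In those cases each phase still makes progress of at least one vertex at cost about \(n + O(c)\). The final \(\le c\) phases then cost \(O(cn)\), which is absorbed. I would verify the inequality for \(n\ge 2\) separately with \(\log n \ge 1\).
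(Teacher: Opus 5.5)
Your proposal is correct and is essentially the paper's proof. It uses geometrically shrinking phases via \cref{statement:degree-n-c-long-quick-walk}, glues consecutive covering walks with $(n-1)$-snapshot connectors from \cref{statement:strong-temporal-u-v-path}, and delegates $c=0$ to \cref{theorem:tournament-exploration-upper-bound}; the paper places each connector after its covering block rather than before, which is immaterial.

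Two caveats, both in the bookkeeping. First, your crude $5n$-per-phase estimate does \emph{not} fit under $(3c+7)n\log n$ once $c$ is moderately large, so only your geometric budgeting works: it gives at most $kn + 2(c+2+1/c)n$ snapshots with $k \le \lfloor \log_{(c+1)/c} n \rfloor + 1 \le c\log n + 1$, exactly the paper's estimate. Second, replacing $c=0$ by $c=1$ is not merely weakening a hypothesis, since the lifetime bound $(3c+7)n\log n$ itself grows with $c$.
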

\begin{proof}
	If \(n \leq 2\), then the statement is clearly true for \(\Lifetime{D} \geq 1\).
	Moreover, if \(c = 0\), then the statement follows from \cref{theorem:tournament-exploration-upper-bound}.
	So assume \(n \geq 3\) and \(c \geq 1\).

	Let 
	\(d = 1 - 1/(c+1)\),
	\(d' \coloneqq 1 / d = (c+1) / c\),
	\(n_1 = n\), \(n_i = d n_{i - 1}\),
	\(\ell_i = n - 1 + 2n_id'\),
	\(t_0 = 0\),
	\(t_i = \sum_{j = 1}^i \lceil\ell_j\rceil\), and 
	\(k = \lfloor \log_{d'} n\rfloor +1 = \lfloor \log_2 n / \log_2 d' \rfloor + 1\).

	We first show that \(\Lifetime{D} \geq t_k\).
	Towards this end, observe that \(n_i = d^{i - 1}n\) and
	\(\log_{d'} n = \log n / \log \frac{c+1}{c} \leq c \log n\).
	Thus,
	\begin{align*}
		\sum_{i=1}^k n_i & = n \sum_{i=1}^kd^{i - 1} = \frac{n}{d} \cdot \frac{d (1 - d^k)}{1 - d} \leq \frac{n}{d} \cdot \frac{d}{1 - d} = n / (1 - d) = (c+1)n,
		\\[0em] t_k \leq \sum_{i = 1}^k( \ell_i+1) &= kn + 2d'\sum_{i = 1}^kn_i 
        \\[0em] &\leq kn + 2d'(c+1)n 
		\\[0em] & = n\lfloor \log_{d'} n + 1 \rfloor + 2(c + 2 + 1/c)n
		\\[0em] & \leq n(c\log n + 1) + 2(c + 3)n 
		\\[0em] & \leq (3c+7)n\log n \leq \Lifetime{D}.
	\end{align*}

	For each \(1 \leq i \leq k\),
	define \(H_i = \TemporalSlice{D}{1 + t_{i-1}}{t_i}\). Note that $\ell(D_i) = \lceil \ell_i\rceil$. We split $H_i$ in two, and let
	\(F_i = \TemporalSlice{H_i}{1}{\Ceiling{\ell_i} - n + 1}\) and
	\(G_i = \TemporalSlice{H_i}{\Ceiling{\ell_i} - n + 2}{\Ceiling{\ell_i}}\).
	Observe that \(\Lifetime{F_i} = \Ceiling{2n_i(1 + 1/c)}\)
    and
	\(\Lifetime{G_i} = n - 1\).
	
In what follows, we define, for some $x\leq k$, a sequence of temporal walks $W_1,\dots,W_{x}$ of $H_1,\ldots,H_{x}$ respectively, that together span $\V{D}$, and such that $\End{W_i} = \Start{W_{i+1}}$ for every $i ∈ [x-1]$. The statement then follows by concatenating these temporal walks.

	Set \(X_1 = \V{D}\) and repeat the following until $X_i = \emptyset$. We maintain that $|X_i| \leq n_i$, which clearly holds for $i=1$. 
    By \cref{statement:degree-n-c-long-quick-walk},
	there is some temporal walk \(Q_i\) in \(F_i\)
	such that
	\(\Abs{\V{Q_i} ∩ X_i} ≥ \Abs{X_i} / (c + 1)\).
    Let \(X_{i + 1} = X_i ∖ \V{Q_i}\), and note that \(\Abs{X_{i+1}} = \Abs{X_i ∖ \V{Q_i}} \leq n_{i + 1}\).
    Observe that the process indeed stops in at most $k$ steps since
    \[
        n_{k+1} = n \cdot d^{k} <1,
    \]
    as by definition we have $\frac{1}{d^k} = {d'}^k > {d'}^{\log_{d'} n} = n$. By construction, observe that $\bigcup_{i=1}^{x}V(Q_i) = V(D)$.
	For each \(i\in [x]\),
	let \(s_i = \Start{Q_i}\) and
	\(r_i = \End{Q_i}\).
	By \cref{statement:strong-temporal-u-v-path}, for every $1\leq i \leq x-1$,
	there is a temporal \(r_i\)-\(s_{i+1}\)-path \(R_i\) in \(G_i\).
	Define \(W_i = Q_i \cdot R_i\) and
	\(W = W_{1} \cdot W_{2} \cdot \ldots \cdot W_{x-1}\cdot Q_x\).
    It follows that \(W\) is a temporal exploration for \(D\).
\end{proof}

\subsection{Travel plan}

\label{sec:travel plan}

With each application of \cref{statement:degree-n-c-long-quick-walk},
we cover some fraction of the currently unexplored vertex set.
However, as \cref{statement:degree-n-c-few-princesses}
may give us multiple vertices, all of which reaching almost all other vertices,
there may actually be many choices for the walk given by \cref{statement:degree-n-c-long-quick-walk}.
If we choose poorly, the first walk will visit all ``good'' vertices of the temporal digraph,
and so subsequent applications of the lemma explore less and less vertices.
In order to avoid this, we define \emph{travel plans},
which tell us how we can construct
walks covering many vertices (see \cref{fig:travel-plan-example} for an illustration of the following definition).

\begin{definition}
	\label{definition:travel-plan}
	Let \(D\) be a temporal digraph.
	A \emph{travel plan of $D$ of gap \(g\), order \(k\) and width \(w\)} is a
	sequence \(\mathcal{T} = (T_{1}, T_{2}, \ldots, T_{k})\), $T_i\subseteq V(D)$,
	such that
	\begin{itemize}
		\item \(\Abs{T_i} \leq w\) for every \(i ∈ [k]\),
		\item \(\bigcup_{i=1}^k T_i = \V{D}\), 
		\item for every sequence \((i_1, v_1), (i_2, v_2), \ldots, (i_x, v_x)\)
			where
			\(v_j ∈ T_{i_j}\)
			for every \(j ∈ [x]\)
			and 
			\(i_j < i_{j + 1} - g\)
			for every \(j ∈ [x-1]\),
			there is a temporal walk \(W\) in \(D\)
			such that
			\(W\) spans \(v_{1}, v_{2}, \ldots, v_{x}\) in this order.
	\end{itemize}
\end{definition}

\begin{figure}
	\centering

	\caption{\label{fig:travel-plan-example}
	A travel plan of width \(2\) and gap \(1\) (left) for a temporal digraph (right).
	The sequences \((v_1, v_7)\) (blue rectangle) and \((v_4, v_6)\) (green circle)
	respect the gap of 1, and
	so there are temporal walks \(W_1 = (v_1, v_4, v_7)\), \(W_2 = (v_4, v_6)\) visiting those vertices in the given order.}
\end{figure}

To obtain a travel plan of small width and gap,
we repeatedly apply \cref{statement:degree-n-c-exact-princesses},
obtaining sets \(A_i\).
In order to ensure that every vertex in \(A_i\)
can reach every vertex in \(A_{i+2}\),
the set \(A_{i+2}\)
must be in the intersection of the reachability set of every vertex in \(A_i\).
However, this intersection does not cover all vertices.
To properly handle the leftover vertices and ensure that they are covered by the travel plan,
we need the following lemma.

\begin{lemma}
	\label{statement:degree-n-c-two-sets}
	Let \(c,t \geq 0\) and $d\geq 1$ be integers, and 
	let \(D = (D_i)_{i ∈ [t]}\) be a temporal digraph
	where 
	\(\MindegComplement{D} \leq c\).
	Let \(A,B ⊆ \V{D}\) be two disjoint sets.
	If \(t \geq \Abs{B} / d\), then one of the following holds:
	\begin{enumerate}
		\item \label{item:degree-n-c-two-sets:one-set}
			\(B ⊆ \bigcup_{u ∈ A}\TemporalReach{t}{u}\), or
		\item \label{item:degree-n-c-two-sets:two-sets}
			there is a partition \(B = B_1 \uplus B_2\)
			such that
			\(\Abs{(A ∪ B_1) ∖ \TemporalReach{t}{u}} \leq c + d\) for all \(u ∈ B_2\) and
			\(B_1 ⊆ \TemporalReach{t}{A}\).
	\end{enumerate}
\end{lemma}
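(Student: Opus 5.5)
The plan is a counting argument on how the part of \(B\) reachable from \(A\) grows snapshot by snapshot. If it grows quickly at every step, then case~\ref{item:degree-n-c-two-sets:one-set} holds. Otherwise some single snapshot adds fewer than \(d\) new vertices of \(B\), and that snapshot supplies the partition for case~\ref{item:degree-n-c-two-sets:two-sets}. No connectivity of the snapshots is needed, only \(\MindegComplement{D} \leq c\).

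For \(i \in [0,t]\), write \(R_i \coloneqq \TemporalReach{i}{A}\), with \(R_0 = A\), and set \(b_i \coloneqq \Abs{B \cap R_i}\). The sequence \(b_i\) is non-decreasing.

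\emph{Step 1 (fast growth).} Suppose \(b_{i+1} \geq b_i + d\) for every \(i \in [0,t-1]\). Then \(b_t \geq td \geq \Abs{B}\), so \(B \subseteq R_t\) and case~\ref{item:degree-n-c-two-sets:one-set} holds. The same conclusion holds directly if \(B \subseteq R_i\) already for some \(i \leq t\). We may therefore assume there is a step \(i \in [0,t-1]\) with \(b_{i+1} - b_i < d\), that is, \(b_{i+1} - b_i \leq d - 1\).

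\emph{Step 2 (the partition).} Fix such an \(i\) and set \(B_1 \coloneqq B \cap R_{i+1}\) and \(B_2 \coloneqq B \setminus B_1\). Then \(B_1 \subseteq R_{i+1} \subseteq \TemporalReach{t}{A}\), as required. Now take \(u \in B_2\) and any vertex \(x \in R_i\) that is adjacent to \(u\) in \(\UnderlyingGraph{D_{i+1}}\).
\begin{itemize}
\item The arc \((x,u)\) cannot lie in \(D_{i+1}\), since otherwise \(u \in R_{i+1}\), contradicting \(u \in B_2\).
\item Hence \((u,x) \in \A{D_{i+1}}\), and so \(x \in \TemporalReach{i+1}{u} \subseteq \TemporalReach{t}{u}\).
\end{itemize}
Because \(\MindegComplement{D_{i+1}} \leq c\), at most \(c\) vertices of \(R_i\) are non-adjacent to \(u\). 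Therefore \(\Abs{R_i \setminus \TemporalReach{t}{u}} \leq c\).

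\emph{Step 3 (the bound).} We have \(A \cup B_1 \subseteq R_i \cup (B_1 \setminus R_i)\), and \(\Abs{B_1 \setminus R_i} = b_{i+1} - b_i \leq d - 1\). Combining this with Step 2 gives
\[
\Abs{(A \cup B_1) \setminus \TemporalReach{t}{u}} \leq c + d - 1 \leq c + d
\]
for every \(u \in B_2\), which is case~\ref{item:degree-n-c-two-sets:two-sets}.

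The only delicate point is choosing \(B_1\) as the set reached after the slow step, namely \(B \cap R_{i+1}\), rather than \(B \cap R_i\). This choice guarantees that every \(u \in B_2\) was not reached by any in-arc from \(R_i\) in \(D_{i+1}\). That in turn forces all of \(u\)'s adjacencies with \(R_i\) in \(D_{i+1}\) to be out-arcs, and it costs at most the \(d-1\) newly reached vertices in the bound. The degenerate case \(t = 0\) needs \(B = \emptyset\), where case~\ref{item:degree-n-c-two-sets:one-set} holds trivially.
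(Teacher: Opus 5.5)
Your proof is correct and is essentially the paper's argument: the paper's layers \(A_i = B \cap (R_i \setminus R_{i-1})\) are exactly your increments \(b_i - b_{i-1}\), it picks an index \(x\) with \(|A_x| < d\), takes \(B_1 = B \cap R_x\), and obtains the same bound \(c+d-1\). Your Step~2 works in the single snapshot \(D_{i+1}\) and uses that no vertex of the whole set \(R_i\) sends an arc to \(u\) there, which justifies the counting step more cleanly than the paper's per-layer phrasing (``no arcs from \(A_{i-1}\) to \(u\) in \(D_i\)''); taken alone, that phrasing only bounds the missed vertices by \(c\) per layer.
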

\begin{proof}
	Let \(A_0 = A\),
	\(A_1 = B ∩ \TemporalReach{1}{A}\).
	For each \(2 \leq i \leq t\),
	let \(A_i = B ∩ (\TemporalReach{i}{A} ∖ \TemporalReach{i-1}{A}\)).
	
	If \(\Abs{A_i} \geq d\) for all \(i\),
	then there is some \(j \leq \Abs{B} / d \leq t\)
	such that
	\(B ⊆ \TemporalReach{j}{A}\),
	satisfying \cref{item:degree-n-c-two-sets:one-set}.

	Otherwise, there is some \(x\) where \(\Abs{A_x} < d\).
	Let \(B_1 = \bigcup_{i=1}^{x} A_i ⊆ \TemporalReach{x}{A}\) and
	\(B_2 = B ∖ B_1\).

	Let \(u ∈ B_2\) and \(N_u = \bigcup_{i=1}^{x} \OutN[D_i]{u} ⊆ \TemporalReach{t}{u}\).
	For every \(i ∈ [x]\),
	we have \(u \notin A_{i}\).
	Hence, there are no arcs from \(A_{i - 1}\) to \(u\) in \(D_{i}\).
	As \(\MindegComplement{D} \leq c\),
	this implies \(\Abs{(A ∪ (B_1 ∖ A_x)) ∖ N_u} \leq c \).
	Because \(\Abs{A_x} < d\), we obtain
	\(\Abs{(A ∪ B_1) ∖ \TemporalReach{t}{u}} \leq c + d - 1\).
	Thus,	we obtain \cref{item:degree-n-c-two-sets:two-sets}.
\end{proof}

\begin{figure}
	\centering

	\caption{\label{fig:travel-plan-proof-construction} The sets constructed on step \(i\) in the proof of \cref{statement:degree-n-c-travel-plan}. The dotted arcs indicate non-reachable vertices in the temporal digraph \(H_i\).}
\end{figure}

The biggest challenge in constructing a travel plan of small width
is ensuring the required connectivity between the sets \(T_i\).
While \cref{statement:degree-n-c-exact-princesses} gives us
a small set \(U\) of vertices, each reaching almost all vertices of a set \(X\),
we ``lose'' vertices which are not reachable by \emph{every} vertex in \(U\).
\Cref{statement:degree-n-c-two-sets} helps us regain connectivity to ``leftover'' vertices,
but they still do not reach every vertex in the remaining subset of vertices.
For this reason, we need to introduce a gap of one in the travel plan,
using the intermediate sets to complete the connectivity between non-adjacent sets.

\begin{observation}
	\label{statement:degree-n-c-small-hard-to-reach-set}
	Let \(c\) be an integer,
	let \(D\) be a digraph with \(\MindegComplement{D} \leq c\),
	let \(X ⊆ \V{D}\) and
	\(U = \{u ∈ X \mid \Indeg{v} \leq c\}\).
	Then \(\Abs{U} \leq 3c\) and
	\(\Abs{X ∖ \OutN{u}} \leq 2c\) for every \(u ∈ U\).
\end{observation}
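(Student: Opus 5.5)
The plan is a double-counting argument on the subdigraph induced by \(U\), followed by a direct count of non-out-neighbours. I read the condition defining \(U\) as \(\Indeg{u} \leq c\) for \(u ∈ U\); the statement's \(\Indeg{v}\) looks like a typo. In-degrees are taken in \(D\).

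\emph{Bound on \(\Abs{U}\).} Every vertex \(u ∈ U\) is non-adjacent in \(\UnderlyingGraph{D}\) to at most \(c\) other vertices. So \(\UnderlyingGraph{D[U]}\) has minimum degree at least \(\Abs{U} - 1 - c\), and \(D[U]\) has at least \(\Abs{U}(\Abs{U} - 1 - c)/2\) arcs. Each such arc contributes one unit to the in-degree of a vertex of \(U\). Since every vertex of \(U\) has in-degree at most \(c\), this gives
\[
\Abs{U}(\Abs{U}-1-c)/2 \leq c\Abs{U},
\]
and hence \(\Abs{U} \leq 3c+1\).

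To reach the stated \(3c\), I will analyse the equality case \(\Abs{U} = 3c+1\). Equality forces all of the following simultaneously:
\begin{itemize}
\item every \(u ∈ U\) has in-degree exactly \(c\);
\item all in-arcs of each \(u\) come from inside \(U\);
\item \(D[U]\) has no digons;
\item each \(u\) has exactly \(c\) non-neighbours, all of them in \(U\).
\end{itemize}
Then \(u\) has no neighbours outside \(U\) at all, so \(\V{D} = U\) and \(D\) is a \(c\)-in, \(c\)-out regular oriented graph. I then have to exclude this configuration, which is the main obstacle.

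This cannot be done in general. For \(c = 0\), a transitive tournament has a source, which gives \(\Abs{U} = 1 > 0\). For \(c = 1\), the directed 4-cycle has all in-degrees \(1\), minimum undirected degree \(2 = n-1-c\), and \(\Abs{U} = 4 > 3\). So the constant \(3c\) needs an extra hypothesis in force where the observation is applied, for instance that such a regular extremal digraph cannot arise there, or a convention such as \(\Abs{U} \leq 3c\) only for \(c \geq 1\) and \(\V{D} \neq U\). Without one, the true bound is \(3c+1\). I will check the later applications to see which hypothesis is available. If none is, I will use \(3c+1\), since it only changes constants in the subsequent \(\BigO{cn}\) bound.

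\emph{Bound on \(\Abs{X ∖ \OutN{u}}\).} Fix \(u ∈ U\). A vertex \(x ∈ X ∖ \{u\}\) fails to be an out-neighbour of \(u\) only if \(x\) is non-adjacent to \(u\), giving at most \(c\) such vertices, or \(x\) is an in-neighbour of \(u\) but not an out-neighbour, giving at most \(\Indeg{u} \leq c\) such vertices. Hence \(\Abs{(X ∖ \{u\}) ∖ \OutN{u}} \leq 2c\).

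Counting \(u\) itself, the bound becomes \(2c+1\) in the worst case. To match the stated \(2c\), I either read \(\OutN{u}\) as the closed out-neighbourhood or note that the two contributions cannot both be tight. Again, the directed 4-cycle with \(X = \V{D}\) shows that the bound \(2c\) including \(u\) fails for \(c = 1\), so I will adopt the closed-neighbourhood convention. That convention is exactly what the reachability arguments downstream need, since \(u\) trivially reaches itself.
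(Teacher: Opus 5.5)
Your proposal is right not to prove the observation as printed, because it is false. Your counterexamples are valid. More generally, the circulant on vertices $0,\dots,3c$ with arcs $i\to i+1,\dots,i+c$ (indices mod $3c+1$) satisfies the degree hypothesis and has every in-degree equal to $c$. With $X=V(D)$ it gives $|U|=3c+1$ and $|X\setminus N^+(u)|=2c+1$, so both bounds are off by one for every $c$.

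The paper's proof contains exactly this slip. Its first sentence, ``it is immediate that $|X\setminus N^+(u)|\le 2c$'', forgets that $u$ itself lies in $X\setminus N^+(u)$. The next claim should therefore say that every $u\in U$ has at least $|U|-2c-1$ out-neighbours in $U$. For $|U|=3c+1$ this gives out-degree only $c$ in $D[U]$, so the double count yields no contradiction.

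Once corrected, the paper's argument is the same double count as yours, and both give $|U|\le 3c+1$:
\begin{itemize}
\item the paper lower-bounds the out-degrees in $D[U]$ vertex by vertex;
\item you lower-bound the number of arcs of $D[U]$ by the number of edges of its underlying graph.
\end{itemize}

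Your reading of the downstream use is also right. In the travel-plan lemma, the second bound is applied only to $V_{i+1}$, which does not contain $u\in B_i$. So $|(X\setminus\{u\})\setminus N^+(u)|\le 2c$, which you prove, is what is actually needed there. The weaker bound $|B_i|\le 3c+1$ only raises the width to $5c+2$ and shifts the constants that depend on it.

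One side step in your equality analysis is wrong, although it does not affect your conclusion. ``All non-neighbours of $u$ lie in $U$'' means $u$ is adjacent to every vertex outside $U$, necessarily by out-arcs since its $c$ in-arcs come from $U$. It does not mean $u$ has no neighbours outside $U$. Only $D[U]$ is forced to be a $c$-in, $c$-out regular oriented graph.

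For instance, add to the circulant above one extra vertex that receives an arc from every circulant vertex. The degree hypothesis still holds, $|U|=3c+1$, and $V(D)\neq U$. So the convention ``$V(D)\neq U$'' would not rescue the constant $3c$; just use $3c+1$, as you propose.
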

\begin{proof}
	Because \(\MindegComplement{D} \leq c\),
	it is immediate that
	\(\Abs{X ∖ \OutN{u}} \leq 2c\) for every \(u ∈ U\).
	In particular, every \(u ∈ U\) has at least \(\Abs{U} - 2c\) outneighbors inside \(U\).
	
	Assume towards a contradiction that \(\Abs{U} > 3c\).
	Let \(G\) be the digraph induced by \(U\) in \(D\).

	Clearly, \(\Outdeg[G]{u} \geq 3c + 1 - 2c = c + 1\) for every \(u ∈ U\).
	Hence, \(\sum_{u ∈ U} \Indeg[G]{u} = \sum_{u ∈ U}\Outdeg[G]{u} \geq (c + 1) \Abs{U}\).
	Thus, there is at least one \(u ∈ U\) with
	\(\Indeg[G]{u} \geq c + 1\), a contradiction to the definition of \(U\).
	Thus, \(\Abs{U} \leq 3c\).
\end{proof}

\begin{figure}
	\centering

	\caption{\label{fig:travel-plan-proof-paths} Illustration of the proof of \cref{statement:degree-n-c-travel-plan}. The temporal paths from \(v_i^1\) are all inside \(H_i\), while those from \(v_i^2\) need to first go to \(w\) in \(H_i\), and the to their targets in \(H_{i+1}\).}
\end{figure}

We first define
\begin{align*}
	\bound{statement:degree-n-c-travel-plan}{w}{c}  =
	5c + 1,
	\quad
	 \bound{statement:degree-n-c-travel-plan}{t}{n,c}  =
	(\bound{statement:degree-n-c-exact-princesses}{t}{c} + 1) ·
	n/\bound{statement:degree-n-c-travel-plan}{w}{c}
	    \quad
    \text{and}
    \quad
	 \bound{statement:degree-n-c-travel-plan}{k}{n,c}  =
	n/\bound{statement:degree-n-c-travel-plan}{w}{c}.
\end{align*}

\begin{lemma}
	\label{statement:degree-n-c-travel-plan}
    Let $c$ be an integer and \(D = (D_i)_{i ∈ [t]}\) be an always-unilateral temporal digraph
	on \(n\) vertices
	with
	\(\MindegComplement{D} \leq c\).
	If \(t \geq \bound{statement:degree-n-c-travel-plan}{t}{n,c} ∈ \Theta(n)\),
	then
	 \(D\) admits a travel plan \(\mathcal{T} = (T_i)_{i ∈ [k]}\) of gap 1,
	 order \(k \geq \bound{statement:degree-n-c-travel-plan}{k}{n,c} ∈ \Theta(n/c)\) and
	width at most \( \bound{statement:degree-n-c-travel-plan}{w}{c} ∈ \Theta(c) \).
	Additionally, 
	for all \(i,j ∈ [k]\),
	\(T_i ∩ T_j = \emptyset\) if \(i \neq j\).
\end{lemma}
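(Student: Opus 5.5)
We build the travel plan greedily from the front of the lifetime. Split \(D\) into consecutive blocks \(H_1, H_2, \ldots\), each of lifetime \(\bound{statement:degree-n-c-exact-princesses}{t}{c} + 1 = 2c+3\). We maintain a shrinking set \(X_i\) of not-yet-placed vertices, starting with \(X_1 = \V{D}\). Each step places a disjoint set \(T_i \subseteq X_i\) of at most \(5c+1\) vertices and sets \(X_{i+1} = X_i \setminus T_i\). The invariant we aim for is that every vertex of \(T_i\) can temporally reach, inside \(H_i \cdot H_{i+1}\) (the two consecutive blocks starting at block \(i\)), every vertex of \(X_{i+2}\). This is exactly gap \(1\). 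Concatenating such walks for a sequence \((i_1,v_1),\ldots,(i_x,v_x)\) with \(i_{j+1} \geq i_j + 2\) uses disjoint block pairs in increasing order, so it yields the required temporal walk. Disjointness of the \(T_i\) is automatic. Since each step removes at least about \(5c+1\) vertices (or all remaining ones), at most \(n/\bound{statement:degree-n-c-travel-plan}{w}{c}\) steps are needed, which matches \(\bound{statement:degree-n-c-travel-plan}{t}{n,c}\). If fewer steps suffice, we pad with empty sets to reach the order bound.

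On step \(i\), apply \cref{statement:degree-n-c-exact-princesses} to \(X_i\) in the first \(2c+2\) snapshots of \(H_i\), with \(d = c+1\). This gives \(U_i \subseteq X_i\) of size \(\min(2c+1, |X_i|)\) such that every \(u \in U_i\) misses at most \(c\) vertices of \(X_i \setminus U_i\). Let \(L_i\) be the union of these missed sets, the vertices not reached by \emph{every} element of \(U_i\); then \(|L_i| \le c|U_i|\), which is too large to put into \(T_i\) directly. The remedy is to use the extra snapshot and the following block. Apply \cref{statement:degree-n-c-two-sets} with \(A = U_i\) and \(B = L_i\) (or to the hard-to-reach part). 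In the first outcome every leftover vertex is reachable from some \(u \in U_i\), so the leftovers can stay in \(X_{i+1}\) and be handled at step \(i+1\). In the second outcome the vertices of \(B_2\) nearly reach \(A \cup B_1\), so we may put the small exceptional set into \(T_i\). In addition, \cref{statement:degree-n-c-small-hard-to-reach-set} bounds by \(3c\) the vertices of low in-degree in the last snapshot of \(H_i\). These are the only vertices that might fail to be hit in one extra step, and we place them into \(T_i\) together with \(U_i\) as needed; the total stays within \(5c+1\).

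The connectivity argument is the one suggested by the figure. A vertex \(v_i^1 \in U_i\) reaches almost all of \(X_i\) inside \(H_i\), and in particular everything in \(X_{i+2}\) once the missed vertices have been moved into \(T_i\) or \(T_{i+1}\). A vertex \(v_i^2\) in the second, ``leftover'' class of \(T_i\) first travels inside \(H_i\) to some intermediate vertex \(w\) that lies in \(T_{i+1}\) or reaches many vertices. From \(w\) it then uses \(H_{i+1}\) and the reachability property of \(U_{i+1}\) applied to \(X_{i+1} \supseteq X_{i+2}\) to reach every target.

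The main obstacle is precisely this bookkeeping: choosing which leftover vertices go into \(T_i\) and which intermediate \(w\) each \(v \in T_i\) uses, so that every member of \(T_i\) reaches every vertex of \(X_{i+2}\), while keeping \(|T_i| \le 5c+1\). I expect to resolve it by choosing \(T_{i+1}\) before certifying step \(i\), so that \(w\) can be taken in \(U_{i+1}\). This works because \cref{statement:degree-n-c-exact-princesses} guarantees that \(U_{i+1}\) covers \(X_{i+1}\) and that each of its members misses at most \(c\) vertices outside \(U_{i+1}\). If necessary, the budget \(2c+2\) of \cref{statement:degree-n-c-exact-princesses} is applied in the second block as well.
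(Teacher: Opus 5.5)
Your architecture matches the paper's: blocks of \(2c+3\) snapshots, a set \(U_i\) of size \(2c+1\) from \cref{statement:degree-n-c-exact-princesses} with \(d=c+1\), the vertices of low in-degree in the extra snapshot placed into \(T_i\), and those leftovers routed through a vertex of \(U_{i+1}\). The gap is the step you flag as the main obstacle, which is where the whole proof lives. The key fact is in your second paragraph, namely that only low in-degree vertices can escape \(U_i\) in the extra snapshot, but you never turn it into the invariant that makes chaining work.

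\emph{The invariant.} A leftover \(v\in T_i\) may have only one out-neighbour \(w\) in \(U_{i+1}\). So you need \emph{every} \(w\in U_{i+1}\) to reach \emph{every} vertex of \(X_{i+2}\) inside \(H_{i+1}\) alone. Shifting indices, the invariant must be that every \(u\in U_i\) reaches all of \(X_{i+1}=X_i\setminus T_i\) within the single block \(H_i\). Your three attempts at this each fall short:
\begin{itemize}
\item The invariant you state, reaching \(X_{i+2}\) within \(H_i\cdot H_{i+1}\), is too weak to chain.
\item Parking missed vertices in \(T_{i+1}\) is not allowed, since leftovers of \(T_{i-1}\) route through \(U_i\) to targets in \(T_{i+1}\).
\item Your justification, that \(U_{i+1}\) covers \(X_{i+1}\) and each member misses at most \(c\) vertices, cannot give the invariant. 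Missing a single target is fatal, and choosing \(T_{i+1}\) in advance changes nothing.
\end{itemize}
The detour through \cref{statement:degree-n-c-two-sets} does not help either, and the paper does not use it here. Its first outcome gives reachability from \emph{some} vertex of \(U_i\), not from all. Its second outcome describes what \(B_2\) reaches rather than what reaches \(B_2\). Also, \(|B_2|\) is bounded by nothing better than \(|L_i|\le c(2c+1)\), so \(B_2\) cannot go into \(T_i\) within the budget.

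\emph{The missing argument.} Let \(L_i\) be the vertices of \(X_i\setminus U_i\) not reached by all of \(U_i\) in the first \(2c+2\) snapshots. Split \(L_i\) by in-degree in the last snapshot of \(H_i\), counting only in-neighbours inside \(X_i\setminus U_i\).
\begin{itemize}
\item Suppose \(y\in L_i\) has at least \(c+1\) such in-neighbours. Each \(u\in U_i\) misses at most \(c\) vertices of \(X_i\setminus U_i\), so \(u\) reaches one of these in-neighbours and then reaches \(y\) in the last snapshot.
\item The remaining vertices form \(B_i\). By \cref{statement:degree-n-c-small-hard-to-reach-set}, applied to the last snapshot induced on \(X_i\setminus U_i\), \(B_i\) has only \(O(c)\) vertices.
\end{itemize}
Setting \(T_i=U_i\cup B_i\) yields exactly the one-block invariant above.

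For \(v\in B_i\), the vertex \(v\) has at most \(c\) in-neighbours and at most \(c\) non-neighbours in \(X_{i+1}\). Hence its out-neighbourhood in the last snapshot of \(H_i\) misses at most \(2c\) vertices of \(X_{i+1}\). It therefore contains some \(w\) among the \(2c+1\) vertices of \(U_{i+1}\), and \(w\) reaches all of \(X_{i+2}\) in \(H_{i+1}\). No look-ahead is needed.

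\emph{The step count.} Your count is backwards. \(T_i\) may equal \(U_i\), so the number of blocks is bounded only by about \(n/(2c+1)\), not \(n/(5c+1)\). This can exceed \((2c+3)n/(5c+1)\) snapshots, though it is still \(\Theta(n)\), which is all that is used later; the paper's own count glosses over the same point. Meanwhile \(k\ge n/w\) follows directly from \(|T_i|\le w\), with no padding.
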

\begin{proof}
	Let \(t_1 = \bound{statement:degree-n-c-exact-princesses}{t}{c}\),
	\(t_2 = t_1 + 1\),
	\(w = 5c + 1\).
	Iteratively	construct sets \(V_i, A_i, B_i, T_i\) and temporal digraphs \(H_i\) satisfying the following properties.
	\begin{enumerate}
		\item \label{item:degree-n-c-travel-plan:A_i}
			\(\Abs{A_i} \geq 2c + 1\) if \(\Abs{V_i} \geq 2c + 1\),
		\item \label{item:degree-n-c-travel-plan:M_i}
			\(V_{i+1} ⊆ \TemporalReach[H_i]{t_2}{v}\) for every \(v ∈ A_i\),
		\item \label{item:degree-n-c-travel-plan:N_i}
			\(\Abs{V_{i + 1} ∖ \TemporalReach[H_i]{t_2}{v}} \leq 2c\) for every \(v ∈ B_i\),
		\item \label{item:degree-n-c-travel-plan:w}
			\(T_i = A_i ∪ B_i\) and \(\Abs{T_i} \leq w\),
		\item\label{item:degree-n-c-travel-plan:V_i+1}
			\(V_{i + 1} = V_i ∖ T_i\).
	\end{enumerate}
	Start by setting \(V_1 = \V{D}\).

	On step \(i\),
	if \(\Abs{V_i} \leq 2c + 1\), set \(T_i = A_i = B_i = V_i\), \(V_{i + 1} = \emptyset\) and stop the construction. 	Otherwise, 
	let \(H_i = \TemporalSlice{D}{(i-1)t_2 + 1}{i t_2}\),
	\(G_i^1 = \TemporalSlice{H_i}{1}{t_1}\) and
	\(G_i^2 = \TemporalSlice{H_i}{t_1 + 1}{t_1 + 1}\).
	By \cref{statement:degree-n-c-exact-princesses},
	there is some set \(A_i ⊆ V_{i}\)
	such that
	\(\Abs{A_i} = 2c + 1\),
	\(\bigcup_{u ∈ A_i} \TemporalReach[G_i^1]{t_1}{u} ⊇ V_{i}\) and,
	for every \(u ∈ A_i\),
	\(\Abs{V_{i} \setminus (\TemporalReach[G_i^1]{t_1}{u} ∪ A_i)} ≤ c\).
	Let \(M_i = \bigcap_{u ∈ A_i} \TemporalReach[G_i^1]{t_1}{u} ∖ A_i\),
	\(X_i = V_{i} \setminus (M_i ∪ A_i)\),
	\(B_i = \{u ∈ X_i \mid \Indeg[G_i^2]{u} \leq c\}\) and
	\(Y_i = X_i ∖ B_i\).

	It is immediate that \cref{item:degree-n-c-travel-plan:A_i} holds. 
	We set \(T_i = A_i ∪ B_i\) and
	\(V_{i + 1} = V_i ∖ T_i = M_i ∪ Y_i\),
    		satisfying \cref{item:degree-n-c-travel-plan:V_i+1}.
	By \cref{statement:degree-n-c-small-hard-to-reach-set},
	\(\Abs{B_i} \leq 3c\) and
	\(\Abs{V_i ∖ \OutN[G_i^2]{u}} \leq 2c\) for every \(u ∈ B_i\).
	This satisfies \cref{item:degree-n-c-travel-plan:N_i} and \cref{item:degree-n-c-travel-plan:w}.

	Towards proving \cref{item:degree-n-c-travel-plan:M_i},
	let \(u ∈ Y_i\) and \(v ∈ A_i\).
	By definition, \(\Indeg[G_i^2]{u} \geq c + 1\).
	As \(\Abs{V_{i} \setminus (\TemporalReach[G_i^1]{t_1}{v} ∪ A_i)} \leq c\),
	the set \(\InN[G_i^2]{u} ∩ \TemporalReach[G_i^1]{t_1}{v}\) is not empty.
	Hence, \(u ∈ \TemporalReach[H_i]{t_2}{v}\), and so
	\(Y_i ⊆ \TemporalReach[H_i]{t_2}{v}\) for every \(v ∈ A_i\).
	As \(V_{i+1} = M_i ∪ Y_i\),
	\cref{item:degree-n-c-travel-plan:M_i} holds.
	This completes the construction.

	Let \(k\) be the number of iterations, that is,
	\(T_{1}, T_{2}, \ldots, T_{k}\) are constructed above.
	As \cref{item:degree-n-c-travel-plan:w} and \cref{item:degree-n-c-travel-plan:V_i+1} hold and
	the iteration above stops when \(V_i = \emptyset\),
	we have	\(k \geq \Ceiling{n / w}\).
	Since \cref{item:degree-n-c-travel-plan:A_i} and \cref{item:degree-n-c-travel-plan:V_i+1} hold,
	\(k \leq n / (2c + 1)\). 	Hence, the temporal digraphs \(H_i\) are always defined
	since \(\Lifetime{D}\) is large enough.
	
	We now prove that \((T_{1}, T_{2}, \ldots, T_{k})\) is a travel plan of gap 1 for \(D\).
	Let \(v_i ∈ T_i\) and \(v_j ∈ T_j\)
	such that
	\(i < j-1\).
	We construct a temporal \(v_i\)-\(v_j\)-walk \(W_i\) in \(H_i · H_{i + 1}\) (see \cref{fig:travel-plan-proof-paths}).

	By \cref{item:degree-n-c-travel-plan:V_i+1}, \(v_j ∈ V_{i + 2} ⊆ V_{i + 1} ⊆ V_i\).
	Moreover, \(\Abs{A_{i+1}} \geq 2c + 1\), as \(A_{i + 2}\) exists and so \(\Abs{V_{i+1}} > 2c + 1\).
	If \(v_i ∈ A_i\),
	then by \cref{item:degree-n-c-travel-plan:M_i} there is a temporal \(v_i\)-\(v_j\)-walk \(W_i\)
	in \(H_i\).

	If \(v_i ∈ B_i\),
	then by \cref{item:degree-n-c-travel-plan:A_i} and \cref{item:degree-n-c-travel-plan:N_i} there is some \(w ∈ A_{i + 1}\)
	such that
	a temporal \(v_i\)-\(w\)-path \(W_i^1\) exists in \(H_i\).
	By \cref{item:degree-n-c-travel-plan:M_i},
	there is a temporal \(w\)-\(v_j\)-path \(W_i^2\) in \(H_{i+1}\).
	Hence, \(W_i = W_i^1 · W_i^2\) is a temporal \(v_i\)-\(v_j\)-walk in \(H_i · H_{i + 1}\).

	Given a sequence \(v_{i_1}, v_{i_2}, \ldots, v_{i_s}\) of vertices where
	\(v_{i_j} ∈ T_{i_j}\) and \(i_j + 1 < i_{j + 1}\) for all \(j ∈ [s]\),
	the walk \(W = W_{i_1} · W_{i_2} · \ldots · W_{i_s}\)
	is a temporal walk in \(D\)
	visiting all \(v_{i_j}\) in the given order.
	Hence, \((T_{1}, T_{2}, \ldots, T_{k})\) is a travel plan for \(D\), as desired.
\end{proof}

\subsection{Tight bounds}

\label{sec:c2n upper bound}

While a greedy strategy is suitable for constructing an exploration using travel plans,
it would have the same issue as the proof of \cref{statement:degree-n-c-upper-bound-strong-log}:
if we make poor choices in the beginning, we can only benefit from smaller and smaller parts of the travel plan.

Intuitively, the travel plan allows us to randomly pick a vertex from each set \(T_i\),
and always obtain a walk visiting all our choices.
However, the situation where a walk \(W\) visiting \(n/c\) vertices
while completely covering many sets \(T_i\) of the next travel plan looks very far from random,
yet is precisely the ``bad'' situation mentioned before.
Hence, if we make our choices at random at every travel plan,
the probability that we end up in a ``bad'' situation should be rather small.

In order to prove the intuition above, we make probabilistic arguments;
in particular, we use the following celebrated symmetric version of the Lov\'asz Local Lemma, due to Erd\H{o}s and Lov\'asz~\cite{erdosIFS10}.

\begin{lemma}[{\sc Lov\'asz Local Lemma}~\cite{erdosIFS10}]
\label{lemma:LLL}
    Let $A_1,A_2,\dots,A_n$ be events in an arbitrary probability space. Suppose that each event $A_i$ is mutually independent of a set of all the other events but at most $d$, and that $\PP(A_i) \leq p$ for all $1\leq i \leq n$. If 
    $4pd \leq 1$ then 
		the probability that none of the events \(A_i\) occur is greater than 0.
		\end{lemma}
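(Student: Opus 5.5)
The plan is the classical inductive argument of Erd\H{o}s and Lov\'asz. Write \(\overline{A_j}\) for the complement of \(A_j\). For \(i\in[n]\) let \(\Gamma(i)\) be the set of at most \(d\) indices such that \(A_i\) is mutually independent of \(\{A_j : j\notin \Gamma(i)\cup\{i\}\}\). The central claim, proved by induction on \(\Abs{S}\), is:
\[
\PP\Bigl(A_i \,\Big|\, \bigcap_{j\in S}\overline{A_j}\Bigr) \leq 2p \qquad \text{for every } S\subseteq [n]\setminus\{i\}.
\]
Throughout the induction I also keep track of the fact that \(\PP(\bigcap_{j\in S}\overline{A_j})>0\), so that every conditional probability is well defined.

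First, positivity. For \(S=\{j_1,\dots,j_s\}\), the chain rule gives
\[
\PP\Bigl(\bigcap_{j\in S}\overline{A_j}\Bigr)=\prod_{r=1}^{s}\Bigl(1-\PP\bigl(A_{j_r}\mid \textstyle\bigcap_{q<r}\overline{A_{j_q}}\bigr)\Bigr).
\]
Each factor concerns a conditioning set of size smaller than \(\Abs{S}\), so it is at least \(1-2p\) by the claim applied inductively. This requires \(2p<1\). When \(d\geq 1\) this follows from \(4pd\leq 1\), which gives \(p\leq 1/4\). The degenerate case \(d=0\) gives full mutual independence, and the conclusion is then just \(\prod_i(1-\PP(A_i))\). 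This is positive only if \(p<1\), so I will state that case separately, assuming \(p<1\) there as usual. Taking \(S=[n]\) at the end gives \(\PP(\bigcap_i \overline{A_i})\geq (1-2p)^n>0\), which is the statement.

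The inductive step is the main point. Split \(S=S_1\uplus S_2\) with \(S_1=S\cap\Gamma(i)\) and \(S_2=S\setminus\Gamma(i)\), and write \(B_1=\bigcap_{j\in S_1}\overline{A_j}\) and \(B_2=\bigcap_{j\in S_2}\overline{A_j}\). If \(S_1=\emptyset\), mutual independence gives \(\PP(A_i\mid B_2)=\PP(A_i)\leq p\) directly. Otherwise I bound
\[
\PP(A_i\mid B_1\cap B_2)=\frac{\PP(A_i\cap B_1\mid B_2)}{\PP(B_1\mid B_2)}\leq\frac{\PP(A_i\mid B_2)}{\PP(B_1\mid B_2)}.
\]
The numerator is at most \(p\), by independence of \(A_i\) from the events indexed by \(S_2\). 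For the denominator, a union bound gives
\[
\PP(B_1\mid B_2)\geq 1-\sum_{j\in S_1}\PP(A_j\mid B_2).
\]
Each term is at most \(2p\) by the induction hypothesis, since \(\Abs{S_2}<\Abs{S}\). Hence \(\PP(B_1\mid B_2)\geq 1-2pd\geq 1/2\), and the ratio is at most \(2p\).

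The only delicate step is the denominator estimate. It needs the induction hypothesis applied to \(A_j\) conditioned on the strictly smaller set \(S_2\), which is legitimate because \(S_1\neq\emptyset\). It also needs the hypothesis \(4pd\leq 1\) in exactly the form \(1-2pd\geq 1/2\). Everything else is bookkeeping.
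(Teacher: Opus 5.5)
Your proof is correct. It is the classical Erd\H{o}s--Lov\'asz induction establishing $\PP\bigl(A_i \mid \bigcap_{j\in S}\overline{A_j}\bigr)\le 2p$, with positivity of the conditioning events carried along. The paper gives no proof of its own and simply cites that result.

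Your caveat for $d=0$ is well-founded: as literally stated, the lemma fails when $d=0$ and some $\PP(A_i)=1$. This is immaterial for the paper, which only applies the lemma with $d\ge 1$ and $p=\ell^{-2}<1$.
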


We model the restrictions on the choices of \(v_i ∈ T_i\) respecting the gap of the travel plan
as an undirected graph, with an edge between two vertices \(u,v\) if they cannot be chosen simultaneously,
either because they lie on the same set or because the gap between them is not large enough.
As our travel plans have small width and gap 1, this auxiliary graph has bounded degree.
So the question of finding a good choice of vertices is essentially
a variation of finding independent sets on graphs of bounded degree,
a classical application of the probabilistic method.

\begin{lemma}
	\label{statement:bounded degree independent set cover}
	Let \(G_{1}, G_{2}, \ldots, G_{\ell}\) be undirected graphs on the same vertex set \(V\)
	such that
	\(\MaxDegree{G_i} \leq \Delta\) for every $i\in [\ell]$.
	If \(\ell \geq \bound{statement:bounded degree independent set cover}{\ell}{\Delta} \coloneqq 16 \Delta + 1\), then there are \(X_{1}, X_{2}, \ldots, X_{\ell} ⊆ V\)
	such that
	\(V = \bigcup_{i = 1}^\ell X_i\) and
	\(X_i\) is an independent set in \(G_i\)
	for every \(i ∈ [\ell]\).
\end{lemma}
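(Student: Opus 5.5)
We will use the probabilistic method together with \cref{lemma:LLL}. The key design choice is \emph{which} events to control. Choosing each \(X_i\) independently and uniformly among independent sets of \(G_i\) is awkward, so we will instead pick random vertex sets and then prune them.

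\textbf{Random construction.} For each vertex \(v \in V\), pick an index \(c(v) \in [\ell]\) uniformly at random, independently over all vertices. We then set
\[
X_i = \{v \in V \mid c(v) = i\}.
\]
By construction \(V = \bigcup_{i=1}^{\ell} X_i\) automatically. The only thing that can go wrong is that \(X_i\) fails to be independent in \(G_i\). This happens exactly when some edge \(uv \in E(G_i)\) has \(c(u) = c(v) = i\).

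\textbf{Bad events.} For each pair \(\{u,v\}\) that is an edge in at least one \(G_i\), let \(A_{uv}\) be the event that there is some \(i\) with \(uv \in E(G_i)\) and \(c(u) = c(v) = i\).

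\begin{itemize}
\item \emph{Probability.} For fixed \(i\), \(\Pr[c(u)=c(v)=i] = 1/\ell^2\). Summing over the indices \(i\) with \(uv \in E(G_i)\), of which there are at most \(\ell\), a naive union bound only gives \(\Pr[A_{uv}] \le 1/\ell\). A better estimate: \(A_{uv}\) requires \(c(u) = c(v)\), which already has probability \(1/\ell\). So in either case \(p = 1/\ell\).

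\item \emph{Dependencies.} \(A_{uv}\) is determined by \(c(u)\) and \(c(v)\). It is therefore mutually independent of all events \(A_{xy}\) with \(\{x,y\} \cap \{u,v\} = \emptyset\). The number of events sharing a vertex with \(uv\) is at most the number of pairs \(\{u,y\}\) that are edges in some \(G_i\), plus the same for \(v\). This is up to \(2\ell\Delta\), which is too many.
\end{itemize}

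This is the main obstacle: the union graph \(\bigcup_i G_i\) may have degree up to \(\ell\Delta\), and then \(4pd \le 1\) fails.

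\textbf{Fix: define the bad events per vertex.} For each vertex \(u\), let \(B_u\) be the event that some neighbor \(y\) of \(u\) in \(G_{c(u)}\) has \(c(y) = c(u)\).

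\begin{itemize}
\item \emph{Probability.} Condition on \(c(u) = i\). Vertex \(u\) has at most \(\Delta\) neighbors in \(G_i\), each with \(c(y) = i\) with probability \(1/\ell\). Hence \(\Pr[B_u] \le \Delta/\ell\).

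\item \emph{Dependencies.} \(B_u\) is determined by \(c\) restricted to \(u\) together with \(N_{G_1}(u) \cup \dots \cup N_{G_\ell}(u)\), which is again of size up to \(\ell\Delta\). So the dependency degree \(d\) is of order \(\ell^2\Delta^2\), while \(p \approx \Delta/\ell\). The product \(pd\) grows with \(\ell\), which is bad.
\end{itemize}

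\textbf{Fix: replace the LLL by an alteration/iterative argument.} We reduce to a statement with a fixed number of colors. Since \(\ell\) appears only as a lower bound, it suffices to prove the claim for \(\ell = 16\Delta + 1\) exactly and pad the remaining \(X_i\) with \(\emptyset\). With \(\ell\) fixed as a function of \(\Delta\), everything is bounded:
\begin{itemize}
\item \(p \le \Delta/\ell\);
\item \(B_u\) depends on at most \(1 + \ell\Delta\) coordinates, so it is dependent on at most \(d \le (1+\ell\Delta)(1+\ell\Delta)\) other events.
\end{itemize}
This still fails \(4pd \le 1\), so plain vertex-coloring LLL is too weak.

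\textbf{Intended approach: a greedy deterministic argument.} Process the vertices in any order. When processing \(v\), assign it to some \(i\) such that no already-assigned neighbor of \(v\) in \(G_i\) has \(c = i\). Now count the forbidden indices. Each earlier vertex \(y\) forbids index \(c(y)\) for \(v\) only if \(y v \in E(G_{c(y)})\). The number of forbidden indices is therefore at most
\[
\bigl|\{\, y \mid yv \in E(G_{c(y)}) \,\}\bigr|,
\]
which is bounded only by \(\ell\Delta\) in general. So the greedy step also needs care.

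\textbf{Final plan: a charging argument bounding forbidden indices.} For each \(i\), vertex \(v\) has at most \(\Delta\) neighbors \(y\) in \(G_i\). Index \(i\) is forbidden only if one of these \(\Delta\) neighbors has color exactly \(i\). We aim to use the LLL with events indexed by pairs \((v,i)\) to show that a random coloring leaves each \(v\) with a free index, and then recolor.

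Because the details are delicate, I would first attempt the LLL with per-vertex events restricted to dependency via \(G_{c(u)}\) only. The step I expect to be the crux is obtaining a dependency bound independent of \(\ell\). Here the constant \(16 = 4 \cdot 4\) suggests the intended configuration: \(p \le \Delta/\ell\) combined with an effective dependency count of about \(4\Delta\) per event, giving
\[
4pd \le 4 \cdot \frac{\Delta}{\ell} \cdot 4\Delta\,\text{-type bound} \le 1 \quad\text{when } \ell \ge 16\Delta + 1.
\]
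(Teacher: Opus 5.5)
\textbf{There is a genuine gap: the proposal never completes an argument.} The merged events $A_{uv}$, the per-vertex events $B_u$ and the greedy procedure are each abandoned, and the closing plan is a heuristic that does not hold up. Restricting the dependencies of $B_u$ to $G_{c(u)}$ has no justification. Since $c(u)$ is itself random, $B_u$ is determined by the colors on all of $N_{G_1}(u)\cup\dots\cup N_{G_\ell}(u)$. Even granting a dependency count of $4\Delta$, your estimate $4\cdot\frac{\Delta}{\ell}\cdot 4\Delta=16\Delta^2/\ell$ would need $\ell\ge 16\Delta^2$, not $\ell \ge 16\Delta+1$.

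\textbf{The missing idea is to make the bad events finer, not coarser.} Your random coloring is the right one, and it is also the paper's. For each $i\in[\ell]$ and each edge $e=uv\in E(G_i)$, let $A^i_e$ be the event $c(u)=c(v)=i$, so $\Pr[A^i_e]=\ell^{-2}$. Since $A^i_e$ depends only on $c(u)$ and $c(v)$, it is mutually independent of all $A^j_f$ with $f\cap e=\emptyset$. For each $j$, at most $2\Delta$ edges of $G_j$ meet $e$, hence $d\le 2\Delta\ell\le 4\Delta\ell$. The factor $\ell$ in $d$ is cancelled by the extra factor $1/\ell$ in $p$: $4pd\le 16\Delta\ell/\ell^2=16\Delta/\ell\le 1$. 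So \cref{lemma:LLL} yields a coloring in which no $A^i_e$ occurs. That is, every $X_i=c^{-1}(i)$ is independent in $G_i$, and the $X_i$ cover $V$.

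\textbf{Why your first attempt failed.} You bounded $\Pr[A_{uv}]$ uniformly by $1/\ell$, but its true value is $m_{uv}/\ell^2$, where $m_{uv}=|\{i : uv\in E(G_i)\}|$. Since $\sum_y m_{uy}=\sum_i \deg_{G_i}(u)\le \ell\Delta$, the events sharing a vertex with $uv$ have total probability at most $2\Delta/\ell$. The symmetric lemma cannot see this non-uniformity, which is why $A_{uv}$ has to be split into the events $A^i_e$.
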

\begin{proof}
	Randomly partition \(\V{D}\) into \(X_{1}, X_{2}, \ldots, X_{\ell}\)
	by assigning a vertex \(v\) to \(X_i\) with probability \(1 / \ell\).

	For each \(i ∈ [\ell]\) and each edge in \(G_i\),
	define the random event \(A^i_e : \) both endpoints of \(e\) are in \(X_i\).
	Clearly, \(\PP(A^i_e) = p \coloneqq \ell^{-2}\).

	Each event \(A^i_e\) is mutually independent from all other events \(A^j_f\)
	where \(e ∩ f = \emptyset\).
	If \(i = j\), there are at most \(2(\Delta -1)\) edges \(f\neq e\) with \(e ∩ f \neq \emptyset\).
	If \(i \neq j\), then there are at most \(2(\Delta - 1) + 1\) such edges.
	Thus, \(A^i_e\) is mutually independent from all but at most
	\(d \coloneqq 2(\Delta - 1) + (2(\Delta - 1) + 1)(\ell - 1) \leq 4 \Delta \ell\) other events.
	
	All that remains in order to apply Lovász Local Lemma
	is to show the following inequality.
	\begin{align*}
		& 4dp \leq 1 
		\Leftarrow 4 \cdot 4 \Delta \ell \cdot (\ell^{-2}) \leq 1
		\Leftrightarrow 16 \Delta \leq \ell.
	\end{align*}
    			By \cref{lemma:LLL}, the probability that none of the events \(A^i_e\) defined above occur is greater than 0.
	Hence, there is a choice of \(X_{1}, X_{2}, \ldots, X_{\ell} ⊆ \V{G}\) satisfying the conditions in the statement.
\end{proof}

Using \cref{statement:degree-n-c-travel-plan} we obtain many travel plans and
with \cref{statement:bounded degree independent set cover} we can find good walks covering all vertices.
Assuming strongly-connected snapshots, we can find paths connecting the walks given by the travel plans,
yielding our main result. Define
\begin{align*}
	\bound{statement:degree-n-c-upper-bound-strong}{t}{n,c} & =
	\bound{statement:bounded degree independent set cover}{\ell}{3 \bound{statement:degree-n-c-travel-plan}{w}{c} - 1} ·
	(\bound{statement:degree-n-c-travel-plan}{t}{n,c} + n - 1).
\end{align*}

\hypertarget{statement:degree-n-c-upper-bound-strong:body}{}
\largeMinimumDegree*
\begin{proof}
		  	Let \(\ell_1 = \bound{statement:degree-n-c-travel-plan}{t}{n,c}\),
    \(\ell_2 = n - 1\),
    \(\ell_3 = \ell_1 + \ell_2\),
	\(w = \bound{statement:degree-n-c-travel-plan}{w}{c}\),
	\(\Delta = 3w - 1\).
	We construct temporal digraphs \(H_i, F_i\), undirected graphs \(G_i\) and travel plans \(\mathcal{T}_i\) iteratively as follows.

	Set \(t_1 = 1\) and start from \(i = 1\).
	On step \(i\), if \(t_i + \ell_3 > \Lifetime{D}\), stop the construction.
	Otherwise, apply \cref{statement:degree-n-c-travel-plan} to 
	\(H_i \coloneqq\TemporalSlice{D}{t_i}{t_i + \ell_1 -1}\),
	obtaining a 
	travel plan \(T_i\) of $H_i$ of gap 1, order \(k_i \geq n/w\)
	and width at most \(w\).
	
	Define the auxiliary undirected graph \(G_i\) as follows.
	Let \(\mathcal{T}_i = (T_{1}, T_{2}, \ldots, T_{k_i})\).
	Set \(\V{G_i} = \V{D}\) and add an edge \(\{u,v\}\) to \(G_i\)
	if \(u ∈ T_a, v ∈ T_b\) and \(\Abs{a - b} \leq 1\).
	Note that the degree of each vertex \(v\) is at most \(\Delta\).
	Set \(F_i = \TemporalSlice{D}{t_i + \ell_1}{t_i + \ell_1 + \ell_2 - 1}\) and
	set \(t_{i + 1} = t_i + \Lifetime{H_i} + \Lifetime{F_i} =  t_i + \ell_3\).

	Let \(p\) be the number of travel plans \(T_i\) we obtained above.
	Observe that \(p \geq \Lifetime{D} / \ell_3 \geq 16\Delta + 1\).

	Applying \cref{statement:bounded degree independent set cover} to the graphs \(G_{1}, G_{2}, \ldots, G_{p}\),
	we obtain a partition of \(\V{D}\) into sets \(X_{1}, X_{2}, \ldots, X_{p}\)
	such that
	\(X_i\) is an independent set in \(G_i\).
	By definition of travel plan, there is a temporal walk \(R_i\) visiting \(X_i\) inside \(H_i\).
	Let \(a_i\) be the first vertex of \(R_i\) and let \(b_i\) be its last vertex.
	By \cref{statement:strong-temporal-u-v-path}, there is a temporal \(b_i\)-\(a_{i+1}\)-path \(P_i\) in \(F_i\)
	for each \(i ∈ [p - 1]\).
	Set \(W_i = R_i \cdot P_i\) for $i\in [p-1]$ and
	\(W = W_1 \cdot W_2 \cdot \ldots \cdot W_{p-1}\cdot R_p\).
	It is immediate that \(W\) is a temporal exploration of \(D\), as desired.
\end{proof}

\begin{figure}
	\centering

	\caption{\label{fig:degree-n-c-lower-bound}
	The digraph \(D_i\) constructed in the proof of \cref{statement:degree-n-c-lower-bound}.
	An arc between two sets or between a vertex and a set indicates that we add all possible arcs (except loops)
	in the corresponding direction.
	In particular, \(A\) is an independent set, while \(B_i\) and \(C_i\) are bidirected cliques.
	}
\end{figure}

The following Theorem establishes the asymptotic tightness of the function in \cref{statement:degree-n-c-upper-bound-strong}.

\largeMinimumDegreeLowerBound*
\begin{proof}
	Let \(n' = n - c - 1\), \(A = \Set{v_{1}, v_{2}, \ldots, v_{c + 1}}\), and \(V=\{u_1,\dots,u_{n'}\} \cup A\). 
	For each \(i ∈ [n' - 1]\), define
	\(B_i = \Set{u_{1}, u_{2}, \ldots, u_{i - 1}}\),
	\(C_i = \Set{u_{i+1}, u_{i+2}, \ldots, u_{n'}}\) and
	\(D_i = (V, E_i)\), where (see \cref{fig:degree-n-c-lower-bound})
	\begin{align*}
		E_i  =~ & (A \times B_i) ∪ (A \times \{u_i\}) ∪ (C_i \times A) ∪ (C_i \times B_i) ∪ (B_i \times \Set{u_i}) ∪ (\Set{u_i} \times C_i)
		\\[0em]     & ∪ \Set{(u_a, u_b) \mid u_a, u_b ∈ B_i, a \neq b} ∪ \Set{(u_a, u_b) \mid u_a, u_b ∈ C_i, a \neq b}.
	\end{align*}
	We define \(D' = (D_i)_{i ∈ [n' - 1]}\) and
	\(D\) as the concatenation of \(c\) copies of \(D'\).
	Observe that every \(v ∈ B_i ∪ C_i\) has degree \(n - 1\) and
	every \(v ∈ A\) has degree \(n - c\) in every \(D_i\).
	Hence, \(\MindegComplement{D_i} \leq c\).
    As \(u_{n'} \in C_i\), we have \(A_i \cup B_i \subseteq \OutN[D_i]{u_{n'}}\).
    Every vertex in \(A_i \cup B_i\) can reach \(u_{n'}\)
    through \(u_i \neq u_{n'}\),
    and every vertex in \(C_i\) can reach \(u_{n'}\).
    Hence, every \(D_i\) is strongly connected.

	Let \(v_j,v_k ∈ A\) be distinct.
	We prove that there is no temporal \(v_j\)-\(v_k\)-walk in \(D'\).
	Note that \(\TemporalReach[D']{1}{v_j} = \Set{v_j, u_1}\).
	For \(1 < i \leq n'\), note that \(\TemporalReach[D']{i - 1}{v_j} = \{v_j\} ∪ B_{i-1}\). 
    	Since \(u_{i}\) is the only outneighbor in $D_i$ of the vertices in \(\TemporalReach[D']{i-1}{v_j}\),
	we have that \(\TemporalReach[D']{i}{v_j} = \{v_j\} ∪ B_{i-1} ∪ \{u_{i}\}\).	Hence, \(v_k \notin \TemporalReach[D']{n'}{v_j}\), as desired.

	Assume towards a contradiction that \(D\) admits a temporal exploration \(W\).
	For each \(1 \leq j \leq c + 1\),
	let \(t_j\) be the time step of
	the first occurrence of \(v_j\) along \(W\).

	As \(D\) has only \(cn'\) snapshots,
	by the pigeon-hole principle 
	there are \(t_j < t_k\)
	such that
	\(t_j,t_k ∈ \Set{(i - 1)n' + 1, \ldots, in'}\)
	for some \(1 \leq i \leq c\).
	However, this implies that
	the temporal \(v_j\)-\(v_k\)-path from \(t_j\) to \(t_k\)
	corresponds to a temporal \(v_j\)-\(v_k\)-path in \(D'\),
	a contradiction.
	Hence, \(D\) does not admit a temporal exploration.
\end{proof}

\section{Future work}
\label{sec:future work}

We collect here some interesting questions for future work.
We state them as conjectures to simplify notation,
although we do not necessarily believe them to be true.

\Cref{statement:degree-three-lower-bound} excludes the possibility of exploring always-strong temporal digraphs of maximum degree 3 and subquadratic lifetime,
yet the case of maximum degree 2 remains unsolved.
\begin{conjecture}
	\label{statement:degree-two-upper-bound}
	There is a function \(f\colon \Naturals \to \Naturals\), \(f(n) ∈ \BigO{n \log n} \)
	such that
	every always-strong temporal digraph \(D\) with maximum degree 2
	and \(\Lifetime{D} \geq f(n)\), where \(n = \Abs{\V{D}}\),
	admits a temporal exploration.
\end{conjecture}

Note that $f(n) \in \Omega(n \log n)$. This can be shown by an easy adaptation of the proof of~\cite[Proposition~4.1]{EHK2021}, consisting in adding a path on $n/2$ vertices linking $t_{n/4-1}$ and $b_{n/4-1}$.
The question above is open even if every snapshot is a directed cycle.

Most of the steps in the proof of \cref{statement:degree-n-c-upper-bound-strong}
also work on always-unilateral temporal digraphs.
The only issue is connecting the endpoints of the intermediate walks obtained
from each travel plan.
However, given an always-unilateral digraph \(D\) of \(\Lifetime{D} \geq n - 1\), consider the auxiliary digraph \(H\) with same vertex set as \(D\) and
with an arc \((u,v)\)	if \(u\) can temporally reach \(v\) in \(D\).
If \(H\) is a acyclic, then there is a path \(P\) spanning all vertices of \(D\)
which is contained as a subgraph in every snapshot of \(D\),
as every snapshot is unilateral.
Hence, taking one arc of \(P\) in each snapshots yields a temporal exploration of \(D\).
This observation implies that both extreme cases, namely,
when the reachability digraph is strongly connected and when it is acyclic,
are ``good''.
Intuitively, this should be exploitable in order to obtain a temporal exploration for always-unilateral temporal digraphs of subquadratic lifetime and
large minimum degree.

\begin{conjecture}
	\label{statement:degree-n-c-unilateral-upper-bound}
	There is a function \(f : \Naturals \to \Naturals\) and
	a constant \(0 < \epsilon \leq 1\)
	such that every always-unilateral temporal digraph \(D = (D_i)_{i ∈ [t]}\)
	with 
	\(\MindegComplement{D} \leq c\) and
	\(t \geq f(c)n^{2 - \epsilon}\)
	admits a temporal exploration.
\end{conjecture}

It would be interesting to know if \largeminimumdegreeupperboundref{}
can be generalized to parameters smaller than \(\MindegComplement{D}\).
One such parameter is the maximum size of an independent set in
any snapshot of \(D\).

\begin{conjecture}
	\label{statement:small-independent-set-number}
	There is a function \(f : \Naturals \to \Naturals\)
	and some \(0 < \epsilon \leq 1\)
	such that
	every always-strong temporal digraph \(D = (D_i)_{i ∈ [t]}\)
	where no \(\UnderlyingGraph{D_i}\) contains an independent set of size more than \(k\)
	admits a temporal exploration 
	if \(t \geq f(k)n^{2 - \epsilon}\).
\end{conjecture}

The definition of travel plan can be transferred to the undirected setting in the natural way.
We are not aware of any similar auxiliary structure being used for proving the existence of temporal exploration in temporal (undirected) graphs.
Hence, in a broader direction, it would be interesting to determine when temporal graphs admit
travel plans of small width and gap.

\bibliography{references}

@article{BGMR2025ietg,
  author       = {Paul Bastide and
                  Carla Groenland and
                  Lukas Michel and
                  Cl{\'{e}}ment Rambaud},
  title        = {Improved exploration of temporal graphs},
  journal      = {CoRR},
  volume       = {abs/2511.22604},
  year         = {2025},
  doi          = {10.48550/ARXIV.2511.22604},
  eprinttype   = {arXiv},
  eprint       = {2511.22604},
  bibsource    = {dblp computer science bibliography, https://dblp.org}
}

@book{BG2009,
  author       = {J{\o}rgen Bang{-}Jensen and
                  Gregory Z. Gutin},
  title        = {Digraphs - Theory, Algorithms and Applications, Second Edition},
  series       = {Springer Monographs in Mathematics},
  publisher    = {Springer},
  year         = {2009},
  isbn         = {978-1-84800-997-4},
  bibsource    = {dblp computer science bibliography, https://dblp.org}
}

@article{Berman2004,
  title={Approximation Hardness of Short Symmetric Instances of {M}{A}{X}-3{S}{A}{T}},
  author={Berman, Piotr and Karpinski, Marek and Scott, Alexander},
  journal={Electronic Colloquium on Computational Complexity},
  year={2003},
  volume={TR03},
  url={https://api.semanticscholar.org/CorpusID:40181844}
}

@article{Bangjensen2012,
  title={Finding an induced subdivision of a digraph},
  author={Bang-Jensen, J{\o}rgen and Havet, Fr{\'e}d{\'e}ric and Trotignon, Nicolas},
  journal={Theoretical Computer Science},
  volume={443},
  pages={10--24},
  year={2012},
  publisher={Elsevier}
}

@article{Bangjensen2012arcdisjoint,
  title={Arc-disjoint spanning sub (di) graphs in digraphs},
  author={Bang-Jensen, J{\o}rgen and Yeo, Anders},
  journal={Theoretical Computer Science},
  volume={438},
  pages={48--54},
  year={2012},
  publisher={Elsevier}
}

@article{Bangjensen2024,
  title={Constrained flows in networks},
  author={Bang-Jensen, J{\o}rgen and Bessy, St{\'e}phane and Picasarri-Arrieta, Lucas},
  journal={Theoretical Computer Science},
  volume={1010},
  pages={114702},
  year={2024},
  publisher={Elsevier}
}

@article{Bangjensen2016,
  title={Finding good 2-partitions of digraphs II. Enumerable properties},
  author={Bang-Jensen, J{\o}rgen and Cohen, Nathann and Havet, Fr{\'e}d{\'e}ric},
  journal={Theoretical Computer Science},
  volume={640},
  pages={1--19},
  year={2016},
  publisher={Elsevier}
}

@article{Rede34,
  author = 	 {R\'{e}dei, L\'azl\'o},
  title = 	 {Ein kombinatorischer {S}atz},
  journal = 	 {Acta Litteraria Szeged},
  year = 	 1934,
  volume =	 7,
  pages =	 {39--43}
}

@book{Cranston2024,
  title     = {Graph Coloring Methods},
  author    = {Cranston, Daniel W.},
  year      = {2024},
  publisher = {Self-published},
  url       = {https://graphcoloringmethods.com/},
}

@article{ES2022,
  author       = {Thomas Erlebach and
                  Jakob T. Spooner},
  title        = {Exploration of k-edge-deficient temporal graphs},
  journal      = {Acta Informatica},
  volume       = {59},
  number       = {4},
  pages        = {387--407},
  year         = {2022},
  doi          = {10.1007/S00236-022-00421-5},
  bibsource    = {dblp computer science bibliography, https://dblp.org}
}

@article{EHK2021,
  author       = {Thomas Erlebach and
                  Michael Hoffmann and
                  Frank Kammer},
  title        = {On temporal graph exploration},
  journal      = {J. Comput. Syst. Sci.},
  volume       = {119},
  pages        = {1--18},
  year         = {2021},
  doi          = {10.1016/J.JCSS.2021.01.005},
  bibsource    = {dblp computer science bibliography, https://dblp.org}
}

@article{erdosIFS10,
  title={Problems and results on 3-chromatic hypergraphs and some related questions},
  author={Erd\H{o}s, Paul and Lov{\'a}sz, L{\'a}szl{\'o}},
  journal={Infinite and Finite Sets},
  volume={10},
  number={2},
  pages={609--627},
  year={1975},
  publisher={North-Holland, Amsterdam}
}

@article{michailTCS634,
  title={Traveling salesman problems in temporal graphs},
  author={Michail, Othon and Spirakis, Paul G},
  journal={Theoretical Computer Science},
  volume={634},
  pages={1--23},
  year={2016},
  publisher={Elsevier}
}

@article{havetJGT35,
  title={Median orders of tournaments: a tool for the second neighborhood problem and {S}umner's conjecture},
  author={Havet, Fr{\'e}d{\'e}ric and Thomass{\'e}, St{\'e}phan},
  journal={Journal of Graph Theory},
  volume={35},
  number={4},
  pages={244--256},
  year={2000},
  publisher={Wiley Online Library}
}

@article{changJCO25,
  title={The hamiltonian numbers in digraphs},
  author={Chang, Ting-Pang and Tong, Li-Da},
  journal={Journal of Combinatorial Optimization},
  volume={25},
  number={4},
  pages={694--701},
  year={2013},
  publisher={Springer}
}

@inproceedings{hkmm24cows,
  author       = {Meike Hatzel and
                  Stephan Kreutzer and
                  Marcelo Garlet Milani and
                  Irene Muzi},
  title        = {Cycles of Well-Linked Sets and an Elementary Bound for the Directed Grid Theorem},
  booktitle    = {65th {IEEE} Annual Symposium on Foundations of Computer Science, {FOCS} 2024, Chicago, IL, USA, October 27-30, 2024},
  pages        = {1--20},
  publisher    = {{IEEE}},
  year         = {2024},
  doi          = {10.1109/FOCS61266.2024.00011},
  bibsource    = {dblp computer science bibliography, https://dblp.org}
}

@article{hkmm26cowsi,
  author       = {Meike Hatzel and
                  Stephan Kreutzer and
                  Marcelo Garlet Milani and
                  Irene Muzi},
  title        = {Cycles of Well-Linked Sets I: an Elementary Bound for Directed Cycle Packing},
  journal      = {CoRR},
  volume       = {abs/2404.19222v3},
  year         = {2026},
  doi          = {10.48550/ARXIV.2404.19222},
  eprinttype    = {arXiv},
  eprint       = {2404.19222},
  bibsource    = {dblp computer science bibliography, https://dblp.org}
}

@inproceedings{ErlebachS18,
  author       = {Thomas Erlebach and
                  Jakob T. Spooner},
  editor       = {Igor Potapov and
                  Paul G. Spirakis and
                  James Worrell},
  title        = {Faster Exploration of Degree-Bounded Temporal Graphs},
  booktitle    = {43rd International Symposium on Mathematical Foundations of Computer
                  Science, {MFCS} 2018, Liverpool, UK, August 27-31, 2018},
  series       = {LIPIcs},
  volume       = {117},
  pages        = {36:1--36:13},
  publisher    = {Schloss Dagstuhl - Leibniz-Zentrum f{\"{u}}r Informatik},
  year         = {2018},
  doi          = {10.4230/LIPICS.MFCS.2018.36},
  bibsource    = {dblp computer science bibliography, https://dblp.org}
}

@inproceedings{ErlebachKLSS19,
  author       = {Thomas Erlebach and
                  Frank Kammer and
                  Kelin Luo and
                  Andrej Sajenko and
                  Jakob T. Spooner},
  editor       = {Christel Baier and
                  Ioannis Chatzigiannakis and
                  Paola Flocchini and
                  Stefano Leonardi},
  title        = {Two Moves per Time Step Make a Difference},
  booktitle    = {46th International Colloquium on Automata, Languages, and Programming,
                  {ICALP} 2019, Patras, Greece, July 9-12, 2019},
  series       = {LIPIcs},
  volume       = {132},
  pages        = {141:1--141:14},
  publisher    = {Schloss Dagstuhl - Leibniz-Zentrum f{\"{u}}r Informatik},
  year         = {2019},
  doi          = {10.4230/LIPICS.ICALP.2019.141},
  bibsource    = {dblp computer science bibliography, https://dblp.org}
}

@inproceedings{AdamsonGMZ22,
  author       = {Duncan Adamson and
                  Vladimir V. Gusev and
                  Dmitriy S. Malyshev and
                  Viktor Zamaraev},
  editor       = {James Aspnes and
                  Othon Michail},
  title        = {Faster Exploration of Some Temporal Graphs},
  booktitle    = {1st Symposium on Algorithmic Foundations of Dynamic Networks, {SAND}
                  2022, Virtual Conference, March 28-30, 2022},
  series       = {LIPIcs},
  volume       = {221},
  pages        = {5:1--5:10},
  publisher    = {Schloss Dagstuhl - Leibniz-Zentrum f{\"{u}}r Informatik},
  year         = {2022},
  doi          = {10.4230/LIPICS.SAND.2022.5},
  bibsource    = {dblp computer science bibliography, https://dblp.org}
}

@article{casteigts2012time,
  title={Time-varying graphs and dynamic networks},
  author={Casteigts, Arnaud and Flocchini, Paola and Quattrociocchi, Walter and Santoro, Nicola},
  journal={International Journal of Parallel, Emergent and Distributed Systems},
  volume={27},
  number={5},
  pages={387--408},
  year={2012},
  publisher={Taylor \& Francis}
}

@article{michail2016introduction,
  title={An introduction to temporal graphs: An algorithmic perspective},
  author={Michail, Othon},
  journal={Internet Mathematics},
  volume={12},
  number={4},
  pages={239--280},
  year={2016},
  publisher={Taylor \& Francis}
}

@article{bumpus2023edge,
  title={Edge exploration of temporal graphs},
  author={Bumpus, Benjamin Merlin and Meeks, Kitty},
  journal={Algorithmica},
  volume={85},
  number={3},
  pages={688--716},
  year={2023},
  publisher={Springer}
}

@article{fluschnik2020temporal,
  title={Temporal graph classes: A view through temporal separators},
  author={Fluschnik, Till and Molter, Hendrik and Niedermeier, Rolf and Renken, Malte and Zschoche, Philipp},
  journal={Theoretical Computer Science},
  volume={806},
  pages={197--218},
  year={2020},
  publisher={Elsevier}
}

@inproceedings{hamm2022complexity,
  title={The complexity of temporal vertex cover in small-degree graphs},
  author={Hamm, Thekla and Klobas, Nina and Mertzios, George B and Spirakis, Paul G},
  booktitle={Proceedings of the AAAI Conference on Artificial Intelligence},
  volume={36},
  number={9},
  pages={10193--10201},
  year={2022}
}

@inproceedings{marino2021konigsberg,
  title={K{\"o}nigsberg sightseeing: Eulerian walks in temporal graphs},
  author={Marino, Andrea and Silva, Ana},
  booktitle={International Workshop on Combinatorial Algorithms},
  pages={485--500},
  year={2021},
  organization={Springer}
}

@article{marino2023eulerian,
  title={Eulerian walks in temporal graphs},
  author={Marino, Andrea and Silva, Ana},
  journal={Algorithmica},
  volume={85},
  number={3},
  pages={805--830},
  year={2023},
  publisher={Springer}
}

@inproceedings{aaron2014dmvp,
  title={DMVP: foremost waypoint coverage of time-varying graphs},
  author={Aaron, Eric and Krizanc, Danny and Meyerson, Elliot},
  booktitle={International Workshop on Graph-Theoretic Concepts in Computer Science},
  pages={29--41},
  year={2014},
  organization={Springer}
}

@article{akrida2021temporal,
  title={The temporal explorer who returns to the base},
  author={Akrida, Eleni C and Mertzios, George B and Spirakis, Paul G and Raptopoulos, Christoforos},
  journal={Journal of Computer and System Sciences},
  volume={120},
  pages={179--193},
  year={2021},
  publisher={Elsevier}
}

@inproceedings{erlebach2020non,
  title={Non-strict temporal exploration},
  author={Erlebach, Thomas and Spooner, Jakob T},
  booktitle={International Colloquium on Structural Information and Communication Complexity},
  pages={129--145},
  year={2020},
  organization={Springer}
}

@inproceedings{balev2025brief,
  title={Brief announcement: the shortest temporal exploration problem},
  author={Balev, Stefan and Sanlaville, {\'E}ric and Toullalan, Antoine},
  booktitle={4th Symposium on Algorithmic Foundations of Dynamic Networks (SAND 2025)},
  pages={18--1},
  year={2025},
  organization={Schloss Dagstuhl--Leibniz-Zentrum f{\"u}r Informatik}
}

@article{ilcinkas2018exploration,
  title={Exploration of the t-interval-connected dynamic graphs: the case of the ring},
  author={Ilcinkas, David and Wade, Ahmed M},
  journal={Theory of Computing Systems},
  volume={62},
  number={5},
  pages={1144--1160},
  year={2018},
  publisher={Springer}
}

@inproceedings{FMNRZ20,
  author       = {Till Fluschnik and
                  Hendrik Molter and
                  Rolf Niedermeier and
                  Malte Renken and
                  Philipp Zschoche},
  editor       = {Fedor V. Fomin and
                  Stefan Kratsch and
                  Erik Jan van Leeuwen},
  title        = {As Time Goes By: Reflections on Treewidth for Temporal Graphs},
  booktitle    = {Treewidth, Kernels, and Algorithms - Essays Dedicated to Hans L. Bodlaender
                  on the Occasion of His 60th Birthday},
  series       = {Lecture Notes in Computer Science},
  volume       = {12160},
  pages        = {49--77},
  publisher    = {Springer},
  year         = {2020},
  doi          = {10.1007/978-3-030-42071-0\_6},
  bibsource    = {dblp computer science bibliography, https://dblp.org}
}

@article{HCGD23,
  author       = {Mohammad Mehdi Hosseinzadeh and
                  Mario Cannataro and
                  Pietro Hiram Guzzi and
                  Riccardo Dondi},
  title        = {Temporal networks in biology and medicine: a survey on models, algorithms,
                  and tools},
  journal      = {Netw. Model. Anal. Health Informatics Bioinform.},
  volume       = {12},
  number       = {1},
  pages        = {10},
  year         = {2023},
  doi          = {10.1007/S13721-022-00406-X},
  bibsource    = {dblp computer science bibliography, https://dblp.org}
}

@book{moon1968topics,
  author    = {John W. Moon},
  title     = {Topics on Tournaments},
  publisher = {Holt, Rinehart and Winston},
  address   = {New York},
  year      = {1968}
}

@article{NSSJCTB173,
  title={Some results and problems on tournament structure},
  author={Nguyen, Tung and Scott, Alex and Seymour, Paul},
  journal={Journal of Combinatorial Theory, Series B},
  volume={173},
  pages={146--183},
  year={2025},
  publisher={Elsevier}
}

@article{APSComb21,
  title={Ramsey-type theorems with forbidden subgraphs},
  author={Alon, Noga and Pach, J{\'a}nos and Solymosi, J{\'o}zsef},
  journal={Combinatorica},
  volume={21},
  number={2},
  pages={155--170},
  year={2001},
  publisher={Springer}
}

\end{document}